\theoremstyle{definition}
\newtheorem{theorem}{Theorem}
\newtheorem{definition}{Definition}
\newtheorem{assumption}{Assumption}
\newtheorem{lemma}{Lemma}
\newtheorem{example}{Example}
\newtheorem{remark}{Remark}
\newtheorem{corollary}{Corollary}
\newcommand\norm[1]{\left\lVert#1\right\rVert}
\DeclarePairedDelimiter{\floor}{\lfloor}{\rfloor}
\newcommand{\vertiii}[1]{{\left\vert\kern-0.25ex\left\vert\kern-0.25ex\left\vert #1 
    \right\vert\kern-0.25ex\right\vert\kern-0.25ex\right\vert}}
\newif\ifshow 
\newcommand{\nc}{\newcommand}
\nc{\mbb}{\mathbb}\nc{\bb}{\mathbb}
\nc{\mbf}{\mathbf}\nc{\mb}{\mathbf}
\nc{\mc}{\mathcal}
\nc{\msf}{\mathsf}\nc{\ms}{\mathsf}
\nc{\acc}{\ms{acc}}
\nc{\ack}{\ms{ack}}
\nc{\alp}{\alpha}\nc{\al}{\alpha}\nc{\gka}{\alpha}
\nc{\ap}{\ms{ap}}
\nc{\apd}{\ms{apd}}
\nc{\base}{\ms{base}}\nc{\ba}{\ms{base}}
\nc{\bet}{\beta}\nc{\gkb}{\beta}
\nc{\boucle}{\ms{loop}}\nc{\Loop}{\ms{loop}}\nc{\lo}{\ms{loop}}
\nc{\bu}{\bullet}
\nc*{\cc}{\raisebox{-3pt}{\scalebox{2}{$\cdot$}}}
\nc{\centre}{\ms{center}}\nc{\Center}{\ms{center}}\nc{\cen}{\ms{center}}\nc{\ce}{\ms{center}}
\nc{\ci}{\circ}
\nc{\code}{\ms{code}}\nc{\cod}{\ms{code}}\nc{\decode}{\ms{decode}}\nc{\encode}{\ms{encode}}
\nc{\de}{:\equiv}
\nc{\dr}{\right}\nc{\ga}{\left}
\nc{\ds}{\displaystyle}
\nc{\ep}{\varepsilon}
\nc{\eq}{\equiv}
\nc{\ev}{\ms{ev}}
\nc{\fib}{\ms{fib}}
\nc{\funext}{\ms{funext}}\nc{\fu}{\ms{funext}}
\nc{\gam}{\gamma}
\nc{\glue}{\ms{glue}}\nc{\gl}{\ms{glue}}
\nc{\happly}{\ms{happly}}\nc{\ha}{\ms{happly}}
\nc{\id}{\ms{id}}
\nc{\ima}{\ms{im}}
\nc{\inc}{\subseteq}
\nc{\ind}{\ms{ind}}
\nc{\inl}{\ms{inl}}
\nc{\inr}{\ms{inr}}
\nc{\isContr}{\ms{isContr}}\nc{\co}{\ms{isContr}}\nc{\iC}{\ms{isContr}}\nc{\ic}{\ms{isContr}}
\nc{\isequiv}{\ms{isequiv}}\nc{\iseq}{\ms{isequiv}}\nc{\ieq}{\ms{isequiv}}
\nc{\ishae}{\ms{ishae}}\nc{\ish}{\ms{ishae}}\nc{\ih}{\ms{ishae}}
\nc{\isProp}{\ms{isProp}}\nc{\prop}{\ms{isProp}}\nc{\iP}{\ms{isProp}}\nc{\ip}{\ms{isProp}}
\nc{\isSet}{\ms{isSet}}\nc{\isS}{\ms{isSet}}\nc{\iss}{\ms{isSet}}\nc{\iS}{\ms{isSet}}\nc{\is}{\ms{isSet}}
\nc{\lam}{\lambda}
\nc{\LEM}{\ms{LEM}}\nc{\lem}{\ms{LEM}}\nc{\LE}{\ms{LEM}}
\nc{\lv}{\lvert}\nc{\rv}{\rvert}\nc{\lV}{\lVert}\nc{\rV}{\rVert}
\nc{\Map}{\ms{Map}}
\nc{\merid}{\ms{merid}}\nc{\meri}{\ms{merid}}\nc{\mer}{\ms{merid}}\nc{\me}{\ms{merid}}
\nc{\N}{\bb N}
\nc{\na}{\ms{nat}}
\nc{\nn}{\noindent}
\nc{\one}{\mb1}
\nc{\oo}{\operatorname}
\nc{\pd}{\prod}
\nc{\ps}{\mc P}
\nc{\pa}{\ms{pair}^=}
\nc{\ph}{\varphi}
\nc{\ppmap}{\ms{ppmap}}
\nc{\pr}{\ms{pr}}
\nc{\Prop}{\ms{Prop}}
\nc{\qinv}{\ms{qinv}}\nc{\qin}{\ms{qinv}}\nc{\qi}{\ms{qinv}}
\nc{\rec}{\ms{rec}}
\nc{\refl}{\ms{refl}}
\nc{\seg}{\ms{seg}}
\nc{\Set}{\ms{Set}}
\nc{\sm}{\scriptstyle}
\nc{\sms}{\ms s}
\nc{\sq}{\square}
\nc{\suc}{\ms{succ}}\nc{\su}{\ms{succ}}
\nc{\tb}{\textbf}
\nc{\then}{\Rightarrow}
\nc{\tms}{\ms t}
\nc{\tx}{\text}
\nc{\transport}{\ms{transport}}\nc{\tr}{\ms{transport}}
\nc{\two}{\mb2}
\nc{\Type}{\text-\ms{Type}}\nc{\type}{\text-\ms{Type}}\nc{\ty}{\text-\ms{Type}}
\nc{\U}{\mc U}
\nc{\ua}{\ms{ua}}
\nc{\uniq}{\ms{uniq}}
\nc{\univalence}{\ms{univalence}}
\nc{\vide}{\varnothing}
\nc{\ws}{\ms{sup}}
\nc{\zero}{\mb0}
\title{\textbf{Structural Analysis of Vector Autoregressive Models}\footnote{I am grateful to Professor Markku Lanne and Professor Mika Meitz from the Faculty of Social Sciences, University of Helsinki for helpful conversations. I also wish to thank Professor Jose Olmo and Professor Tassos Magdalinos from the School of Economics as well as Professor Zudi Lu and Dr. Chao Zheng from the School of Mathematical Sciences, University of Southampton for helpful discussions. Financial support from the Research Council of Finland (Grant 347986) is also gratefully acknowledged. Address correspondence to Christis Katsouris, Faculty of Social Sciences, University of Southampton, United Kingdom; Email Address: \textcolor{blue}{c.katsouris@soton.ac.uk}.
 } }
\author{\textbf{Christis Katsouris}\thanks{Dr. Christis Katsouris, is currently a Postdoctoral Researcher at the Faculty of Social Sciences, University of Helsinki. } \\  \textit{University of Southampton} \\ $\&$ \textit{University of Helsinki}\\  \\ (Work-in-progress)}
\date{\today}
\begin{document}

\maketitle

\begin{abstract}
\vspace*{-0.8 em}
This set of lecture notes discuss key concepts for the Structural Analysis of Vector Autoregressive models for the teaching of a course on Applied Macroeconometrics with Advanced Topics.
\end{abstract}

\maketitle

\newpage 
   
\begin{small}
\begin{spacing}{0.9}
\tableofcontents
\end{spacing}
\end{small}

\newpage

\section{Introduction}

This set of lecture notes are meant to discuss fundamental concepts on the mechanisms of structural vector autoregressive models with many economic theory driven examples which are drawn from several seminal studies presented in the existing literature. The particular perspective we follow permits to present several mathematical and theoretical concepts in an informal way while discussing important concepts by presenting how some long-standing problems in the applied macroeconometrics literature have been tackled. Moreover, our discussion of selected studies from the literature in the form of examples that present key aspects in the identification and estimation of SVAR models, main theoretical econometric concepts such as matrix representations and asymptotic theory convergence results; are motivated from a research-based pedagogical approach that encourages the reader to seek a comprehensive understanding of these concepts while fostering further discussions on current state-of-the art methodological approaches and debates in the literature. Note that a full treatment of these topics are presented in the "Time Series Analysis" book of \cite{hamilton1994time}. Additional material can be found in the books of  \cite{gourieroux1997time}, \cite{lutkepohl2005new} and  \cite{kilian2017structural}. 

More specifically, \cite{stock2001vector} describe the key job tasks of applied macro-econometricians, which shall also serve as the main learning objectives motivating the preparation of this set of lecture notes. There are summarized as below: 
\begin{itemize}

\item To be able to describe/summarize and infer about the presence of stochastic trends, cointegration dynamics, unit root behaviour as well as structural breaks in macroeconomic time series.   

\item To be able to recover the structure of the macroeconomy based on data and theoretical-driven questions using structural analysis (i.e., identification of structural shocks) as well as impulse response analysis and construction of related test statistics and confidence intervals. 

\item To be able to make forecasts on key macroeconomic variables based on appropriate forecasting schemes and formulations of VAR and SVAR models. 

\item To be able to advise macroeconomic policy-makers (e.g., using counterfactual analysis and related methods). 

\end{itemize} 
Vector Autoregressive models are a popular econometric tool for estimating the dynamic interaction between the variables included in the system. Moreover, several test statistics are constructed using VAR models such that: Wald tests for Granger Causality, impulse response functions (IRFs) and forecast error variance decomposition (FEVDs). Therefore, inference on these statistics is typically based on either on first-order asymptotic approximations or on bootstrap methods. However, the deviation from \textit{i.i.d} innovations as in the case of conditional heteroscedasticity, invalidates a number of standard inference procedures such that the application of these methods may lead to conclusions that are not in line with the true underlying dynamics. Consequently, in many VAR applications there is need for inference methods that are valid when innovations are only serially correlated but not independent. Nevertheless, the VAR representation of the economy is suitable for estimation/testing as well as forecasting purposes but is not adequate for policy analysis which is the purpose of this course (thus, our interest in SVARs).

\newpage

\subsection{The Identification Problem}

Roughly speaking, the \textit{identification problem} of structural parameters in linear simultaneous equations models are closely related\footnote{The notion of coefficient restrictions was originally extended to show the equivalence relationship between identifiability and instrumental variables estimation, that is, the restrictions required for identification  give rise to instrumental variables required for estimation.}. Seminal papers discussing the identification problem include among others \cite{sargan1983identification} and \cite{dufour2003identification}. According to \cite{hausman1983identification}, necessary and sufficient conditions for identification with linear coefficient and covariance restrictions are developed in a limited information context. In particular, imposing covariance restrictions facilitate identification \textit{iff} they imply that a set of endogenous variables is predetermined in the equation of interest - which generalizes the notion of recursiveness for structural learning and causal recovery. Under full information, covariance restrictions imply that residuals from other equations are predetermined in a particular equation, and, under certain conditions, can facilitate system identification. This implies that in the general case, FIML first order conditions show that if a system of equations is identifiable as a whole, covariance restrictions cause residuals to behave as instruments. 

Furthermore, imposing exclusion restrictions and the normalization $\beta_{ii} = 1$, then the classical structural econometric specification for the simultaneous equation model $\boldsymbol{Y} \boldsymbol{B}^{\prime} + \boldsymbol{Z} \boldsymbol{\Gamma}^{\prime} = \boldsymbol{U}$, where the vector $\boldsymbol{y}_i$, for $i \in \left\{ 1,..., G \right\}$, includes $G$ jointly dependent random variables. A key result for system identification purposes is the fact that the relative triangularity of equations $(i,j)$ is precisely equivalent to a zero in the $(i,j)-$th position of $\boldsymbol{B}^{-1}$, denoted as $\boldsymbol{B}^{-1}_{ij}$. As a result, equations $(i,j)$ are relatively recursive \textit{if and only if} there are no paths by which a shock to $\boldsymbol{u}_j$ can be transmitted to $\boldsymbol{y}_i$.     
\begin{lemma}
Zero restrictions on $( \boldsymbol{B}, \boldsymbol{\Sigma}_1 )$ are sufficient for identification if and only if they induced the equivalence relation:
\begin{align}
\boldsymbol{\Psi} \boldsymbol{B}^{-1} \boldsymbol{\Sigma}_1^{\prime} = \boldsymbol{0}    
\end{align}
for some selection matrix $\boldsymbol{\Psi}$. 
\end{lemma}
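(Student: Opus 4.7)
The plan is to reformulate the identification requirement as a compatibility condition between the structural parameters and the reduced form, and then show that any admissible set of zero restrictions on $(\boldsymbol{B}, \boldsymbol{\Sigma}_1)$ is equivalent to a vanishing relation of the form stated in the lemma.

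First, I would set up the reduced form arising from $\boldsymbol{Y}\boldsymbol{B}^{\prime} + \boldsymbol{Z}\boldsymbol{\Gamma}^{\prime} = \boldsymbol{U}$ as $\boldsymbol{Y} = -\boldsymbol{Z}\boldsymbol{\Gamma}^{\prime}\boldsymbol{B}^{-\prime} + \boldsymbol{U}\boldsymbol{B}^{-\prime}$, with reduced-form innovation covariance $\boldsymbol{\Omega} = \boldsymbol{B}^{-1}\boldsymbol{\Sigma}_1\boldsymbol{B}^{-\prime}$. Identification then amounts to the requirement that any observationally equivalent structure $\tilde{\boldsymbol{B}} = \boldsymbol{F}\boldsymbol{B}$, $\tilde{\boldsymbol{\Gamma}} = \boldsymbol{F}\boldsymbol{\Gamma}$, $\tilde{\boldsymbol{\Sigma}}_1 = \boldsymbol{F}\boldsymbol{\Sigma}_1\boldsymbol{F}^{\prime}$ which respects the same zero pattern forces $\boldsymbol{F}$ to lie in the normalization subgroup, i.e., $\boldsymbol{F} = \boldsymbol{I}$ under the normalization $\beta_{ii} = 1$.

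Second, I would encode the zero restrictions through the selection matrix $\boldsymbol{\Psi}$, which picks out the coordinates in which either $\boldsymbol{B}$ or $\boldsymbol{\Sigma}_1$ is constrained to vanish. Invoking the relative triangularity observation recalled in the preceding paragraph, a zero in the $(i,j)$-th position of $\boldsymbol{B}^{-1}\boldsymbol{\Sigma}_1^{\prime}$ is equivalent to the statement that a shock to $\boldsymbol{u}_j$ has no transmission channel to $\boldsymbol{y}_i$, which is precisely the content of the combined coefficient-plus-covariance restrictions. Collecting all such constraints into one block equation yields the compact form $\boldsymbol{\Psi}\boldsymbol{B}^{-1}\boldsymbol{\Sigma}_1^{\prime} = \boldsymbol{0}$.

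Third, for the sufficiency direction I would verify that when this equation holds, the admissible transformations $\boldsymbol{F}$ compatible with the constraint pattern reduce to a single element via the Hausman-type rank condition on the Jacobian of the restriction map. For necessity, I would run the argument in reverse: if identification holds then the set of admissible $\boldsymbol{F}$ is a singleton, and by vectorising $\boldsymbol{B}$ and $\boldsymbol{\Sigma}_1$ and using standard Kronecker identities, the linear system encoding the restrictions can be shown to take precisely the form $\boldsymbol{\Psi}\boldsymbol{B}^{-1}\boldsymbol{\Sigma}_1^{\prime} = \boldsymbol{0}$ for a suitable selection matrix $\boldsymbol{\Psi}$.

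The main obstacle will be the bookkeeping between coefficient restrictions on $\boldsymbol{B}$ and covariance restrictions on $\boldsymbol{\Sigma}_1$: these two kinds of restrictions do not live in the same parameter space, and merging them into a single block equation requires careful vectorisation together with a full-row-rank condition on $\boldsymbol{\Psi}$. Without that rank condition the equivalence between unique recovery of the structure and the vanishing relation can fail, so establishing it cleanly is the decisive step of the argument.
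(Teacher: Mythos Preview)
The paper does not prove this lemma; it is stated as a classical result from \cite{hausman1983identification} and immediately followed by Corollary 1 with no intervening argument. So there is no paper-proof to compare against, and your sketch must be assessed on its own terms.

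The main gap is in your second step, where you misread the object $\boldsymbol{B}^{-1}\boldsymbol{\Sigma}_1^{\prime}$. The relative-triangularity remark in the paper concerns zeros of $\boldsymbol{B}^{-1}$ alone: $\boldsymbol{B}^{-1}_{ij}=0$ means no path from $u_j$ to $y_i$. The quantity $\boldsymbol{B}^{-1}\boldsymbol{\Sigma}_1$ is something different: since the reduced-form disturbance is $\boldsymbol{V}=\boldsymbol{B}^{-1}\boldsymbol{U}$, one has $\mathbb{E}[\boldsymbol{\Psi}\boldsymbol{V}\,u_1]=\boldsymbol{\Psi}\boldsymbol{B}^{-1}\boldsymbol{\Sigma}_1$, so the vanishing condition says exactly that the endogenous variables selected by $\boldsymbol{\Psi}$ are uncorrelated with $u_1$, i.e.\ predetermined in the first structural equation. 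That is the content of the lemma and of the surrounding discussion (Corollary 1, the diagonal-$\boldsymbol{\Sigma}$ remark, and Lemma 2 on the IV/2SLS equivalence). Your ``shock-to-$u_j$ has no channel to $y_i$'' reading collapses the covariance part $\boldsymbol{\Sigma}_1$ into the coefficient part and loses precisely the mechanism Hausman isolates.

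Consequently, the route you outline in step three---checking that admissible $\boldsymbol{F}$ collapse to the identity via a generic rank condition on a vectorised restriction map---does not by itself deliver the ``if and only if'': you would still need to show that the \emph{only} way zero restrictions on $(\boldsymbol{B},\boldsymbol{\Sigma}_1)$ can pin down the first equation is by producing enough predetermined variables to serve as instruments. The Hausman argument does this by equating identifiability with well-definedness of the 2SLS estimator using $\boldsymbol{W}=(\boldsymbol{\Psi}\boldsymbol{Y},\boldsymbol{Z})$ as instruments (exactly the content of Lemma 2 in the paper), and then tracing the instrument-validity condition $\operatorname{plim}T^{-1}\boldsymbol{\Psi}\boldsymbol{Y}'\boldsymbol{V}_1=\boldsymbol{\Psi}\boldsymbol{\Omega}_1=\boldsymbol{0}$ back to $\boldsymbol{\Psi}\boldsymbol{B}^{-1}\boldsymbol{\Sigma}_1=\boldsymbol{0}$. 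Recasting your sketch around this predeterminedness/IV interpretation, rather than around transmission paths and Kronecker bookkeeping, is the missing idea.
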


\begin{corollary}
If $\boldsymbol{y}_i$ is predetermined in the first equation, then every endogenous variable in the $i-$th structural equation $( \boldsymbol{y}_j )$ for which $\sigma_{j1} = 0$ is predetermined in the first structural equation.     
\end{corollary}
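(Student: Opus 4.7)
My plan is to read the corollary as a direct consequence of Lemma~1, combined with a covariance computation on the $i$-th structural equation. First I would translate the predicate ``$\boldsymbol{y}_k$ is predetermined in the first equation'' into the algebraic statement $\mathbb{E}[\boldsymbol{y}_k \boldsymbol{u}_1] = 0$, i.e.\ the vanishing of the $k$-th entry of $\boldsymbol{B}^{-1} \boldsymbol{\sigma}_1$, where $\boldsymbol{\sigma}_1$ denotes the first column of $\boldsymbol{\Sigma}$. In this reformulation, the hypothesis says the $i$-th entry vanishes and the conclusion asks for the vanishing of the $j$-th entries for the $j$ singled out by the corollary.

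Second, I would multiply the $i$-th structural equation
\begin{equation*}
\boldsymbol{y}_i \;+\; \sum_{k \neq i,\, \beta_{ik} \neq 0} \beta_{ik}\, \boldsymbol{y}_k \;+\; \boldsymbol{\gamma}_i^{\prime} \boldsymbol{z} \;=\; \boldsymbol{u}_i
\end{equation*}
by $\boldsymbol{u}_1$ and take expectations, using that the exogenous regressors $\boldsymbol{z}$ are uncorrelated with $\boldsymbol{u}_1$. This produces the identity
\begin{equation*}
\mathbb{E}[\boldsymbol{y}_i \boldsymbol{u}_1] \;+\; \sum_{k \neq i,\, \beta_{ik} \neq 0} \beta_{ik}\, \mathbb{E}[\boldsymbol{y}_k \boldsymbol{u}_1] \;=\; \sigma_{i1}.
\end{equation*}
The hypothesis that $\boldsymbol{y}_i$ is predetermined in the first equation kills the leading term and leaves a single linear relation among the cross-covariances $\mathbb{E}[\boldsymbol{y}_k \boldsymbol{u}_1]$ associated with the endogenous variables appearing in equation $i$.

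Third, I would close the argument by appealing to Lemma~1: enlarging the selection matrix $\boldsymbol{\Psi}$ so that it includes the rows indexed by the relevant $j$, the identity $\boldsymbol{\Psi} \boldsymbol{B}^{-1} \boldsymbol{\Sigma}_1^{\prime} = \boldsymbol{0}$ forces the corresponding entries of $\boldsymbol{B}^{-1} \boldsymbol{\sigma}_1$ to vanish once the covariance restrictions $\sigma_{j1} = 0$ are imposed. This is exactly the statement that each such $\boldsymbol{y}_j$ is predetermined in the first structural equation.

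The main obstacle will be this last step, namely that the single linear constraint from the second step, taken together with the covariance restrictions $\sigma_{j1} = 0$, is enough to force each individual $\mathbb{E}[\boldsymbol{y}_j \boldsymbol{u}_1]$ to vanish rather than merely their $\beta_{ik}$-weighted sum. I expect to resolve this via the FIML first-order-condition interpretation highlighted just before Lemma~1 — covariance-restricted residuals act as instruments — so that iterating the argument over every equation in which each relevant $\boldsymbol{y}_j$ appears disentangles the individual cross-covariances and yields the claim.
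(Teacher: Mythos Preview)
The paper does not supply a proof of this corollary; it is quoted from Hausman (1983) and immediately followed by an illustrative remark on the diagonal-$\boldsymbol{\Sigma}$ case. So there is no paper proof to compare against, and the question is simply whether your three steps close.

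They do not. Your second step is fine and yields the single relation
\[
\sum_{k\neq i,\;\beta_{ik}\neq 0}\beta_{ik}\,\mathbb{E}[\boldsymbol{y}_k\boldsymbol{u}_1]=\sigma_{i1},
\]
but your third step does not turn this one equation into the separate conclusions $\mathbb{E}[\boldsymbol{y}_j\boldsymbol{u}_1]=0$. The appeal to Lemma~1 runs the wrong way: that lemma says the zero restrictions suffice for identification \emph{iff} they induce $\boldsymbol{\Psi}\boldsymbol{B}^{-1}\boldsymbol{\Sigma}_1'=\boldsymbol{0}$; it does not hand you such a $\boldsymbol{\Psi}$, nor does it show that the particular rows you want lie in its range. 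Your fallback of ``iterating over every equation in which each relevant $\boldsymbol{y}_j$ appears'' is not a proof either: each new equation brings in new endogenous variables and new unknown covariances, and you give no argument that the resulting linear system is triangular or otherwise solvable entry by entry.

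The underlying conceptual gap is that ``$\boldsymbol{y}_i$ is predetermined in the first equation'' here is not a numerical statement about a fixed parameter point; it is the statement that the exclusion and covariance restrictions force $\mathbb{E}[\boldsymbol{y}_i\boldsymbol{u}_1]=(\boldsymbol{B}^{-1}\boldsymbol{\Sigma})_{i1}$ to vanish \emph{identically in the free parameters}. If you treat it pointwise, the corollary is false: with $G=3$, diagonal $\boldsymbol{\Sigma}$, and
\[
\boldsymbol{B}=\begin{pmatrix}1&0&0\\ a&1&b\\ c&0&1\end{pmatrix},\qquad a=bc,\;b,c\neq 0,
\]
one has $(\boldsymbol{B}^{-1})_{21}=0$ so $\boldsymbol{y}_2$ is predetermined in equation~1, yet $\boldsymbol{y}_3$ appears in equation~2 with $\sigma_{31}=0$ while $(\boldsymbol{B}^{-1})_{31}=-c\neq 0$. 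The result holds because $a=bc$ is a nongeneric coincidence, not a consequence of the zero pattern. The Hausman--Taylor argument therefore works at the level of identical vanishing and uses the ``no transmission path'' characterisation stated just before Lemma~1 (relative recursiveness $\Leftrightarrow (\boldsymbol{B}^{-1})_{ij}=0$ as a polynomial identity in the free entries of $\boldsymbol{B}$). Once you adopt that viewpoint, the conclusion follows by a path-counting/induction argument on the graph of $\boldsymbol{B}$, not by the covariance bookkeeping in your step~2.
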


\begin{remark}
Notice that in the case of a diagonal disturbance covariance matrix, an endogenous variable is predetermined in the first equation implying a special structure. In particular, if $\boldsymbol{y}_i$ is predetermined in the first equation, every endogenous variable in the $i-$th equation is also predetermined in the first equation. Consequently, every endogenous variable in their respective equations is predetermined in the first equation. In other words, in the case of a diagonal variance matrix $\boldsymbol{\Sigma}$, a relatively recursive situation is one in which a set of endogeneous variables determines itself independently from the first equation variable, and thus all elements of the set are predetermined in the first equation.      
\end{remark}

Next, we consider the equivalence between instrumental variables and covariance restrictions. 

\newpage

\begin{lemma}[Rank]
The parameters of the first structural equation are identifiable if and only if the 2SLS estimator is well-defined, using $\boldsymbol{W}$ as the matrix of instruments.     
\end{lemma}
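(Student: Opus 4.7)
The plan is to reduce both directions of the equivalence to a single algebraic rank condition on $\boldsymbol{W}^{\prime} \boldsymbol{X}_1$, where $\boldsymbol{X}_1 = [\boldsymbol{Y}_1, \boldsymbol{Z}_1]$ collects the included endogenous and predetermined regressors of the first structural equation and $\boldsymbol{W}$ is the candidate instrument matrix. First, I would apply the normalization $\beta_{11} = 1$ together with the exclusion restrictions to rewrite the first equation as $\boldsymbol{y}_1 = \boldsymbol{X}_1 \boldsymbol{\delta}_1 + \boldsymbol{u}_1$ with $\boldsymbol{\delta}_1 = (\boldsymbol{\beta}_1^{\prime}, \boldsymbol{\gamma}_1^{\prime})^{\prime}$, so that the 2SLS estimator $\hat{\boldsymbol{\delta}}_1 = ( \boldsymbol{X}_1^{\prime} \boldsymbol{P}_{\boldsymbol{W}} \boldsymbol{X}_1 )^{-1} \boldsymbol{X}_1^{\prime} \boldsymbol{P}_{\boldsymbol{W}} \boldsymbol{y}_1$, with $\boldsymbol{P}_{\boldsymbol{W}} = \boldsymbol{W} ( \boldsymbol{W}^{\prime} \boldsymbol{W} )^{-1} \boldsymbol{W}^{\prime}$, is well-defined precisely when $\boldsymbol{W}^{\prime} \boldsymbol{X}_1$ has full column rank.

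Second, for the direction in which identifiability delivers a well-defined 2SLS estimator, I would substitute the reduced form $\boldsymbol{Y}_1 = \boldsymbol{Z} \boldsymbol{\Pi}_1 + \boldsymbol{V}_1$, with $\boldsymbol{\Pi} = - \boldsymbol{\Gamma} ( \boldsymbol{B}^{-1} )^{\prime}$, into the population moment condition $\boldsymbol{W}^{\prime} ( \boldsymbol{y}_1 - \boldsymbol{X}_1 \boldsymbol{\delta}_1 ) = \boldsymbol{0}$. Identifiability of $\boldsymbol{\delta}_1$ means this system admits a unique solution at the true parameter value, which is equivalent, via projection onto the column space of $\boldsymbol{W}$, to $\mathrm{rank}( \boldsymbol{W}^{\prime} \boldsymbol{X}_1 ) = \dim( \boldsymbol{\delta}_1 )$. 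For the converse, if $\boldsymbol{W}^{\prime} \boldsymbol{X}_1$ has full column rank, then the same population moment condition uniquely pins down $\boldsymbol{\delta}_1$ and hence the first row of $\boldsymbol{B}$ and $\boldsymbol{\Gamma}$ subject to the exclusion restrictions, which is exactly the definition of identifiability.

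The main obstacle will be handling covariance restrictions on the same footing as purely exclusionary ones: by Lemma 1, identification is delivered by the equivalence relation $\boldsymbol{\Psi} \boldsymbol{B}^{-1} \boldsymbol{\Sigma}_1^{\prime} = \boldsymbol{0}$, and the instruments produced by covariance restrictions are residuals from other structural equations rather than observed exogenous regressors. I would therefore construct $\boldsymbol{W}$ by augmenting the block of excluded exogenous regressors with the predetermined residuals flagged by the Corollary, and then verify that $\mathrm{rank}( \boldsymbol{W}^{\prime} \boldsymbol{X}_1 ) = \dim( \boldsymbol{\delta}_1 )$ coincides with the non-existence of an alternative structural parameter vector satisfying the same restrictions. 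Careful bookkeeping of dimensions, so that $\boldsymbol{W}$ has at least as many columns as $\boldsymbol{X}_1$, together with a check that the augmented instruments remain orthogonal to $\boldsymbol{u}_1$ in view of the zero covariance restrictions, then closes the argument.
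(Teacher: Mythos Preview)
Your proposal is correct and follows essentially the same underlying idea as the paper, namely that both identifiability and well-definedness of 2SLS reduce to a full-column-rank condition on $\boldsymbol{W}^{\prime}\boldsymbol{X}_1$. However, your treatment is considerably more complete: the paper's proof is only a one-line sketch that displays the asymptotic moment
\[
\underset{T\to\infty}{\mathsf{plim}}\,\tfrac{1}{T}\boldsymbol{W}^{\prime}\boldsymbol{V}_1
=\begin{bmatrix}\boldsymbol{\Psi}\boldsymbol{\Omega}_1\\ \boldsymbol{0}\end{bmatrix},
\]
writes down the reduced form $\boldsymbol{Y}_1=\boldsymbol{Z}_1\boldsymbol{\Pi}_{11}^{\prime}+\boldsymbol{Z}_2\boldsymbol{\Pi}_{12}^{\prime}+\boldsymbol{V}_1$, and then leaves the rank equivalence and both directions of the ``if and only if'' to the reader. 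Your explicit two-way reduction through $\mathrm{rank}(\boldsymbol{W}^{\prime}\boldsymbol{X}_1)=\dim(\boldsymbol{\delta}_1)$, together with your handling of covariance-restriction instruments via Lemma~1 and the Corollary, is exactly the argument in Hausman's original 1983 article that the paper is summarizing; the paper simply compresses it to the key plim calculation.
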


\begin{proof}
In particular, asymptotically we have that 
\begin{align}
\underset{ T \to \infty }{ \mathsf{plim} } \frac{1}{T} \boldsymbol{W}^{\prime} \boldsymbol{V}_1 = 
\begin{bmatrix}
\underset{ T \to \infty }{ \mathsf{plim} } \frac{1}{T} \boldsymbol{\Psi} \boldsymbol{Y}^{\prime} \boldsymbol{V}_1   
\\
\underset{ T \to \infty }{ \mathsf{plim} } \frac{1}{T} \boldsymbol{Z}^{\prime} \boldsymbol{V}_1
\end{bmatrix}
=
\begin{bmatrix}
\boldsymbol{\Psi} \boldsymbol{\Omega}_1
\\
\boldsymbol{0}
\end{bmatrix}
\end{align}
where $\boldsymbol{V}_1$ is the reduced form disturbance corresponding to $\boldsymbol{Y}_1$. This implies that, 
\begin{align}
\boldsymbol{Y}_1 = \boldsymbol{Z}_1 \boldsymbol{\Pi}_{11}^{\prime} + \boldsymbol{Z}_2 \boldsymbol{\Pi}_{12}^{\prime}  + \boldsymbol{V}_1.   
\end{align}
Notice that rank conditions for identification are commonly used as identifying conditions for static and dynamic factor models as well as cointegrating regression models (e.g., VECM forms). 
\end{proof}
A well-known problem to the literature of  structural VAR models is that imposing the assumption of Gaussian errors renders non-identifiable structural parameters due to the nonlinearities and complex dependence structure of the system. Additional identifying restrictions are needed to ensure robust identification and estimation of SVARs. Based on these concerns, the framework of \cite{lanne2017identification} shows that in practice the Gaussian case is an exception in that a SVAR model whose error vector consists of independent non-Gaussian components is, without any additional restrictions, identified and leads to essentially unique responses. The identification scheme employs the non-Gaussianity of structural shocks while an MLE estimator for the parameters of the non-Gaussian SVAR model is consistent and asymptotically normally distributed. Moreover, the scheme allows to impose additional identifying restrictions which can be formulated into testable hypotheses. Further extensions of the non-Gaussian identification strategy includes cases such as when the system includes only strictly positive components (see, \cite{nyberg2022structural}), cases when a GMM estimation approach is employed (see, \cite{lanne2021gmm} and \cite{keweloh2021generalized}) as well as specific cases when structural shocks are leptokurtic (e.g. see \cite{lanne2023identifying}) or heavy-tailed (e.g., see \cite{anttonen2023statistically}).

Furthermore, while the reduced-form VAR model can be seen as a convenient description of the joint dynamics of a number of time series that facilitates forecasting, the SVAR model is more appropriate for answering economic questions of theoretical and practical interest. The main tools in analyzing the dynamics in SVAR models are the impulse response function and the forecast error variance decomposition. Moreover, usually when fitting a SVAR model, the joint distribution of the error terms is almost always (either explicitly or implicitly) assumed to follow a multivariate Gaussian distribution which might render the system non-identifiable especially under the presence of non-linearities, higher-order dynamics and heavy-tails. The positive news is that the joint distribution of the reduced-form errors is fully determined by their covariances only but this which implies that the structural errors of the system cannot be identified unless we impose additional linear restrictions (see, \cite{lanne2017identification}). However the use of suitable orthogonal transformations or even permutations (due to being exchangeable random variables), can be shown to induce observationally equivalent SVAR processes.

\newpage

Furthermore, we discuss some key identification strategies in the following sections. Generally, speaking understanding all the available tools in our disposal is important when considering relevant empirical and theoretical driven questions in relation to the impact of structural shocks on economic outcomes. Therefore, in this set of lecture notes we present some key concepts and recent developments in the literature with focus on the implementation of econometric and statistical tools for the identification and estimation of  structural vector autoregressive models, (SVAR). In particular, we discuss several different approaches that are commonly employed for the structural identification of these interdependent systems in relation to economic phenomena such as business cycle fluctuations and macroeconomic conditions in order to obtain policy recommendations. On the other hand, exogenous variation to macroeconomic conditions such as the adverse effect of climate change across interconnected economies requires novel econometric methods to ensure robust identification and estimation\footnote{In particular, the omission of such an important component as the climate when considering the structural analysis of time series models could lead to non-causality, leading to inaccurate model estimates and forecasts (see, \cite{lutkepohl1982non}). Further properties on Noncausal vector AR processes are presented by \cite{davis2020noncausal}.} (e.g., see \cite{pesaran2000structural}).

The stability of cointegrated VAR processes can be assessed with respect to the long-run relations as well as with respect to the short-run parameters. Usually the long-run equilibrium relations are based on economic theory and so we expect them to be relatively more stable after a perturbation of structural shocks (such as a change in the monetary policy framework), than the short-run coefficients which are data-driven. Consider for example the study of the transmission of monetrary policy shocks to macroeconomic variables using structural cointegrated VAR models. Roughly speaking the identification of SVAR models is achieved by imposing restrictions on the covariance matrix of the residuals of a reduced-form VAR to provide an economic interpretation of shocks while a structural cointegrated VAR model is identified by imposing restrictions on the cointegrating space. Consequently, based on a structural identification scheme of the covariance matrix then metrics such as impulse-responses, variance-decompositions and historical decompositions can be constructed.

In particular, \cite{rubio2010structural} verifies the identification for linear Gaussian models by checking whether two parameter points are observetionally equivalent \textit{iff} they have the same reduced-form representation. Moreover, a global identification can be verified by checking whether the model is identified at different points in the parameter space prior to the estimation step. On the other hand, a well-know fact in the literature of nonstationary autoregressive models, asymptotic distribution discontinuities arise in the case when the parameter is at the boundaries of the parameter space. 

\begin{definition}
Consider a $\mathsf{SVAR(p)}$ process with restrictions imposed by $R$. Then, the $\mathsf{SVAR(p)}$ is exactly identified $\textit{iff}$, for almost any reduced-form parameter point $( \boldsymbol{B}, \boldsymbol{\Sigma} )$, there exists a unique structural parameter point $( \boldsymbol{A}_0, \boldsymbol{A}_{+} ) \in \mathbb{R}$ such that $g ( \boldsymbol{A}_0, \boldsymbol{A}_{+} ) = ( \boldsymbol{B}, \boldsymbol{\Sigma} )$.    
\end{definition}

Following \cite{rubio2010structural}
\begin{align}
\mathcal{K} := \big\{  ( \boldsymbol{A}_0, \boldsymbol{A}_{+} ) \in \mathbb{R} \ \big| \ \mathsf{rank} \big( \boldsymbol{M}_j \left( f ( \boldsymbol{A}_0, \boldsymbol{A}_{+} ) \right) \big) = n, \ \ \ 1 \leq j \leq n \big\}.    
\end{align}

\newpage

Generally, it is well-known that identification in parametric models implies that a specific model structure is identifiable if there is no other structure which is observationally equivalent (see, also \cite{sargent1976observational}). For instance, the identification scheme proposed by \cite{lanne2017identification} requires the construction of observationally equivalent SVAR processes when comparing two $MA(\infty)$ representations of the model. In conjecture with conventional rank conditions that ensure the valid identification of SVAR models under non-Gaussianity the identification of structural parameters can be extended into more complex model configurations regardless of the presence of nonlinear or nonstationary dynamics. 

More specifically, the main intuition of the identification methodology of \cite{rubio2010structural}, is to establish general rank conditions for global identification of \textit{identified} and \textit{exactly identified} models. Moreover, \cite{bognanni2018class} highlights that the key conceptual properties which ensure SVAR tractability include: $\textit{(i)}$ observational equivalence of points $( \boldsymbol{A}, \boldsymbol{F} )$ and $( \boldsymbol{A Q}, \boldsymbol{F Q} )$ for $\boldsymbol{Q} \in \mathcal{O}_n$ in the structural model, and $\textit{(ii)}$ the ability to reparametrize the structural model from $( \boldsymbol{A}, \boldsymbol{F} )$ to $( \boldsymbol{B}, \boldsymbol{H}, \boldsymbol{Q} )$ and separate identifiable elements $( \boldsymbol{B}, \boldsymbol{H} )$ and non-identifiable elements $\boldsymbol{Q}$. Thus, the above definitions imply the existence of a unique mapping of reduced-form parameters which is invariant to scaling and reordering.

Another issue discussed by \cite{forni2014sufficient} is the presence of sufficient information to ensure that the conditions of existence and uniqueness are ensure for identification purposes. Specifically \cite{forni2014sufficient} present necessary and sufficient conditions under which a VAR contains sufficient information to estimate the structural shocks. Based on this theoretical result the authors propose a test statistic for detecting information deficiency and a procedure to amend a deficient VAR model. A relevant example is the bivariate VAR with unemployment and labour productivity which is found to be informentionally deficient. The building block of this framework is a representation of the economy where $q$ mutually orthogonal structural shocks affect macro variables through square-summable impulse-response functions (see, also \cite{kocikecki2018global, kocikecki2023solution}). 

\begin{assumption}[MA representation]
The $n-$dimensional vector $x_t$ of stationary series satisfies 
\begin{align}
\boldsymbol{x}_t = \boldsymbol{F}(L) \boldsymbol{u}_t,  \ \ \ \boldsymbol{F}(L) := \sum_{j=0}^{ \infty } \boldsymbol{F}_j L^j,   
\end{align}
where $\boldsymbol{u}_t$ is a $q-$dimensional white noise vector of structural macroeconomic shocks and $\boldsymbol{F}(L)$ is an $( n \times q)$ matrix of square-summable linear filters in the non-negative powers of the lag operator $L$. The above formulation also corresponds o the steady-state equilibrium of a DSGE model. 
\end{assumption}

\begin{assumption}[Information Set]
We shall say that the information set $\mathcal{X}_t^{*}$ is given by the closed linear space spanned by present and past values of the variables in $x^{*}_t$, where
\begin{align}
\boldsymbol{x}_t^{*} = \boldsymbol{x}_t + \boldsymbol{\xi}_t \equiv \boldsymbol{F}(L) \boldsymbol{u}_t + \boldsymbol{\xi}_t,    
\end{align}
\end{assumption}
In practice the number of observable variables $n$ is very large, so that the econometrician needs to reduce it in order to estimate a VAR. The VAR information set is then spanned by an $s-$dimensional sub-vector of $\boldsymbol{x}_t^{*}$, or, more generally, an $s-$dimensional linear combination of $\boldsymbol{x}_t^{*}$, say $\boldsymbol{z}_t^{*} = \boldsymbol{C} \boldsymbol{x}_t^{*}$.






     

\newpage

\section{Linear Vector Autoregressions}
\label{Section2}

\begin{example}
Consider the model $\boldsymbol{y}_t = \boldsymbol{A} \boldsymbol{y}_{t-1} + \boldsymbol{e}_t$, which implies that (see, \cite{lai1982least})   
\begin{align}
\boldsymbol{y}_t &= \boldsymbol{A}^n \boldsymbol{y}_0 + \sum_{j=1}^n \boldsymbol{A}^{n-j}  \boldsymbol{e}_j, \ \ \ t = 1,...,n. 
\\
\boldsymbol{A} &= 
\begin{pmatrix}
\alpha_1 & \hdots & \alpha_{k-1} & \alpha_k
\\
 &  \boldsymbol{I}_{ k-1} &  &  \boldsymbol{0}
\end{pmatrix}
\end{align}
which is equivalent to the matrix coefficient of a $\mathsf{VAR}(1)$ process. Moreover, expressing the matrix $\boldsymbol{A}$ into its Jordan form we obtain $\boldsymbol{A} = \boldsymbol{C} \boldsymbol{D} \boldsymbol{C}^{-1}$, where $\boldsymbol{D} = \mathsf{diag} \left\{ \boldsymbol{D}_1,..., \boldsymbol{D}_q  \right\}$  
\begin{align}
\boldsymbol{D}_j = 
\begin{pmatrix}
z_j & 1 & 0 & \hdots & 0
\\
0 & z_j & 1 & \hdots & 0
\\
\vdots & \vdots & \ddots &  \ddots & 0  
\\
0 & 0 & \hdots & \hdots & z_j 
\end{pmatrix}_{ ( m_j \times m_j  )  }
\end{align}
with $z_j$ being the root of $\varphi (z)$ with multiplicity $m_j$ and $\boldsymbol{C}$ a nonsingular matrix. Moreover, let $M = \mathsf{max}_{j} m_j$, then it holds that 
\begin{align}
\norm{ \boldsymbol{A}^n } = \norm{ \boldsymbol{C} \boldsymbol{D}^n \boldsymbol{C}^{-1}  }  \leq \norm{ \boldsymbol{C}  }  \norm{ \boldsymbol{C}^{-1} } \sum_{j=1}^q \norm{  \boldsymbol{D}_j^n } = \mathcal{O} ( n^{M-1} ).  
\end{align}
In particular, when the roots of the characteristic polynomial lie strictly inside the unit circle, then with the independent white noise $\boldsymbol{e}_t$, the system is stable. Moreover, by allowing the roots of the characteristic polynomial to lie on the unit circle, we can formulate unstable but non-explosive systems related to the ARIMA models. Notice the analysis of time series regression models includes unit-roots and partially nonstationary (cointegrated) models.   
\end{example}

\subsection{Cointegration and Vector Autoregressive Processes}

\subsubsection{Integration Order}

An important property of $I(1)$ variables is that there can be linear combinations of these variables that are $I(0)$. If this is so then these variables are said to be cointegrated. Notice that econometric cointegration analysis can be used to overcome difficulties associated with stochastic trends in time series and applied to test whether there exist combinations of non-stationary series that are themselfs stationary. 
\begin{definition}
Suppose that $y_t$ is $I(1)$. Then $y_t$ is cointegrated if there exists an $N \times r$ matrix $\beta$, of full column rank and where $0 < r < N$, such that the $r$ linear combinations, $\beta^{\prime} y_t = u_t$, are $I(0)$. We say that the dimension $r$ is the \textit{cointegration rank} and the columns of $\beta$ are the \textit{cointegrating vectors}. 
\end{definition}

\newpage 

\begin{itemize}

\item Testing for cointegrating relations that economic theory predicts should exist, implies that the null hypothesis of noncointegration is not rejected. However, under the presence of breaks there is a need for implementing tests of the null hypothesis of non-cointegration, against alternatives allowing cointegrating relations subject to breaks.

\item Relevant asymptotic theory results on the efficient estimation of the parameter path in unstable time series models can be found in the study of \cite{muller2010efficient}.  Further issues include the use of exogenous regressors (e.g., see \cite{pesaran2000structural}) in VECM representations.  

\end{itemize}

\subsubsection{Cointegrated Vector Autoregressive Models}

A VAR has several equivalent representations that are valuable for understanding the interactions between exogeneity, cointegration and economic policy analysis. To start, the levels form of the $s-$th order Gaussian VAR for $x$ is
\begin{align}
x_t = K q_t + \sum_{j=1}^s A_j x_{t-j} + \varepsilon_t, \ \ \ \varepsilon_t \sim \mathcal{N} \left( 0, \Sigma \right).     
\end{align}
where $K$ is an $N \times N_0$ matrix of coefficients of the $N_0$ deterministic variables $q_t$. Suppose that we have a vector $Y_t = \left[ y_{1t}, y_{2t},..., y_{nt} \right]^{\prime}$ that does not satisfy the conditions for stationarity. One way to achieve stationarity might be to model $\Delta y_t$, rather than $y_t$ itself. However, differencing can discard important information about the equilibrium relationships between the variables. This is because another way to achieve stationarity can be through linear combinations of the levels of the variables. Thus, if such linear combinations exist then we have cointegration and the variables are said to be cointegrated. The notion of Cointegration has some important implications: (i) It implies a set of dynamic long-run equilibria between the variables, (ii) Estimates of the cointegrating relationships are super-consistent, they converge at rate $T$ rather than $\sqrt{T}$, and (iii) Modelling cointegrated variables allows for separate short-run and long-run dynamic responses. Further details on Vector Autoregression and Cointegration can be found in the corresponding Chapter of \cite{watson1994vector}. 

\begin{example}
Let $\boldsymbol{x}_t$ be an $I(1)$ vector of $n$ components, each with possibly deterministic trend in mean. Suppose that the system can be written as a finite-order vector autoregression:
\begin{align}
\boldsymbol{x}_t = \boldsymbol{\mu} + \boldsymbol{\pi}_1 \boldsymbol{x}_{t-1}  + \boldsymbol{\pi}_2 \boldsymbol{x}_{t-2} + ... + \boldsymbol{\pi}_k \boldsymbol{x}_{t-k} + \boldsymbol{\varepsilon}_t, \ \ \ t = 1,..., T    
\end{align}
Then, the model can be re-written in error-correction form as below
\begin{align*}
\Delta \boldsymbol{x}_t 
= 
\boldsymbol{\mu} + \boldsymbol{\Gamma}_1 \Delta \boldsymbol{x}_{t-1} + \boldsymbol{\Gamma}_2 \Delta \boldsymbol{x}_{t-2} + ... + \boldsymbol{\Gamma}_{k-1} \Delta \boldsymbol{x}_{t-k+1} + \boldsymbol{\pi} \boldsymbol{x}_{t-k} + \boldsymbol{\varepsilon}_t   
\equiv
\boldsymbol{\mu} + \sum_{i=1}^{k-1} \boldsymbol{\Gamma}_i (1 - L) L^i \boldsymbol{x}_i + \boldsymbol{\pi} \boldsymbol{x}_{t-k} + \boldsymbol{\varepsilon}_t  
\end{align*}
Therefore, we get the following system equation representation $\boldsymbol{\pi} (L) \boldsymbol{x}_t = \boldsymbol{\mu} + \boldsymbol{\varepsilon}_t$, $t = 1,...,T$, where $
\boldsymbol{\pi} (L) = (1 - L) \boldsymbol{I}_n - \sum_{i=1}^{k-1} \boldsymbol{\Gamma}_i (1-L) L^i - \boldsymbol{\pi} L^k,$ and $
\boldsymbol{\Gamma}_i = - \boldsymbol{I}_n + \boldsymbol{\pi}_1 + \boldsymbol{\pi}_2 + ... + \boldsymbol{\pi}_i$, for $i = 1,...,k$.
\end{example}

\newpage 

\begin{remark}
Within a cointegrated analysis context, it has been proved to be advantageous for both theoretical and practical purposes to separate the long-run behaviour of the system from the more transient dynamics by using the error correction form of the model which is useful when measuring the impact of structural shocks to economic outcomes such as monetary policy, fiscal policy and macro aggregates. Thus when employing vector autoregressions to model the comovements of macroeconomic aggregates time series stationarity is assumed (e.g., by using a first-difference transformations). Moreover, cointegration dynamics can be correctly specified using a Cointegrated VAR representation. Our main interest is the structural analysis of VAR models under time series stationarity as we explain below. 
\end{remark}

\begin{example}
Consider the following data generating process as below
\begin{align}
\Delta X_t = \alpha \beta^{\prime} X_{t-1} + \sum_{i=1}^{k-1} \Gamma_i \Delta X_{t-i} + \varepsilon_t, \ \ \ \text{for} \ t = 1,...,T, 
\end{align}
where $\left\{ \varepsilon_t \right\}$ is \textit{i.i.d} with mean zero and full-rank covariance matrix $\Omega$, and where the initial values $X_{1-k},..., X_0$ are fixed. We are interested in the null hypothesis $H_0: \beta = \beta_0$. Thus, when $\beta$ is a known $(p \times r)$ matrix of full column rank $\beta_0$, the subspace spanned by $\beta$ and $\beta_0$ are identical.  
\end{example}

\begin{example}[Wage formation with Cointegrated VAR, see \cite{petursson2001wage}]
A Gaussian VAR$(k)$ model is used can be rewritten in the usual error correction form in terms of stationary variables
\begin{align}
\Delta x_t = \sum_{j=1}^{ k-1} \Gamma_j \Delta x_{t-j} + \alpha \beta^{\top} x_{t-1} + \Phi \Delta + \varepsilon_t, \ \ \ t = 1,..., T    
\end{align}
Using the cointegrated VAR model we can characterize the cointegrating relations based on a model of wage bargaining between trade unions and firms. Thus the economic theory implies that: (i) the marginal productivity condition for labour (downward sloping demand curve of firms product), and (ii) real wage relation derived from the bargaining between trade unions and firms over wage. In other words, two stationary combinations of the non-stationary data exist and that these can be identified based on the prior knowledge of these two economic relations. The associated imposed parameter restrictions are not rejected by the data and simplified analysis is achieved via the partial VAR formulation which corresponds to the conditional system for $\Delta x_{1t}$ given the past and $\Delta x_{2t}$.
\end{example}

\begin{remark}
Let $\alpha$ and $\beta$ be defined as below
$
\beta = 
\begin{pmatrix}
I_r & 0
\\
- \beta_2 & I_{k-r}
\end{pmatrix}
\ \ \ \text{and}
\ \ \
\alpha = 
\begin{pmatrix}
\alpha_{11} & \alpha_{12}
\\
0 & \alpha_{2}
\end{pmatrix}
$, then the cointegration matrix becomes as below
\begin{align}
\Pi = \beta \alpha
\begin{pmatrix}
\alpha_{11} & \alpha_{12}
\\
- \beta_2 \alpha_{11} & - \beta_2 \alpha_{12} + \alpha_{22}
\end{pmatrix},
\end{align}
where all the above parameter coefficients are unrestricted. Moreover, according to \cite{kleibergen1994shape} the behaviour of the likelihood is due to the nonidentifiedness of certain parameters, which occurs when the model is a difference stationary one. In other words, flat priors are informative in cointegration models because difference stationary models are infinitely favoured.   
\end{remark}

\newpage 

\begin{example}
Imposing long-run restrictions is a commonly used approach for the identification of structural shocks. However, usually identification assumptions for cointegrating vectors can be complicated, however following the approach of \cite{zha1999block}, which applies block recursive assumptions to the structural VECM with long-run restrictions. The block recursive system (see, also \cite{keweloh2023monetary}) is well developed in structural VAR and VECM models with short-run restrictions but not so developed in the structural VECM with long-run restrictions (see, \cite{hecq2000permanent}). 

Consider the structural VAR model as below
\begin{align*}
B(L) x_t = \mu + u_t, \ \ B(L) = B_0 - \sum_{j=1}^p B_j L^j
\end{align*}
where $u_t$ is a vector of structural disturbances and $x_t$ is an $n-$dimensional parameter of interest. Moreover, consider the structural Wold representation $\Delta x_t = \delta + \Gamma(L) u_t$. Then the corresponding reduced-form of the model is expressed as below
\begin{align*}
\Delta x_t = \delta + C(L) \varepsilon_t, \ \ \ \varepsilon_t = \Gamma_0 u_t \ \ \text{where} \ \ \Gamma (L) = \left( \Gamma_0 + \sum_{j=1}^{\infty} \Gamma_j L^j \right) \ \text{and} \ C(L) = \Gamma (L) \Gamma_0^{-1}.
\end{align*}
Suppose that $x_t$ is a vector of cointegrated time series where $u_t$ is an $( n \times 1)$ vector of serially uncorrelated structural disturbances with a mean zero and a covariance matrix $\Sigma_{u}$ and $\varepsilon_t$ is an $( n \times 1)$ vector of serially uncorrelated linear forecast errors with a mean of zero and a covariance matrix $\Sigma_{\varepsilon}$. 

In addition, assume that all variables are cointegrated such that there is a reduced number of common stochastic trends $( k = n - r)$ and a number of transitory components $r$. Moreover, these common trends are assumed to be generated by permanent shocks such that the vector of error terms $u_t$ is decomposed into $u_t \equiv ( u_{kt}^{\prime}, u_{rt}^{\prime}  )^{\prime}$, where $u_{kt}$ is a $k-$dimensional vector of permanent shocks and $u_{rt}$ is an $r-$dimensional vector of transitory shocks.

\begin{itemize}

\item Empirical evidence: The authors investigate the effects of contractionary shock to the monetary policy on economic variables in a seven-variable VECM with long-run restrictions. In particular, the permanent shocks include a shock that affects the long-run level of real exchange rates (a real-exchange-rate shock) and a US monetary policy shock that affects the long-run level of US prices. Thus a Japanese monetary policy shock can be considered as a transitory shock since the model does not include the Japanese price, while it can be considered as a permanent shock it it affects the long-run level of Japanese interest rates. 

\item Theoretical Evidence: When the monetary policy and exchange rates are simultaneously determined contemporaneously, then the recursive ordering assumptions are not satisfied. On the other hand, identification with long-run restrictions does not suffer from this simultaneity problem because no zero restrictions are imposed on the structural parameter $B_0$.
    
\end{itemize}

\end{example}

\newpage 

\begin{example}[The causal effects of fiscal policy shocks, see \cite{caldara2017analytics}]

More recently attention has been paid in the role of fiscal policy for stabilizing business cycles. Since empirical studies have not reached a consensus about the effects of fiscal policy on macroeconomic variables to assess the effects of fiscal policy the SVAR methodology is commonly used (see also \cite{boiciuc2015effects}). The structural analysis of VARs allows to estimate the effects of fiscal policy shocks on economic activity.

Consider the structural representation of a VAR model is given by 
\begin{align}
A_0 x_t = A(L) x_{t-1} + B \varepsilon_t    
\end{align}
\begin{itemize}
    \item $A_0$ is the matrix of contemporaneous influence between the variables,

    \item $x_t$ is a vector of the endogenous macroeconomic variables such as government expenditures, real output, inflation, tax revenues and short-term interest rates.

    \item $A(L)$ is an $( n \times n)$ matrix of lag-length $L$, representing impulse-response functions of the shocks to the elements of $x_t$, and $B$ is an $( n \times n )$ matrix that captures the linear relations between the structural shocks and those in the reduced form.
    
\end{itemize}
To estimate the SVAR the reduced form is given by $x_t = C(L) x_{t-1} + u_t$ where $u_t = A_0^{-1} B \varepsilon_t$. The relation between structural shocks and reduced form shocks is $A_0 u_t = B \varepsilon_t$. 
Recall that a $\mathsf{SVAR}(p)$ model with the following representation
\begin{align*}
y_t = \mu + A_1 y_{t-1} + A_2 y_{t-2} + ... + A_p y_{t-p} + \varepsilon_t
\end{align*}
can be written as $y_t = \big( I_n \otimes x_t^{\prime} \big) \beta + \varepsilon_t$, where $I_n$ is the identity matrix and $\beta = \mathsf{vec} \left( [ \mu, A_1, A_2,..., A_p ]^{\prime} \right)$.

\end{example}

\medskip

\begin{example}
Consider the study of \cite{bilgili2012impact} who attempts to reveal explicitly whether or not biomass consumption can mitigate carbon dioxide $( CO_2 )$ emissions. In other words, in order to correctly capture the underline features in the data (and produce unbiased and efficient estimators), a cointegrating regression specification with regime shifts (structural breaks) are essential to understand the long-run equilibrium of $CO_2$ emissions with biomass consumption as well as fossil fuel consumption. Their main findings include the presence of a statistical positive impact of fuel's consumption and a statistical negative impact of biomass consumption on $CO_2$ emissions.  A climate-economic related event is a change in the policy of a Country's Energy Authority by the introduction of a policy act to introduce measures for diminishing $CO_2$ emissions (e.g., an increase in biomass consumption). In particular, this can be detected in the data by identifying (dating) the presence of a regime shift using the cointegration model with structural breaks. Furthermore, a statistical negative impact on $CO_2$ emissions (or equivalently a statistical positive impact in $CO_2$ emissions reductions), might be expected to increase/decrease through possible government incentives for research and development on biomass plants (assuming that the magnitudes of other parameters, such as population growth and growth in demand for energy, will not increase beyond the expectations) (see, also \cite{phillips2020econometric}).  
\end{example}

\newpage 

\begin{example}[see, \cite{moon2002minimum}]
Suppose that $\phi = 1$ and $\mu = 0$. Define with $y_{1,t} = C_t$ and $\boldsymbol{y}_{2t} = [ W_t, I_t ]$. According to the permanent-income model all three variables are integrated of order one $I(1)$ which implies the following cointegration regression model 
\begin{align}
\begin{bmatrix}
y_{1,t}
\\
\boldsymbol{y}_{2t}
\end{bmatrix} 
=
\begin{bmatrix}
\boldsymbol{A}^{\prime}
\\
\boldsymbol{I}_2
\end{bmatrix}
\boldsymbol{y}_{2t-1} + \boldsymbol{u}_t, 
\end{align}
In particular, the distribution theory of estimators of the unrestricted cointegration vector $\boldsymbol{A}$ is well-developed in the literature which typically have a $T-$convergence rate. Moreover, the MLE and FM-OLS estimators of \cite{phillips1991optimal} and \cite{phillips1990statistical} respectively have a mixed-Gaussian limit distribution with a random covariance matrix. In addition, several studies concerning the estimation of the restricted cointegration vectors are also presented in the literature. In particular, \cite{saikkonen1995problems} extends the analysis for the estimation of cointegration vectors with linear restrictions to the case in which the restriction function is nonlinear and twice differentiable. More precisely, he provides stochastic equicontinuity conditions to make the conventional Taylor approximation approach valid.  Even if the income process is stationary such that $0 \leq \phi < 1$ and $\mu > 0$, both consumption and wealth are $I(1)$ processes under the optimal consumption choice. Therefore, the optimal decision rule creates restrictions between parameters that are associated with long-run relationships and parameters that control the short-run dynamics (see, also \cite{blanchard1988dynamic}). Define with $
y_{1,t} = C_t$, $\boldsymbol{y}_{2,t} = [ \Delta W_t, I_t ]^{\prime}$, $x_{1,t} = W_{t-1}$, $\boldsymbol{x}_{2,t} = [ 1, I_{t-1} ]^{\prime}$, $\boldsymbol{y}_t = [ y_{1,t}$, $\boldsymbol{y}_{2,t}   ]^{\prime}$, $\boldsymbol{x}_t = [ x_{1,t}, \boldsymbol{x}_{2,t}   ]^{\prime}$.
Thus, the consumption model is nested in the following general specification
\begin{align}
\begin{bmatrix}
y_{1,t}
\\
\boldsymbol{y}_{2t}
\end{bmatrix} 
=
\begin{bmatrix}
\boldsymbol{A}_{11}^{\prime} & \boldsymbol{A}_{21}^{\prime} 
\\
0 & \boldsymbol{A}_{22}^{\prime}
\end{bmatrix}
\begin{bmatrix}
x_{1,t}
\\
\boldsymbol{x}_{2,t}
\end{bmatrix} + 
\begin{bmatrix}
u_{1,t}
\\
\boldsymbol{u}_{2,t}
\end{bmatrix},
\end{align}
Define with $a_{ij} = \mathsf{vec} ( A_{ij} )$. Then the unrestricted parameter vector $a = [ a_{11}^{\prime}, a_{21}^{\prime}, a_{22}^{\prime}   ]^{\prime}$ and $b = [ r, \mu, \phi ]^{\prime}$ is composed of the structural parameters. Assume that the partial sum process of $\Delta W_t$ converges to a vector Brownian motion such that 
\begin{align}
\frac{1}{ \sqrt{T} } \sum_{t=1}^{ \floor{Tr} } \Delta W_t \Rightarrow \boldsymbol{B}(r) \equiv BM (\boldsymbol{\Omega}),    
\end{align}
where $\boldsymbol{\Omega}$ is the long-run covariance matrix of $\Delta W_t$ defined by\footnote{Related asymptotic theory with examples can be found in  \cite{katsouris2023limit} who present unit-root dynamics and weak convergence arguments to a suitable topological space. Moreover, asymptotic theory results for a cointegrating predictive regression model with near units roots when modelling the term structure of interest rates is presented by \cite{lanne2000near}. Notice that \cite{moon2002minimum} show the consistency of the estimator using a Skorohod representation of the weakly converging objective function and derive the limit distribution of the MD estimator for smooth restriction functions. }
\begin{align}
\boldsymbol{\Omega} := \underset{ T \to \infty }{ \mathsf{lim} }\  \frac{1}{T} \mathbb{E} \left[ \left( \sum_{t=1}^T \Delta W_t \right) \left( \sum_{t=1}^T \Delta W_t \right)^{\prime} \right].    
\end{align}
\end{example}

\newpage

\begin{example}[A simple climate-economic system, see \cite{pretis2021exogeneity}]
Climate and economic variables are observed over time and space. Denote with $y_t = \big( y_{1t}^{\prime}, y_{2t}^{\prime} )$ denotes the relevant climate and socio-economic variables. Denote with $Y_j^i = ( y_i,..., y_j )$ for $i \leq j$, such that $Y_T^1 = (y_1,..., y_T)$.  Then, the model of interest can be characterized as
\begin{align}
f_Y \left( Y_T^1 | Y_0, \theta  \right) = \prod_{t=1}^T f_y \big( y_t | Y_{t-1}, \theta  \big), \ \ \ \theta \in \Theta \subset \mathbb{R}^{n},   
\end{align}
where $f_y \big( y_t | Y_{t-1}, \theta  \big)$ denoting the sequentially-conditioned, joint-density for $y_t$, with $( n \times 1 )$ parameter vector $\theta$ lying in parameter space $\Theta$. A vector autoregression process within a cointegration framework due to the fact that economic and climate time-series are pre-dominantly non-stationary time series due to the presence of stochastic trends and structural breaks. Therefore, climate-economic systems can be well-approximated by cointegrated econometric models, although in addition we are interested to measure the weather shocks into the macro-economy (see, \cite{pretis2020econometric, pretis2021exogeneity}).  
\end{example}

\begin{example}
Consider the Energy Balance System studied by \cite{pretis2020econometric} (see, also \cite{carrion2021statistical} and \cite{bruns2020multicointegration}), such that the law of motion is captured by the following system of differential equations 
\begin{align}
C_m \frac{ d T_m }{ dt } &= - \lambda T_m + F - \gamma ( T_m - T_d )
\\
C_d \frac{ d T_d }{ dt } &= \gamma ( T_m - T_d )
\end{align}
The feedback parameter $\lambda$ is of particular interest as it determines the equilibrium response of surface temperatures $T_m$ to a change in the forcing $F$ (e.g., from increased $CO_2$ concentrations). Furthermore, focusing on a cointegration analysis which allows to examine whether there exists a combinations of time-series that are themselves stationary, it can be shown that the system of differential equations of the two-component EBM is equivalent to a cointegrated systen with restrictions on the parameters. Thus, a climate-economic system is formulated as a cointegrated vector-autoregression, where the regressand is decomposed into two variables, that is, $y_t = (e_t, c_t)$, such that $e_t$ represents a univariate economic variable and $c_t$ represents a univariate climate variable. 
\begin{align}
y_t &= \sum_{j=1}^s A_j y_{t-j} + \mu + \varepsilon_t, \ \ \  \varepsilon_t \sim \mathcal{N} \left( 0, \Sigma \right)  
\\
\Delta y_t &= \alpha \beta^{\prime} y_{t-1} + \Gamma \Delta y_{t-1} + \mu + \varepsilon_t,
\end{align}
where $y_t = ( e_t, c_t )^{\prime}, \varepsilon_t = ( \varepsilon_{e,t}, \varepsilon_{c,t} )$ and $\Delta y_t = y_t - y_{t-1}$. Thus, the full system can be written as 
\begin{align}
\begin{bmatrix}
\Delta e_t
\\
\Delta c_t 
\end{bmatrix}
=
\begin{bmatrix}
\alpha_1
\\
\alpha_2 
\end{bmatrix}
\begin{bmatrix}
\beta_1 & \beta_2
\end{bmatrix}
\begin{bmatrix}
e_{t-1}
\\
c_{t-1} 
\end{bmatrix}
+ 
\begin{bmatrix}
\Gamma_{11} & \Gamma_{12}
\\
\Gamma_{21} & \Gamma_{22}
\end{bmatrix}
\begin{bmatrix}
\Delta e_{t-1}
\\
\Delta c_{t-1}  
\end{bmatrix}
+ 
\begin{bmatrix}
\mu_e
\\
\mu_c
\end{bmatrix}
+ 
\begin{bmatrix}
\varepsilon_{e,t}
\\
\varepsilon_{c,t}
\end{bmatrix}
\end{align}
A relevant discussion on the interaction between unit roots and exogeneity can be found in \cite{hendry1994interactions} (see, also \cite{engle1983exogeneity}).

\newpage 

Therefore, the above economic and climate variables approximate the full climate-economic system with the links between climate and the economy given by both the short-run parameters $\Gamma$ and the equilibrium relationship $h_t$ given by the cointegrating vector $\beta^{\prime} y_t$ such that:
\begin{align}
h_t = 
\begin{bmatrix}
\beta_1 & \beta_2
\end{bmatrix} 
\begin{bmatrix}
e_{t}
\\
c_{t} 
\end{bmatrix}
= 
\begin{bmatrix}
\beta_1 e_{t}  & \beta_2 c_{t} 
\end{bmatrix}. 
\end{align}
The cointegrating relation is an equilibrium one, and does not necessarily reflect purely a climate-impact function, but rather an equilibrium between the two series, to which each series adjusts. The statistical properties of the two-stage least squares estimator under cointegration can be found in \cite{hsiao1997statistical}.
\end{example}

\subsubsection{Cointegration and Dynamic Inference from ADLM}

Consider a general autoregressive distributed lag ARDL (p,q) model where a series, $y_t$, is a function of a constant term, $\boldsymbol{\alpha}_0$, past values of itself stretching back $p$periods, contemporaneous and lagged values of an independent variable, $x_t$, of lag order $q$, and indepenendent, identically distributed error term:
\begin{align}
y_t = \boldsymbol{\alpha}_0 + \sum_{i=1}^p \boldsymbol{\alpha}_i y_{t-i} + \sum_{j=0}^q \boldsymbol{\beta}_j \boldsymbol{x}_{t-j} + \boldsymbol{\epsilon}_t,    
\end{align}

\begin{example}
A commonly used model is the ARDL (1,1) model given by 
\begin{align}
y_t = \boldsymbol{\alpha}_0 + \boldsymbol{\alpha}_1 y_{t-1} + \boldsymbol{\beta}_0 x_t + \boldsymbol{\beta}_1 x_{t-1} + \boldsymbol{\epsilon}_t,     
\end{align}
\end{example}
The contemporaneous effect of $x_t$ on $y_t$ is given by $\boldsymbol{\beta}_0$. Moreover, the magnitude of $\boldsymbol{\alpha}_1$ informs us about the memory property of $y_t$. Assuming that $0 < \alpha_1 < 1$, larger values indicate that movements in $y_t$ take longer to dissipate. The long-run effect (or long-run multiplier) is the total effect that a change in $x_t$ has on $y_t$. A simple model that incorporates such dynamic effects, is the distributed lag model.

\begin{example}[see, \cite{baumeister2021advances}]

\

The authors consider the special case of models in which only the effects of a single structural shock are identified and develop a new closed-form equation that could be used to estimate consistently the parameters of that structural equation by combining knowledge of the effects of the structural shock with the observed covariance matrix of the reduced-form residuals. Notice that exact prior information regarding the distributional assumptions of the structural model or the true ordering of variables is typically referred to as $\textit{identifying assumptions}$ (see, discussion in \cite{fry2011sign} on sign restrictions).  

In particular, consider a three-variable VAR system which is identified using a recursive structure (Cholesky Decomposition). Then, in this three-variable VAR system the order of variables matter for consistently estimating the structural parameters. Specifically, when the demand equation is ordered last in the system, then identifying assumptions imply that the parameters of the demand equation can be estimated by an OLS regression of price on current quantity, income and lagged values of the variables.  

\newpage 

Consider a demand equation system in which $q_t$ is a measure of the quantity of oil purchased, $p_t$ is a measure of the real price of oil, and $y_t$ is a measure of the real income such that 
\begin{align}
q_t = \delta y_t + \beta p_t + \textbf{b}^{\prime}_d \textbf{x}_{t-1} + u_t^d. 
\end{align}
Moreover, the demand structural system describes the behaviour of oil producers and the determinants of income such that (the order matters) 
\begin{align}
q_t  &= \delta y_t + \alpha p_t + \textbf{b}^{\prime}_s \textbf{x}_{t-1} + u_t^s \Rightarrow  q_t  - \delta y_t - \alpha p_t = \textbf{b}^{\prime}_s \textbf{x}_{t-1} + u_t^s
\\
y_t  &= \epsilon q_t + \beta p_t + \textbf{b}^{\prime}_y \textbf{x}_{t-1} + u_t^y \Rightarrow - \epsilon q_t  + y_t - \beta p_t = \textbf{b}^{\prime}_s \textbf{x}_{t-1} + u_t^s
\\
q_t  &= \zeta y_t + \gamma p_t + \textbf{b}^{\prime}_d \textbf{x}_{t-1} + u_t^d \Rightarrow  q_t  - \zeta y_t - \gamma  p_t = \textbf{b}^{\prime}_s \textbf{x}_{t-1} + u_t^s
\end{align}
where $\textbf{x}_{t-1} := \big(  1, \textbf{y}^{\prime}_{t-1}, \textbf{y}^{\prime}_{t-2},...., \textbf{y}^{\prime}_{t-p} \big)^{\prime}$ is a vector consisting of a constant term and $p$ lags of each of the three variables with $\textbf{y}_t = ( q_t, y_t, p_t )^{\prime}$. 
\begin{itemize}
\item $\alpha$ is the short-run price elasticity of oil supply,  

\item $u_t^s$ is a structural shock to oil production, 

\item $\beta$ is the contemporaneous effect of oil prices on economic activity. 

\end{itemize}
Then, the structural model can be written in the following form: 
\begin{align*}
\textbf{A} \textbf{y}_t &= \textbf{B}  \textbf{x}_{t-1} + \textbf{u}_t, 
\\
\underbrace{ 
\begin{bmatrix}
1 & - \delta & - \alpha
\\
- \epsilon & 1 & - \beta
\\
1 & - \zeta & - \gamma
\end{bmatrix}
}_{  \textbf{A}  }
\begin{bmatrix}
q_t
\\
y_t
\\
p_t
\end{bmatrix}
&= 
\begin{bmatrix}
\textbf{b}^{\prime}_s
\\
\textbf{b}^{\prime}_y
\\
\textbf{b}^{\prime}_d
\end{bmatrix}
\textbf{x}_{t-1}
+ 
\begin{bmatrix}
u_t^s
\\
u_t^y
\\
u_t^d
\end{bmatrix}.
\end{align*}
We assume that these structural shocks have mean zero and are serially uncorrelated as well as uncorrelated with each other such that 
\begin{align*}
\mathbb{E} \big[ \textbf{u}_t  \textbf{u}_t^{\prime} \big] = 
\begin{cases}
\textbf{D}, & \ \text{for} \ t = s, 
\\
0, & \ \text{for} \ t \neq s.
\end{cases}
\end{align*}
where $\textbf{D}$ is a diagonal covariance matrix.  Then, by pre-multiplying the structural VAR representation with the matrix $\textbf{A}^{-1}$, we obtain the corresponding reduced-form VAR equation which has the following dynamic structural model form $
\textbf{y}_t = \Pi \textbf{x}_t + \epsilon_t,  \ \ \ \mathbb{E} \left(  \epsilon_t \epsilon_t^{\prime}     \right) = \Omega$. Therefore the above reduced form specification can be expressed as below
\begin{align*}
\textbf{y}_t = \Pi \textbf{x}_t + \epsilon_t \equiv \textbf{c} + \Phi_1 \textbf{y}_{t-1} + \Phi_2 \textbf{y}_{t-2} + ... + \Phi_m \textbf{y}_{t-m}  + \epsilon_t.
\end{align*}

\newpage 

Then, the above reduced-form can be employed to construct impulse-response functions $\Psi_s = \partial \textbf{y}_{ t+s} / \partial \epsilon_t^{\prime}$.  
\begin{align*}
\hat{\Psi}_s = \hat{\Phi}_1 \hat{\Psi}_{s-1} + \hat{\Phi}_2 \hat{\Psi}_{s-2}  + ... + \hat{\Phi}_m \hat{\Psi}_{s-m}, \ \ \ \text{for} \ \ \ s = 1,2,3,... 
\end{align*}
starting from $\hat{\Phi}_0 = \textbf{I}_n$ and $\hat{\Psi}_s = 0$ for $s < 0$. 
\end{example}

\begin{remark}
Following \cite{kilian2017structural} we assume that the short-run income and price elasticities of supply as well as the contemporaneous coefficient relating oil prices to economic activity are all zero, although such conditions might need to be revisited especially within a climate-economy equilibrium setting without resorting to non-equilibrium dynamic states. Nevertheless, using the Cholesky factorization\footnote{Using a Cholesky decomposition is a standard approach in the literature and several studies employ such techniques although alternative parametrizations are of particular interest within a unified framework.} of $\Omega_{MLE}$ we can obtain the estimated effects on $\textbf{y}_{t+s}$ of a one-standard-deviation increase in one of the structural shocks at date $t$ (see, also \cite{hamilton1994time}). 
\end{remark}

\begin{definition}
A VAR process of order $p$, $VAR(p)$, is a multivariate process $\boldsymbol{y}_t$ specified as follows
\begin{align}
\boldsymbol{y}_t = \boldsymbol{\eta} + \sum_{j=1}^p \boldsymbol{A}_j \boldsymbol{y}_{t-j} + \boldsymbol{\varepsilon}_t, \ \ \  \boldsymbol{\varepsilon}_t \sim WN_{(n)}    
\end{align}
where $\boldsymbol{\eta}$ and $\boldsymbol{A}_1, \boldsymbol{A}_2,..., \boldsymbol{A}_p$, are a constant vector and constant matrices, respectively. 
\end{definition}

\begin{remark}
Such a process can be rewritten in operator form as below
\begin{align}
\boldsymbol{A}(L) \boldsymbol{y}_t = \boldsymbol{\eta} + \boldsymbol{\varepsilon}_t, \ \ \ \boldsymbol{A}(L) = \boldsymbol{I}_n - \sum_{j=1}^p \boldsymbol{A}_j L^j     
\end{align}
which is considered to be a stationary process provided all roots of $\mathsf{det} \left[ \boldsymbol{A}(z) \right] = 0$, lie outside the unit circle. Then, the process admits a causal VMA$(\infty)$ respresentation such that 
\begin{align}
\boldsymbol{y}_t = \boldsymbol{\omega} + \sum_{k = 0}^{\infty} \boldsymbol{C}_k \boldsymbol{\varepsilon}_{t-k}. 
\end{align}
Although in  Section \ref{Section2} we consider suitable formulations of cointegrated VAR models and their applications from related economic studies, one needs to revise the concepts presented in Chapter 10 of \cite{hamilton1994time} on Covariance-Stationary processes before proceeding to the material of Section \ref{Section3} 
\end{remark}

\textbf{Open Problems} Various open problems remain in the literature such as the use of the cross-section as a mechanism for beliefs elicitation and identification of structural shocks. Related studies from the finance literature where the cross-section is used for improving the predictability of portfolio returns include among others  \cite{chinco2019sparse}, \cite{freyberger2020dissecting} and \cite{kyle2023beliefs}, but in the case of modelling macroeconomic fundamentals using cross-sectional information is still a growing literature. Relevant studies in this direction include \cite{forni2017dynamic}, \cite{kong2019weak},   \cite{yamamoto2023cross} and \cite{hannadige2023forecasting} among others.

\newpage

\section{Vector Autoregressions: Prediction and Granger Causality}
\label{Section3}

\subsection{Stability Conditions in VAR$(p)$ Model}

\begin{example}
\label{example3}
Consider that $\boldsymbol{y}_t \sim \mathsf{VAR} (p)$ is defined recursively such that
\begin{align}
\label{expression}
\boldsymbol{y}_t = \boldsymbol{A}_1 \boldsymbol{y}_{t-1} + ... + \boldsymbol{A}_p \boldsymbol{y}_{t-p} + \boldsymbol{\varepsilon}_t, \ \ \ t = 1,...,n   
\end{align}
where $\boldsymbol{y}_t \in \mathbb{R}^d$ is a $d-$dimensional vector. Then, the above $\mathsf{VAR} (p)$ process can be expressed as a $\mathsf{VAR} (1)$ process using a companion matrix as below
\begin{align}
\begin{bmatrix}
\boldsymbol{y}_t 
\\
\boldsymbol{y}_{t-1}
\\
\vdots
\\
\boldsymbol{y}_{t - p + 1 } 
\end{bmatrix} 
= 
\begin{bmatrix}
\boldsymbol{A}_1 & \boldsymbol{A}_2 & \boldsymbol{A}_3 & \hdots & \boldsymbol{A}_p
\\
\boldsymbol{I}_d & \boldsymbol{0} & \boldsymbol{0} & \hdots & \boldsymbol{0}
\\
\boldsymbol{0} & \boldsymbol{I}_d  & \boldsymbol{0} &  \hdots & \boldsymbol{0} 
\\
\vdots & \vdots & \ddots & \hdots & \vdots
\\
\boldsymbol{0} & \boldsymbol{0} & \vdots & \boldsymbol{I}_d &  \boldsymbol{0}
\end{bmatrix}
\begin{bmatrix}
\boldsymbol{y}_{t-1} 
\\
\boldsymbol{y}_{t-2}
\\
\vdots
\\
\boldsymbol{y}_{t - p } 
\end{bmatrix}   
+
\begin{bmatrix}
\boldsymbol{\varepsilon}_t
\\
\boldsymbol{0}
\\
\boldsymbol{0}
\\
\vdots
\\
\boldsymbol{0}
\end{bmatrix}
\end{align}
such that $\boldsymbol{Y}_t = \boldsymbol{\mathcal{\mathcal{A}}} \boldsymbol{Y}_{t-1} + \boldsymbol{U}_t$, which implies that a unique solution can be determined using 
\begin{align}
\boldsymbol{Y}_t = \boldsymbol{\mathcal{\mathcal{A}}}^t \boldsymbol{Y}_{0}  + \sum_{j=1}^t \boldsymbol{\mathcal{\mathcal{A}}}^{t-j} \boldsymbol{U}_j. 
\end{align}
\end{example}

\begin{assumption}
\label{Assumption2}
(a) any non-zero eigenvalue $\lambda$ of $\mathcal{\mathcal{A}}$ satisfies the following stability condition
\begin{align}
\mathsf{det} \left( \boldsymbol{I}_d - \sum_{j=1}^p \frac{1}{ \lambda^j } \boldsymbol{ \mathcal{A} }_j \right) = 0.    
\end{align}     
(b) The spectral radius of the companion matrix $\boldsymbol{ \mathcal{A} }$ is smaller than 1, $\rho ( \boldsymbol{ \mathcal{A} } ) < 1$. 
\end{assumption}

\begin{lemma}
Consider the homoscedastic $\mathsf{VAR}(p)$ process defined in \eqref{expression}, where the roots of the characteristic polynomial satisfy the conditions of Assumption \ref{Assumption2}.  
\begin{itemize}

\item[(i).] Consider the process $\boldsymbol{y}_t \sim \mathsf{VAR} (p)$, and suppose that the spectral radius of the companion matrix $\boldsymbol{\mathcal{A} }$ satisfies $\rho( \boldsymbol{\mathcal{A} } ) < 1$. Then, since $\boldsymbol{Y}_0 = \sum_{j=0}^{ \infty } \boldsymbol{\mathcal{A} }^j \boldsymbol{U}_{t-j}$ almost surely then under the covariance stationarity assumption it holds that $\boldsymbol{\Gamma}_y (j) = \mathbb{E} \big[ \boldsymbol{Y}_t  \boldsymbol{Y}_{t-j}^{\prime} \big]$. 

\item[(ii).] An equivalent expression for the $\mathsf{VAR} (p)$ process using the lag operator is as below
\begin{align}
\left( \boldsymbol{I}_d - \boldsymbol{A}_1 L -  \boldsymbol{A}_2 L^2 - ... - \boldsymbol{A}_p L^p \right) \boldsymbol{y}_t  = \boldsymbol{\varepsilon}_t.   
\end{align}
Then the characteristic polynomial of the $\mathsf{VAR} (p)$ process is written as
\begin{align}
\boldsymbol{\Phi} (z) := \left( \boldsymbol{I}_d - \boldsymbol{A}_1 z -  \boldsymbol{A}_2 z^2 - ... - \boldsymbol{A}_p z^p \right) , \ \ \ \text{for some} \ \ z \in \mathbb{C}.    
\end{align}

\newpage

Consequently, the following are equivalent: $\textit{(i)}$ $\boldsymbol{y}_t$ is a stable process;  $\textit{(ii)}$  $\rho( \boldsymbol{\mathcal{A}} ) < 1$; $\textit{(iii)}$ all the roots of the characteristic polynomial $\mathsf{det} \left\{  \boldsymbol{\Phi} (z)  \right\} = 0$ lie outside the unit circle $\left\{ z \in \mathbb{C} : |z| < 1  \right\}$ such that $\mathsf{det} \left\{  \boldsymbol{\Phi} (z)  \right\} = 0$ \textit{iff} |z| > 1.
\end{itemize}

\item[(iii).] Hence, based on the results given in (i) and (ii) above, the $\mathsf{VAR} (p)$ process admits a linear process representation such that 
\begin{align}
\boldsymbol{y}_t  = \sum_{j=0}^{ \infty} \boldsymbol{C}_j \boldsymbol{\varepsilon}_{t-j}, \ \ \textit{almost surely}, \ \ \ \text{with} \ \ \boldsymbol{\varepsilon}_{t} \overset{\textit{i.i.d}}{\sim} ( \boldsymbol{0}, \boldsymbol{\Sigma}_{\varepsilon}),   
\end{align}
for a sequence of $\left\{ \boldsymbol{C}_j  \right\}_{ j \geq 0 }$ satisfying the summability condition $\sum_{ j = 0}^{ \infty } \norm{ \boldsymbol{C}_j } < \infty$.
\end{lemma}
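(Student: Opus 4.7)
The plan is to leverage the companion form already derived in Example \ref{example3} so that the $\mathsf{VAR}(p)$ statement reduces to a $\mathsf{VAR}(1)$ analysis for $\boldsymbol{Y}_t = \boldsymbol{\mathcal{A}} \boldsymbol{Y}_{t-1} + \boldsymbol{U}_t$. The spectral-radius hypothesis $\rho(\boldsymbol{\mathcal{A}}) < 1$ (Assumption \ref{Assumption2}(b)) will be the engine driving all three claims: it yields geometric decay of $\boldsymbol{\mathcal{A}}^j$, which in turn gives (i) convergence of the backward iteration, (ii) equivalence with the characteristic-root condition via a determinantal identity, and (iii) the absolute summability of the VMA coefficients. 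The one step requiring care is converting between eigenvalues of the block-companion matrix and the roots of the matrix polynomial $\boldsymbol{\Phi}(z)$.

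For part (i), I would iterate the companion recursion to obtain $\boldsymbol{Y}_t = \boldsymbol{\mathcal{A}}^N \boldsymbol{Y}_{t-N} + \sum_{j=0}^{N-1} \boldsymbol{\mathcal{A}}^j \boldsymbol{U}_{t-j}$ and pass to the limit $N \to \infty$. By Gelfand's formula there exists $\rho < 1$ with $\lVert \boldsymbol{\mathcal{A}}^j \rVert \le C \rho^j$, so the partial sums form an $L^2$-Cauchy sequence (using $\mathbb{E}\lVert \boldsymbol{U}_{t-j}\rVert^2 \le \lVert \boldsymbol{\Sigma}_\varepsilon \rVert$) and converge almost surely along a subsequence; the tail term $\boldsymbol{\mathcal{A}}^N \boldsymbol{Y}_{t-N}$ vanishes in $L^2$ by stationarity of $\{\boldsymbol{Y}_{t-N}\}$ and the geometric bound. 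This delivers the a.s.\ representation $\boldsymbol{Y}_t = \sum_{j=0}^\infty \boldsymbol{\mathcal{A}}^j \boldsymbol{U}_{t-j}$, and covariance stationarity then gives $\boldsymbol{\Gamma}_y(j) = \mathbb{E}[\boldsymbol{Y}_t \boldsymbol{Y}_{t-j}^\prime] = \sum_{k=0}^\infty \boldsymbol{\mathcal{A}}^{k+j} \tilde{\boldsymbol{\Sigma}}_U (\boldsymbol{\mathcal{A}}^k)^\prime$, where $\tilde{\boldsymbol{\Sigma}}_U$ is the block-diagonal covariance of $\boldsymbol{U}_t$.

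For part (ii), the equivalence $\textit{(i)} \Leftrightarrow \textit{(ii)}$ is definitional once stability is defined via the VMA$(\infty)$ representation produced in part (i). The crux is $\textit{(ii)} \Leftrightarrow \textit{(iii)}$, which I would establish through the determinantal identity
\begin{align}
\mathsf{det}(\lambda \boldsymbol{I}_{dp} - \boldsymbol{\mathcal{A}}) = \lambda^{dp} \, \mathsf{det}\!\left( \boldsymbol{I}_d - \sum_{j=1}^{p} \lambda^{-j} \boldsymbol{A}_j \right) = \mathsf{det}\!\left( \lambda^p \boldsymbol{I}_d - \sum_{j=1}^{p} \lambda^{p-j} \boldsymbol{A}_j \right),
\end{align}
obtained by the standard block-column operations on the companion form (multiplying each block column by appropriate powers of $\lambda$ and summing into the first block row). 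Setting $z = 1/\lambda$ shows that $\lambda$ is an eigenvalue of $\boldsymbol{\mathcal{A}}$ if and only if $z = 1/\lambda$ is a root of $\mathsf{det}\,\boldsymbol{\Phi}(z) = 0$, so $\rho(\boldsymbol{\mathcal{A}}) < 1$ exactly when all roots of $\mathsf{det}\,\boldsymbol{\Phi}(z)$ lie strictly outside the unit disk. The main obstacle is justifying the determinantal manipulation carefully when $\lambda = 0$ or when there are repeated roots; both issues are handled by treating the identity as an identity of polynomials in $\lambda$.

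For part (iii), I would project the VMA representation from part (i) onto its first $d$ coordinates. Writing $\boldsymbol{J} = [\boldsymbol{I}_d \ \boldsymbol{0} \ \cdots \ \boldsymbol{0}]$, we have $\boldsymbol{y}_t = \boldsymbol{J} \boldsymbol{Y}_t = \sum_{j=0}^\infty \boldsymbol{C}_j \boldsymbol{\varepsilon}_{t-j}$ with $\boldsymbol{C}_j \de \boldsymbol{J} \boldsymbol{\mathcal{A}}^j \boldsymbol{J}^\prime$. Summability follows from the Jordan decomposition of $\boldsymbol{\mathcal{A}}$ exactly as in Example 1 of Section \ref{Section2}: if $M$ is the maximal Jordan block size then $\lVert \boldsymbol{\mathcal{A}}^j \rVert = \mathcal{O}(j^{M-1} \rho^j)$ for any $\rho \in (\rho(\boldsymbol{\mathcal{A}}), 1)$, so $\sum_j \lVert \boldsymbol{C}_j \rVert \le \lVert \boldsymbol{J} \rVert^2 \sum_j \lVert \boldsymbol{\mathcal{A}}^j \rVert < \infty$, which completes the proof.
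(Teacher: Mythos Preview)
The paper states this lemma without proof --- it is presented as a standard background result in these lecture notes --- so there is no in-paper argument to compare against. Your overall strategy (companion form, Gelfand bound, determinantal identity, projection via $\boldsymbol{J}$) is the standard textbook route and is essentially correct.

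One point deserves tightening. In part~(i) you argue that the tail term $\boldsymbol{\mathcal{A}}^{N}\boldsymbol{Y}_{t-N}$ vanishes in $L^{2}$ ``by stationarity of $\{\boldsymbol{Y}_{t-N}\}$,'' but stationarity is precisely what the representation is meant to establish, so this step is circular as written. The cleaner way is to reverse the order: first show directly that the series $\boldsymbol{Y}_{t}^{\ast} := \sum_{j=0}^{\infty}\boldsymbol{\mathcal{A}}^{j}\boldsymbol{U}_{t-j}$ converges a.s.\ and in $L^{2}$ (here the terms are independent with $\sum_{j}\mathbb{E}\lVert\boldsymbol{\mathcal{A}}^{j}\boldsymbol{U}_{t-j}\rVert^{2} \leq \lVert\boldsymbol{\Sigma}_{\varepsilon}\rVert\sum_{j}\lVert\boldsymbol{\mathcal{A}}^{j}\rVert^{2} < \infty$, so a.s.\ convergence follows from the independent-summands version of the $L^{2}$ criterion, not merely along a subsequence); then verify by substitution that $\boldsymbol{Y}_{t}^{\ast}$ satisfies the companion recursion and is strictly stationary; finally, for uniqueness, take any stationary solution $\boldsymbol{Y}_{t}$, iterate to get $\boldsymbol{Y}_{t} - \boldsymbol{Y}_{t}^{\ast} = \boldsymbol{\mathcal{A}}^{N}(\boldsymbol{Y}_{t-N} - \boldsymbol{Y}_{t-N}^{\ast})$, and now the stationarity of the \emph{given} solution legitimately bounds the right-hand side. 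Parts~(ii) and~(iii) are fine as sketched; your remark that the determinantal identity should be read as an identity of polynomials in $\lambda$ correctly handles the $\lambda = 0$ edge case.
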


\subsection{Relation between Dynamic Structural Models and Vector Autoregressions}

Consider that these interelated equations can be written in the following form:
\begin{align}
\boldsymbol{B}_0 \boldsymbol{y}_t = \boldsymbol{\mu} + \boldsymbol{B}_1 \boldsymbol{y}_{t-1} +   \boldsymbol{B}_2 \boldsymbol{y}_{t-2} + ... +  \boldsymbol{B}_p \boldsymbol{y}_{t-p} + \boldsymbol{u}_t, 
\end{align}
Therefore, by pre-multiplying the above dependent variable with $\boldsymbol{B}_0$ we can obtain a VAR representation given by the following expression: 
\begin{align}
\boldsymbol{y}_t = \boldsymbol{c} + \boldsymbol{\Phi}_1 \boldsymbol{y}_{t-1} +   \boldsymbol{\Phi}_2 \boldsymbol{y}_{t-2} + ... +  \boldsymbol{\Phi}_p \boldsymbol{y}_{t-p} + \boldsymbol{\varepsilon}_t, 
\end{align}
In other words, we can view the above representation as a special case of the Dynamic Structural system equation since we eliminate the interelations of the dependent variable to a reduced form of a VAR. 

\begin{example}
Let $\boldsymbol{z}_t = \begin{pmatrix}
z_{1t} \\ z_{2t}    
\end{pmatrix}$ be an $n-$dimensional vector stochastic process, where $z_{1t}$ is an $( n_1 \times 1 )$ and $z_{2t}$ is an $( n_2 \times 1 )$ such that $n = n_1 + n_2$. Assume a linear dynamic model (see, \cite{lastrapes2005estimating})
\begin{align}
\boldsymbol{A}_0 \boldsymbol{z}_t = \boldsymbol{A}_1 \boldsymbol{z}_{t-1} + ... + \boldsymbol{A}_p \boldsymbol{z}_{t-p} + \boldsymbol{u}_t,     
\end{align}
where $\boldsymbol{u}_t = \begin{pmatrix}
u_{1t} & u_{2t}    
\end{pmatrix}^{\prime}$ is a white noise vector process normalized such that $\mathbb{E} [ \boldsymbol{u}_t  \boldsymbol{u}_t^{\prime} ] = \boldsymbol{I}$, which implies that the reduced-form errors have a diagonal variance-covariance matrix equal to the identify matrix. Moreover, the reduced form of this structural model is formulated as below
\begin{align*}
\boldsymbol{z}_t 
&= 
\boldsymbol{A}_0^{-1} \boldsymbol{A}_1 \boldsymbol{z}_{t-1} + ... + \boldsymbol{A}_0^{-1} \boldsymbol{A}_p \boldsymbol{z}_{t-p} + \boldsymbol{A}_0^{-1} \boldsymbol{u}_t
\\
\boldsymbol{z}_t 
&= 
\boldsymbol{B}_1 \boldsymbol{z}_{t-1} + ... + \boldsymbol{B}_p \boldsymbol{z}_{t-p} + \boldsymbol{\varepsilon}_t.
\end{align*}
such that the structural errors have a variance-covariance matrix $\mathbb{E} \left( \boldsymbol{\varepsilon}_t \boldsymbol{\varepsilon}_t^{\prime} \right) = \boldsymbol{\Omega}$. Therefore, the MA representation of the structural model is
\begin{align}
z_t = \big( \boldsymbol{A}_0 - \boldsymbol{A}_1 L - ... - \boldsymbol{A}_p L^p \big) \boldsymbol{u}_t  \equiv \big( \boldsymbol{D}_0 + \boldsymbol{D}_1 L +  \boldsymbol{D}_2 L^2 + ...  \big) \boldsymbol{u}_t  = \boldsymbol{D} (L) \boldsymbol{u}_t.  
\end{align}

\newpage

Similarly, the reduced form MA has the following representation
\begin{align}
z_t = \big( \boldsymbol{I} - \boldsymbol{B}_1 L - ... - \boldsymbol{B}_p L^p \big) \boldsymbol{\epsilon}_t  \equiv \big( \boldsymbol{I} + \boldsymbol{C}_1 L +  \boldsymbol{C}_2 L^2 + ...  \big) \boldsymbol{u}_t  = \boldsymbol{C} (L) \boldsymbol{\epsilon}_t.    
\end{align}
In other words, the parameters of interest are the structural dynamic multiplies or impulse response functions such that $\frac{ \partial z_{t+k}  }{ \partial u_t } = D_k$. The common practice in the applied macroeconometrics literature is to obtain estimates for $\boldsymbol{B} (L)$ and $\boldsymbol{\Omega}$ and then impose suitable restrictions on the underline structure to identify the model parameters. 
\end{example}

\begin{remark}
Notice that while reduced-form parameters $A_j$ for $j = 1,..., p$ and the covariance matrix of residuals $\Sigma_u$ can be estimated consistently, the structural parameters collected in the matrix $B$ are not identified without further theoretical or statistical assumptions. One possible solution is to restrict the matrix $B$ to be a lower-triangular imposing a recursive causal structure among the model variables.  
\end{remark}

\subsubsection{Main Assumptions}

Consider an $n-$dimensional covariance stationary zero-mean vector stochastic process $\boldsymbol{x}_t$ of observable variables, driven by $q-$dimensional unobservable vector process $u_t$ of structural shocks. 
\begin{align}
\boldsymbol{x}_t = \boldsymbol{C} (L) \boldsymbol{u}_t,    
\end{align}
where $\boldsymbol{C} (L) = \displaystyle \sum_{j=0}^{\infty} C_j L^j$ is an one-sided polynomial in the lag operator $L$ in infinite order. These shocks are orthogonal white noises such that $u_t \sim ( 0, \boldsymbol{\Sigma}_u )$, where $\boldsymbol{\Sigma}_u$ is diagonal. Furthermore, notice that a VAR model based on a subset of the variables implies a different $MA(\infty)$ representation than its counterpart based on the joint distribution. We begin by considering various examples regarding the identification and estimation of SVAR models using properties of structural shocks. 
\begin{definition}[Fundamentalness in Systems]
Given a covariance stationary vector process $x_t$, the representation $\boldsymbol{x}_t = \boldsymbol{C} (L) \boldsymbol{u}_t$ is fundemental if 
\begin{itemize}

\item[\textit{(i).}] $\boldsymbol{u}_t$ is a white noise vector,

\item[\textit{(ii).}] $\boldsymbol{C} (L)$ has no poles of modules less or equal than unity, i.e., no poles inside the unit disc. 

\item[\textit{(iii).}] $\mathsf{det} [ \boldsymbol{C}(z) ]$ has no roots of modules less than unity, i.e., all its roots are outise the unit disc
\begin{align}
\boldsymbol{C}(z) \neq 0, \ \ \ \forall \ z \in \mathbb{C} \ \ s.t \ |z| < 1.     
\end{align}
    
\end{itemize}
\end{definition}
If the roots of $\mathsf{det} [ \boldsymbol{C}(z) ]$ are outside the unit disc, we have invertability in the past, that is, the inverse representation depends only on non-negative powers of $L$, and we have fundamentalness. If at least one of the roots of $\mathsf{det} [ \boldsymbol{C}(z) ]$  is inside the unit disc, then invertability and non-fundementalness holds.

\newpage 
   
\begin{example}[Partially nonstationary multivariate autoregressive AR(p)] Consider the partially nonstationary multivariate autoregressive AR(p) given by 
\begin{align}
\boldsymbol{\Phi} (L) \boldsymbol{Y}_t \equiv \left( \boldsymbol{I}_m - \sum_{j=1}^p \boldsymbol{\Phi}_j L^j \right) \boldsymbol{Y}_t  = \boldsymbol{\epsilon}_t   
\end{align}
where $\left\{ \boldsymbol{Y}_t \right\}$ is an $m-$dimensional process and for $\boldsymbol{\epsilon}_t$ it holds that $\mathbb{E} ( \boldsymbol{\epsilon}_t ) = 0$ and $\mathsf{cov} ( \boldsymbol{\epsilon}_t ) = \boldsymbol{\Omega}$. Moreover, it is assumed that $\mathsf{det} \big[ \boldsymbol{\Phi} (L) \big] = 0$ has $d < m$ unit roots and the remaining roots are outside the unit circle and that $\mathsf{rank} \big[ \Phi (1) \big] = r$, where $r = (m-d) > 0$ (see, \cite{samuelson1941conditions} and \cite{dickey1986unit}). Therefore, each component of the first difference $W_t = ( Y_t - Y_{t-1} )$ is assumed to be stationary.

Then the model has the following error-correction form below
\begin{align}
\boldsymbol{\Phi}^{*} (L) (1-L) \boldsymbol{Y}_t = \boldsymbol{C} \boldsymbol{Y}_{t-1} + \boldsymbol{\varepsilon}_t,  
\ \ \ \ 
\Delta \boldsymbol{Y}_t = \boldsymbol{C} \boldsymbol{Y}_{t-1} + \sum_{j=1}^{p-1} \boldsymbol{\Phi}^{*}  \Delta \boldsymbol{Y}_{t-j}  +  \boldsymbol{\varepsilon}_t  
\end{align}
where the matrix coefficients are defined as below
\begin{align}
\boldsymbol{\Phi}^{*} (L) = \left( \boldsymbol{I}_m - \sum_{j=1}^{p-1} \boldsymbol{\Phi}_j^{*} L^j  \right), \ \ \  \boldsymbol{\Phi}_j^{*} = - \sum_{k=j+1}^p \boldsymbol{\Phi}_k, \ \ \ \boldsymbol{C} = - \boldsymbol{\Phi} (1).   
\end{align}
Moreover, the Jordan Canonical form of $\sum_{j=1}^p \boldsymbol{\Phi}_j$ is given by
\begin{align}
\boldsymbol{P}^{-1} \left( \sum_{j=1}^p \boldsymbol{\Phi}_j \right) \boldsymbol{P} = \mathsf{diag} \big( \boldsymbol{I}_d, \boldsymbol{\Lambda}_r \big)  
\end{align}
One can define with $\boldsymbol{Z}_t = [ \boldsymbol{Z}_{1t}^{\prime}, \boldsymbol{Z}_{2t}^{\prime} ] = \boldsymbol{Q} \boldsymbol{Y}_t$, where $\boldsymbol{Z}_{1t} = \boldsymbol{Q}_1^{\prime} \boldsymbol{Y}_t$ and $\boldsymbol{Z}_{2t} = \boldsymbol{Q}_2^{\prime} \boldsymbol{Y}_t$.
\end{example}

The identification of structural vector autoregressions is a crucial step before proceeding with constructing test statistics and forecasts (see Section \ref{Section4}). A suitable approach found in the literature, is the method of recursive identification\footnote{The theoretical analysis of recursive identification methods is given by \cite{soderstrom1978theoretical} and \cite{chen2010new} } (see, \cite{chen2014recursive}) which implies exact identification of the structural parameter $B_0^{-1}$ (e.g., see \cite{kim2000exchange}).  Moreover, the presence of stochastic singularity are crucial on whether a system is identified or unidentifiable. Based on these considerations \cite{komunjer2011dynamic} employ restrictions implied by observational equivalence to establish conditions for avoiding the presence of stochastic singularity. To ensure that these structural econometric models can be still partially identified (see, also \cite{phillips1989partially}) regardless of the possible presence of stochastic singularity, then the econometrician can exploit the existence and uniqueness of local identification of DSGE models based on the corresponding linearized solution. Lastly, given the prominent role of expectations in news aspects such as the information flow and anticipated shocks can affect the identification of structural shocks  (e.g., optimal inter-temporal decisions discount future tax obligations, see \cite{leeper2013fiscal} and \cite{mertens2013dynamic}).

\newpage 

\subsubsection{Weak Exogeneity in $I(2)$ VAR Systems}

The notion of weak exogeneity is important when considering the structural analysis of cointegrating regression models. Under the assumption of weak exogeneity estimation of the cointegration parameters in conditional models is accordingly influenced. In particular, for the VAR model allowing for $I(1)$ variables under the assumption of Gaussian errors implies the use of a cointegrated VAR specification which is suitable for modelling long-run equilibrium dynamics for multivariate time series.  Within this stream of literature one is interested to analyze the conditions under which a subset of equations is weakly exogenous with respect to the cointegration parameters (see,  \cite{paruolo1997asymptotic, paruolo2000asymptotic}, \cite{paruolo1999weak},  \cite{dolado1992note} and \cite{pesaran2000structural}). Moreover, \cite{tchatoka2013finite} propose unified exogeneity test statistics and examine the pivotality property under strict exogeneity (see, also \cite{white2014granger}) and characterize the finite-sample distributions of the statistics under $H_0$, including when identification is weak and the errors are possibly non-Gaussian. Assume that the vectors $\boldsymbol{U}_t = [ \boldsymbol{u}_t, \boldsymbol{V}_t ]^{\prime}$ for $t = 1,..., T$, have the same nonsingular covariance matrix defined below
\begin{align}
\mathbb{E} \big[ \boldsymbol{U}_t \boldsymbol{U}_t^{\prime} \big] = \boldsymbol{\Sigma} =
\begin{bmatrix}
\sigma_u^2 & \boldsymbol{\delta}^{\prime}
\\
\boldsymbol{\delta}^{\prime} & \boldsymbol{\Sigma}_V
\end{bmatrix} > \boldsymbol{0}, \ \ \ t = 1,..., T, 
\end{align}
where $\boldsymbol{\Sigma}_V$ has dimension $G$ (see, also \cite{khalaf2014identification}). Then, the covariance matrix of the reduced-form disturbances $\boldsymbol{W}_t = [ \boldsymbol{v}_t, \boldsymbol{V}_t^{\prime} ]^{\prime} = [ \boldsymbol{u} + \boldsymbol{V} \boldsymbol{\beta}, \boldsymbol{V} ]$ takes the following form
\begin{align}
\boldsymbol{\Omega} := 
\begin{bmatrix}
\sigma_u^2 + \boldsymbol{\beta}^{\prime} \textcolor{blue}{\boldsymbol{\Sigma}_V }\boldsymbol{\beta} + 2 \boldsymbol{\beta}^{\prime} \boldsymbol{\delta} &  \boldsymbol{\beta}^{\prime} \textcolor{blue}{\boldsymbol{\Sigma}_V}  + \boldsymbol{\delta}^{\prime}
\\
\textcolor{blue}{ \boldsymbol{\Sigma}_V }\boldsymbol{\beta} + \boldsymbol{\delta} &  \textcolor{blue}{ \boldsymbol{\Sigma}_V } 
\end{bmatrix}
\end{align}
where $\boldsymbol{\Omega}$ is positive definite matrix. The exogeneity hypothesis can be expressed as $\mathcal{H}_0: \boldsymbol{\delta} = \boldsymbol{0}$. 

\begin{example}
Suppose that $\boldsymbol{W}_t = \boldsymbol{J} \bar{ \boldsymbol{W} }_t$, for $t = 1,..., T$, and suppose that $\bar{ \boldsymbol{W} }_t \overset{ \textit{i.i.d} }{ \sim } \mathcal{N} \left( \boldsymbol{0}, \boldsymbol{I}_{G+1} \right)$. Then, it holds that $\boldsymbol{\Omega} = \mathbb{E} \left[ \boldsymbol{W}_t \boldsymbol{W}_t \right] = \boldsymbol{J} \boldsymbol{J}^{\prime}$. Moreover, since $\boldsymbol{J}$ is upper triangular, then its inverse $\boldsymbol{J}^{-1}$ is also upper triangular. Let $\boldsymbol{P} = \left( \boldsymbol{J}^{-1} \right)^{\prime}$. Since $\boldsymbol{P}$ is a $( G \times 1) \times ( G \times 1)$ lower triangular matrix then we can orthogonalize the matrix $\boldsymbol{J} \boldsymbol{J}^{\prime}$ such that
\begin{align}
\boldsymbol{P}^{\prime} \boldsymbol{J} \boldsymbol{J}^{\prime} \boldsymbol{P} = \boldsymbol{I}_{G+1}, \ \ \  \left( \boldsymbol{J} \boldsymbol{J}^{\prime} \right)^{-1} = \boldsymbol{P} \boldsymbol{P}^{\prime}.  
\end{align}
The matrix $\boldsymbol{P}$ is the Cholesky factor of $\boldsymbol{\Omega}^{-1}$, so $\boldsymbol{P}$ is the unique lower triangular matrix. 
\begin{align}
\boldsymbol{P} =
\begin{bmatrix}
P_{11} & 0 
\\
P_{21} & P_{22}
\end{bmatrix}
\end{align}
Thus, an appropriate $\boldsymbol{P}$ matrix is obtained by taking
\begin{align}
P_{11} &:= \left( \sigma_u^2 - \boldsymbol{\delta}^{\prime} \boldsymbol{\Sigma}_V^{-1} \boldsymbol{\delta} \right)^{-1/2} \equiv \sigma_{\varepsilon}.   
\\
P_{21} &:= - \left( \boldsymbol{\beta} + \boldsymbol{\Sigma}_V^{-1} \boldsymbol{\delta} \right)  \left( \sigma_u^2 - \boldsymbol{\delta}^{\prime} \boldsymbol{\Sigma}_V^{-1} \boldsymbol{\delta} \right)^{-1/2}  \equiv - ( \beta + \alpha ) \sigma^{-1}_{\varepsilon}. 
\end{align}
\end{example}

\newpage

\subsection{Local Projections}

\begin{assumption}[Wold representation]
\

Suppose that $\varepsilon_t$ represents an innovation sequence which is strictly stationary and ergodic such that $\mathbb{E} ( \varepsilon_t | \mathcal{F}_{t-1} ) = 0$ almost surely, where $\mathcal{F}_{t-1} = \sigma ( \varepsilon_{t-1}, \varepsilon_{t-2},... )$. Then, $y_t$ satisfy a Wold representation if it can be expressed as $y_t = \sum_{j=0}^{\infty} \Phi_j u_{t-j}$ or $y_t = \Phi (L) u_t$. 
\end{assumption}

\subsubsection{Application: Measuring the Impact of Fiscal Policy}

This example is based on the study of \cite{mertens2010measuring} who consider the consequences of anticipation effects for VAR-based estimates of the impact of government spending shocks. We consider a bivariate time series representation of a vector consisting of control variables $z_t$, and government spending $\mathsf{g}_t$. We assume transitory government spending shocks, such that $\mu_{ \mathsf{g} } (L)$ is a stable polynomial. The identification is facilitated by assuming that information can arrive with any anticipation horizon between 1 and $q$ periods. Thus, the vector of observables $\boldsymbol{v}_t = [ \mathsf{g}_t, z_t ]^{\prime}$ is formulated as below
\begin{align}
\boldsymbol{v}_t &=
\begin{bmatrix}
0 & 1 - \mu_{ \mathsf{g} } (L)
\\
\phi_{ zk } & \phi_{ zk } 
\end{bmatrix}
\begin{bmatrix}
k_t
\\
\mathsf{g}_t
\end{bmatrix}
+ 
\begin{bmatrix}
1 & L^q
\\
0 & \phi_{z,1} \Theta (L)
\end{bmatrix}
\boldsymbol{\Sigma}_{e} \boldsymbol{e}_t,
\\
\boldsymbol{\Theta} (L) &= \omega^{q-1} + \omega^{q-2} L +    ... + \omega L^{q-2} + L^{q-1}, \ \ \boldsymbol{\Sigma}_{e} \equiv \sigma_{ \mathsf{g} } \begin{bmatrix}
1 & 0 
\\
0 & \lambda
\end{bmatrix}, 
\ \boldsymbol{e}_t \equiv 
\begin{bmatrix}
e_{0,t}^{ \mathsf{g} }
\\
e_{q,t}^{ \mathsf{g} }
\end{bmatrix} 
\end{align}
Substituting to the steady-state equilibrium solution of the system, the MA representation is
\begin{align*}
v_t 
= \boldsymbol{\mathcal{Y} } (L)  \boldsymbol{\Sigma}_{e} \boldsymbol{e}_t
=
\mu_{ \mathsf{g} } (L)^{-1} 
\begin{bmatrix}
1  & L^q
\\
1 & 0
\end{bmatrix}.
\end{align*}

\begin{example}
Consider that the vector $\boldsymbol{y}_t = [ GOV_t, GDP_t, CON_t ]$, where all three macroeconomic variables are in real terms and in logarithms. Then, the VECM is given by the following expression
\begin{align}
\Delta \boldsymbol{y}_t = \boldsymbol{\Pi} \boldsymbol{y}_{t-1} + \boldsymbol{C} (L) \boldsymbol{y}_{t-1} + \boldsymbol{D} \varepsilon_t,  
\end{align}
where $\boldsymbol{D} = \boldsymbol{Y} (0) \boldsymbol{\Sigma}_{e} \boldsymbol{B} (0)$ and $\boldsymbol{\varepsilon}_t \boldsymbol{B}(L)^{-1} \boldsymbol{e}_t$, where $\boldsymbol{e}_t$ contains the structural shocks of interest. Notice that despite the presence of permanent fiscal shocks, the variables in $\boldsymbol{y}_t$ cointegrate since the investment-output ratio is unaffected by the level of government spending in the long-run. Moreover, the unanticipated government spending shock is allowed to affect the level of government spending immediately, while the anticipated government spending shock is assumed not to affect government spending within one quarter. However, the two shocks are restricted to have the same long run impact on the level of government spending (see, \cite{mertens2010measuring}). Overall, the estimation procedure aims to uncover the response to an anticipated fiscal shock. On the other hand, when the anticipation rate is high which implies that the anticipated shocks are relatively important, then biased estimates for the unanticipated shock are obtained in small samples.  Further examples can be found in \cite{mertens2013dynamic}.   
\end{example}

\newpage 

Two popular methods to estimate impulse responses when a shock has already been identified include: local projections and distributed lag models. In particular, one can show that the dynamic response from a VAR with the shock embedded as an endogenous variable is equivalent to that of a VAR with the shock included as an exogenous variable only when that shock has no serial correlation (e.g., see \cite{plagborg2021local},  \cite{olea2021inference}). The result follows because a VAR with a shock as an exogenous variable (VAR-X) can be seen as a multivariate generalization of a DLM. A comparison of these cases can be found in \cite{alloza2020dynamic}. Moreover, \cite{lusompa2023local} propose an efficient estimation method for Local projections when residuals are autocorrelated.  Lastly, the simultaneous impact of monetary and fiscal policy is discussed by \cite{bruneau2003monetary} while the relation of fiscal policy and persistent inflation is studied by  \cite{bianchi2023fiscal}.

\begin{example}[see, \cite{alloza2020dynamic}]
Consider the following data generating mechanism:
\begin{align}
y_t &= \rho y_{t-1} + \delta_0 x_t + \delta_1 x_{t-1} + u_t,
\\
x_t &= \gamma x_{t-1} + v_t.
\end{align}
Then, the process can be formulated as a SVAR of the form $\boldsymbol{A}_0 \boldsymbol{Y}_t = \boldsymbol{B} \boldsymbol{Y}_{t-1}$ such that 
\begin{align}
\begin{bmatrix}
1 & 0 
\\
- \delta_0 & 1
\end{bmatrix} 
\begin{bmatrix}
x_t
\\
y_t
\end{bmatrix}
=
\begin{bmatrix}
\gamma & 0
\\
\delta_1 & \rho
\end{bmatrix}
\begin{bmatrix}
x_{t-1}
\\
y_{t-1}
\end{bmatrix}
+
\begin{bmatrix}
\varepsilon_t^x
\\
\varepsilon_t^y
\end{bmatrix}
\end{align}
Consider the following econometric specification 
\begin{align}
\boldsymbol{y}_t &= \boldsymbol{\mu} + \sum_{j=1}^p \boldsymbol{A}_j \boldsymbol{y}_{t-j}  + \boldsymbol{B} \boldsymbol{\varepsilon}_t  
\\
\boldsymbol{z}_t &=  \boldsymbol{\mu}_z + \boldsymbol{\Gamma} \boldsymbol{\varepsilon}_t  + \boldsymbol{\Sigma}^{1/2} \boldsymbol{\eta}_t 
\end{align}
Without further restrictions, the augmented SVAR model is only identified up to orthogonal rotations of the form $\boldsymbol{B} = \tilde{\boldsymbol{B}} \boldsymbol{Q}$. According to \cite{alloza2020dynamic} the presence of serial correlation can be thought as cross-sectional persistence as in the case of narrative shocks, which are usually serially correlated. This aspect can directly affect the identification and estimation of their dynamic effects. In particular, when estimating the dynamic response of some variable to a serially correlated shock, some part of the persistence may be spillover on the impulse response function. We discuss the use of a suitable parametrization for modelling persistence in macroeconomic data in Section \ref{Section4.6}.
\end{example}

\subsubsection{Estimation Method}

Recently, it has been shown that the method of local projections has advantages in drawing inference of impulse responses, especially with respect to uniform validity (e.g., see \cite{inoue2020uniform}). We demonstrate an example of the estimation method from the framework of  \cite{xu2023local}. The VAR$(p)$ model is of the form $y_t = A_1 y_{t-1} + ... + A_p y_{t-p} + u_t$, $t = 1,...,n$, where $u_t$ is serially uncorrelated shock.

\newpage

To measure the responses of the first endogenous variable $y_{1t}$ to shocks after $h$ propagation periods, where $h \geq 1$, the LP method runs the regression below (see, \cite{xu2023local}). 
\begin{align}
y_{1,t+h} &= \beta_1(h) y_t + \sum_{j=1}^{p-1} \theta_{1j}(h)^{\prime} y_{t-j} + \xi_{1t}(h) 
\\
y_{2,t+h} &= \beta_2(h) y_t + \sum_{j=1}^{p-1} \theta_{2j}(h)^{\prime} y_{t-j} + \xi_{2t}(h)
\\
\vdots \ \  &= \ \ \vdots
\\
y_{k,t+h} &= \beta_k(h) y_t + \sum_{j=1}^{p-1} \theta_{kj}(h)^{\prime} y_{t-j} + \xi_{kt}(h)
\end{align}
Alternative, we can estimate the following system of equations
\begin{align}
y_{1,t+h} &= \beta_1(h) y_t + \sum_{j=1}^{p} \theta_{1j}(h)^{\prime} y_{t-j} + \xi_{1t}(h) 
\\
y_{2,t+h} &= \beta_2(h) y_t + \sum_{j=1}^{p} \theta_{2j}(h)^{\prime} y_{t-j} + \xi_{2t}(h)
\\
\vdots \ \  &= \ \ \vdots
\\
y_{k,t+h} &= \beta_k(h) y_t + \sum_{j=1}^{p} \theta_{kj}(h)^{\prime} y_{t-j} + \xi_{kt}(h)
\end{align}
Notice that by definition it holds that $\theta_{1,p} (h) = 0$. Recall that for a matrix $x$, we denote its Frobenius norm, such that $|x| = [ \mathsf{trace} ( x^{\prime} x ) ]^{1/2}$. Suppose that $\left\{ \widehat{u}_t(h): t = p^{*},..., n - h \right\}$ are OLS residuals of the $\mathsf{VAR} ( p^{*} - 1 )$ regression, using the data $\left\{ y_t: t = p^{*},..., n - h \right\}$. In addition, related distributional assumptions for the error term of the model can be imposed. 

The OLS estimator is given by 
\begin{align}
\left( \widehat{\beta}^{LA}_1 (h) - \beta_1(h) \right) 
=     
\left( \sum_{t=p^{*}}^{n-h} \widehat{u}_t(h) \widehat{u}_t(h)^{\prime} \right)^{-1} \left( \sum_{t=p^{*}}^{n-h} \widehat{u}_t(h) \xi_{1t} (h) \right). 
\end{align}
Notice that these local projection estimates correspond to an in-sample estimates for the VAR$(p)$ model based on $p$ lags. On the other hand, if we are interested to obtain out-of-sample forecast sequences then we need to implement a forecasting scheme such as a moving window which is used to obtain pseudo-out-of-sample forecasts. The window size should be larger than the number of lags to have good statistical properties for the forecast sequences. Relevant applications of local projections include the construction of impulse response functions.

\newpage

\subsection{Impulse Response Functions}

A major advantage of modelling interdependent systems such as Structural Vector Autoregressions is that these system representations can be used to construct impulse response functions and forecast error variance decompositions, which are useful for investigating the dynamics within the system as well as the statistical properties of econometric specifications for forecasting purposes (see, \cite{baillie2013estimation}). Thus, reduced-form impulse responses allows us to find the response of one variable to an impulse in another variable. Usually impulse response functions can be constructed based on the stable VAR$(p)$ process. An alternative method for estimating impulse response functions has been proposed in the literature, using the method of local projections as in \cite{jorda2005estimation}, \cite{barnichon2019impulse}, \cite{gorodnichenko2020forecast}, \cite{montiel2021local} and \cite{plagborg2021local}. In addition, the use of exogenous instrumentation as an identification strategy of forecast errors is examined by \cite{olea2021inference} and \cite{plagborg2022instrumental}. The construction of robust confidence intervals for IV-based local projections is studied by \cite{noh2017impulse}, \cite{gafarov2018delta}, \cite{ganics2021confidence} and \cite{koo2022impulse} among others.

\subsubsection{Asymptotic Results for VAR Processes with Known Order}

Suppose that $\boldsymbol{\beta}$ is an $( n \times 1)$ vector of parameters and $\hat{\boldsymbol{\beta}}$ is an estimator such that 
\begin{align}
\sqrt{T} \left(  \boldsymbol{\beta} -  \hat{\boldsymbol{\beta}}  \right)  \overset{d}{\to} \mathcal{N} \left( 0, \boldsymbol{\Sigma}_{\beta} \right),   
\end{align}
Moreover, let $\big( g_1 ( \boldsymbol{\beta}),..., g_m ( \boldsymbol{\beta}) \big)^{\prime}$ be a continuously differentiable function with values in $m-$dimensional Euclidean space and $\displaystyle \frac{  \partial g_i }{ \partial \boldsymbol{\beta}^{\prime} } = \frac{  \partial g_i }{ \partial \beta_j }$, for $i \in \left\{ 1,..., m \right\}$. Then, it holds that (see, \cite{lutkepohl1990asymptotic})
\begin{align}
\sqrt{T} \left( g ( \hat{\boldsymbol{\beta} } ) -  g ( \boldsymbol{\beta}) \right) \overset{d}{\to} \mathcal{N} \left( 0, \frac{  \partial g }{ \partial \boldsymbol{\beta}^{\prime} }  \boldsymbol{\Sigma}_{\beta}  \frac{  \partial g^{\prime} }{ \partial \boldsymbol{\beta} }   \right).     
\end{align}

\begin{remark}
Notice that if the VAR$(p)$ process is $y_t$ is (covariance) stationary with 
\begin{align}
\mathsf{det}  \big( \boldsymbol{I}_K - \boldsymbol{A}_1 z - ... -  \boldsymbol{A}_p z^p  \big) \neq 0, \ \ \ \text{for} \ \ |z| \leq 1,    
\end{align}
and the $u_t$ are independent, identically distributed (i.i.d) with bounded forth moments. This implies that the usual OLS estimators have asymptotic covariance matrix given by 
\begin{align}
\boldsymbol{\Sigma}_{a} = \Gamma^{-1} \otimes \boldsymbol{\Sigma}_u,     
\ \ \
\boldsymbol{\Gamma} 
= 
\mathbb{E} \left\{  \big[ \boldsymbol{y}_t  \ \boldsymbol{y}_{t-1} \ ... \ \boldsymbol{y}_{t-p+1}  \big]^{\prime} \otimes  \big[ \boldsymbol{y}_t  \ \boldsymbol{y}_{t-1} \ ... \ \boldsymbol{y}_{t-p+1}  \big]  \right\}
\end{align}
In addition, if $\boldsymbol{y}_t$ is Gaussian, then $\hat{\boldsymbol{\alpha}}$ and $\hat{\boldsymbol{\sigma}}$ are asymptotically independent. Interval forecasts based on predictive distributions are studied by \cite{chatfield1993calculating} while the reliability of local projection estimators of impulse response functions is examined by \cite{kilian2011reliable}. The asymptotic distributions of IRFs and FEVD for VARs are established by \cite{lutkepohl1990asymptotic} (see, also \cite{lanne2016generalized}).
\end{remark}

\newpage

\begin{remark}
Regarding the construction of Impulse Response Functions and FVED, it can be shown that Impulse Responses that are obtained from unrestricted VARs with roots near unity have long period estimated impulse responses that are inconsistent. Moreover, FEVD are also inconsistent in unrestricted VAR models with near unit roots. The  inconsistency problem of IRFs from nonstationary SVAR models is discussed by \cite{phillips1998impulse} (see, also \cite{lutkepohl1997impulse}). Specifically, the OLS estimation approach of \cite{phillips1998impulse} shows that in nonstationary VAR models with some roots at or near unity the estimated impulse response matrices are inconsistent at long horizons and tend to random matrices rather than the true impulse responses. In this stream of literature another issue is the consistent estimation of the number of cointegrating relations in a system of equations (e.g., see, \cite{barigozzi2021large} and \cite{barigozzi2022testing}).    
\end{remark}

\begin{example}
Consider the VECM process below
\begin{align}
\Delta y_t = \alpha \beta^{\prime} y_{t-1} + \Gamma_1 \Delta y_{t-1} + ... + \Delta y_{t-p+1} + u_t, \ \ t = 1,2,...    
\end{align}
The residuals of $u_t$ are the one-step ahead forecast errors associated with the VECM representation. However, tracing the marginal effects of a change in one component of $u_t$ through the system may not reflect the actual responses of the variables since in practice an isolated change in a single component of $u_t$ is likely to occur if the component is correlated with the other components.
\end{example}

\begin{remark}
Identifying restrictions are imposed in 
order to identify the transitory and permanent shocks in the system. For example, \cite{castelnuovo2010monetary} present impulse response analysis evidence that the VAR model is robust to two different identification strategies based on zero restrictions and the sign restrictions when evaluating the response of prices on monetary policy shocks. Additional to an impulse response analysis as a mechanism for studying the impact of structural shocks, the VAR framework is often used for forecasting based on suitable forecasting schemes (e.g., see, \cite{ng1990recursive}, \cite{bhansali2002multi} and \cite{mariano2002testing}). Two commonly used forecasting approaches: 
\begin{itemize}

\item[(i).] Direct Multiperiod Forecasting (e.g., see \cite{kang2003multi}, \cite{greenaway2013multistep, greenaway2020multistep}).

\item[(ii).] Iterated Multiperiod Forecasting (e.g., see \cite{marcellino2006comparison}).

\end{itemize}    

\end{remark}

\begin{example}[Constructing Iterated Multistep Forecasts]
\begin{align}
y_{t+h} = y_t + \beta^{\prime} X_{t} + \varepsilon_t 
\end{align}
Suppose that the we have an auxiliary equation for the $X_t$ such that $
X_t = A X_{t-1} + u_t$. Then, estimating the vector autoregression for the regressors by OLS we obtain an estimate $\hat{A}$ and thus a forecast of $X_{t+j}$ is $\hat{A}_j X_t$ and a forecast of the quantity $\left( y_{t+h} - y_t \right)$ is given by $\sum_{j=0}^{ h-1 } \hat{A}_j X_t$. Moreover, an example, of constructing one-step ahead forecasted values based on information from the cros-section is presented by \cite{katsouris2021forecast}. An interesting extension would be to use such cross-sectional shrinkage approach within the aforementioned econometric environment. 
\end{example}

\newpage 

\subsubsection{Forecast Error Variance Decomposition}

Generally, the Forecast Error Variance Decomposition gives the proportional contribution of each of the structural shocks to the forecast error variance of each variable at different horizons. Recall that the $h-$period forecast error of $y_t$ is given by 
\begin{align}
y_{t+h} - y_{t+h | t} = \sum_{i=0}^{h-1} \Phi_i u_{t+h-i} = \sum_{i=0}^{h-1} \Phi_i B_0^{-1} w_{t+h-i} = \sum_{i=0}^{h-1} \Theta_i w_{t+h-i}.
\end{align}
Hence, the $h-$period ahead forecast error of the $j-$th component of $y_t$ is given by 
\begin{align*}
\sum_{i=0}^{h-1} \big( \theta_{j1,i} w_{1,t+h-i} + ... + \theta_{jK,i} w_{K,t+h-i} \big)
= 
 \sum_{k=0}^{K} \big( \theta_{jk,0} w_{k,t+h} + ... + \theta_{jk,h-1} w_{k,t+1} \big). 
\end{align*} 
Moreover, the variance of the $h-$period forecast error of the $j-$th component of $y_t$ is given by 
\begin{align*}
\mathbb{E} \left[ \sum_{k=1}^K \bigg( \theta_{jk,0} w_{k,t+h} + ... + \theta_{jk,h-1} w_{k,t+1} \bigg)^2 \right] \equiv 
\sum_{k=1}^K
\bigg( \theta_{jk,0}^2 + ... +  \theta_{jk,h-1}^2 \bigg). 
\end{align*}
which holds since $\Sigma_w = I_K$. Notice that the sum of the terms $\left\{  \theta_{jk,0}^2 + ... +  \theta_{jk,h-1}^2 \right\}$, is thus the $k-$th shock to the forecast error variance of the $h-$period forecast of the $j-$th variable. Then, we have that 
\begin{align}
\mathsf{FEVD}_j^k (j) := \frac{ \theta_{jk,0}^2 + ... +  \theta_{jk,h-1}^2   }{ \displaystyle \sum_{k=1}^K \bigg( \theta_{jk,0}^2 + ... +  \theta_{jk,h-1}^2 \bigg) }.
\end{align}
As an example of application of the above metrics see \cite{kilian2009impact} (and \cite{kilian2009not}) who consider the effects of three structural shocks (oil supply shock, aggregate demand shock, and oil-specific demand shock), in the US stock market. Another useful metric is the construction of historical decompositions which facilitate the quantification of the contribution of a given structural shock to the historically observed fluctuations in the variables included in the SVAR model. In particular, one might be interested to find to what extent the fiscal policy shock explains recessionary conditions (e.g., see \cite{cevik2023s}) during uncertain times (e.g., see \cite{bloom2009impact}). This is useful because we can concentrate on explaining a historical event which we know a priori might have induced a structural change to the macroeconomic system, instead of using for example the average contribution of the fiscal policy shock to the business cycle fluctuations as given by the FEVD (see, \cite{kilian2014quantifying}). 

Suppose that $y_t$ is weakly stationary, and we have observations from 1 to $t$, such that for any $t \in \mathbb{N}$ 
\begin{align*}
y_t = \sum_{s=0}^{t-1} \Theta_s w_{t-s} + \sum_{s=t}^{\infty} \Theta_s w_{t-s}.
\end{align*}
The MA coefficients $\Theta_s$ converge to zero the further to the past, so $\hat{y}_t = \sum_{s=0}^{t-1} \Theta_s w_{t-s}$ approximates $y_t$. 

\newpage 

\section{Identification of Structural Vector Autoregression Models}
\label{Section4}

While theory-based identification through sign restrictions guarantees economic interpretation of the identified shocks (e.g., see \cite{antolin2018narrative} and \cite{mavroeidis2021identification}), data-based identification offers diagnostic tools for testing the otherwise just identifying structural model specifications, while remain silent about the economic interpretation of the resulting structural shocks. 

\subsection{Identification using Short-run and Long-run Restrictions}

\begin{example}[Partially Identified SVARs]
Consider the dynamic structural models of the following form (see, \cite{baumeister2015sign})
\begin{align}
\boldsymbol{A} \boldsymbol{y}_t = \boldsymbol{B} \boldsymbol{x}_{t-1} + \boldsymbol{u}_t,     
\end{align}
where $\boldsymbol{y}_t$ is an $( n \times 1)$ vector of observed variables and $\boldsymbol{u}_t$ is an $( n \times 1)$ vector of structural disturbances assumed to be independent and identically distributed $\mathcal{N} ( 0, \boldsymbol{D} )$ and mutually uncorrelated. Then, the reduced VAR associated with the structural model is given by
\begin{align}
\boldsymbol{y}_t = \boldsymbol{\Phi} \boldsymbol{x}_{t-1} + \boldsymbol{\varepsilon}_t, \ \ \ \text{where} \ \ \ \boldsymbol{\Phi} = \boldsymbol{A}^{-1} \boldsymbol{B},
\ \ \ \ \text{and} \ \ \ \ 
\boldsymbol{\varepsilon}_t = \boldsymbol{A}^{-1} \boldsymbol{u}_t, \ \ \ \mathbb{E} \big[ \boldsymbol{\varepsilon}_t \boldsymbol{\varepsilon}_t^{\prime} \big] = \boldsymbol{\Omega} =  \boldsymbol{A}^{-1} \boldsymbol{D} \left( \boldsymbol{A}^{-1} \right)^{\prime}
\end{align}
where $\boldsymbol{x}_{t - 1}^{\prime} = \big( \boldsymbol{y}_{t-1}^{\prime},  \boldsymbol{y}_{t-2}^{\prime},..., \boldsymbol{y}_{t-m}^{\prime}, \boldsymbol{1} \big)^{\prime}$ is a $( k \times 1 )$ vector with $k = mn + 1$ containing a constant and $m$ lags of $\boldsymbol{y}$. Then, the MLE estimates of the reduced-form parameters are given by 
\begin{align}
\widehat{\boldsymbol{\Phi}}_T
= 
\left( \sum_{t=1}^T \boldsymbol{y}_t  \boldsymbol{x}_{t - 1}^{\prime} \right) \left( \sum_{t=1}^T \boldsymbol{x}_{t - 1}  \boldsymbol{x}_{t - 1}^{\prime} \right)^{-1}  
\ \ \ \ \text{and} \ \ \ \ 
\hat{\boldsymbol{\Omega}}_T 
=
\frac{1}{T} \sum_{t=1}^T \hat{\varepsilon}_t \hat{\varepsilon}_t^{\prime},
\end{align}
with $\hat{\varepsilon}_t = \boldsymbol{y}_{t} - \hat{\boldsymbol{\Phi}}_T \boldsymbol{x}_{t - 1}$. Moreover, 
\cite{baumeister2019structural} examines aspects related to the structural analysis of Vector Autoregressive models with incomplete identification based on the Bayesian approach rather than the classical frequentest approach.  
Generally to uniquely identify the structural parameter in a SVAR model implies of having $K(K-1)/2$ restrictions. In addition when rank conditions hold, then the model is assumed to be exactly identified (e.g., see \cite{rubio2005markov}).  
\end{example}

\begin{example}[Identification using Stability Restrictions]
According to \cite{magnusson2014identification}, often identification depends on the distributional assumptions imposed on $x_t$. Suppose that $x_t$ is a policy variable determined according to an underline stochastic process such that
\begin{align}
x_t = \rho x_{t-1} + (1 - \rho ) \phi y_t + \eta_t    
\end{align}
Furthermore, under a deterministic rational expectations equilibrium, the dynamics of $y_t$ and $x_t$ are as $
y_t = \beta x_{t-1} + u_{yt}$ and $
x_t = \varphi x_{t-1} + u_{xt}$, where $u_{yt}$ and $u_{xt}$ are innovation sequences. Moreover,  \cite{vlaar2004asymptotic} consider the asymptotic distribution of impulse responses in SVARs with long-run restrictions (see, also \cite{mittnik1993asymptotic}).

\end{example}

\newpage 

\paragraph{Recursive Identification} Usually recursive identification with $\boldsymbol{B}_0^{-1}$ being a lower diagonal matrix was particularly popular in the earlier SVAR literature which implies exact identification. In particular, when $B_0^{-1}$ is lower triangular, so it is the matrix $B_0$ and thus the SVAR is recursive.  

\begin{example}
Consider the following trivariate recursive $\mathsf{SVAR}(1)$ model as below
\begin{align*}
\begin{bmatrix}
b_{11,0} & 0 & 0 
\\
b_{12,0} & b_{22,0} & 0
\\
b_{31,0} & b_{32,0} & b_{33,0}
\end{bmatrix}
\begin{bmatrix}
y_{1t}
\\
y_{2t}
\\
y_{3t}
\end{bmatrix}
= 
\begin{bmatrix}
b_{11,1} & b_{11,2} & b_{13,1} 
\\
b_{21,1} & b_{22,1} & b_{23,1}
\\
b_{31,1} & b_{32,1} & b_{33,1}
\end{bmatrix}
\begin{bmatrix}
y_{1,t-1}
\\
y_{2,t-1}
\\
y_{3,t-1}
\end{bmatrix}
+ 
\begin{bmatrix}
\eta_{1t}
\\
\eta_{2t}
\\
\eta_{3t}
\end{bmatrix}
\end{align*}
Therefore, we have that 
\begin{align}
y_t = \sum_{i=0}^{\infty} \Phi_i B_0^{-1} \eta_{t-i} 
\equiv 
\sum_{i=0}^{\infty} \Theta_i \eta_{t-i} = 
\underbrace{ \begin{bmatrix}
\theta_{11,0} & 0 & 0
\\
\theta_{21,0} & \theta_{22,0} & 0
\\
\theta_{31,0} & \theta_{32,0} & \theta_{33,0}
\end{bmatrix} }_{\Theta_0 } 
\underbrace{
\begin{bmatrix}
\eta_{1t}
\\
\eta_{2t}
\\
\eta_{3t}
\end{bmatrix} }_{ \eta_t  } + \Theta_1 \eta_{t-1} + ... 
\end{align}
A relevant study considers the identification of monetary policy shocks using a trivariate SVAR(1) model where $y_t = ( x_t, \Delta p_t, \pi_t )^{\prime}$. However, a drawback of the recursive identification approach is that a different ordering of the variables can yields a different SVAR model with different identified structural shocks (i.e., does not produce uniquely identified structural shocks). On the other hand, it can be easily estimated since it involves estimating the reduced-form model using OLS or MLE, computing the covariance matrix and the computing the lower-triangular Cholesky decomposition. 
\end{example}

\paragraph{Non-Recursive Models}

A non-recursively identified SVAR model resembles a traditional system of simultaneous equation models. The associated identifying restrictions practically generate moment conditions that can be used to estimate the SVAR model. Furthermore, using such moment conditions one can employ suitably imposed rank conditions as a mechanism for identification (e.g., see \cite{blanchard2003empirical}). Although exact identification is not directly a testable hypothesis, we can employ the $J-$test for over identifying restrictions. 
\begin{itemize}

\item Provided that the order condition is satisfied, the rank condition for identification can be checked by assessing sequentially whether each of the equations can be estimated by the instrumental variables regression, using the variables excluded from each equation as instruments. 

\item Once an equation has been verified as being identified, then the corresponding structural shock is identified and thus it can be used as instrument since the shocks are uncorrelated. 

\end{itemize}

The GMM estimator is obtained by
\begin{align}
J = T \bigg( \mathsf{vech} \left( \hat{\Sigma}_u \right) -  \mathsf{vech} \left( B_0^{-1} B_0^{-1 \prime} \right)  \bigg)^{\prime} \hat{W} \bigg( \mathsf{vech} \left( \hat{\Sigma}_u \right) -  \mathsf{vech} \left( B_0^{-1} B_0^{-1 \prime} \right)  \bigg).  
\end{align}

\newpage

\subsubsection{Dynamic Structural Factor Models with Overidentifying Restrictions}

As an example of formal statistical testing  for overidentifying restrictions we present the framework proposed by \cite{han2018estimation} who develops a new estimator for the impulse response functions (IRFs) in structural factor models with a fixed number of over-identifying restrictions (see, also Chapter 12 in  \cite{dhrymes2013mathematics}). The proposed identification scheme nests the conventional just-identified recursive scheme as a special case. The first step before formulating the test statistic is to present the identification of structural shocks in the structural factor models setting (see, also \cite{koistinen2022estimation}).

Consider the following structural factor model for $t = 1,...,T$ such that $\boldsymbol{X}_t = \boldsymbol{\Lambda} \boldsymbol{F}_t + \boldsymbol{e}_t$,
\begin{align}
\boldsymbol{F}_t &= \sum_{j=1}^p \boldsymbol{\Phi}_j \boldsymbol{F}_{t-j} + \boldsymbol{G} \boldsymbol{\eta}_t, \ \ \ 
\boldsymbol{\eta}_t = \boldsymbol{A} \boldsymbol{\zeta}_t,
\end{align}
where $\boldsymbol{X}_t = \big[ X_{1t},..., X_{nt} \big]^{\top}$ is an $n-$dimensional vector, $\boldsymbol{F}_t$ is an $r-$dimensional static factor, $\boldsymbol{\Lambda}$ is an $( n \times r )$ factor loading matrix and $e_t = \big[ e_{1t},...., e_{nt} \big]^{\top}$ is an $n-$dimensional idiosyncratic error term. Moreover, $\boldsymbol{G}$ is an $( r \times q )$ matrix of rank $q$, $\boldsymbol{\eta}_t$ is the $q-$dimensional reduced-form shock and $\boldsymbol{\zeta}_t$ is the $q-$dimensional structural shock. We also have that $\mathbb{E} ( \boldsymbol{\eta}_t \boldsymbol{\eta}_t^{\top} ) = \boldsymbol{I}_q$ and $\boldsymbol{A}$ is a $q \times q$ nonsingular matrix. Setting $q \leq r$, then the proposed framework allows to to obtain dynamic factors. Furthermore, by assuming the stationarity of $\boldsymbol{F}_t$, we have that
\begin{align}
\big( \boldsymbol{I}_r - \boldsymbol{\Phi}_1 \boldsymbol{L} - ... - \boldsymbol{\Phi}_p \boldsymbol{L}^p \big)^{-1} =  \boldsymbol{I}_r + \sum_{j=1}^{\infty} \boldsymbol{\Psi}_j \boldsymbol{L}^j,    
\end{align}
where $\boldsymbol{\Psi}_j$ is the coefficient matrix in the vector moving average representation. Moreover, denote with $\boldsymbol{\mathcal{F}}_t = \big[ \boldsymbol{F}_{t-1}^{\top},...,  \boldsymbol{F}_{t-p}^{\top} \big]^{\top}$ and $\boldsymbol{\Phi} = \big[ \boldsymbol{\Phi}_1,..., \boldsymbol{\Phi}_p \big]$. Rearranging equations gives
\begin{align}
\boldsymbol{X}_t 
= \boldsymbol{\Pi} \boldsymbol{\mathcal{F}}_t + \boldsymbol{\Theta} \boldsymbol{\eta}_t + \boldsymbol{e}_t
\equiv \boldsymbol{\Pi} \boldsymbol{\mathcal{F}}_t + \boldsymbol{\Theta} \boldsymbol{\eta}_t + \boldsymbol{e}_t
\end{align}
where $\boldsymbol{\Pi}  = \boldsymbol{\Lambda} \boldsymbol{\Phi}$, $\boldsymbol{\Theta} = \boldsymbol{\Lambda}  \boldsymbol{G}$ and $\boldsymbol{\Gamma} = \boldsymbol{\Theta} \boldsymbol{A}$. In other words, for the $i-$th cross-section, the factor model can be represented as $\underline{X}_i = F \lambda_i + \underline{e}_i$, where $\underline{X}_i = [ X_{i1},...,  X_{iT} ]^{\top}$ and $F = [ F_1,..., F_T ]^{\top}$ and $\underline{e}_i = [ e_{i1},..., e_{iT} ]^{\top}$. Furthermore, let $\hat{F}_t$ denote the standard PCA estimator for $F_t$, such that $\frac{1}{T} \sum_{t=1}^T \hat{F}_t \hat{F}^{\top}_t = \boldsymbol{I}_r$. 

\begin{remark}
Suppose that the $F_t$ and $e_t$ are uncorrelated, then $
\boldsymbol{\Sigma} = \boldsymbol{\Lambda} \boldsymbol{\Lambda}^{\prime} + \boldsymbol{\Psi}$, where $\boldsymbol{\Sigma} = Var ( \boldsymbol{y}_t )$. By taking the variance of the static factor model we obtain the above expression for the variance, but we also need to impose necessary and sufficient conditions for the identification. Even if $\boldsymbol{\Lambda} \boldsymbol{\Lambda}^{\prime}$ is identified, $\boldsymbol{\Lambda}$ is only identified up to an $( r \times r )$ rotation. Suitable conditions under which $\boldsymbol{\Lambda}$ can be identified imply resolving the rotational indeterminacy (\textit{local identification}). The factor loading $\boldsymbol{\Lambda}$ is locally identified at $\boldsymbol{\Lambda}^{*}$ if in a small open neighborhood of $\boldsymbol{\Lambda}^{*}$, then $\boldsymbol{\Lambda}^{*}$ is the only matrix that satisfies all the identifying constraints   
 (see, \cite{bai2014identification}). According to \cite{han2015tests}, the existing literature has not addressed the overidentification problem in FAVAR models. Unlike the conventional structural VAR analysis where the number of restrictions for identifying structural shocks, the FAVAR models tend to involve a large number of identifying restrictions.
\end{remark}

\newpage

\subsection{Identification via Conditional Heteroscedasticity}

The presence of structural change in a sample can be exploited as an identification mechanism within a structural system context, especially when the heteroscedasticity of structural variances across the two regimes is exploited (see, \cite{hodoshima1988estimation}, \cite{choi2002structural}, \cite{lanne2008identifying} as well as \cite{bacchiocchi2011new}). In other words, the presence of heteroscedasticity in the error terms $\varepsilon_t$ of the system can be employed for identifiability purposes. Although, the conditional heteroscedastiticy approach often requires to use a pre-testing procedure before estimating the model as in \cite{meitz2021testing}, \cite{lutkepohl2021testing}, \cite{bertsche2022identification} and \cite{guay2021identification}; we follow the study of   \cite{bruggemann2016inference} who proposed a framework for the identification of SVARs with conditional heteroscedasticity of unknown form.

Let $( u_t, t \in \mathbb{Z} )$ be a $K-$dimensional white noise sequence defined on a probability space $( \Omega, \mathcal{F}, \mathbb{P} )$, such that each $u_t = \left( u_{1t},..., u_{2t}    \right)^{\prime}$ is assumed to be measurable with respect to $\mathcal{F}_t$, where $( \mathcal{F}_t )$ is a sequence of increasing $\sigma-$fields of $\mathcal{F}$. Suppose that we observe a data sample $( y_{-p+1},..., y_0, y_1,..., y_T )$ of sample size $T$ plus $p$ pre-sample values from the following DGP for the $K-$dimensional time series $y_t = \left( y_{1t},..., y_{Kt} \right)^{\prime}$ such that
\begin{align}
y_t = \mu + A_1 y_{t-1} + ... + A_p y_{t-p} + u_t, \ \ \ t \in \mathbb{Z},    
\end{align}
where $A(L) = I_k - A_1 L - A_2 L^2 - ... - A_p L^p, \ \ \ A_p \neq 0$. 

Consider a $K-$dimensional time series such that $y_t = \left( y_{1t},..., y_{Kt} \right)$ where
\begin{align}
y_t = \mu + A_1 y_{t-1} + ... + A_p y_{t-p} + u_t, \ \ \ t \in \mathbb{Z}    
\end{align}
or $A(L) y_t = \mu + u_t$, in compact representation. Denote with $\boldsymbol{y} = \mathsf{vec} \left( y_1,..., y_T \right)$ to be $(KT \times 1)$ vector. Moreover, the parameter $\boldsymbol{\beta}$ is estimated by $
\widehat{\boldsymbol{\beta}} = \mathsf{vec} \left( \widehat{A}_1,..., \widehat{A}_p \right)$, via the multivariate OLS estimator 
\begin{align}
\widehat{\boldsymbol{\beta}} = \left( \left( \boldsymbol{Z} \boldsymbol{Z}^{\prime} \right)^{-1}  \boldsymbol{Z} \otimes \boldsymbol{I}_K \right) \boldsymbol{y}.    
\end{align}
Assume that the process $\boldsymbol{y}_t$ is stable, then it has a vector moving-average representation (VMA) s.t.
\begin{align}
y_t = \sum_{j=0}^{\infty} \boldsymbol{\Phi}_j u_{t-j}, \ \ \ t \in \mathbb{Z},    
\end{align}
where $\Phi_j, j \in \mathbb{N}$, is a sequence of (exponentially fast decaying) $( K \times K)$ coefficient matrices with 
\begin{align}
\boldsymbol{\Phi}_0 = \boldsymbol{I}_K \ \ \ \text{and} \ \ \ \boldsymbol{\Phi}_i = \sum_{j=1}^i \Phi_{i-j} A_j, \ i = 1,2,...    
\end{align}
Notice that the standard estimator of $\boldsymbol{\Sigma}_u$ is given by $
\boldsymbol{\Sigma}_u = \frac{1}{T} \sum_{t=1}^T \widehat{u}_t  \widehat{u}_t^{\prime}$, where $\hat{u}_t = y_t - \widehat{A}_1 y_{t-1} - ... -  \widehat{A}_p y_{t-p}$ are the residuals obtained from the estimated VAR(p) model.

\newpage

Moreover, we set $\boldsymbol{\sigma} = \mathsf{vech} ( \boldsymbol{\Sigma}_u )$ and $\widehat{\boldsymbol{\sigma}} = \mathsf{vech} ( \widehat{\boldsymbol{\Sigma}}_u )$. The $\mathsf{vech}$-operator is defined to stack column wise the elements on and below the main diagonal of the matrix. A particular useful way to consider the development of the asymptotic theory is to obtain the joint limiting distribution using an unconditional central limit theorem as below
\begin{align}
\sqrt{T} 
\begin{pmatrix}
\widehat{\boldsymbol{\beta}} - \boldsymbol{\beta}
\\
\widehat{\boldsymbol{\sigma}}^2 - \boldsymbol{\sigma}^2 
\end{pmatrix}
\overset{d}{\to} \mathcal{N} \left( 0, \boldsymbol{V}  \right).
\end{align}
such that the covariance matrix is partitioned as below
\begin{align}
\boldsymbol{V} = 
\begin{pmatrix}
\boldsymbol{V}^{(1,1)} & \boldsymbol{V}^{(2,1)\prime}
\\
\boldsymbol{V}^{(2,1)} & \boldsymbol{V}^{(2,2)}
\end{pmatrix}
\end{align}
with the following analytical forms 
\begin{align}
\boldsymbol{V}^{(1,1)} 
&= 
\left(  \boldsymbol{\Gamma}^{-1} \otimes \boldsymbol{I}_K \right) \left( \sum_{i,j = 1}^{\infty} ( \boldsymbol{C}_i \otimes \boldsymbol{I}_K ) \sum_{h = - \infty}^{\infty} \tau_{i,h,h+j} ( \boldsymbol{C}_i \otimes \boldsymbol{I}_K )^{\prime} \right)  \left(  \boldsymbol{\Gamma}^{-1} \otimes \boldsymbol{I}_K \right)^{\prime}   
\\
\boldsymbol{V}^{(2,1)} 
&= 
L_k \left( \sum_{j=1}^{\infty} \sum_{h= - \infty}^{\infty} \tau_{0,h,h+j} ( \boldsymbol{C}_i \otimes \boldsymbol{I}_K )^{\prime}  \right) \left(  \boldsymbol{\Gamma}^{-1} \otimes \boldsymbol{I}_K \right)^{\prime}
\\
\boldsymbol{V}^{(2,1)} 
&= 
L_K \left(  \sum_{h= - \infty}^{\infty} \big[ \tau_{0,h,h} - \mathsf{vech} ( \boldsymbol{\Sigma}_u ) \mathsf{vech} ( \boldsymbol{\Sigma}_u )^{\prime} \big] \right) L_K^{\prime} 
\end{align}

\begin{remark}
Notice that we can also write the following expression 
\begin{align}
V^{(2,2)} = \mathsf{Var} ( \boldsymbol{u}_t^2 ) + \sum_{h = - \infty}^{\infty} \mathsf{Cov} \left( \boldsymbol{u}_t^2, \boldsymbol{u}_{t-h}^2 \right)    
\end{align}
such that $\boldsymbol{u}_t^2 = \mathsf{vech} ( u_t u_t^{\prime} )$. Hence, $V^{(2,2)}$ has a long-run variance representation in terms of $\boldsymbol{u}_t^2$ that captures the (linear) dependence structure in the underline stochastic sequence. In addition if the errors are \textit{i.i.d} then we have that 
$V^{(2,2)} =  \mathsf{Var} ( \boldsymbol{u}_t^2 ) = L_K \tau_{0,0,0} L_K^{\prime} - \boldsymbol{\sigma} \boldsymbol{\sigma}^{\prime}$.
\end{remark}
Next, we focus on the residual-based moving block bootstrap resampling method. Various studies in the literature have demonstrated that block bootstrap methods are suitable for capturing dependencies in time series data. Specifically, we are interested in applying the moving block bootstrap technique for the residuals obtained from a fitter VAR$(p)$ model to approximate the limiting distribution of 
\begin{align}
\sqrt{T} \left( \left( \widehat{\boldsymbol{\beta}} - \boldsymbol{\beta}  \right)^{\prime},  \left( \widehat{\boldsymbol{\sigma}} - \boldsymbol{\sigma} \right)^{\prime} \right)^{\prime}.
\end{align}
\begin{itemize}
    
\item[\textbf{Step 1.}] Fit a VAR$(p)$ model to the data to get $\widehat{A}_1,...., \widehat{A}_p$, and compute the residuals $\widehat{u}_t = y_t - \widehat{A}_1 y_{t-1} - \widehat{A}_p y_{t-p}$, for $t = 1,..., T$.

\newpage

\item[\textbf{Step 2.}] Choose a block length $\ell < T$ and let $T = \floor{ T / \ell }$ be the number of blocks needed such that $\ell N \geq T$. Moreover, define $( K \times \ell )-$dimensional blocks such that 
\begin{align}
B_{i, \ell} = \left( \widehat{u}_{i+1},...,  \widehat{u}_{i+\ell} \right), \ \ \ i \in \left\{ 0,..., T - \ell \right\}    
\end{align}
such that $i_0,..., i_{N-1}$ be \textit{i.i.d} random variables uniformly distributed on the set $\left\{ 0,1,..., T - \ell \right\}$. Moreover, we lay blocks $B_{i_0, \ell},..., B_{i_{N-1}, \ell}$ end-to-end together and discard the last $N \ell - T$ values to get bootstrap residuals $\widehat{u}_t^{*},..., \widehat{u}_T^{*}$.

\item[\textbf{Step 3.}] Centering this sequence of residuals $\widehat{u}_t^{*},..., \widehat{u}_T^{*}$ based on the following rule
\begin{align*}
u_{j \ell + s}  
=  \widehat{u}_{j \ell + s} - \mathbb{E}^{*} \left[ \widehat{u}_{j \ell + s} \right]  
= 
\widehat{u}^{*}_{j \ell + s} - \frac{1}{T - \ell + 1} \sum_{r=0}^{T-\ell} \widehat{u}_{s+r}
\end{align*}
for $s \in \left\{ 1,..., \ell \right\}$ and $j \in \left\{ 0,1,2,..., N - 1 \right\}$ and unconditional mean, $\mathbb{E}^{*} ( u_t^{*} ) = 0$, for all $t = 1,..., T$.

\item[\textbf{Step 4.}] Set bootstrap pre-sample values $y_{-p+1}^{*},..., y_{0}^{*}$ equal to zero and generate the bootstrap sample $y_1^{*},..., y_T^{*}$ according to the following 
\begin{align}
y_t^{*} = \widehat{A}_1 y^{*}_{t-1} + ... + \widehat{A}_p y^{*}_{t-p} + u_t^{*}.   
\end{align}

\item[\textbf{Step 5.}] Compute the bootstrap estimator

\begin{align}
\widehat{\boldsymbol{\beta}}^{*} = \mathsf{vec} \left( \widehat{A}_1 ,..., \widehat{A}_p \right) 
= 
\left( \left( \boldsymbol{Z}^{*} \boldsymbol{Z}^{* \prime} \right)^{-1} \boldsymbol{Z}^{*} \otimes \boldsymbol{I}_K \right) \boldsymbol{y}^{*}  
\end{align}
Moreover, we define the bootstrap analogue of $\boldsymbol{\Sigma}_u$ such that 
\begin{align}
\boldsymbol{\Sigma}_u^{*} = \frac{1}{T} \sum_{t=1}^T \widehat{u}_t^{*} \widehat{u}_t^{* \prime}    
\end{align}
where $\widehat{u}_t^{*} = y_t^{*} - \widehat{A}_1^{*} y_{t-1}^{*} - ... - \widehat{A}_p^{*} y_{t-p}^{*}$ are the bootstrap residuals obtained from the VAR$(p)$ fit. We set $\widehat{\boldsymbol{\sigma}}^{*} = \mathsf{vech} ( \widehat{\boldsymbol{\Sigma}}_u^{*} )$.
    
\end{itemize}

\begin{theorem}[Residual-based MBB Consistency, \cite{bruggemann2016inference}] Under regularity conditions and if $\ell^3 / T \to \infty$ as $T \to \infty$, it holds that
\begin{align}
\underset{ x \in \mathbb{R}^{ \tilde{K} } }{ \mathsf{sup} } \left| \mathbb{P}^{*} \bigg( \sqrt{T} \left( (  \widehat{\boldsymbol{\beta}}^{*} - \widehat{\boldsymbol{\beta}} )^{\prime},  (  \widehat{\boldsymbol{\sigma}}^{*} - \widehat{\boldsymbol{\sigma}} )^{\prime} \right)^{\prime} \leq x \bigg)  
-  
\mathbb{P} \bigg( \sqrt{T} \left( (  \widehat{\boldsymbol{\beta}} - \widehat{\boldsymbol{\beta}} )^{\prime},  (  \widehat{\boldsymbol{\sigma}} - \widehat{\boldsymbol{\sigma}} )^{\prime} \right)^{\prime} \leq x \bigg) \right| \to 0 
\end{align}
in \textit{probability}, where $\mathbb{P}^{*}$ denotes the probability measure induced by the residual-based MBB.     
\end{theorem}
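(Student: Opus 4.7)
The plan is to establish the consistency of the residual-based MBB by showing that, conditional on the data, the bootstrap statistic $\sqrt{T}\bigl((\widehat{\boldsymbol{\beta}}^{*}-\widehat{\boldsymbol{\beta}})',(\widehat{\boldsymbol{\sigma}}^{*}-\widehat{\boldsymbol{\sigma}})'\bigr)'$ converges weakly in probability to the same Gaussian limit $\mathcal{N}(0,\boldsymbol{V})$ that governs the original statistic. Since the limiting distribution is continuous (a centered Gaussian with positive-definite covariance), Polya's theorem upgrades pointwise convergence in distribution to uniform convergence in the CDF, so the sup-norm statement in the theorem follows automatically once the conditional weak limit is identified. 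The overall structure therefore reduces to: (i) a linearization of both parts of the statistic in terms of partial sums of $\widehat{u}_t^{*}$ and $\widehat{u}_t^{*}\widehat{u}_t^{*\prime}$, (ii) a bootstrap CLT for these partial sums under the MBB, and (iii) a verification that the bootstrap long-run variance converges to $\boldsymbol{V}$.

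First, I would carry out the linearization. Using the VAR$(p)$ recursion $y_t^{*}=\widehat{A}(L)^{-1}u_t^{*}$ and the closed-form OLS representation, a standard manipulation gives
\begin{align*}
\sqrt{T}(\widehat{\boldsymbol{\beta}}^{*}-\widehat{\boldsymbol{\beta}})
&=(\widehat{\boldsymbol{\Gamma}}^{-1}\otimes \boldsymbol{I}_K)\,\frac{1}{\sqrt{T}}\sum_{t=1}^{T}(\boldsymbol{x}_{t-1}^{*}\otimes u_t^{*})+o_{p^{*}}(1),\\
\sqrt{T}(\widehat{\boldsymbol{\sigma}}^{*}-\widehat{\boldsymbol{\sigma}})
&=L_K\,\frac{1}{\sqrt{T}}\sum_{t=1}^{T}\bigl[\mathsf{vec}(u_t^{*}u_t^{*\prime})-\mathbb{E}^{*}\mathsf{vec}(u_t^{*}u_t^{*\prime})\bigr]+o_{p^{*}}(1),
\end{align*}
mirroring the asymptotic linearization behind the analytic expressions for $\boldsymbol{V}^{(1,1)}$, $\boldsymbol{V}^{(2,1)}$ and $\boldsymbol{V}^{(2,2)}$ given in the text. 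The remainder terms are controlled by the stability of $\widehat{A}(L)$ (which holds with probability tending to one) together with standard bounds on $\widehat{\boldsymbol{\Gamma}}^{*}-\widehat{\boldsymbol{\Gamma}}$.

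Next, I would establish a conditional CLT for the stacked partial-sum vector built from $(\boldsymbol{x}_{t-1}^{*}\otimes u_t^{*},\,\mathsf{vec}(u_t^{*}u_t^{*\prime}))$. Because the MBB residuals are \textit{i.i.d.\ across blocks} conditional on the data, the conditional CLT reduces to a Lindeberg-type argument applied to the $N$ i.i.d.\ block sums $S_{i_j,\ell}:=\sum_{s=1}^{\ell}(\ldots)_{i_j+s}$, divided by $\sqrt{N\ell}\asymp\sqrt{T}$. The crucial step is that the conditional covariance of a single block sum, namely $(T-\ell+1)^{-1}\sum_{i=0}^{T-\ell}S_{i,\ell}S_{i,\ell}'/\ell$, is precisely a (truncated, lag-window) estimator of the long-run covariance matrix $\boldsymbol{V}$. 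Under the moment and mixing/near-epoch assumptions underlying the unconditional CLT used to derive $\boldsymbol{V}$, consistency of this long-run covariance estimator at block length $\ell$ with $\ell\to\infty$, $\ell/T\to 0$ is standard (for example, Künsch's original MBB arguments), and the growth condition $\ell^{3}/T\to\infty$ stated in the theorem ensures sufficient block length to capture the full autocovariance structure of $u_t u_t'$ needed for $\boldsymbol{V}^{(2,2)}$. Polya's theorem then converts the conditional weak convergence into uniform CDF convergence, in probability.

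The main obstacle, I expect, will be the second component, i.e.\ the bootstrap analogue of $\sqrt{T}(\widehat{\boldsymbol{\sigma}}-\boldsymbol{\sigma})$, because its limiting variance $\boldsymbol{V}^{(2,2)}$ depends on the \emph{entire} sequence of autocovariances of $u_t u_t'$ rather than on a single variance. The MBB captures this dependence only if the block length grows fast enough to include all non-negligible lags, yet not so fast that the Lindeberg condition for the block sums fails; reconciling these two requirements under conditional heteroscedasticity of unknown form (where $u_t$ is not independent, only a martingale-difference-type white noise) is where one must invoke the full strength of the moment and mixing assumptions in \cite{bruggemann2016inference}. A secondary technical point will be to show that the recentering step in Step 3, which replaces $\mathbb{E}^{*}\widehat{u}_t^{*}$ by the block average rather than zero, does not contaminate either the first-order location or the long-run variance estimate---this is a short but careful calculation showing the induced bias is $O_p(\ell/T)=o(T^{-1/2})$ under the stated rate condition.
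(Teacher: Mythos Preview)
The paper does not contain a proof of this theorem; it merely states the result and attributes it to \cite{bruggemann2016inference}, with no accompanying argument. There is therefore nothing in the paper itself to compare your proposal against.

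That said, your overall architecture---linearize both components, apply a block-wise Lindeberg CLT exploiting the i.i.d.\ structure of the drawn blocks, show that the conditional block covariance reproduces the long-run variance $\boldsymbol{V}$, and invoke Polya's theorem to pass from weak convergence to uniform convergence of the CDFs---is exactly the standard route for residual-based MBB consistency and matches the strategy of the original \cite{bruggemann2016inference} paper.

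One genuine gap in your argument concerns the rate condition. As stated in the paper, the hypothesis is $\ell^3/T \to \infty$, which is only a \emph{lower} bound on the block length ($\ell$ grows faster than $T^{1/3}$). Your final paragraph claims the recentering bias is $O_p(\ell/T) = o(T^{-1/2})$, but this requires an \emph{upper} bound $\ell = o(T^{1/2})$, which is not implied by $\ell^3/T \to \infty$. (For example, $\ell \sim T^{0.6}$ satisfies $\ell^3/T \to \infty$ but gives $\ell/T \sim T^{-0.4} \neq o(T^{-1/2})$.) Similarly, your Lindeberg verification needs $\ell/T \to 0$, which also does not follow. The condition in the paper is almost certainly a typo for $\ell^3/T \to 0$ (the actual assumption in \cite{bruggemann2016inference}), under which $\ell = o(T^{1/3})$ and your bias and Lindeberg arguments go through; but as written, the last step of your proposal does not follow from the stated hypothesis.
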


\newpage

\subsection{Identification via Non-Gaussianity of Structural Shocks}

\begin{example}
Consider a standard $d-$dimensional VAR$(p)$ process given by (see, \cite{petrova2022asymptotically})
\begin{align}
\boldsymbol{Y}_t = \mu_0  + \sum_{i=1}^p B_{0,1} \boldsymbol{Y}_{t-1} + \boldsymbol{\varepsilon}_{t} \equiv \mu_0 + \boldsymbol{B}_0 \odot \boldsymbol{\mathcal{Y}}_{t-1}  + \boldsymbol{\varepsilon}_{t}, \ \ \    \boldsymbol{\varepsilon}_{t} \sim ( 0, \Omega_0 ).  
\end{align}
where $\boldsymbol{B}_0 = \big[ B_{0,1},..., B_{0,p} \big]$ and $\boldsymbol{\mathcal{Y}}_{t-1} = \big[ \boldsymbol{Y}_{t-1}^{\top},...,  \boldsymbol{Y}_{t-p}^{\top} \big]^{\top}$ is an $dp \times 1$ vector containing the lags of the vector $\boldsymbol{Y}_t$ and $\odot$ denotes the inner product between two vectors of the same dimension. Denote with $\mathcal{F}_t = \sigma \big( \boldsymbol{\varepsilon}_{t},..., \boldsymbol{\varepsilon}_{1} \big)$ the natural filtration of the innovation sequence and with $\mathbb{E}_{ \mathcal{F}_t }$ the conditional expectation operator.  Recall that covariance stationarity of $\boldsymbol{Y}_t$ yields a vector $MA(\infty)$ representation of the form $\boldsymbol{Y}_t = \sum_{j=- \infty}^{+\infty} \Phi_j \boldsymbol{\varepsilon}_{t-j}$.

\begin{assumption}[see, \cite{petrova2022asymptotically}]
Suppose that the following conditions hold: 
\begin{itemize}

\item[\textit{(i).}]  The VAR process is stable, so that all roots of the polynomial 
\begin{align}
\psi(z) = \mathsf{det} \left( \boldsymbol{I}_d - \sum_{j=1}^p z^j \boldsymbol{B}_{0,i}  \right), \ \textit{lie outside the unit circle}.  
\end{align}
 
\item[\textit{(ii).}]  The error process $( \boldsymbol{\varepsilon}_{t}, \mathcal{F}_t )_{ t \geq 1}$ has the following properties:
\begin{itemize}
    \item[(a)] is a martingale difference sequence satisfying $\mathbb{E}_{ \mathcal{F}_{t-1} } [ \boldsymbol{\varepsilon}_{t} \boldsymbol{\varepsilon}_{t}^{\top} ] = \boldsymbol{\Omega}_0$ for all $t$.

    \item[(b)] has time-invariant third and fourth conditional moments such that 
    \begin{align}
        \mathbb{E}_{ \mathcal{F}_{t-1} } \left[ \boldsymbol{\varepsilon}_{t} \odot \mathsf{vech} \left( \boldsymbol{\varepsilon}_{t} \boldsymbol{\varepsilon}_{t}^{\top} \right) \right] = \boldsymbol{\mathcal{S}} \ \ \ \mathbb{E}_{ \mathcal{F}_{t-1} } \left[ \mathsf{vech} \left( \boldsymbol{\varepsilon}_{t} \boldsymbol{\varepsilon}_{t}^{\top} \right) \odot \mathsf{vech} \left( \boldsymbol{\varepsilon}_{t} \boldsymbol{\varepsilon}_{t}^{\top} \right)  \right] = \boldsymbol{\mathcal{K}}, \ \forall \ t.
    \end{align}
 
\end{itemize}
    
\end{itemize}
    
\end{assumption}
Based on the conditions above (stability condition of the system), a suitable estimation method is the quasi-maximum likelihood method which is robust to distributional misspecifications. 
The QML estimators of the model parameters $\boldsymbol{B}_{0,\mu} = [ \mu, \boldsymbol{B}_0 ]$ are given as below:
\begin{align}
\widehat{\boldsymbol{B}}_{0,\mu} = \left( \sum_{t=1}^T \boldsymbol{X}_{t-1} \boldsymbol{X}_{t-1}^{\top} \right)^{-1}  \left( \sum_{t=1}^T \boldsymbol{Y}_{t-1} \boldsymbol{X}_{t-1}^{\top} \right), \ \ \ \hat{\Omega}_T = \frac{1}{T} \sum_{t=1}^T \boldsymbol{\varepsilon}_{t} \boldsymbol{\varepsilon}_{t}^{\top},
\end{align}
where $\boldsymbol{\varepsilon}_{t} = \boldsymbol{Y}_t - \widehat{\boldsymbol{B}}_{0,\mu} \boldsymbol{X}_{t-1}$. Estimating the VAR model without an intercept after demeaning leads to the same QML estimators for $\widehat{\boldsymbol{B}}_{0,\mu}$ and  $\boldsymbol{\varepsilon}_{t}$, which is a simple consequence of the Frisch-Waugh-Lovell theorem. Moreover, the asymptotic theory analysis can be developed using the 
Hessian matrix, denoted with $\mathcal{H}$, such that 
\begin{align}
\mathcal{H} ( \boldsymbol{Y}_t; \theta ) = \frac{ \partial^2 \ell ( \boldsymbol{Y}_t; \theta ) }{ \partial \theta \partial \theta^{\prime}}.    
\end{align}
\end{example}

\newpage

\begin{remark}
The result of distributional misspecification\footnote{ \cite{petrova2022asymptotically} mentions that inaccurately imposing Gaussian distributional assumptions in standard multivariate time series models does not affect inference on the autoregressive coefficients but distorts both classical and Bayesian inference on the volatility matrix whenever the true error distribution has excess kurtosis relative to the multivariate normal density. } is that Bayesian methods leads to asymptotically invalid posterior inference for the intercept and the volatility matrix and, consequently, invalid posterior credible sets for quantities such as impulse responses, variance decompositions and density forecasts. Therefore, a Bayesian procedure which delivers asymptotically correct posterior credible sets regardless of distributional assumptions is desirable for constructing the posterior distribution of quantities such as impulse response functions.  (see, \cite{petrova2022asymptotically}). 
\end{remark}
Generally, exploiting the presence of non-Gaussianity especially in a multivariate system which is more likely to include non-linearities and complex dependencies of unknown forms between variables is an identification method which uniquely determines the structural parameter. However, identification up to signed-permutations reduces the identified set and thus to achieve point identification it is necessary to have a mechanism for selecting a specific permutation as in \cite{lanne2017identification} (see, also \cite{hallin2015r}).  Recently, \cite{chan2023large} propose a fast algorithm for identifying the exact permutation scheme based on multiple sign and ranking restrictions. 

\begin{example}[see, \cite{lanne2017identification}]
Consider the structural VAR (SVAR) model as below 
\begin{align}
\boldsymbol{y}_t = \boldsymbol{\mu} + \boldsymbol{A}_1 y_{t-1} + ... + \boldsymbol{A}_p \boldsymbol{y}_{t-p} + \boldsymbol{B} \boldsymbol{\varepsilon}_t,    
\end{align}
where $\boldsymbol{y}_t$ is the $d-$dimensional time series of interest, $\boldsymbol{\mu}$, $( d \times 1 )$ is an intercept term such that $\left\{ \boldsymbol{A}_1,..., \boldsymbol{A}_p \right\}$ and $\boldsymbol{B}$ are the $(d \times d)$ parameter matrices with $\boldsymbol{B}$ non-singular, and $\boldsymbol{\varepsilon}_t$ is an $( d \times 1)$ temporally uncorrelated strictly stationary error term with zero mean and finite positive definite covariance matrix. Furthermore, since we consider stationary (or stable) time series, we assume 
\begin{align}
\mathsf{det} ( \boldsymbol{A} ) \overset{ \mathsf{def} }{=} \big( \boldsymbol{I}_n - \boldsymbol{A}_1 z - ... - \boldsymbol{A}_p z^p \big) \neq 0,  \ \ |z| \leq 1  
\end{align}
If we left multiply by the inverse of $\boldsymbol{B}$ we obtain an alternative formulation of the SVAR model such that $
\boldsymbol{A}_0 \boldsymbol{y}_t =  \boldsymbol{\mu}^{\star} + \boldsymbol{A}_1^{\star} \boldsymbol{y}_{t-1} + ... + \boldsymbol{A}_p^{\star} \boldsymbol{y}_{t-p} + \boldsymbol{\varepsilon}_t$, where we have that $\boldsymbol{A}_0  = \boldsymbol{B}^{-1}$, $ \boldsymbol{\mu}^{\star} = \boldsymbol{B}^{-1} \boldsymbol{\mu}$, $\boldsymbol{A}_j^{\star}  = \boldsymbol{B}^{-1} \boldsymbol{A}_j, \ j \in \left\{ 1,..., p \right\}$. Consider the moving average representation of the model as below: 
\begin{align}
\boldsymbol{y}_t = \boldsymbol{\nu} + \sum_{j=0}^{\infty} \boldsymbol{\Psi}_j \boldsymbol{B} \boldsymbol{\varepsilon}_{t-j}, \ \ \boldsymbol{\Psi}_0 = \boldsymbol{I}_n,     
\end{align}
where $\boldsymbol{\nu} = \boldsymbol{A}(1)^{-1} \boldsymbol{\mu}$ is the expectation of $\boldsymbol{y}_t$ and the matrices $\boldsymbol{\Psi}_j$, for $j \in \left\{ 0,1,..., \right\}$ are determined by the power series $\boldsymbol{\Psi} (z) = \boldsymbol{A}(1)^{-1} \equiv \sum_{j=0}^{ \infty } \Psi_j z^j$. An appropriate identification is needed to make the two factors in the product $\boldsymbol{B} \boldsymbol{\varepsilon}_t$, and hence the impulse responses $\boldsymbol{\Psi}_j \boldsymbol{B}$, unique.  The uniqueness property of the identification scheme is determined only up to a linear transformation of the parameter matrices. The reduced form residuals are assumed to be a linear combination of $d$ independent unobserved variables, that is, the structural shocks. Thus, the identification implies the consistent estimation of structural shocks given the reduced-form error terms (a nonprametric approach is proposed by  \cite{braun2023importance}). 
\end{example}

\newpage

Assume that the matrix $\boldsymbol{B} \equiv \boldsymbol{S} \boldsymbol{C}$ where $\boldsymbol{C} \in \mathbb{R}^{d \times d}$ is orthogonal and $\boldsymbol{S} \in \mathbb{R}^{d \times d}$ is such that $\boldsymbol{\Sigma}_u = \boldsymbol{S} \boldsymbol{S}^{\prime}$. Thus, given a large number of time series observations $\left\{ \boldsymbol{u}_t \right\}_{t=1,...,n}$, then the statistical problem is to consistently estimate the matrix product $\boldsymbol{S} \boldsymbol{C}$ and determine the distribution of the structural shocks $\boldsymbol{\varepsilon}_t$. Since, the matrix $\boldsymbol{S}$ can be directly  estimated by matching from the covariance matrix $\boldsymbol{\Sigma}_u$, using the Cholesky decomposition, then the identification problem of the structural parameter matrix, boils down to the consistent estimation of the orthogonal matrix $\boldsymbol{C} \in \mathbb{R}^{d \times d}$.  When structural shocks $\boldsymbol{\epsilon}_t$ are assumed to be Gaussian, one can only identify the space of the mixed signals without separating the individual components. The only way to ensure consistent identification is by assuming that structural shocks are non-Gaussian and then proceeding by consistently estimating the orthogonal matrix $\boldsymbol{C}$. To consistently estimate the individual components of the structural shocks we assume joint non-Gaussianity and that the orthogonal matrix $\boldsymbol{C}$ is consistently estimated. The main intuition of exploiting the non-Gaussianity of structural shocks as an identification mechanism as in \cite{lanne2017identification} can be seen via the concept of \textit{time-reversibility} (see, \cite{findley1986uniqueness}, \cite{tong2005time}, \cite{chan2006note} and \cite{lawrance1991directionality}\footnote{Professor Tony Lawrance is an Emeritus Professor at the Department of Statistics, University of Warwick, UK.}) in time series analysis. The notion of \textit{time-series reversibility} was proposed by economists and financial engineers to address the asymmetry property of business cycles. The presence of asymmetric signals can induce enough skewness to reject the null hypothesis of time-reversibility. Business cycle asymmetry can be tested with respect to whether macroeconomic fluctuations are time irreversible and can also be exploited for identification purposes as in \cite{virolainen2020structural} (see, \cite{nyberg2023risk}). 
\begin{theorem}[Uniqueness result]
Suppose that $x_t$ is a non-Gaussian, strictly stationary, zero mean time series whose $r-$th moment $\mathbb{E}( x_t^r )$ are finite for all $r = 1,2,...$. Suppose also that the spectral density of $x_t$ is positive almost everywhere. Then, $x_t$ can have at most one mean square convergent representation $x_t = \sum_{j = - \infty}^{ \infty } c_j \eta_{t-j}$, in which $\eta_t$ is a sequence of \textit{i.i.d} distributed zero mean random variables with finite moments, ignoring changes of scale and shifts in the time origin of the series $\eta_t$ (see, \cite{chan2006note}). 
\end{theorem}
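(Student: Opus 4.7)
The plan is to argue via higher-order cumulant spectra (polyspectra), leveraging non-Gaussianity to extract identifying information that is invisible at second order. The key idea is that the spectral density alone cannot distinguish two representations differing by a phase (an all-pass filter), but nonzero cumulants of order $r \geq 3$ provide enough extra equations to pin down the phase up to a linear (shift) component.

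First I would suppose, for contradiction, that $x_t$ admits two such representations
\begin{align*}
x_t = \sum_{j = -\infty}^{\infty} c_j \eta_{t-j} = \sum_{j = -\infty}^{\infty} d_j \xi_{t-j},
\end{align*}
and define the transfer functions $C(z) = \sum_j c_j z^j$ and $D(z) = \sum_j d_j z^j$, which are well-defined on the unit circle by mean-square convergence. Matching second moments gives $\sigma_\eta^2 |C(e^{i\omega})|^2 = f_x(\omega) = \sigma_\xi^2 |D(e^{i\omega})|^2$, and since $f_x(\omega) > 0$ almost everywhere, absorbing a positive scalar reduces us to $|C(e^{i\omega})| = |D(e^{i\omega})|$ a.e. Hence there exists a measurable $\theta$ with $D(e^{i\omega}) = e^{i\theta(\omega)} C(e^{i\omega})$.

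Second I would compute the $r$-th order polyspectrum via the Leonov--Shiryaev cumulant expansion: for a linear filter driven by i.i.d.\ noise with $r$-th cumulant $\kappa_r^\eta$,
\begin{align*}
S^x_r(\omega_1, \ldots, \omega_{r-1}) = \kappa_r^\eta \, \prod_{k=1}^{r-1} C\bigl(e^{i\omega_k}\bigr) \cdot C\Bigl(e^{-i(\omega_1 + \cdots + \omega_{r-1})}\Bigr),
\end{align*}
with the analogous formula in $(d_j, \xi)$. Because $x_t$ is non-Gaussian with all moments finite, Marcinkiewicz's theorem guarantees that some cumulant of order $r \geq 3$ is nonzero; fix such an $r$. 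Equating the two expressions for $S^x_r$, substituting $D = e^{i\theta} C$, and canceling the non-vanishing product of $C$'s (using positivity of $f_x$) yields
\begin{align*}
\theta(\omega_1) + \theta(\omega_2) + \cdots + \theta(\omega_{r-1}) + \theta\bigl(-\omega_1 - \cdots - \omega_{r-1}\bigr) \equiv \gamma \pmod{2\pi},
\end{align*}
for a constant $\gamma$ determined by $\kappa_r^\eta / \kappa_r^\xi$. This Cauchy-type functional equation on the torus, together with measurability, forces $\theta(\omega) = k\omega + \delta \pmod{2\pi}$ with $k \in \mathbb{Z}$; hence $D(z) = \alpha z^k C(z)$, so $d_j = \alpha c_{j-k}$ and, matching the two representations term-by-term, $\xi_t = \alpha^{-1} \eta_{t+k}$. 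This is exactly the scale-and-shift ambiguity the theorem permits.

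The hard part will be the third step: converting the functional equation into the linear-phase conclusion rigorously. One has to handle the a.e.-modulo-$2\pi$ indeterminacy carefully, lift $\theta$ to a measurable real-valued function on an arc, and then invoke a Cauchy-functional-equation argument (Steinhaus / Mehdi-type) to upgrade measurability to affine structure. The integrality $k \in \mathbb{Z}$ must be recovered from single-valuedness of $D/C$ on the circle, i.e.\ from the requirement that $e^{i\theta(\cdot)}$ be the boundary value of a well-defined function. A secondary subtlety is justifying the polyspectrum computation under only mean-square convergence; this is handled by truncation $C_N(z) = \sum_{|j| \leq N} c_j z^j$, using the moment bounds on $\eta_t$ plus dominated convergence to pass to the limit in all cumulants.
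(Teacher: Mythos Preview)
The paper does not actually prove this theorem; it is stated as a cited result from \cite{chan2006note} (with roots in \cite{findley1986uniqueness}), and the paper then moves on to prove a different statement (Proposition~1 of \cite{lanne2017identification}) about permutation/scaling equivalence classes in multivariate SVARs. So there is no ``paper's own proof'' to compare against here.

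That said, your polyspectral argument is the classical route to this uniqueness result and is essentially the approach of Findley and Rosenblatt: second-order spectra fix only the modulus $|C(e^{i\omega})|$, while a nonvanishing higher-order cumulant spectrum pins down the phase up to an affine map $\omega \mapsto k\omega + \delta$, which translates back to a time-shift and scaling of the innovations. Your invocation of Marcinkiewicz to guarantee some $\kappa_r^{\eta} \neq 0$ for $r \geq 3$ is the right device (non-Gaussianity of $x_t$ forces non-Gaussianity of $\eta_t$, hence a nonzero higher cumulant). The two places you flag as delicate --- lifting the a.e.\ phase identity modulo $2\pi$ to a genuine measurable function satisfying Cauchy's equation, and justifying the polyspectral factorization under only $\ell^2$ summability of $\{c_j\}$ --- are exactly where the published proofs do real work; Findley handles the latter via the finite-moment assumption on $\eta_t$ and dominated convergence, and the former via a careful winding-number argument for $D/C$ on the unit circle. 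Your outline is sound and matches the literature's strategy; the paper itself simply quotes the result rather than reproducing these steps.
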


\begin{example}[\cite{chan2006note}]
Consider the univariate non-Gaussian linear process representation 
\begin{align}
\label{repA}
X_t = \sum_{j=-\infty}^{\infty} A^{\prime}_j \varepsilon^{\prime}_{t-j}, \ \ \ t \in \mathbb{N}
\end{align}
There exists an integer $m$ and a matrix $B$ such that, for all $t$, (i) $A_t \equiv A_{m - t} B$ and (ii) $B \varepsilon_t$ is distributionally equivalent to $\varepsilon_t$. This uniqueness of a non-Gaussian process implies that, if $X$ admits representation \eqref{repA} and $Y_t = \sum_{j=-\infty}^{\infty} A^{\prime}_j \varepsilon^{\prime}_{t-j}$, then $Y$ is distributionally equivalent to $X$ \textit{if and only if} the previous conditions hold, with the equality concerning the innovations interpreted as equivalence in distribution. 
\end{example}

\paragraph{Proof of Proposition 1 in \cite{lanne2017identification}}

Suppose that the following two conditions hold: 
\begin{itemize}

\item[\textit{(i).}] The error process $\boldsymbol{\varepsilon}_t = \big( \varepsilon_{1,t},..., \varepsilon_{d,t} \big)$ is a sequence of \textit{i.i.d} random vectors with each component $\varepsilon_{i,t}$, for $i = \left\{ 1,..., n \right\}$, having zero mean and finite positive variance $\sigma_i^2$.  

\newpage

\item[\textit{(ii).}] The components of $\boldsymbol{\varepsilon}_t = \big( \varepsilon_{1,t},..., \varepsilon_{d,t} \big)$ are (mutually) independent and at most one of them has a Gaussian marginal distribution. 
    
\end{itemize}
A key step is to demonstrate the existence and uniqueness of an equivalent SVAR representation. In particular, one needs to show that the structural shocks of the original model can be determined as a function of the structural shocks of the observetionally equivalent SVAR process multiplied by an unknown permutation matrix $\boldsymbol{M}$ such that $
\boldsymbol{\varepsilon}_t = \boldsymbol{M} \boldsymbol{\varepsilon}_t^{*}, \ \ \ \boldsymbol{M} = \boldsymbol{B}^{-1} \boldsymbol{B}^{*}$.  To ensure a valid identification scheme, we need to examine the properties of the elements of this permutation matrix. 
Thus, we assume that at most one column of $\boldsymbol{M}$ may contain more than one nonzero element (see, Lemma A.1 in \cite{lanne2017identification} and \cite{kagan1973characterization}\footnote{For the interested reader, the property is due to the theory of \textit{locally compact abelian groups} which implies that any finitely generated group can be expressed as a direct product of cyclic groups.  As a result, the \textit{duality} result allows to construct a distributional equivalent topological space using properties from non-Gaussian topological spaces. Specifically, any compact Hausdorff abelian group can be completely described by the purely algebraic properties of its dual group. Corollary 3 in \cite{morris1979duality} defines the observetionally equivalent class of topologically isomorphic groups. These theorems remain agnostic regarding the exact properties model estimators are assumed to have ex-ante to the imposed specification, which implies that using a non-Gaussian distributional assumption is sufficient.} Suppose say that the $k-$th column of $\boldsymbol{M}$ has at least two nonzero elements, $m_{i k}$ and $m_{jk}$ such that $i \neq j$. Then, it holds that  
\begin{align}
\varepsilon_{i,t} = m_{ik} \varepsilon_{k,t}^{*} + \sum_{  \ell = 1,....n; \ell \neq k } m_{ i \ell }  \varepsilon_{\ell,t}^{*} 
\\
\varepsilon_{j,t} = m_{jk} \varepsilon_{k,t}^{*} + \sum_{  \ell = 1,....n; \ell \neq k } m_{j \ell }  \varepsilon_{\ell,t}^{*} 
\end{align}
where the random variable $\varepsilon_{k,t}^{*}$ is assumed to be Gaussian with positive variance. Moreover, for all $\ell \in \left\{ 1,..., n; \ell \neq k \right\}$, it must hold that $m_{ik} m_{j \ell} = 0$ because only the $k-$th element of $\boldsymbol{M}$ could have more than one nonzero element. Notice that the purpose of the identification scheme is to distinguish or identify the structural shocks. Due to the independence assumption of the error terms and by the fact that each row of $M$ has exactly one non-zero element, then we can conclude that there exist a permutation matrix $\boldsymbol{P}$ and a diagonal matrix $\boldsymbol{D} = \mathsf{diag} \left\{ d_1,..., d_n \right\}$ with non-zero diagonal elements such that $\boldsymbol{M} = \boldsymbol{D} \boldsymbol{P}$. Rearranging terms implies that $
\boldsymbol{\varepsilon}_t = \boldsymbol{B}^{-1} \boldsymbol{B}^{*} \boldsymbol{\varepsilon}_t^{*}$ and to have observational equivalence it should hold that 
\begin{align}
\boldsymbol{\varepsilon}_t \overset{\triangle }{\equiv} \boldsymbol{B}^{-1} \underbrace{ \boldsymbol{B} \boldsymbol{D} \boldsymbol{P} }_{ \boldsymbol{B}^{*} } \boldsymbol{\varepsilon}_t^{*},  \ \ \text{where} \ \  \boldsymbol{\varepsilon}_t^{*} = \boldsymbol{P}^{\prime} \boldsymbol{D}^{-1} \boldsymbol{\varepsilon}_t  
\end{align}
since $\boldsymbol{D}$ is a diagonal matrix then $\boldsymbol{D}^{-1}$ is also a diagonal matrix and $\boldsymbol{P} \boldsymbol{P}^{\prime} = \boldsymbol{I}$ where $\boldsymbol{P}$ represents a permutation matrix within the same equivalence class. 

Proposition 1 in \cite{lanne2017identification} characterizes a class of observationally equivalent SVAR processes that differ only with respect to the ordering and scaling of the structural shocks in the vector $\varepsilon_t$. A (rescaled) error vector of these set of permuted structural shocks, consists of exactly the same elements, whose ordering varies and thus these SVAR processes induces asymptotically equivalently impulse responses functions. Thus, the invariance transformation property allows to identify structural shocks using an observationally equivalent class of SVAR processes based on mutual independence.

\newpage

\subsubsection{Testing for Independence of Structural Shocks}

The independence property of structural shocks $\varepsilon_t$ can serve to uniquely identify the structural parameter $B$ it at most one component of $\varepsilon_t$ is Gaussian. Moreover, in non-Gaussian frameworks, one has to consider the entire dependence structure to implement a robust structural analysis. Moreover, although the original identification scheme of \cite{lanne2017identification} is based on the assumption of at least one Gaussian shock for full identification, the particular result is extended by \cite{maxand2020identification} who show that the same identification scheme is valid in the case of multiple Gaussian structural shocks while the remaining structural shocks are non-Gaussian. In addition some further studies as in \cite{crucil2023monetary} and \cite{bruns2023testing} consider the case of non-Gaussian proxy SVARs. In particular,  \cite{crucil2023monetary} extending the approach of \cite{schlaak2023monetary} employs an instrumental regression to estimate the structural shock of interest while staying agnostic for the remaining structural shocks. 

Testing for the independence of the identified structural shocks obtained from orderings (e.g., a set of permutations of all shocks) of the VAR residuals is an important validation step for the underline assumptions. When a suitable test statistic of independence for the identified shocks doesn't find evidence against the hypothesis of independence then this is an indication that using a non-Gaussian MLE approach should render valid model estimates. Motivated from the presence of such non-Gaussian and heavy-tailed error terms in SVAR models \cite{davis2023time} study the role of uncertainty in economic fluctuations. Specifically, disaster shocks are shown to have short term economic impact arising mostly from feedback dynamics. In particular, the authors show that the coefficient estimate on the infinite variance regressor would be consistent but not asymptotically normal. As a result, the coefficients of impulse response functions whether there are computed from the VAR or by local projections are likely not to be asymptotically normally distributed. A possible drawback of the particular identification scheme is that exploiting the non-Gaussianity property precludes cases of non-identification. Recently, there is a growing interest in testing not only the distributional assumptions of structural shocks but also indirectly testing the absence of non-identification. In particular,  \cite{drautzburg2023refining} mentiones that criteria such as an inverted GMM test statistic based on co-dependence of higher moments, can be employed to rule out incorrect identification schemes when the data are non-Gaussian. When the data are Gaussian, there is no co-dependence of higher moments and the estimator becomes close to the plug-in estimator in \cite{moon2012bayesian} (see, also \cite{granziera2018inference}).  Thus, the approach of \cite{drautzburg2023refining} exploits independence for identification, but does so allowing for weak identification, which indeed remains valid in the Gaussian limit where there is complete lack of identification. Further issues related to identification via non-Gaussianity is the modelling of climate-economic systems which is relevant to the structural analysis of climate shocks to macroeconomic variables. 

\begin{remark}
Under the assumption of non-Gaussianity, the fact that $\boldsymbol{\varepsilon}_t$ are white noise does not imply that they are serially independent. This is an important aspect of a non-Gaussian VAR model because even if $Corr ( \varepsilon_{it}, \varepsilon_{it+1} ) = 0$ for $i = 1,..., l$, it is still in principle possible that $\varepsilon_{it+1}$ is statistically dependent on $\varepsilon_{it}$. Thus, in this latter case $\varepsilon_{it+1}$ would not be an actual innovation term, since it can be predicted by $y_t$ (see, \cite{moneta2013causal}).      
\end{remark}

\newpage

\subsection{Identification via External Instruments}

\begin{example}
Following the framework proposed by \cite{mittnik1993asymptotic}, let $\boldsymbol{\theta}$ be the parameter vector that collects the coefficients of the original SVAR representation.

\begin{itemize}

\item  Suppose we have the model $
\boldsymbol{A}(L) \boldsymbol{y}_t = \boldsymbol{B}(L)  \boldsymbol{\varepsilon}_t$, with $\boldsymbol{\alpha} := \mathsf{vec} \left( \boldsymbol{A}_0,..., \boldsymbol{A}_p \right)$, $\boldsymbol{\beta} := \mathsf{vec} \left( \boldsymbol{B}_0,..., \boldsymbol{B}_p \right)$ and $\boldsymbol{r} := \mathsf{vec} ( \boldsymbol{R} )$, then $\boldsymbol{\theta} = \left( \boldsymbol{\alpha}^{\prime}, \boldsymbol{\beta}^{\prime}, \boldsymbol{r}^{\prime} \right)^{\prime}$. Moreover, let $\boldsymbol{\phi}$ denote the vector of parameters which are estimated, that is, the structural parameters of the system. 

\item Under the assumption that sufficient restrictions have been imposed  to identify $\boldsymbol{\phi}$, we say that these identifying restrictions induce a mapping from $\boldsymbol{\phi} \mapsto \boldsymbol{\theta}$ such that the linear map satisfies $\boldsymbol{\theta} = F( \boldsymbol{\phi} )$. Let $\hat{\boldsymbol{\phi}}$ denote an estimate of $\boldsymbol{\phi}$. Then, under stationarity, invertibility and appropriate regularity conditions on the linear map $F(\cdot)$, $\hat{\boldsymbol{\phi}}$ will be normally distributed such that $\sqrt{T} \left(    \hat{\boldsymbol{\phi}} - \boldsymbol{\phi}_0 \right) \overset{d}{\to} \mathcal{N} \left( \boldsymbol{0}, \boldsymbol{\Xi}   \right)$.    

\end{itemize}

\end{example}
Overall, SVARs are used in the applied macroeconometrics literature to estimate the dynamic causal effects of structural shocks. Parameters in these models have traditionally been point-identified using zero or long-run restrictions. A commonly used approach in empirical macroeconometrics research is to construct time series that capture exogenous changes affecting the macroeconomy. The dynamic effects of these time series can be often estimated using distributed lag regressions. However, these variables are correlated with a particular structural shock and uncorrelated with other structural shocks (e.g., target versus non-target structural shocks). In particular, external instruments impose second moment restrictions that identify SVAR shocks and associated impulse response coefficients. 

A major known issue of SVAR-IV models is the \textit{weak identification problem} which appears in the case that instruments are only weakly correlated with the variable of interest. Specifically,  building on the IV regression literature, \cite{stock2021inference} proposed weak-instrument robust inference methods for impulse response coefficients. Their framework demonstrates how an external instrument can be used to identify the structural shock of interest, its impulse response coefficients, historical effect on the variables in the VAR and contribution to forecast error variances. A relevant identification scheme combines sign restrictions with external instruments is proposed by \cite{braun2023identification}.

\begin{remark}
Recall that the invertability property implies that structural shocks can be recovered from current and past (but not future) values of the observed macro variables. On the other hand, it is likely that external shock measures are contaminated by substantial measurement errors, causing attenuation bias. In other words, external instruments can be employed as an informative tool regarding shock importance. For example, an unrestricted linear MA model using IV can be employed to construct related statistics. In such settings a key identifying assumption is the availability of external instruments that are correlated with the shock of interest but are otherwise dynamically uncorrelated with all other macro variables (see, \cite{plagborg2022instrumental}). Identification of SVAR models with external instruments is also studied by \cite{olea2021inference}. Lastly, a framework for identifying SVARs with weak proxies is proposed by  \cite{angelini2024identification}.
\end{remark}

\newpage

\subsection{Identification via Sign-Restrictions and Set-Identified SVARs}

Several studies consider the set-identification of SVAR models based on a combination of economic theory and data-driven approaches such as sign restrictions.    The key idea behind sign restrictions is to characterise a shock through imposing restrictions on the responses of some variables, while remaining agnostic about others. In particular,  \cite{inoue2013inference, inoue2016joint} study the construction of confidence sets for set-identified parameters. Moreover, \cite{giacomini2021robust} and \cite{giacomini2022robust} consider robust inference in Proxy-SVARs with external instruments.  Sign restrictions usually imply partial identification (see,  \cite{dufour2005projection}). Additionally, it has several drawbacks including the fact that different structural shocks in the model might be characterized by the same set of sign restrictions. Nevertheless, to illustrate the implementation of the particular identification scheme we follow the framework proposed by \cite{gafarov2018delta} who contribute to the literature of set-identified SVARs by proposing a novel delta-method interval for the coefficients of the IRF.

Consider the $n-$dimensional SVAR with $p-$lags such that $\boldsymbol{Y}_t = \boldsymbol{A}_1 \boldsymbol{Y}_{t-1} + ... + \boldsymbol{A}_p \boldsymbol{Y}_{t-p}  + \boldsymbol{B} \boldsymbol{\varepsilon}_t$.
Then, the object of interest is the $k-$th period ahead structural impulse response function of variable $i$ to a particular shock $j$ such that $
\lambda_{k,i,j} (A,B) \equiv e_i^{\prime} \boldsymbol{C}_k (A) \boldsymbol{B}_j$,    
where $\boldsymbol{B}_j \equiv \boldsymbol{B} \boldsymbol{e}_j$ and $e_i$ and $e_j$ denote the $i-$th and $j-$th column of $\boldsymbol{I}_n$. A second object of interest is the vector of reduced-form VAR parameters  
\begin{align}
\boldsymbol{\mu} \equiv \left( \mathsf{vec}(\boldsymbol{A})^{\prime}, \mathsf{vec}(\boldsymbol{B})^{\prime} \right)^{\prime}  \in \mathcal{M} \subset \mathbb{R}^d, \ \ \ \boldsymbol{A} \equiv \mathsf{vec} \left( \boldsymbol{A}_1, \boldsymbol{A}_2,..., \boldsymbol{A}_p \right).     
\end{align}
Overall, a common practice in empirical macroeconomics literature is to set equality and inequality restrictions to set-identify the structural IRFs. Specifically, an example of an equality restriction in a monetary VAR is that prices do not react contemporaneously to monetary policy shocks. An example of an inequality restriction is that contractionary monetary policy shock cannot increase prices. Let $\mathcal{R} (\mu) \subset \mathbb{R}^n$ be the set of values of $\boldsymbol{B}_j$ that satisfy the inequality and equality restrictions such that
\begin{align}
\mathcal{R} (\mu) \equiv \left\{ \boldsymbol{B}_j \in \mathbb{R}^n \bigg|  Z(\mu)^{\prime} B_j = \boldsymbol{0}_{m_z \times 1} \ \text{and} \ S(\mu)^{\prime} B_j \geq \boldsymbol{0}_{ m_s \times 1 } \right\},    
\end{align}
where $Z(\mu)$ is a matrix of dimension $( n \times m_z  )$ that collects the equality restrictions specified by the researcher and $S(\mu)$ is a matrix of dimension $( n \times m_s )$ that collects the inequality restrictions.

\begin{corollary}[Directional Differentiability, see \cite{gafarov2018delta}]
Consider the auxiliary function $\nu_{k,i,j} \left( \mu; r (\mu) \right)$, such that $\nu_{k,i,j} \left( \mu; r (\mu) \right) \neq 0$, then the function $\nu_{k,i,j} \left( \mu; r (\mu) \right)$ is differentiable with respect to $\big( \mathsf{vec}(\boldsymbol{A})^{\prime}, \boldsymbol{\Sigma})^{\prime} \big)^{\prime}$ with derivative given by 
\begin{align}
\frac{ \partial \nu_{k,i,j} \left( \mu; r (\mu) \right)  }{ \partial \mathsf{vec} (\boldsymbol{A}) }
&=
\frac{ \partial \left( C_k ( \boldsymbol{A}) \right) }{ \partial \mathsf{vec} (\boldsymbol{A}) } \big( x^{*} \left( \mu ; r(\mu) \right) \otimes e_i \big) 
- 
\sum_{k=1}^{\ell} w_k^* \frac{ \partial ( r_k(\mu) ) }{ \partial \mathsf{vec} (\boldsymbol{A})  } x^{*} \left( \mu ; r(\mu) \right)
\\
\frac{ \partial \nu_{k,i,j} \left( \mu; r (\mu) \right)  }{ \partial \mathsf{vec} (\boldsymbol{\Sigma}) }
&=
\lambda^{*} \Sigma^{-1} x^{*} \left( \mu ; r(\mu)  \right) \otimes \boldsymbol{\Sigma}^{-1} x^{*} \left( \mu ; r(\mu) \right)  
- 
\sum_{k=1}^{\ell} w_k^* \frac{ \partial ( r_k(\mu) ) }{ \partial \mathsf{vec} (\boldsymbol{\Sigma})  } x^{*} \left( \mu ; r(\mu) \right).
\end{align}
\end{corollary}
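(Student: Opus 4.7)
The plan is to recognize $\nu_{k,i,j}(\mu; r(\mu))$ as the optimum value of a constrained optimization problem that characterises the upper bound of the identified set for the IRF coefficient $\lambda_{k,i,j}$, and then to apply an envelope-theorem argument. Concretely, under the normalisation $\boldsymbol{B}\boldsymbol{B}^{\prime}=\boldsymbol{\Sigma}$ the $j$-th structural column satisfies $\boldsymbol{B}_j^{\prime}\boldsymbol{\Sigma}^{-1}\boldsymbol{B}_j=1$, and maximising $e_i^{\prime}\boldsymbol{C}_k(\boldsymbol{A})x$ over $x\in\mathcal{R}(\mu)$ yields
\begin{align*}
\nu_{k,i,j}(\mu; r(\mu)) \;=\; \max_{x\in\mathbb{R}^n}\; e_i^{\prime}\boldsymbol{C}_k(\boldsymbol{A})\,x \ \ \text{s.t.}\ \ r_k(\mu)^{\prime} x = 0\ (k=1,\dots,\ell),\quad x^{\prime}\boldsymbol{\Sigma}^{-1} x = 1,
\end{align*}
where the rows $r_k(\mu)$ collect the columns of $Z(\mu)$ together with the columns of $S(\mu)$ whose inequality is binding at the optimum. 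I form the Lagrangian
\begin{align*}
\mathcal{L}(x,w,\lambda;\mu) \;=\; e_i^{\prime}\boldsymbol{C}_k(\boldsymbol{A})\, x \;-\; \sum_{k=1}^{\ell} w_k\, r_k(\mu)^{\prime} x \;-\; \tfrac{\lambda}{2}\bigl(x^{\prime}\boldsymbol{\Sigma}^{-1} x - 1\bigr),
\end{align*}
so that the stationarity conditions in $(x,w,\lambda)$ characterise the primal-dual triple $\bigl(x^{*}(\mu;r(\mu)),\, w^{*},\, \lambda^{*}\bigr)$.

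The envelope theorem then delivers $\partial\nu_{k,i,j}/\partial\mu_{r}=\partial\mathcal{L}/\partial\mu_{r}$, evaluated at the optimum. For $\mu_{r}$ a component of $\mathsf{vec}(\boldsymbol{A})$, the only $\mu$-dependent pieces of $\mathcal{L}$ are $e_i^{\prime}\boldsymbol{C}_k(\boldsymbol{A})x^{*}$ and the rows $r_k(\mu)$, since in the reduced-form parametrisation $\boldsymbol{\Sigma}$ is functionally independent of $\boldsymbol{A}$. Using the vectorisation identity $\mathrm{d}\bigl(e_i^{\prime}\boldsymbol{C}_k(\boldsymbol{A})x^{*}\bigr)=(x^{*}\otimes e_i)^{\prime}\,\mathrm{d}\,\mathsf{vec}(\boldsymbol{C}_k(\boldsymbol{A}))$ together with the chain rule applied to each $r_k(\mu)$ yields the first displayed formula. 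For differentiation with respect to $\mathsf{vec}(\boldsymbol{\Sigma})$ the term $e_i^{\prime}\boldsymbol{C}_k(\boldsymbol{A})x^{*}$ contributes nothing, whereas the normalisation contributes through $\mathrm{d}\boldsymbol{\Sigma}^{-1}=-\boldsymbol{\Sigma}^{-1}(\mathrm{d}\boldsymbol{\Sigma})\boldsymbol{\Sigma}^{-1}$ combined with $\mathsf{vec}(abc^{\prime})$-style identities to give
\begin{align*}
\mathrm{d}\bigl(x^{*\prime}\boldsymbol{\Sigma}^{-1} x^{*}\bigr) \;=\; -\bigl(\boldsymbol{\Sigma}^{-1} x^{*}\otimes \boldsymbol{\Sigma}^{-1} x^{*}\bigr)^{\prime}\,\mathrm{d}\,\mathsf{vec}(\boldsymbol{\Sigma}).
\end{align*}
Multiplying by the Lagrange multiplier $\lambda^{*}$ (absorbing the factor of two into the scaling convention for $\lambda^{*}$) and adding the contribution from $\partial r_k/\partial\mathsf{vec}(\boldsymbol{\Sigma})$ produces the second displayed formula.

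The main obstacle, and the reason the hypothesis $\nu_{k,i,j}(\mu;r(\mu))\neq 0$ is imposed, is to justify that the active set of the inequality restrictions $S(\mu)^{\prime}\boldsymbol{B}_j\geq 0$ is locally invariant under perturbations of $\mu$ at the candidate point, so that envelope differentiation is legitimate. Generically an inequality can switch between binding and slack across arbitrarily small moves of $\mu$, in which case $\nu$ is only directionally differentiable and the one-sided derivatives correspond to distinct active-set configurations. The condition $\nu\neq 0$ excludes the degenerate tangency in which the optimizer sits at a kink of the feasible cone, and together with a linear-independence constraint qualification on the active rows of $(Z(\mu), S(\mu))$ it guarantees a locally constant active set, a unique multiplier vector $(w^{*},\lambda^{*})$, and hence ordinary (rather than merely directional) differentiability. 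Once this regularity is in place the envelope-theorem computation is mechanical and the two displayed formulas follow from standard matrix calculus.
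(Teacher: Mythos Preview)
The paper does not supply its own proof of this corollary; it is stated as a cited result from \cite{gafarov2018delta}. Your envelope-theorem approach---writing $\nu_{k,i,j}(\mu;r(\mu))$ as the value function of the constrained program $\max_{x} e_i^{\prime}\boldsymbol{C}_k(\boldsymbol{A})x$ subject to $r_k(\mu)^{\prime}x=0$ and $x^{\prime}\boldsymbol{\Sigma}^{-1}x=1$, forming the Lagrangian, and differentiating at the optimum---is precisely the route taken in the original reference, and your matrix-calculus steps (the $(x^{*}\otimes e_i)$ identity for the objective and $\mathrm{d}\boldsymbol{\Sigma}^{-1}=-\boldsymbol{\Sigma}^{-1}(\mathrm{d}\boldsymbol{\Sigma})\boldsymbol{\Sigma}^{-1}$ for the normalisation) are correct.

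One small refinement: your explanation of why $\nu_{k,i,j}(\mu;r(\mu))\neq 0$ is needed is slightly off-target. The issue is not primarily that an inequality switches between binding and slack---here $r(\mu)$ is already a \emph{fixed} active set passed as an argument, so the ``auxiliary'' function is a pure equality-constrained problem. Rather, $\nu\neq 0$ guarantees that the optimiser $x^{*}$ is nonzero and unique (up to sign) and that the Lagrange multiplier $\lambda^{*}$ is nonzero, which in turn ensures the KKT system is nondegenerate and the implicit-function theorem applies to $(x^{*},w^{*},\lambda^{*})$ as smooth functions of $\mu$. With that correction your argument is complete.
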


\newpage

\begin{theorem}[see, \cite{gafarov2018delta}]
Consider a delta-method interval of the following form  \begin{align}
CS_T (1-\alpha) \equiv \left[ \underline{\nu}_{k,i,j}(\widehat{\boldsymbol{\theta}}_T) - z_{1 - \alpha / 2} \frac{ \widehat{\sigma}_{(k,i,j)} }{\sqrt{T}},  \underline{\nu}_{k,i,j}(\widehat{\boldsymbol{\theta}}_T)  + z_{1 - \alpha / 2} \frac{ \widehat{\sigma}_{(k,i,j)} }{\sqrt{T}} \right]   
\end{align}  
where $\widehat{\boldsymbol{\theta}}_T \equiv \left( \mathsf{vec}( \widehat{\boldsymbol{A}}_T ),  \mathsf{vec}( \widehat{\boldsymbol{\Sigma}}_T ) \right)$ is the OLS estimator such that 
\begin{align}
\widehat{\boldsymbol{A}}_T \equiv \left( \frac{1}{T} \sum_{t=1}^T \boldsymbol{Y}_t \boldsymbol{X}_t^{\prime}\right) \left( \frac{1}{T} \sum_{t=1}^T \boldsymbol{X}_t \boldsymbol{X}_t^{\prime} \right)^{-1} , \ \ \ \widehat{\boldsymbol{\Sigma}}_T \equiv \frac{1}{ T - np - 1} \sum_{t=1}^T \widehat{\boldsymbol{\eta}}_t \widehat{\boldsymbol{\eta}}_t^{\prime}, \ \ \ \widehat{\boldsymbol{\eta}}_t \equiv \boldsymbol{Y}_t - \widehat{\boldsymbol{A}}_T \boldsymbol{X}_t    
\end{align}
\end{theorem}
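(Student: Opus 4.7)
The implicit target claim is that $CS_T(1-\alpha)$ provides asymptotic $(1-\alpha)$ coverage for the lower bound $\underline{\nu}_{k,i,j}(\boldsymbol{\theta}_0)$ of the set-identified impulse response coefficient. The plan is to combine a standard CLT for the reduced-form VAR estimator with a delta-method step driven by the gradient expressions supplied by the preceding Corollary, and finally a Slutsky-type assembly using consistency of the plug-in standard error.

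First, I would establish the joint asymptotic normality of $\widehat{\boldsymbol{\theta}}_T = \bigl(\mathsf{vec}(\widehat{\boldsymbol{A}}_T)^\prime, \mathsf{vec}(\widehat{\boldsymbol{\Sigma}}_T)^\prime\bigr)^\prime$. Under the VAR stability hypothesis $\mathsf{det}(\boldsymbol{I}_n - \boldsymbol{A}_1 z - \cdots - \boldsymbol{A}_p z^p) \neq 0$ for $|z| \leq 1$ and standard fourth-moment regularity on $\boldsymbol{\eta}_t$, a multivariate martingale CLT gives $\sqrt{T}(\widehat{\boldsymbol{\theta}}_T - \boldsymbol{\theta}_0) \overset{d}{\to} \mathcal{N}(\boldsymbol{0}, \boldsymbol{\Omega}_0)$, with the familiar block structure $\boldsymbol{\Gamma}^{-1} \otimes \boldsymbol{\Sigma}_0$ in the autoregressive slot and the long-run variance of $\mathsf{vec}(\widehat{\boldsymbol{\Sigma}}_T)$ in the covariance slot.

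Second, on the regular regime where $\nu_{k,i,j}(\mu; r(\mu)) \neq 0$, so the optimizer $x^*(\mu;r(\mu))$ and the multipliers $w^*_k$ are uniquely pinned down by strict complementarity, the preceding Corollary delivers a full (Hadamard) derivative of $\underline{\nu}_{k,i,j}$ in $\mathsf{vec}(\boldsymbol{A})$ and $\mathsf{vec}(\boldsymbol{\Sigma})$. Substituting this gradient into the delta method yields $\sqrt{T}\bigl(\underline{\nu}_{k,i,j}(\widehat{\boldsymbol{\theta}}_T) - \underline{\nu}_{k,i,j}(\boldsymbol{\theta}_0)\bigr) \overset{d}{\to} \mathcal{N}(0, \sigma^2_{(k,i,j)})$ with $\sigma^2_{(k,i,j)} = (\nabla \underline{\nu}_{k,i,j})^\prime \boldsymbol{\Omega}_0 (\nabla \underline{\nu}_{k,i,j})$. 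Consistency of the plug-in standard error then follows from joint continuity of the Corollary's gradient in $(\boldsymbol{\theta}, x^*, w^*)$, continuity of the argmin and multipliers in $\boldsymbol{\theta}$ on this regular regime, and the continuous mapping theorem applied to $\widehat{\boldsymbol{\theta}}_T \overset{p}{\to} \boldsymbol{\theta}_0$. Slutsky's theorem then delivers $\mathbb{P}\bigl(\underline{\nu}_{k,i,j}(\boldsymbol{\theta}_0) \in CS_T(1-\alpha)\bigr) \to 1-\alpha$.

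The main obstacle will be handling the fact that, away from this regular regime, $\underline{\nu}_{k,i,j}$ is only \emph{directionally} differentiable: it is the value function of a quadratic program with linear sign and equality constraints parameterised by $\mu$, and kinks appear at $\boldsymbol{\theta}$ values where the active set of binding restrictions switches or where the linear-independence constraint qualification fails. The classical delta method breaks down at such points, so one has either to impose strict complementarity at $\boldsymbol{\theta}_0$ to rule out kinks in a neighbourhood, or to invoke a Fang--Santos style directional delta theorem that typically gives only pointwise, possibly conservative, coverage. Pinning down the precise regularity conditions on $\boldsymbol{\theta}_0$ that guarantee the nominal asymptotic level, and explaining when the interval is uniformly honest rather than merely pointwise valid, is the delicate step of the argument.
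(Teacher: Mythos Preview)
Your outline correctly identifies the three pillars (CLT for $\widehat{\boldsymbol{\theta}}_T$, delta-method step via the Corollary's gradient, Slutsky assembly) and, crucially, you correctly isolate the real difficulty: $\underline{\nu}_{k,i,j}$ is only directionally differentiable because the active set of binding sign/zero restrictions can change with $\mu$. Where your plan diverges from the paper is in how this non-smoothness is handled.

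You propose either (a) assuming strict complementarity so the active set is locally constant, or (b) invoking a Fang--Santos directional delta theorem. The paper instead follows Gafarov's specific construction of the standard error, displayed immediately after the theorem:
\[
\widehat{\sigma}_{(k,i,j),T} \equiv \max_{r \in R(\widehat{\mu}_T)} \left\{ \underline{\dot{\nu}}_{k,i,j}(\widehat{\boldsymbol{\theta}}_T; r)^{\prime}\, \widehat{\boldsymbol{\Omega}}_T\, \underline{\dot{\nu}}_{k,i,j}(\widehat{\boldsymbol{\theta}}_T; r) \right\}^{1/2}.
\]
That is, rather than a single plug-in gradient $\nabla\underline{\nu}_{k,i,j}$, one maximises the delta-method variance over \emph{all} candidate active-constraint configurations $r$ in the estimated restriction set $R(\widehat{\mu}_T)$. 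This is precisely what converts the directional-differentiability obstacle into a valid (conservative) interval: whatever direction the limiting Gaussian perturbation points, the realised directional derivative is dominated by one of the $\underline{\dot{\nu}}_{k,i,j}(\cdot;r)$, so the max-based standard error bounds the asymptotic dispersion from above without requiring strict complementarity at $\boldsymbol{\theta}_0$. Your option (a) would yield a narrower interval but only under an untestable regularity condition; your option (b) is closer in spirit, but the paper's point is that the SVAR structure admits an \emph{explicit} worst-case variance rather than an abstract directional delta bound. You should replace your plug-in $\widehat{\sigma}$ with this maximised version and argue coverage via the envelope of directional derivatives.
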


Therefore, \cite{gafarov2018delta} suggested to exploit the specific form of the directional derivative in the SVAR framework via the following expression 
\begin{align}
\widehat{\sigma}_{(k,i,j),T} \equiv \underset{ r \in R (\widehat{\mu}_T) }{ \mathsf{max} } \left\{  \underline{ \dot{\nu} }_{k,i,j}(\widehat{\boldsymbol{\theta}}_T; r)^{\prime} \widehat{ \boldsymbol{\Omega} }_T \underline{ \dot{\nu} }_{k,i,j}(\widehat{\boldsymbol{\theta}}_T; r) \right\}^{1/2}.  
\end{align}

\paragraph{Evaluating the posterior of sign-identified VAR Models}

Consider the $d-$variate reduced-form VAR$(p)$ model such that (see, \cite{inoue2013inference}) 
\begin{align}
y_t = c + B_1 y_{t-1} + B_2 y_{t-2} + ... + B_p y_{t-p} + e_t,    
\end{align}
for $t = 1,...,T$, where $e_t \overset{ \textit{i.i.d}  }{ \sim } \mathcal{N} ( 0_{n \times 1}, \Sigma )$ and $\Sigma$ is positive definite. We write the above econometric model such that
\begin{align}
Y = X B + e    
\end{align}
where $Y = \big[ y_1 \ y_2 \ ... \ y_T \big]^{\prime}_{ T \times k }$, $X = \left[  X_1 \ X_2 \ ... \ X_T \right]$, $X_t = [ 1 \ y_{t-1}^{\prime} \ ... \ y_{t-1}^{\prime} ]$ and $B = [ c \ B_1 \ B_2 \ ... \ B_p ]^{\prime}$. Moreover, we follow the conventional approach of specifying a normal-inverse Wishart prior distribution for the reduced-form VAR parameters. 
\begin{align}
\mathsf{vec} (B) | \Sigma &\sim \mathcal{N} \big( \mathsf{vec} ( \bar{B}_0 ), \Sigma \otimes N_0^{-1}   \big),    
\\
\Sigma &\sim \mathsf{IW}_n \big( \nu_0 S_0, \nu_0 \big),    
\end{align}
where $N_0$ is an $np \times np$ positive definite matrix, $S_0$ is an $n \times n$ covariance matrix and $\nu_0 > 0$.  

\begin{remark}
A common situation in SVARs for monetary policy evaluation is that the model is only partially identified when the identifying policy variable cannot be identified by other available structural shocks. If we are concerned with a subset of impulse response functions only, what matters for constructing the posterior mode is not the joint impulse response distribution, but the marginalized distribution obtained by integrated out responses to shocks that are not identified. Since using tools from Bayesian econometrics is beyond the scope of this course, from the classical point-identification perspective \cite{carriero2023blended} proposed a blended approach which aims to sharpen identification.   
\end{remark}

\newpage

\subsection{Identification in Proxy SVARs under Possible Nonstationarity}
\label{Section4.6}

Suppose that $\left\{ Y_t: t = -p+1,...,T \right\}$
be a $( d \times 1 )$ vector of observed variables that follows a SVAR process $
\boldsymbol{Y}_t = \boldsymbol{\mu} + \sum_{j=1}^p \boldsymbol{A}_j \boldsymbol{Y}_{t-j} + \boldsymbol{u}_t, \ \ \  \boldsymbol{u}_t = \boldsymbol{B} \boldsymbol{\varepsilon}_t$, where $\boldsymbol{u}_t$ is the reduced-form error terms and $\boldsymbol{A}_j$ are $( d \times d)$ coefficient matrices for $j \in \left\{ 1,..., p \right\}$. Therefore, to study the IRFs with respect to the structural shocks $\varepsilon_{it}$, we study the estimation of the $i-$th column of the structural parameter $B$ denoted by $\boldsymbol{B} = ( \boldsymbol{b}_1,..., \boldsymbol{b}_d )^{\prime}$ with external instruments $\big\{ \boldsymbol{Z}_{t} = ( z_{1t},..., z_{kt} )^{\prime} \in \mathbb{R}^k : t = 1,..., T \big\}$.

\begin{example}
Let $z_t$ denote an external instrument for a structural shock of interest $\varepsilon_{kt}, k \in \left\{ 1,..., K \right\}$. Based on conventional assumptions from the instrumental variable estimation literature, the potential instruments must satisfy the following conditions:    
\begin{itemize}

\item[(i)] $z_t$ is relevant for the underlying structural shock $\varepsilon_{kt}$ such that $\mathbb{E} \left( \varepsilon_{kt} z_t \right) = \phi \neq 0$.

\item[(ii)] $z_t$ is exogenous from other structural shocks in the system such that $\mathbb{E} \left( \varepsilon_{jt} z_t \right) = 0, \forall \ j \in \left\{ 1,..., K \right\} \backslash \left\{ k \right\}$.    
    
\end{itemize}
Then, based on the above two conditions it follows that up to scale $\phi$, the population covariance between the instrument and VAR residuals obtains the $k-$th column of $B$, denote by $B_k$ such that
\begin{align*}
\mathbb{E} \left( u_t z_t \right) = B_k \mathbb{E} \left( \varepsilon_{kt} z_t \right) = \phi B_k.     
\end{align*}
Thus, the link between the structural shock and reduced form-error provides the condition to ensure the instrument validity and instrument relevance. These econometric specifications are also particularly important for the identification of technology shocks as in  \cite{feve2010identification} and \cite{pagan2008econometric} (see, also \cite{lovcha2021identifying}). However, according to \cite{cheng2022instrumental} long-run restrictions can lead to the problem of weak identification due to sensitivity to low-frequency correlations between variables. Furthermore, due to the fact that $y_t$ often includes components which are integrated and cointegrated it is important to develop a robust identification methodology for Proxy SVAR models which is robust to weak identification and possibly nonstationary regressors.  
\end{example}

\begin{example}[Nearly nonstationary SVAR, see \cite{gospodinov2010inference}]
\label{example1}
Suppose that we have a bivariate vector autoregressive process denoted with $\tilde{\boldsymbol{y}}_t = ( y_{1t}, y_{2t} )$  such that 
\begin{align}
\boldsymbol{\Psi} (L) = ( \boldsymbol{I} - \boldsymbol{\Phi} L ) \tilde{\boldsymbol{y}}_t = \boldsymbol{u}_t   
\end{align}
where the matrix $\boldsymbol{\Phi}$ contains the largest roots of the system and $\boldsymbol{\Psi} (L) = \left( \boldsymbol{I} - \sum_{i=1}^p \boldsymbol{\Psi}_i L^i \right)$ and where $\widetilde{\boldsymbol{A}} (L) = \boldsymbol{\Psi} (L) ( \boldsymbol{I} - \boldsymbol{\Phi} L )$. Consider the reduced form VAR model $
\tilde{\boldsymbol{y}}_t = \boldsymbol{A}_1 \tilde{\boldsymbol{y}}_{t-1} + ... + \boldsymbol{A}_{p+1} \tilde{\boldsymbol{y}}_{t-p-1} + \boldsymbol{u}_t$. Pre-multiplying both sides by the matrix $\boldsymbol{B}_0 = \begin{bmatrix}
1 & - b_{21}^{(0)}
\\
- b_{21}^{(0)} & 1
\end{bmatrix}$, yields the structural VAR model $\boldsymbol{B} (L) \boldsymbol{y}_t = \boldsymbol{\varepsilon}_t$, where $\boldsymbol{B} (L) = \boldsymbol{B}_0 \boldsymbol{A} (L)$ and $\boldsymbol{\varepsilon}_t = \boldsymbol{B}_0 \boldsymbol{u}_t$ denote the structural shocks with are assumed to be orthogonal with variances $\sigma_1^2$ and $\sigma_2^2$. Thus, writing the reduced model in the DF form \begin{align}
\boldsymbol{y}_t = \boldsymbol{A}^{\star}(1) \boldsymbol{y}_{t-1} + \sum_{j=1}^p \boldsymbol{A}^{\star \star}_j \Delta \boldsymbol{y}_{t-j} + \boldsymbol{u}_t,    
\end{align}

\newpage

such that $\boldsymbol{A}^{\star} (L) = L^{-1} \left( \boldsymbol{I} - \widetilde{\boldsymbol{A}} (L) \right)$, $\boldsymbol{A}^{\star} (1) = \boldsymbol{I} + \frac{1}{T} \boldsymbol{\Psi} (1) \boldsymbol{C}$ and $\boldsymbol{A}_i^{\star \star} = - \sum_{j = i + 1}^{p} \boldsymbol{A}_j^{\star}$. This implies that,
\begin{align}
\begin{bmatrix}
\Delta y_{1t}
\\
\Delta y_{2t}
\end{bmatrix}    
= 
\begin{bmatrix}
\psi_{12}(1) & 0 
\\
0 & \psi_{12}(1)
\end{bmatrix}
\begin{bmatrix}
\displaystyle \frac{c}{T} y_{2t}
\\
\displaystyle \frac{c}{T} y_{2t}
\end{bmatrix}
+
\begin{bmatrix}
\displaystyle  \sum_{j=1}^p a_{j,11}^{\star}   & \displaystyle \sum_{j=1}^p a_{j,12}^{\star}
\\
\displaystyle \sum_{j=1}^p a_{j,21}^{\star}   &  \displaystyle \sum_{j=1}^p a_{j,22}^{\star}
\end{bmatrix}
\begin{bmatrix}
\Delta y_{1t-j}
\\
\Delta y_{2t-j}
\end{bmatrix}
+
\begin{bmatrix}
u_{1t}
\\
u_{2t}
\end{bmatrix}
.
\end{align}
Notice that for this example following \cite{gospodinov2010inference}, the author considers a bivariate system of equations and propose an overidentifying restriction which using the VECM representation of the model reduces to the assumption that $b_{12}$ is known and is equal to zero. The relationship between the endogenous variable and the instrument is given by the second equation of the reduced VAR model as 
\begin{align}
\Delta \tilde{y}_{2t} = \psi_{22}(1) \frac{c}{T} y_{2t-1} + u_{2,t},  \ \ \ t = 1,...,n, \end{align}
Moreover, based on the long-run restrictions and the model assumptions, it can be proved 
\begin{align}
\left( \hat{b}_{12}^{(0)} - b_{12}^{(0)} \right) \Rightarrow \frac{ \sigma_1 }{ \omega_2 } \frac{\displaystyle \rho \int_0^1 J_c(s) dV_2 (s) + \sqrt{1 - \rho^2}  \int_0^1 J_c(s) dV_1(s)  }{ \displaystyle c \int_0^1 J_c(s)^2 ds + \int_0^1 J_c(s) dV_2 (s)}.   
\end{align}
\end{example}
Long-run restrictions are a popular identification method of SVAR models since the seminal contribution of \cite{blanchard1988dynamic}, which are commonly used to investigate the hours debate\footnote{In particular, the hours debate is concerned with the short-run effect of a technology shock on hours initiated by \cite{gali1999technology} and \cite{christiano2003happens}. Using a bivariate SVAR framework $Y_{1t}$ corresponds to log productivity and $Y_{2t}$ corresponds to log hours. Further studies which investigate the impact of technology shocks to business cycles include \cite{francis2005technology},   \cite{feve2009response} and \cite{chaudourne2014understanding}. For example, \cite{feve2009response} impose a long-run identifying restriction which implies that the composite non-technology shock has no long-run effect on labor productivity. This means that the upper triangular element of $\boldsymbol{A}(L)$ in the long-run must be zero, that is $A_{12} = 0$, so to uncover this restriction from the estimated VAR$(p)$ model, the matrix $\boldsymbol{A}(1)$ is obtained by the Choleski decomposition of $\boldsymbol{C}(1) \boldsymbol{\Sigma} \boldsymbol{C}(1)$.} of \cite{gali1999technology} as well as the impact of technology shocks on productivity and output (see, also \cite{chari2008structural}). 

\begin{example}
Following \cite{chaudourne2014understanding}, consider a nearly stationary persistent process, which is obtained using the local-to-unity parametrization as in \cite{phillips1987towards} who consider nearly unit-root process to investigate the asymptotic power of the unit-root tests under a sequence of local alternatives. Therefore, since we are interested to model a highly persistent process which is asymptotically stationary, we consider a sequence of local alternatives such that the process is locally nonstationary but asymptotically stationary and persistent.  
\begin{align}
\big( 1 - \rho L \big) \Delta x_t = u_t - \left( 1 - \frac{c}{ \sqrt{n} } \right) u_{t-1}, \ \ t = 1,..., n.  
\end{align}
Notice that as $n \to \infty$ and for a high value of $\rho$, this process becomes a stationary persistent process whereas, in a finite sample, the process is characterized by a unit root.

\newpage

Thus, this process is locally nonstationary but asymptotically stationary and persistent. Such a process appears to be particulary well suited to characterize the dynamics of hours worked because it implies a unit root in a finite sample but is asymptotically stationary and persistent. This is typically the case for per capita hours worked which are included in SVARs. Consider a bivariate representation where $\boldsymbol{X}_t := ( \Delta X_{1t}, X_{2t} )$ for $t = 1,..., n$, such that the variable $X_{1t}$ contains an exact unit root and $X_{2t}$ is a highly persistent variable. Notice that both variables in the vector $X_t$ are asymptotically second order stationary and they admit asymptotically the following Wold representation
\begin{align}
\boldsymbol{X}_t  = \boldsymbol{C}(L) \boldsymbol{\varepsilon}_t, \ \ \ \ \boldsymbol{C}(L)  = \sum_{j=0}^{\infty} \boldsymbol{C}_j \boldsymbol{\varepsilon}_{t-j},  
\end{align}
Consider a Structural Moving Average representation for $\boldsymbol{X}_t$ such that
\begin{align}
\boldsymbol{X}_t = \boldsymbol{A}(L) \boldsymbol{\eta}_t, \ \ \ \ \boldsymbol{A}(L)  = \sum_{j=0}^{\infty} \boldsymbol{A}_j, 
\end{align}
where $\boldsymbol{\eta}_t = \big( \eta_{1t}, \eta_{2t} \big)^{\prime}$ is the vector of orthogonal shocks with $\mathbb{E} \left( \boldsymbol{\eta}_t \boldsymbol{\eta}_t^{\prime} \right) = \boldsymbol{\Omega}$ a diagonal matrix.

A common identification assumption is that $\boldsymbol{\Omega} = \boldsymbol{I}_2$ which implies that the variance of the structural shocks is normalized to one. Then, we have that the error terms $\boldsymbol{\varepsilon}_t$ from the reduced form are related to the structural error terms $\boldsymbol{\eta}_t$ as follows: $\boldsymbol{\varepsilon}_t = \boldsymbol{A}_0 \boldsymbol{\eta}_t$ which implies that $\boldsymbol{\Sigma} = \boldsymbol{A}_0 \boldsymbol{A}_0^{\prime}$. Therefore, the identification scheme, should ensure the unique identification of the structural parameter $\boldsymbol{A}_0$ and we focus to the case of long-run restrictions as proposed by \cite{blanchard1988dynamic}. Specifically, we use the long-run variance-covariance matrix of the reduced form and the structural form which are related by $\boldsymbol{C}(1) \boldsymbol{\Sigma} \boldsymbol{C}(1)^{\prime} = \boldsymbol{A}(1) \boldsymbol{A}(1)^{\prime}$ and $\boldsymbol{A}_0 = \boldsymbol{C}(1)^{-1} \boldsymbol{A}(1)$. Typically a lower triangular structure is imposed to the long-run impact matrix $\boldsymbol{A}(1)$ which can be easily obtained using a Choleski decomposition of the long-run covariance matrix $\boldsymbol{C}(1) \boldsymbol{\Sigma} \boldsymbol{C}(1)^{\prime}$. In other words, in the case where two variables are included in $\boldsymbol{X}_t$, the first structural shock is the only one shock that can have a permanent effect on the first variable.  

Consider for example, a finite sample of size $T$ observations a structural characterization of the highly persistent variable $\boldsymbol{X}_{2t}$ as a nearly stationary persistent process such that
\begin{align}
\Delta X_{2t} 
&= 
a_{21}(L) \Delta \eta_{1t} + a_{22}(L) \left( 1 - \left( 1 - \frac{c}{ \sqrt{T} } \right) L \right) \eta_{2t}  
\nonumber
\\
&\equiv 
a_{21}(L) \Delta \eta_{1t} + \tilde{a}_{22}(L) \eta_{2t} 
\end{align}
where $\tilde{a}_{22}(L) =  a_{22}(L)\left( 1 - \left( 1 - \frac{c}{ \sqrt{T} } \right) L \right)$. Moreover, using the Beveridge-Nelson decomposition, the above model can be rewritten as below
\begin{align}
\Delta X_{2t} 
= 
\underbrace{ a_{21}(L) \Delta \eta_{1t} }_{ \text{permanent shock} } + \underbrace{ \tilde{a}_{22,T}(1) \eta_{2t} +  \tilde{a}^{*}_{22,T}(L) \Delta \eta_{2t}  }_{ \text{transitory shock}   }     
\end{align}
such that $\tilde{a}_{22,T}(1) = a_{22,T}(1) c / \sqrt{T}$ and $\tilde{a}^{*}_{22,T}(L) (1 - L) = \tilde{a}^{*}_{22,T}(L) - \tilde{a}^{*}_{22,T}(1)$.

\newpage

Putting everything together, it implies that the SMA bivariate representation contains a difference stationary process $\Delta X_{1t}$ and a highly persistent process such that 
\begin{align}
\begin{bmatrix}
\Delta X_{1t}
\\
\Delta X_{2t}
\end{bmatrix}   
=
\begin{bmatrix}
a_{11}(L) &  \tilde{a}_{12}(L) 
\\
a_{21}(L) (1- L) &  \tilde{a}_{22,T}(L) 
\end{bmatrix}
\begin{bmatrix}
\eta_{1t}
\\
\eta_{2t}
\end{bmatrix}.   
\end{align}

\paragraph{Estimation and inference}

We now consider estimation and inference with the proposed specifications of the SVAR model. The reduced-form moving average representation is retrieved by performing a finite order $\mathsf{VAR}$ on the data. Suppose that the structural moving average representation can be characterized or approximated in a small sample by a finite VAR of order $p$. Consider the reduced form $\mathsf{VAR}(p)$:
\begin{align}
\boldsymbol{D}(L) \boldsymbol{X}_t 
= 
\boldsymbol{\varepsilon}_t, \ \ \ \boldsymbol{D}(L)  = 
\boldsymbol{I} - \sum_{i=1}^p \boldsymbol{D}_i \boldsymbol{L}^i
=
\begin{bmatrix}
\displaystyle 1 - \sum_{i=1}^p d_{11}^{(i)} L^i   &  \displaystyle -  \sum_{i=1}^p d_{12}^{(i)} L^i
\\
\displaystyle -  \sum_{i=1}^p d_{12}^{(i)} L^i  &  \displaystyle 1 - \sum_{i=1}^p d_{22}^{(i)} L^i
\end{bmatrix}
\end{align}
By multiplying both sides by a matrix of the form $\boldsymbol{B}_0 = \begin{pmatrix}
1 &  \textcolor{red}{ - b_{12}^{(0)} }
\\
\textcolor{red}{ - b_{21}^{(0)} } & 1 
\end{pmatrix} = \boldsymbol{A}_0^{-1}$, we obtain  the VAR as a function of the structural shocks such that $\boldsymbol{B} (L) \boldsymbol{X}_t = \boldsymbol{\eta}_t$ with $\boldsymbol{B} (L) = \boldsymbol{B}_0 \boldsymbol{D}(L)$. 
\begin{align*}
&\begin{bmatrix}
1 & \textcolor{red}{ - b_{12}^{(0)} }
\\
\textcolor{red}{ - b_{21}^{(0)} } & 1 
\end{bmatrix}
\begin{bmatrix}
\displaystyle 1 - \sum_{i=1}^p d_{11}^{(i)} L^i   &  \displaystyle -  \sum_{i=1}^p d_{12}^{(i)} L^i
\\
\displaystyle -  \sum_{i=1}^p d_{12}^{(i)} L^i  &  \displaystyle 1 - \sum_{i=1}^p d_{22}^{(i)} L^i
\end{bmatrix}
\\
&=
\begin{bmatrix}
\left( \displaystyle 1 - \sum_{i=1}^p d_{11}^{(i)} L^i \right) + \displaystyle  \textcolor{red}{ b_{12}^{(0)} } \sum_{i=1}^p d_{12}^{(i)} L^i  &  \displaystyle - \sum_{i=1}^p d_{12}^{(i)} L^i \textcolor{red}{ - b_{12}^{(0)} }  \left( 1 - \sum_{i=1}^p d_{22}^{(i)} L^i \right)
\\
\displaystyle \textcolor{red}{ - b_{21}^{(0)} }  \left( \displaystyle 1 - \sum_{i=1}^p d_{11}^{(i)} L^i \right)  -  \sum_{i=1}^p d_{12}^{(i)} L^i  &  \displaystyle \textcolor{red}{ b_{21}^{(0)} } \sum_{i=1}^p d_{11}^{(i)} L^i  +  \left( 1 - \sum_{i=1}^p d_{22}^{(i)} L^i \right)
\end{bmatrix}.
\end{align*}
Equivalently, we have that 
\begin{align}
\begin{bmatrix}
\left( \displaystyle 1 - \sum_{i=1}^p d_{11}^{(i)} L^i \right) + \displaystyle  \textcolor{red}{ b_{12}^{(0)} } \sum_{i=1}^p d_{12}^{(i)} L^i  &  \displaystyle - \sum_{i=1}^p d_{12}^{(i)} L^i \textcolor{red}{ - b_{12}^{(0)} }  \left( 1 - \sum_{i=1}^p d_{22}^{(i)} L^i \right)
\\
\displaystyle \textcolor{red}{ - b_{21}^{(0)} }  \left( \displaystyle 1 - \sum_{i=1}^p d_{11}^{(i)} L^i \right)  -  \sum_{i=1}^p d_{12}^{(i)} L^i  &  \displaystyle \textcolor{red}{ b_{21}^{(0)} } \sum_{i=1}^p d_{11}^{(i)} L^i  +  \left( 1 - \sum_{i=1}^p d_{22}^{(i)} L^i \right)
\end{bmatrix}
\begin{bmatrix}
\Delta X_{1t}
\\
X_{2t}
\end{bmatrix}
= 
\begin{bmatrix}
\eta_{1t}
\\
\eta_{2t}
\end{bmatrix}
\end{align}
Specifically, the first variable in this bivariate system is expressed as below
\begin{align}
\Delta X_{1t} 
= 
\left( \sum_{i=1}^p d_{11}^{(i)} L^i - b_{12}^{(0)} \sum_{i=1}^p d_{21}^{(i)} L^i \right) \Delta X_{1t}  
+
\left( \sum_{i=1}^p d_{12}^{(i)} L^i + b_{12}^{(0)} \left[ 1 - \sum_{i=1}^p d_{22}^{(i)} L^i \right] \right) X_{2t} + \eta_{1t}.
\end{align}

\newpage

Moreover, imposing the structural long-run impact matrix $\boldsymbol{A}(1)$ to be lower triangular implies that $\boldsymbol{B}_0 \boldsymbol{D}(1)$ is also lower triangular by $\boldsymbol{A}(1) = \boldsymbol{D}(1)^{-1} \boldsymbol{A}_0$. The long-run multiplier of the variable $\boldsymbol{X}_{2t}$ on $\Delta \boldsymbol{X}_{1t}$ is then zero. Imposing this constraint yields that
\begin{align*}
\Delta X_{1t} 
&= 
\left( \sum_{i=1}^p d_{11}^{(i)} L^i - b_{12}^{(0)} \sum_{i=1}^p d_{21}^{(i)} L^i \right) \Delta X_{1t}  
+
b_{12}^{(0)} \Delta X_{2t}
+
\sum_{i=1}^{p-1} \tilde{b}_{12}^{(i)} L^i \Delta X_{2t} + \eta_{1t}
\\
&= 
b_{11}(L) \Delta X_{1t-1} + b_{12}^{(0)} \Delta X_{2t} + \tilde{b}_{12} (L) \Delta X_{2t-1} + \eta_{1t}.
\end{align*}
where the element $b_{12}^{(0)} = - d_{12}(1) / \left( 1 - d_{22}(1) \right)$.

\paragraph{IV Estimation}
The IV estimator of $b_{12}^{(0)}$ with $X_{2t-1}$ as instrument is given by the following expression
\begin{align*}
b_{12}^{(0)} 
= 
\left( \frac{1}{T} \sum_{t=1}^T X_{2t-1} \Delta \tilde{X}_{2t} \right)^{-1} \left( \frac{1}{T} \sum_{t=1}^T X_{2t-1} \Delta \tilde{X}_{1t} \right)
=
\left( \frac{1}{T} \sum_{t=1}^T X_{2t-1} \Delta \tilde{X}_{2t} \right)^{-1} \left( \frac{1}{T} \sum_{t=1}^T X_{2t-1} \left[ b_{12}^{(0)} \Delta \tilde{X}_{2t} + \eta_{1t} \right] \right)
\end{align*}
which implies that 
\begin{align}
\widehat{b}_{12}^{(0)} - b_{12}^{(0)} = \left( \frac{1}{T} \sum_{t=1}^T X_{2t-1} \Delta \tilde{X}_{2t} \right)^{-1} \left( \frac{1}{T} \sum_{t=1}^T X_{2t-1} \eta_{1t} \right) + o_p(1).   
\end{align}
We consider the estimation of the structural parameter $b_{21}^{(0)}$. Since $\eta_{1t}$ and $\eta_{2t}$ are orthogonal, the residuals $\widehat{\eta}_{1t} = \Delta \tilde{X}_{1t} - \widehat{b}_{12}^{(0)} \Delta \tilde{X}_{2t}$ can be used as instrument for the endogenous variable $\Delta \tilde{X}_{1t}$. Thus, it holds that $\widehat{\eta}_{1t} - \eta_{1t} = \left( \widehat{b}_{12}^{(0)} - b_{12}^{(0)} \right) \Delta \tilde{X}_{2t}$. Moreover, we shall define with $z_t = \left( \widehat{\eta}_{1t}, X_{2t-1} \right)^{\prime}$ and $x_t = \left( \Delta \tilde{X}_{1t}, \tilde{X}_{2t-1} \right)^{\prime}$. Then, the IV estimator of $b_{12}^{(0)}$ is given by 
\begin{align}
\widehat{\beta} = \left( \frac{1}{T} \sum_{t=1}^T z_t x_t^{\prime} \right)^{-1}  \left( \frac{1}{T} \sum_{t=1}^T z_t \Delta X_{2t}^{\prime} \right).    
\end{align}

\paragraph{Impulse Response Function Analysis}

Consider the following $\mathsf{VAR}(1)$ model as below
\begin{align}
\Delta X_{1t} &= \textcolor{red}{ b_{12}^{(0)} } \Delta X_{2t} + \eta_{1t}
\\
X_{2t} &= \textcolor{red}{ b_{21}^{(0)} } \Delta X_{1t} + \textcolor{red}{ b_{22} } X_{2t-1} + \eta_{2t}
\end{align}
Notice that the highly persistent variable represents hours worked. Generally, it can be proved that estimated impulse responses from LSVAR and SDVAR models are biased in a finite sample if the measure of productivity is contaminated by low frequency movements in hours. Moreover, although these estimators are asymptotically consistent they display a nonstandard limiting distribution.  
\end{example}

\newpage

\begin{example}[SVAR reduced-form representation,  \cite{cheng2022instrumental}] Suppose that we have a vector of macroeconomic variables $\boldsymbol{Y}^{*}_t$ such that $\boldsymbol{Y}^{*}_t := \boldsymbol{\mu} + \boldsymbol{Y}_t$. Consider the SVAR model formulation
\begin{align}
\underbrace{ \boldsymbol{Y}^{*}_t }_{ (k \times 1) } = 
\boldsymbol{\delta} + \sum_{j=1}^p \boldsymbol{\Phi}_j \boldsymbol{Y}^{*}_{t-j} + \boldsymbol{u}_t, \ \ \ \boldsymbol{u}_t = \boldsymbol{B} \boldsymbol{\varepsilon}_t, \ \ \ t = -p + 1,..., T.  
\end{align}
Thus, by taking the demeaned vector of regressors $\boldsymbol{Y} = \left( \boldsymbol{Y}_t^{*} - \boldsymbol{\mu} \right)$ we proceed by considering the following reduced-form representation. In other words, we allow the SVAR process to consist of a set of near-unit root and cointegrated variables.

Specifically, \cite{cheng2022instrumental} employ the reduced-form: 
\begin{align}
\boldsymbol{Y}_{1t} &=  \left( \boldsymbol{I}_{k_1} + \frac{ \boldsymbol{C} }{ n } \right) \boldsymbol{Y}_{1,t-1} + \boldsymbol{u}_{1t}, \  \boldsymbol{u}_{1t}\in \mathbb{R}^{k_1},
\\
\boldsymbol{Y}_{2t} &= \boldsymbol{\Pi} \boldsymbol{Y}_{1t} + \boldsymbol{u}_{2t}, \  \boldsymbol{u}_{2t} \in \mathbb{R}^{k_2},
\\
\boldsymbol{Y}_{3t} &= \boldsymbol{u}_{3t}, \ \boldsymbol{u}_{3t} \in \mathbb{R}^{k_3}   
\end{align}
where it holds that $\boldsymbol{\Psi} (L) \boldsymbol{u}_t = \boldsymbol{e}_t$ such that $\boldsymbol{u}_t = \big( \boldsymbol{u}_{1t}^{\prime}, \boldsymbol{u}_{2t}^{\prime}, \boldsymbol{u}_{3t}^{\prime} \big)^{\prime}$ and  $\boldsymbol{e}_t = \big( \boldsymbol{e}_{1t}^{\prime}, \boldsymbol{e}_{2t}^{\prime}, \boldsymbol{e}_{3t}^{\prime} \big)^{\prime}$ such that 
\begin{align}
\boldsymbol{\Psi} (L) = \boldsymbol{I}_k - \boldsymbol{\Psi}_1 (L) - ... - \boldsymbol{\Psi}_{p-1} (L) L^{p-1},    
\end{align}
is a $(p-1)th-$order lag polynomial and $k = k_1 + k_2 + k_3$ is the number of regressors. Based on the data generating process and SVAR process defined above, we assume that $\boldsymbol{Z}_t$ follows a linear process $
\boldsymbol{Z}_t = \mu_Z + \boldsymbol{\Xi} (L) \boldsymbol{v}_t,    
$.
\end{example}

\begin{assumption}
Suppose that the following conditions hold: 
\begin{itemize}

\item[\textit{(i).}] The roots of $\Psi (L)$ has all roots outside the unit circle.

\item[\textit{(ii).}] $\boldsymbol{\Xi}_0 = \boldsymbol{I}_K$, where $\boldsymbol{\Xi} (1)$ has full rank, where $\displaystyle \sum_{ j = 0}^{ \infty } j^2 \norm{ \boldsymbol{\Xi}_j }^2 < \infty$. 

\item[\textit{(iii).}] Suppose that $\bar{\boldsymbol{e}}_t = [ \boldsymbol{e}_t^{\prime}, \boldsymbol{v}_t^{\prime} ]^{\prime}$ is an $\textit{i.i.d} ( r + k ) \times 1$ vector with mean zero, $\mathbb{E} \big[ \bar{\boldsymbol{e}}_t  \bar{\boldsymbol{e}}_t^{\prime} \big] = \boldsymbol{\Sigma}$ is positive definite, fourth moments of $\bar{e}_t$ are finite, and $\boldsymbol{e}_t$ is homoscedastic conditional on $\boldsymbol{v}_t$.
    
\end{itemize}    
\end{assumption}
Then, it can be shown that the DGP can be reformulated as below
\begin{align}
\Delta \boldsymbol{Y}^{*}_t &= \boldsymbol{A}_1 +    \boldsymbol{A}_2 \underbrace{ ( \boldsymbol{Y}^{*}_{1,t-1} - \boldsymbol{\mu}_1 ) }_{ Y_{1,t-1} } + \boldsymbol{A}_3 \boldsymbol{D}_t + \boldsymbol{u}_t,     
\\
\boldsymbol{D}_t &:= 
\begin{bmatrix}
\left( \left( \boldsymbol{Y}^{*}_{2,t-1} - \boldsymbol{\mu}_2 \right) - \boldsymbol{\Pi} \left( \boldsymbol{Y}^{*}_{1,t-1} - \boldsymbol{\mu}_1 \right) \right)^{\prime} 
\\
\left( \boldsymbol{Y}^{*}_{3,t-1} - \boldsymbol{\mu}_3 \right)^{\prime} 
\\
\Delta \boldsymbol{Y}_{t-1}^{* \prime},...., \Delta \boldsymbol{Y}_{t-p+1}^{* \prime}
\end{bmatrix} 
, \ \boldsymbol{u}_t := \boldsymbol{P} \boldsymbol{e}_t, \ \ \boldsymbol{P} = 
\begin{bmatrix}
\boldsymbol{I}_{k_1} & \boldsymbol{0} & \boldsymbol{0} 
\\
\boldsymbol{\Pi} &  \boldsymbol{I}_{k_2} & \boldsymbol{0} 
\\
\boldsymbol{0} & \boldsymbol{0} & \boldsymbol{I}_{k_3}
\end{bmatrix}.
\end{align}

\newpage

Denote with $\boldsymbol{X}_t = \big[ \left(   \boldsymbol{Y}^{*}_{1,t-1} - \boldsymbol{\mu}_1  \right)^{\prime}, \boldsymbol{1}, \boldsymbol{D}_t^{\prime} \big]^{\prime}$, then the original SVAR process (which includes a model intercept) and the above reformulation based on the demeaned original series has an one-to-one transformation. Moreover, this equivalent representation implies that the least-squares residual $\hat{\boldsymbol{u}}_t$ from the VAR process is numerically equivalent to that obtained from regressing $\Delta \boldsymbol{Y}_t$ on $\boldsymbol{X}_t$. Therefore, we shall use the conventional SVAR representation for practical estimation purposes such as to obtain the OLS residuals of the model whereas we use the above reformulation for the asymptotic theory analysis of the corresponding model estimator under the presence of both near-unit and cointegrated regressors.  

Next, for the near unit root regressors we denote with $\boldsymbol{J}_c (s)$ an $( k_1 \times 1 )$ vector OU process such that it satisfies the following stochastic differential equation $d \boldsymbol{J}_c (s) = \boldsymbol{C} \boldsymbol{J}_c (s) + d \boldsymbol{B}_u (s)$. Denote with $\boldsymbol{\mathcal{D}}_n$ the diagonal matrix $
\boldsymbol{\mathcal{D}}_n = 
\begin{bmatrix}
\sqrt{n} \boldsymbol{I}_{k_1} & \boldsymbol{0}
\\
\boldsymbol{0}  &  \boldsymbol{I}_{ pr - k_1 + 1 }
\end{bmatrix}$. 
Moreover, it holds that 
\begin{align*}
\textcolor{blue}{ \boldsymbol{\mathcal{D}}^{-1}_n } \left( \frac{1}{n} \sum_{t=1}^n 
\boldsymbol{X}_t \boldsymbol{X}_t^{\prime}  \right) \textcolor{blue}{ \boldsymbol{\mathcal{D}}^{-1}_n }
&= 
\textcolor{blue}{ \boldsymbol{\mathcal{D}}^{-1}_n } \sum_{t=1}^n 
\begin{bmatrix}
\left(   \boldsymbol{Y}^{*}_{1,t-1} - \boldsymbol{\mu}_1  \right)
\\
\boldsymbol{1}
\\
\boldsymbol{D}_t
\end{bmatrix}
\big[ \left(   \boldsymbol{Y}^{*}_{1,t-1} - \boldsymbol{\mu}_1  \right)^{\prime}, \boldsymbol{1}, \boldsymbol{D}_t^{\prime} \big]
\textcolor{blue}{ \boldsymbol{\mathcal{D}}^{-1}_n }
\\
&=
\begin{bmatrix}
\displaystyle \sum_{t=1}^n \textcolor{blue}{ \boldsymbol{\mathcal{D}}^{-1}_n } \left( \boldsymbol{Y}^{*}_{1,t-1} - \boldsymbol{\mu}_1 \right) \left( \boldsymbol{Y}^{*}_{1,t-1} - \boldsymbol{\mu}_1 \right)^{\prime} \textcolor{blue}{ \boldsymbol{\mathcal{D}}^{-1}_n } &  \displaystyle \sum_{t=1}^n \textcolor{blue}{ \boldsymbol{\mathcal{D}}^{-1}_n } \left( \boldsymbol{Y}^{*}_{1,t-1} - \boldsymbol{\mu}_1 \right) \textcolor{blue}{ \boldsymbol{\mathcal{D}}^{-1}_n } & \boldsymbol{0}
\\
\displaystyle \sum_{t=1}^n \textcolor{blue}{ \boldsymbol{\mathcal{D}}^{-1}_n } \left( \boldsymbol{Y}^{*}_{1,t-1} - \boldsymbol{\mu}_1 \right)^{\prime} \textcolor{blue}{ \boldsymbol{\mathcal{D}}^{-1}_n } & \boldsymbol{1} &  \boldsymbol{0}
\\
\boldsymbol{0} & \boldsymbol{0} & \boldsymbol{\Gamma}_{DD}
\end{bmatrix}
\end{align*}
which converges into the following stochastic matrix
\begin{align*}
\textcolor{blue}{ \boldsymbol{\mathcal{D}}^{-1}_n } \left( \frac{1}{n} \sum_{t=1}^n 
\boldsymbol{X}_t \boldsymbol{X}_t^{\prime}  \right) \textcolor{blue}{ \boldsymbol{\mathcal{D}}^{-1}_n }
\overset{p}{\to}
\mathbb{V} :=
\begin{bmatrix}
\displaystyle \int_0^1 \boldsymbol{J}_c(s) \boldsymbol{J}_c(s)^{\prime} ds &  \displaystyle  \int_0^1 \boldsymbol{J}_c(s)^{\prime}   & \boldsymbol{0}
\\
\displaystyle \int_0^1 \boldsymbol{J}_c(s)^{\prime}
& \boldsymbol{1} &  \boldsymbol{0}
\\
\boldsymbol{0} & \boldsymbol{0} & \boldsymbol{\Gamma}_{DD}
\end{bmatrix}     
\end{align*}
Therefore, the following converges results holds
\begin{align}
\begin{bmatrix}
\displaystyle \frac{1}{ n \sqrt{n} } \sum_{t=1}^n \boldsymbol{Y}_{1,t-1}  
\\
\displaystyle \frac{1}{ n } \sum_{t=1}^n \boldsymbol{Y}_{1,t-1} \boldsymbol{u}^{\prime}_t  
\end{bmatrix}    
\overset{p}{\to}
\begin{bmatrix}
\displaystyle \int_0^1 \boldsymbol{J}_c(s) ds
\\
\displaystyle \int_0^1 \boldsymbol{J}_c(s) d \boldsymbol{B}_{u}^{\prime}(s) ds
\end{bmatrix}.
\end{align}

\paragraph{Proof of VEC Representation}

\begin{align*}
\Delta \boldsymbol{Y}^{*}_t = \boldsymbol{A}_1 +    \boldsymbol{A}_2 \underbrace{ ( \boldsymbol{Y}^{*}_{1,t-1} - \boldsymbol{\mu}_1 ) }_{ Y_{1,t-1} } + \boldsymbol{A}_3 \boldsymbol{D}_t + \boldsymbol{u}_t,     
\end{align*}

\newpage

Our goal is to demonstrate that the data generating process with three equations representing the case of both near unit roots and cointegrated regressors, is equivalent to the original SVAR formulation. 
\begin{align*}
\boldsymbol{Y}_{1t} &=  \left( \boldsymbol{I}_{k_1} + \frac{ \boldsymbol{C} }{ n } \right) \boldsymbol{Y}_{1,t-1} + \boldsymbol{u}_{1t}, \  \boldsymbol{u}_{1t}\in \mathbb{R}^{k_1},
\\
\boldsymbol{Y}_{2t} &= \boldsymbol{\Pi} \boldsymbol{Y}_{1t} + \boldsymbol{u}_{2t}, \  \boldsymbol{u}_{2t} \in \mathbb{R}^{k_2},
\\
\boldsymbol{Y}_{3t} &= \boldsymbol{u}_{3t}, \ \boldsymbol{u}_{3t} \in \mathbb{R}^{k_3}   
\end{align*}
where the above regressors correspond to their demeaned counterparts. In other words we start by rewritting the regressors specific represenation to the equivalent general SVAR process.  
\begin{align}
\begin{bmatrix}
\Delta \boldsymbol{Y}_{1t}^{*}
\\
\Delta \boldsymbol{Y}_{2t}^{*}
\\
\Delta \boldsymbol{Y}_{3t}^{*}
\end{bmatrix} 
=
\underbrace{ \begin{bmatrix}
\displaystyle \frac{ \boldsymbol{C} }{n} & \boldsymbol{0} & \boldsymbol{0}
\\
\displaystyle  \boldsymbol{\Pi} \left( \boldsymbol{I}_{k_1} + \frac{ \boldsymbol{C} }{n} \right) & - \boldsymbol{I}_{ k_2 } & \boldsymbol{0}
\\
\boldsymbol{0} & \boldsymbol{0} & - \boldsymbol{I}_{ k_3 }
\end{bmatrix} }_{ \boldsymbol{M} }
\begin{bmatrix}
\Delta \boldsymbol{Y}_{1t-1}^{*} - \boldsymbol{\mu}_1
\\
\Delta \boldsymbol{Y}_{2t-1}^{*} - \boldsymbol{\mu}_2
\\
\Delta \boldsymbol{Y}_{3t-1}^{*} - \boldsymbol{\mu}_3
\end{bmatrix} 
+
\underbrace{ \begin{bmatrix}
\boldsymbol{I}_{ k_1 }   & \boldsymbol{0} & \boldsymbol{0}
\\
\boldsymbol{\Pi} & \boldsymbol{I}_{ k_2 }  & \boldsymbol{0} 
\\
\boldsymbol{0} & \boldsymbol{0} &  \boldsymbol{I}_{ k_3 }
\end{bmatrix} }_{ \boldsymbol{P} }
\boldsymbol{\Psi} (L)^{-1} \boldsymbol{e}_t,
\end{align}
Recall that we have that $\boldsymbol{u}_t := \boldsymbol{P} \boldsymbol{e}_t$ which implies by the change of basis rule
\begin{align*}
\begin{bmatrix}
\boldsymbol{u}_{1t}
\\
\boldsymbol{u}_{2t}
\\
\boldsymbol{u}_{3t}
\end{bmatrix}
:= 
\begin{bmatrix}
\boldsymbol{I}_{k_1} & \boldsymbol{0} & \boldsymbol{0} 
\\
\boldsymbol{\Pi} &  \boldsymbol{I}_{k_2} & \boldsymbol{0} 
\\
\boldsymbol{0} & \boldsymbol{0} & \boldsymbol{I}_{k_3}
\end{bmatrix}
\begin{bmatrix}
\boldsymbol{e}_{1t}
\\
\boldsymbol{e}_{2t}
\\
\boldsymbol{e}_{3t}
\end{bmatrix}
\equiv
\begin{bmatrix}
\boldsymbol{I}_{k_1} \otimes \boldsymbol{e}_{1t}
\\
\boldsymbol{\Pi} \otimes \boldsymbol{e}_{2t} +   \boldsymbol{I}_{k_2} \otimes \boldsymbol{e}_{2t}
\\
\boldsymbol{e}_{3t}
\end{bmatrix}.
\end{align*}
Thus in matrix notation we have that $\Delta \boldsymbol{Y}^{*}_t = \boldsymbol{M} \left( \boldsymbol{Y}^{*}_{t-1} - \boldsymbol{\mu} \right) + \boldsymbol{P} \boldsymbol{\Psi} (L)^{-1} \boldsymbol{e}_t$. Multiplying both sides by $\boldsymbol{P} \boldsymbol{\Psi} (L) \boldsymbol{P}^{-1}$, we obtain
\begin{align}
\boldsymbol{P} \boldsymbol{\Psi} (L) \boldsymbol{P}^{-1} \Delta \boldsymbol{Y}^{*}_t  = \boldsymbol{P} \boldsymbol{\Psi} (L) \boldsymbol{P}^{-1}  \boldsymbol{M} \big( \boldsymbol{Y}^{*}_{t-1} - \boldsymbol{\mu} \big) + \boldsymbol{P} \boldsymbol{e}_t. 
\end{align}

\begin{remark}
In summary, \cite{cheng2022instrumental} show that that in the presence of stationary regressors, cointegration relationships, or more than one lag  variables in the VAR system, the estimation error from the nonstationary component is asymptotically negligible. Specifically, the authors prove that the asymptotic variance of the IRF only depends on the stationary component, but a consistent covariance matrix estimator is available even without knowing which series are stationary. Although the approach of \cite{cheng2022instrumental} shows that for a proxy SVAR standard asymptotic normal inference remains valid under a general form of nonstationarity in the VAR system, their theoretical result is not uniform over the entire parameter space of the roots of the autoregressive model as in the spirit of \cite{mikusheva2007uniform}.  A relevant framework that demonstrates the robustness of SVAR identification and estimation to the presence of unit root and cointegration dynamics is proposed by \cite{chevillon2020robust}. 
\end{remark}

\newpage

\subsection{Identification with Occasionally Binding Constraints}

At the Zero Lower Bound (e.g., see  \cite{eggertsson2014can} and \cite{aruoba2022svars}) the ability of policy makers to react to macroeconomic shocks is limited and therefore the response of the economy to shocks will change when the policy instrument hits the ZLB. Thus, according to \cite{mavroeidis2021identification} since the difference in the behaviour of the macroeconomic variables across the ZLB and non-ZLB regimes is only due to the impact of monetary policy, the switch across regimes provides information about the causal effects of policy. Therefore, the ZLB identifies the causal effect of policy during the unconstrained regime. However, if monetary policy remains partially effective during the ZLB regime, for example, through the use of unconventional policy instruments, then the difference in the behaviour of the economy across regimes will depend on the difference in the effectiveness of conventional and unconventional policies. In this case, we obtain only partial identification of the causal effects of monetary policy, but we can still get information bounds on the relative efficiency of unconventional policy.

From the econometric perspective a major challenge when the restriction mechanism involves OBCs is that it generates censoring of one of the dependent variables. Specifically, once the censoring mechanism is triggered then one can allows for some of the coefficients for the remaining variables to change. According to \cite{aruoba2022svars} the specification of a dynamic multivariate model with censoring and regime-dependent coefficients faces two challenges: parsimony and the existence of a unique reduced form.  These models distinguish between a shadow rate, $y_{1t}^{*}$, and the actual interest rate $y_{1,t} = \mathsf{max} \left\{ y_{1t}^{*}, c \right\}$ by defining the endogenous variable $s_t = \boldsymbol{1} \left\{ y_{1,t} > c \right\}$. This implies that the regime-dependency of the coefficients is equivalent to capturing nonlinearities in decision rules that arise in DSGE models with occasionally-binding constraints.  Moreover, \cite{duffy2022cointegration} mentions that in an autoregressive model, an occasionally binding constraint naturally give rise to nonlinearity in the form of links between unit roots and stochastic trends. A prototypical model from the VECM literature is 
\begin{align}
\Delta \boldsymbol{y}_t = \mathsf{g} \left( \boldsymbol{\beta}^{\prime} \boldsymbol{y}_{t-1} \right) + \sum_{j=1}^{k-1} \boldsymbol{\Gamma}_j \Delta \boldsymbol{y}_{t-1} + \boldsymbol{u}_t,    
\end{align}

\subsection{On Causal Discovery and Identification}

Causal ordering provides a mechanism for causal effect justification in identification and estimation of SVARs. In particular, \cite{moneta2011causal} studies statistical methodologies for causal inference in SVAR models while other authors propose a graph-theoretic method for causal analysis in vector autoregressions which allows to select the causal (contemporaneous) order of a SVAR model. Thus, a key role to the theory of causal discovery plays the framework for recursive causal models proposed by \cite{kiiveri1984recursive}. The recursive causal modelling approach utilizes the notion of conditional independence (pioneered by \cite{wright1934method} and \cite{wold1960generalization}), for system identification purposes. Moreover, \cite{swanson1997impulse} discuss the estimation of impulse response functions based on a causal approach to residual orthogonalization in vector autoregressions and propose a search algorithm as statistical mechanism for non-recursive structural models while \cite{hausman1983identification} discuss the use of IV regression as an identification mechanism for systems of simultaneous equations

\newpage

In this direction, \cite{silva2017learning} examine structural learning using instrumental variables under the non-Gaussianity assumption. Their IV causal discovery method provides a set of potential causal effects, from which a unique causal discovery is attainable if at least two instrumental variables (under the same conditioning set) are present in the true underline graph. Moreover, to facilitate computational flexibility this methodology focuses on completeness, that is, on the characterization of relation as causals when instrumental variables satisfy the given Graphical criteria, rather than the identification of all causal effects induced from faithfulness assumptions. A graphical approach to estimating high-dimensional VAR models is presented by \cite{fragetta2011effects} and  \cite{bertsche2023directed}.
\begin{example}
Consider the standard linear structural VAR model as below
\begin{align}
\boldsymbol{y}_t = \sum_{j=1}^p \boldsymbol{A}_j \boldsymbol{y}_{t-j} + \boldsymbol{A}_0 \boldsymbol{\varepsilon}_t, \ \ \ \boldsymbol{u}_t = \boldsymbol{A}_0 \boldsymbol{\varepsilon}_t 
\end{align}
We shall demonstrate the above ideas with based on the SVAR process by considering the independence properties of structural shocks. These are summarized as below: 
\begin{itemize}

\item If $\left( \varepsilon_{t,1},..., \varepsilon_{t,K}  \right)$ are mutually independent and non-Gaussian then the error equation is an ICA model and the matrix $\boldsymbol{A}_0$ can be identified up to a post-multiplication of a generalized permutation matrix (see, \cite{eriksson2004identifiability}, \cite{gourieroux2017statistical} and \cite{lanne2017identification}).

\item If for any permutation matrix $\boldsymbol{P}$, the matrix $\boldsymbol{P} \boldsymbol{A}_0^{-1} \boldsymbol{P}^{\top}$ is lower triangular, then the matrix $\boldsymbol{A}_0^{-1}$ is uniquely determined.This is a property that holds for the structural parameter $\boldsymbol{A}_0^{-1}$.

\item The \textit{recursiveness property} of $\boldsymbol{A}_0^{-1}$, implies that a recursive causal structure on the contemporaneous variables of the structural VAR model hold: if the assumption is true, one can re-order the variables entering in $\boldsymbol{y}_t$ in a "Word Causal Chain" (see, \cite{wold1960generalization}), so that each variable $y_{t,i}$ causes $y_{t,j}$ and no variable $y_{t,j}$ causes $y_{t,i}$ for $i < j$ (lower triangular elements) where $i, j \in \left\{ 1,..., K \right\}$.
  
\item  Since the structural parameter $\boldsymbol{A}_0^{-1}$ is a triangular matrix, a convenient way to represent the error vector $\boldsymbol{u}_t$ is with a directed acyclic graph (DAG) (see, discussion in  \cite{moneta2013causal}).

\end{itemize}
Suppose that $\boldsymbol{P} \boldsymbol{A}_0^{-1} \boldsymbol{P}^{\top}$ is lower triangular, without loss of generality, with $\boldsymbol{P} = \boldsymbol{I}$ and $K = 3$. Then, we have the following system of structural equations between the innovation terms: 
\begin{align*}
u_{t,1} &= \varepsilon_{t,1}
\\
u_{t,2} &= \alpha \varepsilon_{t,1} + \varepsilon_{t,2}
\\
u_{t,3} &= \beta \varepsilon_{t,1} + \gamma \varepsilon_{t,2} +  \varepsilon_{t,2}
\end{align*}
In other words, when the true DAG is known, then we are able to add zero restrictions (just-identifying or over-identifying) on the structural parameter $\boldsymbol{A}_0^{-1}$ and recover the independent shocks $\varepsilon_t$ (see,  \cite{cordoni2023identification}). Thus, knowing the causal structure is key for identification in structural econometric models. Notice that the ICA approach works well in the setting of identifying structural shocks because it preserve the causal ordering and causality of variables.
\end{example}

\newpage

\begin{example}
Consider the formulation of the multivariate time series model with respect to contemporaneous and lagged coefficients 
\begin{align}
\boldsymbol{y}_t = \boldsymbol{B} \boldsymbol{y}_t + \boldsymbol{\Gamma}_1 \boldsymbol{y}_{t-1} + ... + \boldsymbol{\Gamma}_p \boldsymbol{y}_{t-p} + \boldsymbol{\varepsilon}_t  
\end{align}
The structural parameter $\boldsymbol{B}$ has a zero diagonal by definition and $\boldsymbol{\varepsilon}_t$ is a $( k \times 1)$ vector of error terms.  The model can be equivalently expressed in the standard SVAR form 
\begin{align}
\boldsymbol{\Gamma}_0  \boldsymbol{y}_t = \boldsymbol{\Gamma}_1 \boldsymbol{y}_{t-1} + ... + \boldsymbol{\Gamma}_p \boldsymbol{y}_{t-p} + \boldsymbol{\varepsilon}_t  
\end{align}
where $\boldsymbol{\Gamma}_0  = ( \boldsymbol{I} - \boldsymbol{B} )$. Moreover, in the standard SVAR model it is assumed that the covariance matrix $\boldsymbol{\Sigma}_{\varepsilon} = \mathbb{E} \left( \boldsymbol{\varepsilon}_t \boldsymbol{\varepsilon}_t^{\prime}  \right)$. However, since the variables $( y_{1t},..., y_{kt} )$ are endogenous, these models cannot be directly estimated without biases. We can thus derive the reduced-form VAR model as below
\begin{align}
\boldsymbol{y}_t &= \boldsymbol{\Gamma}_0^{-1} \boldsymbol{\Gamma}_1 \boldsymbol{y}_{t-1} + ... + \boldsymbol{\Gamma}_0^{-1} \boldsymbol{\Gamma}_p \boldsymbol{y}_{t-p} + \boldsymbol{\varepsilon}_t  
\\
\boldsymbol{y}_t &=  \boldsymbol{A}_1 \boldsymbol{y}_{t-1} + ... + \boldsymbol{A}_p \boldsymbol{y}_{t-p} + \boldsymbol{u}_t  
\end{align}
where $\boldsymbol{u}_t$ is a vector of zero-mean white noise processes with covariance matrix such that $\boldsymbol{\Sigma}_u = \mathbb{E} \left( \boldsymbol{u}_t \boldsymbol{u}_t^{\prime} \right)$, which in general will not be diagonal. Although the VAR parameters can be easily directly estimated from the observed data, this is not sufficient to recover the parameters of the SVAR equation, whose number is larger than the number of parameters in the VAR equation. In other words, the reduced-form VAR model is only adequate for estimation and forecasting purposes and not sufficient for policy analysis (see, \cite{moneta2013causal}). 

Consider the Wold Moving Average (MA) representation of the VAR model such that 
\begin{align}
\boldsymbol{y}_t 
= 
\sum_{j=0}^{\infty} \boldsymbol{\Phi}_j \boldsymbol{u}_{t-j} 
=
\sum_{j=0}^{\infty} \boldsymbol{\Phi}_j \boldsymbol{\Gamma}_0^{-1} \boldsymbol{\Gamma}_0 \boldsymbol{u}_{t-j}  
=
\sum_{j=0}^{\infty} \boldsymbol{\Psi}_j \boldsymbol{\varepsilon}_{t-j}.
\end{align}
where the $\boldsymbol{\Phi}_j$ for $j = 0,1,2,...$ are the MA coefficient matrices and $\boldsymbol{\Psi}_j$ for $j = 0,1,2,...$ represent the impulse responses of the elements of $\boldsymbol{y}_t$ to the shocks of $\boldsymbol{\varepsilon}_{t-j}$ for $j = 0,1,2,...$
\begin{align}
\boldsymbol{\Phi}_0 = \boldsymbol{I}, \ \ \ \boldsymbol{\Phi}_i = \sum_{j=1}^i \boldsymbol{A}_j \boldsymbol{\Phi}_{i-j}, \ \ \ \boldsymbol{\Psi}_0 = \boldsymbol{\Gamma}_0^{-1}, \ \ \ \boldsymbol{\Psi}_i = \sum_{j=1}^i \boldsymbol{A}_j \boldsymbol{\Psi}_{i-j}.   
\end{align}
where $\boldsymbol{A}_j = 0$ for $j > p$. Therefore, we can see that the impulse response coefficients $\boldsymbol{\Psi}_j$ can be obtained from the reduced-form VAR parameters $\boldsymbol{A}_j$ only if we know the SVAR coefficient matrix $\boldsymbol{\Gamma}_0 = (\boldsymbol{I} - \boldsymbol{B})$.  

Since the impulse responses are crucial for policy analysis, it is clear that we need to recover the SVAR representation for this purpose. The problem is that any invertible unit-diagonal matrix $\boldsymbol{\Gamma}_0$ is compatible with the coefficient matrices we obtain by estimating the VAR reduced form. In other words, we shall find the correct $\boldsymbol{\Gamma}_0$ which produces the right transformation $\boldsymbol{\Gamma}_0 \boldsymbol{u}_t = \boldsymbol{\varepsilon}_t$ of the VAR error terms $\boldsymbol{u}_t$. The equivalence between covariance restrictions and IV was first discussed by \cite{hausman1983identification}. 
\end{example}

\newpage

\begin{example}[Impulse Response Functions]
The impulse response functions are calculated considering the system in levels. Specifically, the forecast error of the $h-$step forecast of $Y_t$ is given by 
\begin{align}
\big( Y_{t+h} - Y_t (h) \big) = u_{t+h} + \Phi_1 u_{t+h-1} + ... +  \Phi_{h-1} u_{t+1}.   
\end{align}
Notice that the $\Phi_i$ are obtained from the $A_i$ recursively by the following expression 
\begin{align}
\Phi_i = \sum_{j=1}^i \Phi_{i-j} A_j, \ \ \ i = 1,2,...    
\end{align}
with $\Phi_0 = I_k$. Since $v_t = ( I - B_0 ) u_t$, then the forecast-error of the $h-$step ahead forecast of $Y_t$ is
\begin{align}
\big( Y_{t+h} - Y_t (h) \big) = \Theta_0 v_{t+h} + \Theta_1 v_{t+h-1} + ... + \Theta_{h-1} v_{t+1}.   
\end{align}
where $\Theta_i = \Phi_i ( I - B_0 )^{-1}$. Notice that the element $(j,k)$ of $\Theta_i$ represents the response of the variable $y_j$ to a unit shock in the variable $y_k$, $i$ periods ago. 
\end{example}

\begin{example}[Uncertainty Shocks]
A growing steam of literature considers the identification of uncertainty shocks especially in relation to financial uncertainty and macroeconomic conditions. Since the work of \cite{bloom2009impact} (see, also   \cite{jurado2015measuring} and \cite{baker2016measuring}) a vast literature studies on the role of uncertainty shocks for macroeconomic fluctuations. In particular, measuring uncertainty is important for correctly measuring the impact it has on macroeconomic variables through structural analysis (e.g., see \cite{trung2019spillover}). We can show that the identification procedure proposed in the study of \cite{forni2023macroeconomic} is equivalent to the Proxy-SVAR methodology (see also \cite{carriero2023macro}).   
\begin{proof}
We can show that the OLS estimates are identical to those in the study of \textcolor{blue}{Mertens and Ravn (2013)} if the number of lags of $y_t$ included in the regression of the exogenous instruments $z_t$ is equal to the number of lags of the VAR for $y_t$. Since $A(L) y_t = \mu + \varepsilon_t$ 
\begin{align}
y_t = \mu - A_1 y_{t-1} - ... - A_p y_{t-p} + \varepsilon_t.    
\end{align}
Suppose that $k < p$ such that $k \in \left\{ 1,..., p \right\}$, then using the stack vectors forms we obtain
\begin{align}
Y_k = 
\begin{pmatrix}
y_{p+1- k}^{\prime}
\\
y_{p+2- k}^{\prime}
\\
\hdots
\\
y_{n - k}^{\prime}
\end{pmatrix}, \ \ \ 
X = \big( 1 \ Y_1 \ ... Y_p \big), \ \ \ 
\mathcal{E} =
\begin{pmatrix}
\varepsilon_{p+1- k}^{\prime}
\\
\varepsilon_{p+2- k}^{\prime}
\\
\hdots
\\
\varepsilon_{n - k}^{\prime}
\end{pmatrix}
\end{align}
Moreover, let $Y = Y_0$. Hence, the VAR equation can be written as $Y = X A + \mathcal{E}$ (proof to be completed). 
    
\end{proof}
Further studies which consider the identification of SVARs under the presence of policy uncertainties include \cite{mumtaz2018policy} (see, also  \cite{ramey2011identifying}).
\end{example}

\newpage 

\subsection{Identifying Aggregate Fluctuations}

A particular stream of literature focuses on the identifying aggregated fluctuations through sectoral and firm-specific dynamics. Specifically, \cite{gabaix2011granular} proposed a framework for investigating the granular origins of aggregate fluctuations. In addition, \cite{gabaix2023granular} illustrate when the market is sufficiently concentrated, then one can use the collection of idiosyncratic shocks to individual micro units, at each time period, as an instrument for endogenous aggregate variables. Moreover, a full-information approach to Granular Instrumental variables is proposed by \cite{baumeister2023full} while a framework for heterogeneity-robust granular instruments is presented by \cite{qian2023heterogeneity}. Lastly,  \cite{banafti2022inferential} establish inferential theory for GIVs in high dimensional settings. 

\subsubsection{Granular Instrumental Variables}

\begin{example}
Suppose we are willing to assume that the shock to country $j'$s demand consist of a global demand shock $f_{ct}$ that affects all consumers across economies in the same way and a purely idiosyncratic shock $\eta_{cjt}$. In other words, the GIV approach provides a systematic way of constructing instruments from suitably weighted idiosyncratic shocks (using observational datasets) and use them as instruments for aggregate endogenous variables. Then, the framework of \cite{banafti2022inferential} aims to generalize the method of granular instrumental variable. 
\begin{align}
d_t = \phi^d p_t + \varepsilon_t, \ \ 
y_{it} = \phi^s p_t + \boldsymbol{\lambda}_i^{\prime} \boldsymbol{\eta}_t + u_{it}    
\end{align}
\end{example}
The global market clearing condition is given by $y_{St} = d_t$, where $y_{St} := \boldsymbol{S}^{\prime} \boldsymbol{y}_{\cdot t} = \sum_{i=1}^N S_i y_{it}$, where $\boldsymbol{S}$ is the $( N \times 1)$ vector of shares that are normalized such that $\sum_{i=1}^N S_i = 1$.

\subsubsection{Feasible Granular Instrumental Variables}

Consider the following panel simultaneous equations model with factor error structure
\begin{align}
\boldsymbol{y}_{it} &= \boldsymbol{B} \boldsymbol{x}_{it} + \boldsymbol{C} \boldsymbol{\alpha}_t + \boldsymbol{v}_{it}
\\
\boldsymbol{v}_{it} &= \boldsymbol{\Lambda}_i^{\prime} \boldsymbol{F}_t + \boldsymbol{u}_{it}
\end{align}
Asymptotic distribution for structural parameters in factor augmented regressions in time series and panel models is already well-developed in the literature. When both $N$ and $T$ are large, then it can be shown that the sampling error from estimating the high dimensional precision matrix, the factors, as well as the instrument is negligible in the asymptotic distribution of the structural parameters. Moreover, in the case of supply elasticity, the estimator will additionally depend on the estimated (potentially high dimensional) precision matrix. Therefore, the contribution of \cite{banafti2022inferential} to the GIV literature focuses around relaxing the homogenous loadings assumption which is overly restrictive and propose constructing  an estimate of the instrument using PCA or iterative OLS-PCA methods.

\newpage 

\subsection{Extensions}

\subsubsection{Extension 1: VARMA Representation}

Suppose that the $k-$dimensional process $\left\{ \boldsymbol{Y}_t: t \in \mathbb{Z}\right\}$ has the following VARMA representation 
\begin{align}
Y_t - \sum_{h=1}^p \Phi_h Y_{t-h} = \varepsilon_t - \sum_{h=1}^q \Theta_h \varepsilon_{t-h},    
\end{align}
where $\left\{ \varepsilon_t \right\}$ is the innovation process, a sequence of independent and identically distributed random variables with zero-mean and an invertible covariance matrix $\Sigma$ (see, also \cite{gourieroux2020identification}).  

The following representation also holds
\begin{align}
\left( I_k - \sum_{j=1}^p \tilde{\Phi}_0 \tilde{\Phi}_j B^j \right) Y_t =  \left( I_k - \sum_{j=1}^q \tilde{\Phi}_0 \tilde{\Theta}_j B^j \right) \varepsilon_t.   
\end{align}

\begin{theorem}[see, \cite{melard2006exact}]
Let $\Phi (1)$ be a $(k \times k)$ matrix such that $\mathsf{rank} \left\{ \Phi (1)  \right\} = (k - d ) = r$, where $0 < d < k$. Let $P_1$ be any $( k \times d )$ matrix such that $\Phi (1) P_1 = 0$ and $P_2$ be any $( k \times r)$ matrix such that $P = [ P_1 , P_2 ]$ is invertible and the columns of $P_2$ are orthogonal to those of $P_1$.  Let $P^{-1} \equiv Q = \big[ Q_1^{\prime},  Q_2^{\prime}  \big]^{\prime}$ stands for the inverse of $P$, with $Q_1$ and $Q_2$ having $d$ and $r$ rows, respectively then the matrices $P_1 Q_1$ and $P_2 Q_2$ are uniquely determined. The singular value decomposition of $\Phi (1)$ can be written as $\boldsymbol{\Phi} (1) =  \boldsymbol{U} \boldsymbol{D} \boldsymbol{V}^{\prime}$, where $\boldsymbol{U}$ and $\boldsymbol{V}$ are $( k \times k)$ orthogonal matrices.
\end{theorem}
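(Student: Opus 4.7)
The plan is to break the statement into two independent claims: (i) the uniqueness of the products $P_1 Q_1$ and $P_2 Q_2$, and (ii) the existence of the singular value decomposition of $\Phi(1)$. The first claim is the substantive part; the second is invoked as a standard fact from matrix theory that I would just recall.

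First, I would exploit the defining identity $PQ = I_k$. Partitioning conformably, $PQ = P_1 Q_1 + P_2 Q_2 = I_k$, and from $QP = I_k$ we also get the four block identities $Q_1 P_1 = I_d$, $Q_2 P_2 = I_r$, $Q_1 P_2 = 0$, $Q_2 P_1 = 0$. These make it immediate that $P_1 Q_1$ is idempotent, since $(P_1 Q_1)^2 = P_1 (Q_1 P_1) Q_1 = P_1 Q_1$, and analogously for $P_2 Q_2$. Hence $P_1 Q_1$ is an oblique projector whose range is $\mathrm{range}(P_1)$ and whose null space is $\mathrm{range}(P_2)$: the inclusion $\mathrm{range}(P_2) \subseteq \ker(P_1 Q_1)$ follows from $Q_1 P_2 = 0$, and equality holds by a dimension count, because $\dim \ker(P_1 Q_1) = k - r = d = \dim \mathrm{range}(P_2)$.

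The second step is to pin down these two subspaces intrinsically. By construction, $\mathrm{range}(P_1) = \ker(\Phi(1))$, since $\Phi(1) P_1 = 0$ and the columns of $P_1$ are $d = \dim \ker(\Phi(1))$ linearly independent vectors in that kernel. The orthogonality hypothesis $P_2^{\prime} P_1 = 0$ forces $\mathrm{range}(P_2) \subseteq \ker(\Phi(1))^{\perp}$, and dimensions again match ($r = k-d$), so $\mathrm{range}(P_2) = \ker(\Phi(1))^{\perp}$. Consequently $P_1 Q_1$ is the orthogonal projector onto $\ker(\Phi(1))$ along $\ker(\Phi(1))^{\perp}$, and orthogonal projectors onto a fixed subspace are unique. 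Since $P_2 Q_2 = I_k - P_1 Q_1$, it is the orthogonal projector onto $\ker(\Phi(1))^{\perp}$ and is likewise uniquely determined by $\Phi(1)$ alone, independently of the particular choices of bases used to form $P_1$ and $P_2$.

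For the SVD assertion, I would invoke the standard constructive argument: $\Phi(1)^{\prime} \Phi(1)$ is symmetric positive semi-definite, so the spectral theorem yields an orthogonal $V$ with $\Phi(1)^{\prime} \Phi(1) = V \Lambda V^{\prime}$, where $\Lambda = \mathrm{diag}(\lambda_1,\ldots,\lambda_k)$ with $\lambda_i \geq 0$. Setting $d_i = \sqrt{\lambda_i}$, $D = \mathrm{diag}(d_1,\ldots,d_k)$, and defining the columns of $U$ by $u_i = d_i^{-1} \Phi(1) v_i$ whenever $d_i > 0$ and completing to an orthonormal basis otherwise, one verifies $\Phi(1) = U D V^{\prime}$ with $U, V$ orthogonal. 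I expect no real obstacle here; the only delicate aspect of the theorem is getting the uniqueness argument for $P_1 Q_1$ to use both the algebraic identity $PQ = I_k$ and the orthogonality of the columns of $P_2$ and $P_1$ in a clean way, which is precisely what selects the orthogonal projector among all oblique projectors onto $\ker(\Phi(1))$.
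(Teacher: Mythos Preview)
Your argument is correct, apart from a harmless swap in the dimension count: $\operatorname{rank}(P_1Q_1)=d$, so $\dim\ker(P_1Q_1)=k-d=r$, and $\dim\operatorname{range}(P_2)=r$ as well; the equality you need still holds. The route, however, is genuinely different from the paper's. The paper's sketch proceeds by fixing a reference basis via the SVD: with $\Phi(1)=UDV'$, the last $d$ columns $G=[v_{r+1},\ldots,v_k]$ of $V$ span $\ker\Phi(1)$, so any admissible $P_1$ equals $G\alpha$ for some invertible $\alpha\in\mathbb{R}^{d\times d}$, and similarly any admissible $P_2$ differs from a fixed complement basis by an invertible $r\times r$ factor. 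One then compares two choices $(P_1,P_2)$ and $(P_1^{*},P_2^{*})$ and verifies directly, from the block structure of $P^{-1}$, that the right-multiplications by $\alpha^{-1}\alpha^{*}$ cancel in the product $P_1Q_1$. Your approach bypasses this change-of-basis bookkeeping by identifying $P_1Q_1$ intrinsically as the orthogonal projector onto $\ker\Phi(1)$, whose uniqueness is immediate. This is shorter and makes transparent exactly where the orthogonality hypothesis $P_2'P_1=0$ enters: it is precisely what upgrades the oblique projector to the orthogonal one, hence what forces uniqueness. The paper's version, by contrast, keeps the SVD columns explicit, which is natural in the surrounding VARMA discussion where those singular vectors are themselves objects of interest.
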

(Sketch Proof)
\begin{itemize}

\item  The null space of $\Phi (1)$, that is, the set of vectors $x$ such that $\Phi (1) x = 0$, is the space generated by the columns of the matrix $G = [ v_{r+1},..., v_k   ]$. Now let $P_1$ and $P_1^{*}$ by any $( k \times d )$ matrices such that $\Phi (1) P_1 = 0$ and $\Phi^{*} (1) P_1 = 0$.

\item  Since $P_1$ and $P_1^{*}$ are full rank matrices, there exist full rank $( d \times d)$ matrices $\alpha$ and $\alpha^{*}$ such that $P_1 = G \alpha$ and $P_1^{*} = G \alpha^{*}$. Hence, we have that $P_1^{*} = P_1 \alpha^{-1} \alpha^{*}$.   

\item  Similarly, it can be proved that $P_2^{*} = P_2 \beta^{-1} \alpha^{*}$. Then, using the inverse matrix representations, it can be proved that  $P_1^{*} Q_1^{*} = P_1 Q_1$ and $P_2^{*} Q_2^{*} = P_2 Q_2$, which demonstrates that indeed the matrices $P_1 Q_1$ and $P_2 Q_2$ are uniquely defined (see, \cite{melard2006exact}).     

\end{itemize}

\paragraph{Forecasting in VARMA models}

Modelling and forecasting a large set of macroeconomic variables can be done using VARMA representations. However, to ensure a robust estimation,  the identification of uniquely parametrized VARMA models requires imposing restrictions on the parameter space to ensure that the model is uniquely identified (see, \cite{dias2018estimation}). 
\begin{align}
\boldsymbol{A}_0 \boldsymbol{Y}_t = \underbrace{ \boldsymbol{A}_1 \boldsymbol{Y}_{t-1} + ... + \boldsymbol{A}_p \boldsymbol{Y}_{t-p} }_{ \text{AR component} }+ \boldsymbol{B}_0 \boldsymbol{u}_t + \underbrace{ \boldsymbol{B}_1 \boldsymbol{u}_{t-1} + ... + \boldsymbol{B}_q \boldsymbol{u}_{t-q} }_{ \text{MA component} } 
\end{align}
Define the lag polynomials such that 
\begin{align*}
A(L) = A_0 - A_1 L - A_2 L^2 - ... - A_p L^p
\\
B(L) = B_0 - B_1 L - B_2 L^2 - ... - B_q L^q
\end{align*}
We shall say that the model is unique if there is only one pair of stable and invertible polynomials $A(L)$ and $B(L)$, respectively which satisfies the canonical MA representation (see, also \cite{poskitt2006identification})
\begin{align*}
Y_t = A(L)^{-1} B(L) \equiv \Theta (L) u_t = \sum_{j=0}^{\infty} \Theta_j u_{t-j}.
\end{align*}


\begin{assumption}
Consider the $\mathsf{VARMA}( p_1, q_1 )$ model defined by the stochastic difference equations
\begin{align}
\label{VARMA}
\left( \boldsymbol{I}_k - \sum_{j=1}^{p_1} A_j L^j \right) X_t = \left( \boldsymbol{I}_k + \sum_{j=1}^{p_1} B_j L^j \right) \varepsilon_t
\end{align}
Moreover, suppose that all solutions of
\begin{align*}
\mathsf{det} \left( \boldsymbol{I}_k - \sum_{j=1}^{p_0} A_j z^j \right) = 0,
\ \ \ 
\mathsf{det} \left( \boldsymbol{I}_k + \sum_{j=1}^{q_0} B_j z^j \right) = 0, \ \ \text{for some} \ \ z \in \mathbb{Z},
\end{align*}
lie outside the unit ball in $\mathbb{C}$, then \eqref{VARMA} is a stable solution (see, \cite{hallin2004rank}). 
\end{assumption}

\begin{remark}
Notice that in contrast to the reduced form VAR models, setting $A_0 = B_0 = I_k$ is not a sufficient condition (only ensures existence) to ensure a unique VARMA representation. The uniqueness of the $\mathsf{VARMA}(p,q)$ representation is guaranteed by imposing restrictions on the pair of stable and invertible polynomials $A(L)$ and $B(L)$ (see, also \cite{deistler1983properties}).  
\end{remark}

\begin{remark}
A second important aspect to emphasize is that the framework proposed by \cite{hallin2004rank} presents with a mathematical rigorous way the construction of an optimal-rank test statistic for testing the adequacy of the lag orders of a VARMA model. Generally, any test statistics associated with any parameter vectors from SVAR processes, should have power functions with suitable properties (such as the monotonicity property) to detect departures from the null hypothesis. Notice that optimal-rank based tests remain valid under arbitrary elliptically symmetric innovation densities, including those with infinite variance and heavy-tails. In particular, the authors show that these optimal-rank based tests are uniformly more powerful than those based on cross-covariances. Moreover, based on the classical LAN result it allows to derive testing procedures that are locally and asymptotically optimal under a given innovation density $f$, based on a non-Gaussian form of cross-covariances. 
\end{remark}
Suppose that our aim is to construct a statistic for testing the null hypothesis $\boldsymbol{\theta} = \boldsymbol{\theta}_0$ against the alternative $\boldsymbol{\theta} \neq \boldsymbol{\theta}_0$. Choosing $p_0 < p_1$ and $q_0 < q_1$ allows one to test the adequency of the specified VARMA coefficients in $\boldsymbol{\theta}_0$, while contemplating the possibility of possibly higher-order VARMA models.

\newpage

Therefore, this null hypothesis is thus invariant under the group of affine transformation $\boldsymbol{\varepsilon}_t \mapsto \boldsymbol{M} \boldsymbol{\varepsilon}_t$ if and only if $\boldsymbol{M} \boldsymbol{A}_i \boldsymbol{M}^{-1} = \boldsymbol{A}_i$ for all $i = 1,...p_0$ and $\boldsymbol{M} \boldsymbol{B}_j \boldsymbol{M}^{-1} = \boldsymbol{B}_j$ for all $j = 1,..., q_0$, that is, \textit{iff} each $\boldsymbol{A}_i$ and each $\boldsymbol{B}_j$ commutes with any invertible matrix $\boldsymbol{M}$, which holds true \textit{iff} they are proportional to the $( k \times k )$ identify matrix (see, \cite{hallin2004rank}).

Write $\mathcal{H}^{(n)} ( \boldsymbol{\theta}_0, \boldsymbol{\Sigma}, f )$ for the hypothesis under which an observation $\boldsymbol{X}^{(n)} := \big( \boldsymbol{X}_1^{(n)},..., \boldsymbol{X}_n^{(n)} \big)^{\prime}$ is generated by the $\mathsf{VARMA}(p_0, q_0)$ model with parameter value $\boldsymbol{\theta}_0$ satisfying Assumption A and innovation process satisfying Assumption B. Our objective is to test $\mathcal{H}^{(n)} ( \boldsymbol{\theta}_0 ) := \bigcup_{\boldsymbol{\Sigma} } \bigcup_{ f }  \mathcal{H}^{(n)} ( \boldsymbol{\theta}_0, \boldsymbol{\Sigma}, f )$  against $\bigcup_{ \boldsymbol{\theta} \neq \boldsymbol{\theta}_0  } \mathcal{H}^{(n)} ( \boldsymbol{\theta} )$.   Consequently, $\boldsymbol{\Sigma}$ and $f$ play the role of nuisance parameters; note that the unions, in the definition of $\mathcal{H}^{(n)} ( \boldsymbol{\theta}_0 )$, are taken over all possible values of $\boldsymbol{\Sigma}$ and $f$.

Let $\boldsymbol{A}(L)$ and $\boldsymbol{B}(L)$ be such that $\boldsymbol{A}_i = \boldsymbol{0}$ for $i = p_0 + 1,..., p_1$, and $\boldsymbol{B}_i = \boldsymbol{0}$ for $i = q_0 + 1,..., q_1$, and consider the sequences of linear difference operators 
\begin{align}
\boldsymbol{A}^{(n)}(L) &:= \boldsymbol{I}_k - \sum_{i=1}^{p_1} \left( \boldsymbol{A}_i + n^{-1/2} \boldsymbol{\gamma}_i^{(n)}  \right) L^i 
\\
\boldsymbol{B}^{(n)}(L) &:= \boldsymbol{I}_k + \sum_{i=1}^{p_1} \left( \boldsymbol{B}_i + n^{-1/2} \boldsymbol{\delta}_i^{(n)}   \right) L^i 
\end{align}
where the vector 
\begin{align}
\boldsymbol{\uptau}^{(n)} := \left(  \mathsf{vec}  ( \boldsymbol{\gamma}_1^{(n)} )^{\prime},..., \mathsf{vec}  ( \boldsymbol{\gamma}_{p_1}^{(n)} )^{\prime}, \mathsf{vec}  ( \boldsymbol{\delta}_1^{(n)} )^{\prime},..., \mathsf{vec}  ( \boldsymbol{\delta}_{q_1}^{(n)} )^{\prime}  \right)^{\prime} \in \mathbb{R}^{ k^2 ( p_1 + q_1 )}    
\end{align}
is such that $\mathsf{sup}_n ( \boldsymbol{\uptau}^{(n)} )^{\prime} \boldsymbol{\uptau}^{(n)} < \infty$. These operators define a sequence of VARMA models such that
\begin{align}
\boldsymbol{A}^{(n)} (L) \boldsymbol{X}_t = \boldsymbol{B}^{(n)} (L) \boldsymbol{\varepsilon}_t, \ \ \ \ t \in \mathbb{Z},   
\end{align}
hence, the sequence of local alternatives $\mathcal{H}^{(n)} \left( \boldsymbol{\theta}_0 + n^{-1/2} \boldsymbol{\uptau}^{(n)}, \boldsymbol{\Sigma}, f \right)$.

\medskip

\paragraph{Testing for Invertibility}

A key assumption for valid identification, estimation and forecasting with VARMA models (see, \cite{gourieroux2020identification}, \cite{sims2012news} and  \cite{lutkepohl2002forecasting}), is the invertability of underline polynomial representations. In particular, \cite{chen2017testing} propose a test for invertability or fundamentalness of structural vector autoregressive moving average models generated by non-Gaussian independent and identically distributed structural shocks. Moreover, it can be proved that in these models under certain regularity conditions the Wold innovations are a martingale difference sequence \textit{if and only if} the structural shocks are fundamental. Thus, the particular representation provides a mechanism for testing the presence of invertability. In other words, \cite{chen2017testing} convert the statistical problem of polynomial invertability of VARMA processes into a statistical testing problem for the martingale difference sequence property of the Wold innovations.     An MLE approach for non-invertible ARMA models is studied by \cite{meitz2013maximum}. Further issues on identification and estimation of non-invertible SVARMA models are presented by \cite{funovits2020identifiability}.

\newpage

\subsubsection{Extension 2: Non-Linear Dynamic System}

A more challenging task is the econometric identification in nonlinear systems. In addition, for a non-linear dynamic system excited by a non-Gaussian disturbances, the system equation governing the probability distribution of the system response contains an infinite number of terms which makes identification and estimation considerable complex. Therefore, a pertrubation scheme is proposed to obtain an approximate solution to the system, thereby the effects of excitation non-Gaussianity can be evaluated (see, \cite{cai1992response}, \cite{grigoriu1995linear} and \cite{sengupta1989efficient}).

Consider the unit-root VAR model, $\boldsymbol{A} (z) \boldsymbol{y}_t = \boldsymbol{\varepsilon}_t$, which crucially depends on $\boldsymbol{A} (z)^{-1} (L)$, thus we address the issue of of the inversion of $\boldsymbol{A} (z)$ around a pole $z = z_0$. Denote with
\begin{align}
\boldsymbol{A} (z) = \sum_{j=0}^K \boldsymbol{A}_j z^j, \ \ \ \boldsymbol{A}_j \neq 0,     
\end{align}
be a matrix polynomial of order $n$ and degree $K$ and $z_0$ such that $\mathsf{deg} \big[ \boldsymbol{A} (z) \big] = 0$. Therefore, 
\begin{align}
\boldsymbol{A} (z) = \sum_{j=0}^K \frac{1}{j!} \boldsymbol{A}^{(j)} ( z - z_0 )^j    
\end{align}
Then, an analytical expression for the matrix function $\boldsymbol{A} (z)^{-1}$ is given by 
\begin{align}
\boldsymbol{A} (z)^{-1} = \sum_{ j = -m }^{ + \infty } \boldsymbol{N}^{j} ( z - z_0 )^j = \sum_{ j = -m }^{ -1} \boldsymbol{N}^{j} ( z - z_0 )^j  + \boldsymbol{M}(z).  
\end{align}
where $m$ is the order of the pole and the first term above represents the principal part while $\boldsymbol{M}(z) = \sum_{ j = 0 }^{ + \infty  } \boldsymbol{N}^{j} ( z - z_0 )^j$, represents the regular part. Using the fact that $\boldsymbol{A} (z) \boldsymbol{A} (z)^{-1} = \boldsymbol{I}_n$ 
\begin{align}
\boldsymbol{I}_n  =  \left(  \sum_{ j = -m }^{ + \infty } \boldsymbol{N}^{j} ( z - z_0 )^j  \right)  \left( \sum_{k=0}^K \frac{1}{k!} \boldsymbol{A}^{(k)} ( z - z_0 )^k \right) 
&=  
\sum_{ k = -m }^{ + \infty } \left( \sum_{ j = 0 }^{ m + k }  \boldsymbol{N}_{k-j} \boldsymbol{A}^{(j)} \right) ( z - z_0 )^k 
\\
&=  
\sum_{ h =  0 }^{ + \infty }  \left( \sum_{j=0}^h \frac{1}{j!} \boldsymbol{A}^{(j)} \boldsymbol{N}_{h-m-j} \right)  ( z - z_0 )^{ h - m }
\end{align}

Notice that the Unit-Root VAR model is a linear non-homogeneous difference-equation system in matrix form whose solution can be formally written as below
\begin{align}
\boldsymbol{y}_t = \boldsymbol{A}^{-1} (L) \boldsymbol{\varepsilon}_t + \boldsymbol{A}^{-1} (L)     
\end{align}
where the first term is a particular solution of the non-homogeneous equation and the second is the so-called complementary solution. However, both depend on the operator $\boldsymbol{A}^{-1} (L)$, and thus on the matrix $\boldsymbol{A}^{-1} (z)$ since the algebras of the polynomial functions of the lag operator $L$ and of the complex variable $z$ are isomorphic. These expressions are useful when deriving statistical tests for invertability.

\newpage

\subsubsection{Extension 3: Automatic Inference for Finite Order VARs}

Selecting lag length is useful when $p$ is unknown, which can directly impact the estimation of the effects of structural shocks. Notice that we only discuss lag selection procedures for stationary settings\footnote{A relevant framework for model selection in explosive time series regressions is presented by \cite{tao2020model}. }. Roughly speaking the selection of the optimal lag length is achieved via the use of information criteria such as the AIC, SIC and HQC (e.g., see \cite{ivanov2005practitioner} and \cite{choi2012model}).

In this section, we focus on the framework proposed by \cite{kuersteiner2005automatic} who consider automatic inference for infinite order vector autoregressions. In particular, infinite order dynamics in cointegration models has been examined by several studies. Specifically, $y_t$ has an infinite order VAR representation  
\begin{align}
y_t = \mu + \sum_{j=1}^{\infty} \xi_j y_{t-j} + v_t, \ \ \ C(L) =  \sum_{j=0}^{\infty} C_j L^j.    
\end{align}
where $\mu = C(1)^{-1} \mu_y$ and $C(L)^{-1} = \xi (L)$ such that $\xi (L) = \displaystyle \left(  I - \sum_{j=1}^{\infty} \xi_j L^j  \right)$. Therefore, the approximate model with VAR coefficient matrices $\left\{ \xi_{1,p},..., \xi_{p,p}  \right\}$ has the following formulation
\begin{align}
y_t = \mu_{y,p} + \xi_{1,p} y_{t-1} + ... + \xi_{p,p} y_{t-p} + v_{t,p},    
\end{align}
where $\Sigma_{v,p} = \mathbb{E} \big[ v_{t,p} v_{t,p}^{\prime} \big]$ the mean squared prediction error of the approximating model. Moreover, it can be shown that the parameters $( \xi_{1,p},..., \xi_{p,p} )$ are root$-n$ consistent and asymptotically normal for $\xi (p) = ( \xi_1,..., \xi_p )$ under the condition that the lag order $p$ does not increase too quickly, that is, $p$ is chosen such that $p^3 / n \to 0$. 

Consider the sequential testing procedure proposed by \cite{ng2001lag}, then a Wald test for the null hypothesis that the coefficients of the last lag $p$ are jointly zero, is expressed by
\begin{align}
\mathcal{W} (p,p) = \left( \mathsf{vec} ( \hat{\xi}_{p,p} ) \right)^{\prime} \left[ \hat{\Sigma}_{v,p} \otimes M_p^{-1} (1) \right]^{-1} \left( \mathsf{vec} ( \hat{\xi}_{p,p} ) \right).    
\end{align}

\begin{definition}
\label{Definition1}
The general-to-specific procedure proposed by \cite{ng2001lag} is based on:
\begin{itemize}
\item[\textit{(i).}] $\hat{p}_n = p$ if, at significance level $\alpha \in (0,1)$, $\mathcal{W} (p,p)$ is the first statistic in the sequence $\mathcal{W} (j,j)$ such that $\left\{ j = p_{\mathsf{max}},..., 1  \right\}$, which is significantly different than zero or

\item[\textit{(ii).}] $\hat{p}_n = 0$ if, $\mathcal{W} (j,j)$ is not significantly different from zero for all $\left\{ j = p_{\mathsf{max}},..., 1  \right\}$, where $p_{\mathsf{max}}$ is such that $p_{\mathsf{max}} / n \to 0$ and $n^{1/2} \sum_{j= p_{\mathsf{max}} + 1}^{\infty} \norm{ \xi_j } \to 0$ as $n \to \infty$.
     
\end{itemize}
\end{definition}

\begin{lemma}
Let $Y_t = \big( y_t^{\prime} - \mu_y^{\prime},  y_{t-1}^{\prime} - \mu_y^{\prime},...,  y_{t-p+1}^{\prime} - \mu_y^{\prime} \big)^{\prime}$, and let 
\begin{align}
\Gamma_{1,p} = \frac{1}{ (n - p) } \sum_{t=1}^{n-1} Y_{t,p} ( y_{t+1} - \mu_y )^{\prime}  
\ \ \ \ 
\Gamma_{p} = \frac{1}{ (n - p) } \sum_{t=p}^{n-1} Y_{t,p} Y_{t,p}^{\prime}  
\end{align}

\newpage

Denote with $\xi ( \hat{p}_n )^{\prime} = \Gamma_{ 1, \hat{p}_n }^{\prime} \Gamma_{\hat{p}_n }^{\prime}$ such that $\xi ( \hat{p}_n )$ is given by Definition \ref{Definition1} above, then it holds that 
\begin{align}
\norm{  \hat{\xi} ( \hat{p}_n )  -  \xi^o ( \hat{p}_n )  } = o_p( n^{ - 1/2} ).    
\end{align}
\end{lemma}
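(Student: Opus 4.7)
The plan is to exploit the fact that $\hat{p}_n$ takes values in the finite (but growing) set $\{0, 1, \ldots, p_{\max}\}$, so that the difference can be bounded by its maximum over admissible lag orders. Concretely, I would write
\begin{align*}
\hat{\xi}(\hat{p}_n) - \xi^o(\hat{p}_n) = \sum_{k=0}^{p_{\max}} \left[\hat{\xi}(k) - \xi^o(k)\right] \mathbf{1}\{\hat{p}_n = k\},
\end{align*}
so that $\|\hat{\xi}(\hat{p}_n) - \xi^o(\hat{p}_n)\| \leq \max_{0 \leq k \leq p_{\max}} \|\hat{\xi}(k) - \xi^o(k)\|$ on the sure event that $\hat{p}_n$ is defined. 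Thus the random stopping rule disappears and it suffices to obtain a uniform-in-$k$ bound on the sample-versus-oracle discrepancy at rate $o_p(n^{-1/2})$.

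Next I would decompose $\hat{\xi}(k) - \xi^o(k)$ into the sampling error in $\Gamma_{1,k}$ and $\Gamma_k$ around their population analogues, plus a truncation bias coming from the infinite-order VAR representation. For the sampling error I would invoke a maximal inequality for martingale-difference sums (e.g.\ Doob or Bernstein-type bounds adapted to the stationary linear process $v_t$), which under the condition $p_{\max}^3/n \to 0$ yields $\max_{k \leq p_{\max}} \|\Gamma_k - \mathbb{E}\Gamma_k\| = O_p(\sqrt{p_{\max} \log p_{\max}/n})$ and an analogous bound for $\Gamma_{1,k}$. Combined with the uniform invertibility of $\mathbb{E}\Gamma_k$ (which follows from the spectral density of $y_t$ being bounded away from zero), this transfers to $\max_{k \leq p_{\max}} \|\hat{\xi}(k) - \xi^o(k)\| = o_p(n^{-1/2})$, provided the assumed summability condition $n^{1/2} \sum_{j = p_{\max}+1}^{\infty}\|\xi_j\|\to 0$ controls the truncation contribution from replacing the infinite-order projection by its order-$k$ counterpart.

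Finally I would handle the boundary event, showing that on $\{\hat{p}_n > p_{\max}\}$ the procedure from Definition~\ref{Definition1} produces no output (this event has probability tending to zero by construction, since all Wald statistics $\mathcal{W}(j,j)$ stabilize around chi-squared distributions for $j$ larger than the true lag and hence exceed the critical value only with probability $\alpha$). The argument then concludes by combining the uniform bound above with the demeaning step: replacing $\mu_y$ by its sample mean $\bar{y}$ contributes an additive term of order $O_p(n^{-1/2})$ to $\Gamma_{1,k}$ and $\Gamma_k$, which after inverting $\hat{\Gamma}_k$ remains absorbed into the same $o_p(n^{-1/2})$ bound once the preceding uniform control is in place.

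The main obstacle will be the uniform maximal inequality over the range $k \leq p_{\max}$: standard pointwise CLTs are too weak, and one needs either a chaining argument or an application of Hanson--Wright / Freedman-type bounds that exploit the martingale-difference structure of $v_t$ while keeping the dependence on $p_{\max}$ explicit enough that the condition $p_{\max}^3/n \to 0$ suffices. Verifying that the truncation-bias contribution at the boundary $k = \hat{p}_n$ is indeed $o(n^{-1/2})$ uniformly, rather than only for the fixed-oracle $p^*$, is the delicate point where the assumed summability rate on $\{\xi_j\}$ must be used in full strength.
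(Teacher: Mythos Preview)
Your proposal rests on a misreading of what $\xi^o(\hat p_n)$ is. In the lemma, $\Gamma_{1,p}$ and $\Gamma_p$ are \emph{sample} moment matrices built from $Y_t$ centred at the \emph{true} mean $\mu_y$, and $\xi^o(\hat p_n)$ is the corresponding infeasible estimator $\Gamma_{1,\hat p_n}'\Gamma_{\hat p_n}^{-1}$; the feasible $\hat\xi(\hat p_n)$ is the same object with $\mu_y$ replaced by $\bar y$. The lemma is therefore \emph{only} about the demeaning effect --- precisely the term you relegate to a closing remark. The paper's proof makes this explicit: every term in the decomposition of $\hat\Gamma_{1,\hat p_n}-\Gamma^o_{1,\hat p_n}$ carries a factor $(\mu_y-\bar y)$ multiplied by another centred sample average, so each is a product of two $O_p(n^{-1/2})$ quantities and hence $o_p(n^{-1/2})$. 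Your decomposition into ``sampling error around population analogues plus truncation bias'' addresses a different and strictly harder question; under your reading the conclusion is in fact false, since $\hat\xi(k)$ deviates from the population projection coefficients at the exact rate $O_p(n^{-1/2})$, not $o_p(n^{-1/2})$. Your own intermediate bound $O_p(\sqrt{p_{\max}\log p_{\max}/n})$ is larger than $n^{-1/2}$ whenever $p_{\max}\to\infty$, which should have flagged the problem.

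There is also a structural difference in how the random $\hat p_n$ is handled. You maximise over the full range $0\le k\le p_{\max}$, which would force you into sharp maximal inequalities. The paper instead exploits the general-to-specific rule in Definition~\ref{Definition1} to argue that $\hat p_n$ falls, with probability tending to one, in a window $[p_{\min},p_{\max}]$ of width only $O(n^\delta)$ for some $\delta<1/3$; a crude union bound over this short window then suffices, giving for instance $\mathbb{E}\max_{p\in[p_{\min},p_{\max}]}\bigl\|\tfrac{1}{n-p}\sum_{t=p}^{n-1}(y_{t+1}-\mu_y)\bigr\|^2=O(n^{-1+\delta})$. This localisation is the device you are missing, and it is what keeps the uniform control cheap enough that the product structure of the demeaning terms delivers the $o_p(n^{-1/2})$ rate.
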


\begin{proof}
Choose $\delta$ such that $0 < \delta < \frac{1}{3}$, and pick a sequence $p_{\mathsf{min} }^{*}$ such that
\begin{align*}
p_{ \mathsf{max} } \geq \textcolor{blue}{ p^{*}_{\mathsf{min}} } \ \ \ \text{and} \ \ \ ( p_{ \mathsf{max} } - \textcolor{blue}{ p^{*}_{\mathsf{min}} } ) \to \infty,
\end{align*}
where $n^{1/2} \sum_{j= \textcolor{blue}{ p^{*}_{\mathsf{min}} } + 1}^{\infty} \norm{ \xi_j } \to 0$. Define with 
\begin{align}
p_{ \mathsf{min} } = \mathsf{max} \left\{ \textcolor{blue}{ p^{*}_{\mathsf{min}} },  p_{\mathsf{max}} - \left( n^{1/2} \sum_{ j = \textcolor{blue}{ p^{*}_{\mathsf{min}} } + 1 }^{\infty} \norm{ \xi_j } \right)^{-1 } \right\}.    
\end{align}
Moreover, since $p_{ \mathsf{min} } \leq p_{ \mathsf{max} }$, and by changing sign of the expression above it follows that 
\begin{align}
\mathsf{min} \left\{ p_{\mathsf{max}} - \textcolor{blue}{ p^{*}_{\mathsf{min}} },   \left( n^{1/2} \sum_{ j = \textcolor{blue}{ p^{*}_{\mathsf{min}} } + 1 }^{\infty} \norm{ \xi_j } \right)^{-1 } \right\}  
=    
p_{\mathsf{max}} - p_{\mathsf{min}} \leq \left( n^{1/2} \sum_{ j = \textcolor{blue}{ p^{*}_{\mathsf{min}} } + 1 }^{\infty} \norm{ \xi_j }^{-1} \right)
\end{align}
such that $\left( p_{\mathsf{max}} - p_{\mathsf{min}} \right) \to \infty$ and $\left( p_{\mathsf{max}} - p_{\mathsf{min}} \right) = \mathcal{O}_p ( n^{\delta} )$. Furthermore, we have that 
\begin{align*}
\left( \hat{\Gamma}_{ 1, \hat{p}_n } -  \Gamma^o_{ 1, \hat{p}_n }  \right) 
&= 
\frac{1}{ \left( n - \hat{p}_n \right) } \sum_{ t = \hat{p}_n }^{ (n - 1) }  Y_{t, \hat{p}_n } \left( \mu_y - \bar{y} \right)^{\prime} 
+
\big( \boldsymbol{1}_{ \hat{p}_n } \otimes ( \mu_y - \bar{y} ) \big) \frac{1}{ \left( n - \hat{p}_n \right) } \sum_{ t = \hat{p}_n }^{ (n - 1) } ( y_{t+1} - \mu_y )^{\prime}
\\
&+
\frac{ ( n - 1 - \hat{p}_n ) }{ ( n - \hat{p} ) } \big( \boldsymbol{1}_{ \hat{p}_n } \otimes ( \mu_y - \bar{y} ) \big) ( y_{t} - \mu_y )^{\prime}.
\end{align*}
Therefore, we can prove that 
\begin{align*}
\mathbb{E} \underset{ p \in  P_n }{ \mathsf{max} }  \norm{  \frac{1}{(n - p)} \sum_{t = p}^{ (n-1)} \big( y_{t+1} - \mu_y \big) }^2  
&\leq
\sum_{ p = p_{\mathsf{min} }  }^{ p_{ \mathsf{max} } } \frac{1}{ ( n - p )^2 } \sum_{ t,s = p }^{(n-1)} \mathsf{trace} \left\{ \mathbb{E} \big[ \big( y_{t+1} - \mu_y \big) \big( y_{t+1} - \mu_y \big)^{\prime} \big] \right\}
\\
&\leq
\mathcal{O}_p ( n^{\delta} ) \frac{1}{ ( n - p_{\mathsf{max} } )^2 } \sum_{ t,s = p_{ \mathsf{min} } }^{(n-1)} \left| \mathsf{trace} \left\{ \mathbb{E} \big[ \big( y_{t+1} - \mu_y \big) \big( y_{t+1} - \mu_y \big)^{\prime} \big] \right\} \right|
\\
&= \mathcal{O}_p ( n^{ -1 + \delta } ).
\end{align*}

\end{proof}

\newpage 

\subsubsection{Extension 4: Locally Trimmed OLS Procedure for VARs}

A particular methodology which has been proposed in the literature to deal with the problem of outliers when estimating vector autoregression models, is the least-trimmed squares procedure. In particular, \cite{agullo2008multivariate} proposed a multivariate least-trimmed squares estimator and prove Fisher-consistency under the assumption of elliptically symmetric error distributions. Moreover, \cite{cavaliere2009robust} consider robust methods for estimation and unit root testing in autoregressions with infrequent outliers whose number, size, and location can be random and unknown\footnote{In particular, the authors show that standard inferene based on ordinary least squares estimation of an augmented Dickey-Fuller (ADF) regression may not be reliable, because (a) clusters of outliers may lead to inconsistent estimation of the autoregressive parameters, and (b) large outliers induce a jump component in the asymptotic distribution of UR test statistics.}.

Let $\left\{ y_t | t \in \mathbb{Z} \right\}$ be a $d-$dimensional stationary time series. The vector autoregressive model of order $p$, denoted by VAR$(p)$, is given by the following expression
\begin{align}
y_t = \mathcal{B}_0^{\prime} + \mathcal{B}_1^{\prime} y_{t-1} + ... + \mathcal{B}_p^{\prime} y_{t-p} + \epsilon_t, 
\end{align} 
with $y_t$ a $d-$dimensional vector, the intercept parameter  $\mathcal{B}_0$ a vector in $\mathbb{R}^d$ and the slope parameters $\left\{ \mathcal{B}_1, ...., \mathcal{B}_p \right\}$ are matrices in $\mathbb{R}^{d \times d}$. Throughout the paper, $B^{\prime}$ will stand for the transpose of the matrix $B$. Moreover, the $d-$dimensional error terms $\epsilon_t$ are assumed to be independently and identically distributed with a density function of the form 
\begin{align}
f_{ \epsilon t } = \frac{ g \left( u^{\prime} \Sigma u \right) }{ \left( \text{det} \Sigma \right)^{1 / 2} },
\end{align}
with $\Sigma$ a positive definite matrix, called the scatter matrix, and $g$ a positive function. If the second moment of $\epsilon_t$ exists, $\Sigma$ will be proportional to the covariance matrix of the error terms. Existence of a second moment, will not be required for the robust estimator. We focus on the unrestricted VAR$(p)$ model, where no restrictions are put on the parameters $\left\{ \mathcal{B}_1, ...., \mathcal{B}_p \right\}$. Suppose that the multivariate time series $y_t$ is observed for $t=1,...,T$. Then, the vector autoregression model has the following multivariate regression representation 
\begin{align}
y_t = \mathcal{B}^{\prime} x_t + \epsilon_t
\end{align}
for $t \in \left\{ p+1,..., T \right\}$ with $x_t = \left( 1, y_{t-1}^{\prime}, ..., y_{t-p}^{\prime} \right)^{\prime} \in \mathbb{R}^q$, where $q = pd + 1$. Moreover, the matrix $\mathcal{B} = \left( \mathcal{B}_1, ...., \mathcal{B}_p \right)^{\prime} \in \mathbb{R}^{q \times d}$ contains all the unknown matrices of coefficients. In matrix notation, we have that $X = \left( x_{p+1},...,x_{T} \right)^{\prime} \in \mathbb{R}^{ n \times q}$ is the matrix of regressors and $Y = \left( y_{p+1},..., y_{T} \right)^{\prime}$ is the matrix of predictants, where $n = T - p$ is the sample size after excluding the estimation for the lag components of the model. The classical least squares estimator for the regression parameter $\mathcal{B}$ is estimated by 
\begin{align}
\widehat{ \mathcal{B} }_{ \text{OLS} } = \left( X^{\prime} X \right)^{-1} \left( X^{\prime} Y \right), 
\end{align} 

\newpage

and the scatter matrix $\Sigma$ is estimated by the following expression 
\begin{align}
\widehat{ \Sigma }_{ \text{OLS} } = \frac{1}{n - d} \left(  Y - X \widehat{ \mathcal{B} }_{ \text{OLS} } \right)^{\prime} \left(  Y - X \widehat{ \mathcal{B} }_{ \text{OLS} } \right). 
\end{align}
\begin{remark}
Outliers can affect parameter estimates, model specification and forecasts based on the selected model. Therefore, a careful examination of the nature of outliers is necessary to ensure robust statistical inference. Moreover, outliers can be of different nature, the most well known types being additive outliers and innovational outliers. More specifically, for the vector autoregressive model $y_t$ is an additive outlier if only its own value has been affected by contamination. On the other hand, an outlier is said to be innovational if the error term $\epsilon_t$ is contaminated. Innovational outliers will therefore have an effect on the next observations as well, due to the dynamic structure of the series. Additive outliers have an isolated effect on the time series, but they still may seriously affect the parameter estimates. 
\end{remark}
\begin{remark}
Notice that for the Residual Autocovariance (RA) estimators are considered to be affine equivariant version of the estimators of \cite{li1989robust}. The particular estimators are considered to be resistant to outliers and therefore we can consider employing them for in the case of constructing a robust estimator to outliers for the Vector Autoregressive Model. For this purpose, we use the Multivariate Least Trimmed Squares (MLTS) estimator proposed by \cite{agullo2008multivariate}.  The specific estimator is defined by minimizing a trimmed sum of squared Machalanobis distances, and can be computed by a fast algorithm. Moreover, the procedure also provides a natural estimator for the scatter matrix of the residuals, which can then be used for model selection criteria. 
\end{remark}

\paragraph{The Multivariate Least Trimmed Squares Estimator}

The unknown parameters of the VAR$(p)$ will be estimated via the multivariate regression model. To allow for the robust estimation of the model under the presence of outliers the Multivariate Least Trimmed Squares (MLTS) estimator is employed, which is based on the idea of the Minimum Covariance Determinant estimator. The way that the MLTS estimator works is that it selects the subset of $h$ observations which minimize the determinant of the covariance matrix of residuals based on a least squares fit from the particular subsample. 

Consider the dataset $Z = \left\{ \left( x_t, y_t \right), t = p+1,..., T \right\} \subset \mathbb{R}^{d+q}$. We denote with $\mathcal{H}$ the collection of all subsets of size $h$, such that: 
\begin{align}
\mathcal{H} = \big\{ H \subset \left\{ p+1,...., T  \right\} \ | \ \mathcal{N} \left\{ H \right\} = h \big\}
\end{align}
Then, for any subset $H \in \mathcal{H}$, we denote with $\widehat{\mathcal{B}}_{\text{OLS}}$ the classical least squares estimator based on the observations of the particular subset given by the following expression 
\begin{align}
\widehat{\mathcal{B}}_{\text{OLS}} (H) =  \left( X^{\prime}_H X_H \right)^{-1} \left( X^{\prime}_H Y_H \right), 
\end{align}
where $X_H$ and $Y_H$ are submatrices of $X$ and $Y$ respectively, consisting of the rows of $X$ and $Y$ that corresponds to the subsample $H$.

\newpage

Then, the corresponding scatter matrix estimator computed from this subset is constructed as below
\begin{align}
\widehat{\Sigma}_{\text{OLS}} = \frac{1}{h - d} \left( Y_H - X_H \widehat{ \mathcal{B} }_{\text{OLS}} (H) \right)^{\prime} \left( Y_H - X_H \widehat{ \mathcal{B} }_{\text{OLS}} (H) \right).
\end{align}
The MLTS estimator is now defined as below
\begin{align}
\label{condition}
\widehat{ \mathcal{B} }_{\text{MLTS}} \left( k \right) = \widehat{ \mathcal{B} }_{\text{OLS}} \left( \hat{h} \right), \ \text{where} \ \ \hat{H} = \underset{ H \in \mathcal{H} }{ \text{argmin} } \ \text{det} \left[  \widehat{\Sigma}_{\text{OLS}} \left( H \right) \right]
\end{align}
Moreover, the associated estimator of the scatter matrix of the error terms is given by 
\begin{align}
\widehat{\Sigma}_{\text{MLTS}} \left( H \right) = c_{\alpha} \widehat{\Sigma}_{\text{OLS}} \left( \hat{H} \right)
\end{align}
where $c_{\alpha}$ is the correction factor to obtain consistent estimation of $\Sigma$ for the error distribution of the model, and $\alpha$ the trimming proportion for the MLTS estimator, that is, $\alpha \approx \left( 1 - \frac{h}{n} \right)$. In the case of multivariate error terms it has been shown that $
c_{\alpha} = (1 - \alpha ) \big/ F_{ \chi^2_{d+2} \left( q_{\alpha} \right)}$, 
where $F_{\chi^2_{d+2}}$ is the cumulative distribution function of a $\chi^2$ distribution with $q$ degrees of freedom, and $q_{\alpha} = \chi^2_{q, 1 - \alpha}$ is the upper $\alpha$ quantile of this distribution.

Equivalent characterizations of the MLTS estimator are given by \cite{agullo2008multivariate}. The particular study shows that any $\widetilde{\mathcal{B}} \in \mathbb{R}^{d \times q}$ minimizing the sum of the $h$ smallest squared Mahlanobis distance of its residuals, subject to the condition that $\text{det} \left[ \Sigma \right] = 1$ is a solution of \ref{condition}. More formally, 
\begin{align}
\widetilde{\mathcal{B}} = \underset{ \left( \mathcal{B}, \Sigma \right) }{ \text{argmin} } \left\{ \sum_{j=1}^h d_{s:n}^2 \left( \mathcal{B}, \Sigma \right)  \right\} \  \ \text{such that} \ \ |\Sigma| = 1
\end{align}
where $d_{1:n} \left( \mathcal{B}, \Sigma \right),...,d_{n:n} \left( \mathcal{B}, \Sigma \right)$ is the ordered sequence of the residual Mahalanobis distances is 
\begin{align}
d_s \left( \mathcal{B}, \Sigma \right) = \left( y_t - \mathcal{B}^{\prime} x_t \right)^{\prime} \Sigma^{-1} \left( y_t - \mathcal{B}^{\prime} x_t \right)^{1 / 2}
\end{align}
for some $\mathcal{B} \in \mathbb{R}^{d \times q}$. Notice that the MLTS estimator minimizes the sum of the $h$ smallest squared distances of its residuals, and is therefore the multivariate extension of the Least Trimmed Squares (LTS) estimator of \cite{rousseeuw1984least}. The particular MLTS estimator has been found to have low efficiency so the literature has proposed the reweighted version of the estimator to improve the performance of the MLTS estimator. Therefore, since the efficiency of the MLTS estimator is quite low, the re-weighted version of the particular estimator is employed to improve the performance of the MLTS. 

The Reweighted Multivariate Least Trimmed Squares (RMLTS) estimates are defined as
\begin{align}
\widehat{\mathcal{B}}_{ \text{RMLTS} } &= \widehat{\mathcal{B}}_{\text{OLS} } \left( J \right) \ \ \ \text{and} \ \ \ \widehat{\Sigma}_{\text{RMLTS} } = c_{ \delta } \widehat{\Sigma}_{OLS} \left( J \right)
\\
J &= \left\{ j \in \left\{ 1,...,n \right\} | d_j^2 \left( \widehat{B}_{MLTS}, \widehat{\Sigma}_{MLTS} \right)  \leq q_{\delta} \right\}
\end{align}
and $q_{\delta} = \chi^2_{q, 1 - \delta}$.

\newpage

Since outliers tend to have large residuals with respect to the initial robust MLTS estimator, then it implies a large residual Mahalanobis distance $d_j^2 \left( \mathcal{ \widehat{ \mathcal{B}} }_{\text{MLTS}}, \widehat{\Sigma}_{\text{MLTS}} \right)$. If the later is above the critical value $q_{\delta}$, then the observation is flagged as an outlier. The final RMLTS is then based on those observations not having been detected as outliers. For instance, one can set $\delta = 0.01$ and take as trimming proportion for the initial MLTS estimator $\alpha = 25 \%$.

\paragraph{Determining the autoregressive order}

To select the order $p$ of a vector autoregressive model, information criteria are computed for several values of $p$ and an optimal order is selected by minimizing the criterion.  Most information criteria are in terms of the value of the log likelihood $\ell_k$ of the VAR$(p)$ model. Using the model assumption for the distribution of the error terms, we obtain that 
\begin{align}
\ell_k = \sum_{t = k+1}^T g \left( \epsilon_t \Sigma^{-1} \epsilon_t  \right) - \frac{n}{2} \text{log} \big\{ \text{det} \big[ \Sigma \big] \big\}, 
\end{align} 
with $n = T - k$. When error terms are multivariate normal the above leads to the following expression 
\begin{align}
\ell_k = - \frac{n}{2} \big\{ \text{det} \big[ \Sigma \big] \big\} - \frac{np}{2} \text{log} \left( 2 \pi \right) - \frac{1}{2} \sum_{t = k+1}^T \epsilon^{\prime}_t \Sigma^{-1} \epsilon_t 
\end{align}
The log likelihood will depend on the autoregressive order via the estimate of the covariance matrix of the residuals. In particular, for the classical least squares estimator we have that 
\begin{align}
\widehat{ \Sigma }_{ \text{OLS} } = \frac{1}{n - p} \sum_{t = k+1}^T \widehat{\epsilon}_t (k)  \widehat{\epsilon}^{\prime}_t (k)
\end{align}
where $\widehat{\epsilon}^{\prime}_t (k)$ are the residuals corresponding to the VAR$(p)$ model. Using trace properties, the last term of the log-likelihood function corresponds to $(n - p) p / 2$ for the OLS estimator. In order to prevent that outliers might affect the optimal selection of the information criteria, we estimate $\Sigma$ by the RMLTS estimator given by the following expression: 
\begin{align}
\widehat{ \Sigma }_{ \text{RMLTS} } = \frac{ c_{\delta} }{ m(k) - p } \sum{ t \in J(k) } \widehat{\epsilon}_t (k) \widehat{\epsilon}_t (k)^{\prime}, 
\end{align} 
with $J(k)$ as above and $m(k)$ the number of elements in $J(k)$. In this case, the last term of the log-likelihood function equals now $\left( m(k) - p \right) p \big/ \left( 2 c_{\delta} \right)$.

\paragraph{Open Problems} Some possible extensions of the particular approach is to propose a framework which employes the LTS method that accounts for outliers in multivariate time series models, specifically to the setting of structural vector autoregressions.

\newpage

\subsubsection{Extension 5: Time-Varying VAR Models}

The identification and estimation of time-varying VAR models is useful when modelling dynamically macroeconomic events such as monetary policy regimes (see, \cite{primiceri2005time}, \cite{koop2013large}) and \cite{chan2016large} as well as time-varying instrumental variable estimation as in \cite{kapetanios2019large}, \cite{petrova2019quasi} and \cite{giraitis2021time}. A relevant setting is the framework of factor augmented  SVAR models (e.g., see \cite{eickmeier2015classical}).  

\textbf{Financial Connectedness}

The study of network topology and financial connectedness is used for modelling systemic risk, spillover effects and financial contagion using the VAR framework (see, discussion in \cite{katsouris2023limit}). Estimating systemic risk measures implies identifying shocks which are transmitted via system-wide connectedness. In particular, \cite{diebold2014network, diebold2012better} proposed a framework for modelling the network topology via the forecast error variance decomposition (FEVD) of an estimated VAR process, as a transformation method that summarizes connectedness between a set of institutions. Thus this approach allows to construct measures of directional connectedness which are analogous to bilateral imports and exports for each of a set of $N$ countries. Recent implementations are presented by  \cite{barigozzi2017network}, \cite{barunik2018measuring}, \cite{demirer2018estimating}, \cite{korobilis2018measuring}, \cite{geraci2018measuring},  \cite{rehman2018precious}, \cite{yoon2019network}, \cite{barigozzi2021time}, \cite{laborda2021volatility}, \cite{mensi2023time} and \cite{barunik2024persistence}. 

Consider, an $N$ dimensional covariance-stationary series $\boldsymbol{Y_t}=(y_{1,t},...,y_{N,t})$, which is generated from a VAR$(p)$ process for $t=1,...,T$. Then, based on the Wold decomposition theorem we have that
\begin{align*}
\boldsymbol{\Phi(L)} \boldsymbol{Y}_t = \boldsymbol{\epsilon_t},\ \ \  \text{with} \ \boldsymbol{\Phi(L)}=\sum_{h = 0}^{\infty} \boldsymbol{\Phi}_h L^h,
\end{align*}
such that $|\Phi(z)|$ lie outside the unit-circle, then the VAR process has the MA($\infty$) representation expressed as $\boldsymbol{Y_t}= \boldsymbol{\Theta(L)}\boldsymbol{\epsilon_t}$, where $\boldsymbol{\Theta(L)}$ is an $(N x N)$ matrix of infinite lag polynomials and $\boldsymbol{\epsilon}_t$ is considered to be a white-noise generated by the covariance matrix $\boldsymbol{\Sigma}$. Therefore, the $H-$step generalized variance decomposition\footnote{See also, \cite{lanne2016generalized} who employ a GFEV decomposition for linear and nonlinear multivariate models. Moreover,  \cite{katsouris2023estimating} develops methods for the identification and estimation of financial networks.} matrix D$^{gH}$ = $[d_{ij}^{gH}]$ is given by 
\begin{align}
d_{ij} = \frac{ \sigma_{ij}^{-1} \sum_{h=0}^{H-1} (e'_i\boldsymbol{\Theta}_h \boldsymbol{\Sigma} e_j   )^2   }{ \sum_{h=0}^{H-1} (e'_i\boldsymbol{\Theta}_h \boldsymbol{\Sigma} \boldsymbol{\Theta}^{'}_h e_i   )  },
\end{align}
where $e_j$ are selection matrices e.g., $e_j=(0,...1,...0)^{'}$, $\boldsymbol{\Theta}_h$ is the coefficient matrix evaluated at the $h$-lagged shock vector and $\sigma_{ij}$ is the $j$th diagonal element of $\boldsymbol{\Sigma}$. Modelling the asymmetric responses of the economy to monetary policy shocks over the business cycle due to the impact of the monetary transmission mechanism  (e.g., see \cite{eickmeier2015classical},  \cite{morley2012asymmetric} and \cite{veldkamp2005slow}), requires to employ specifications which allow for time-varying effects (see, \cite{paul2020time}). 

\newpage

\section{Dynamic Causal Effects}

The two main types of identification presented so far in the literature consists of point-identification and set-identification. Specifically, point-identified SVARs are interpreted as achieving identification through internal instruments. Consequently, in these models, structural shocks are the interventions of interest, and thus the goal is to estimate the dynamic causal effect of these shocks on macroeconomic outcomes. In contrast, modern microeconometric identification stategies rely heavily on external sources of variation that provide quasi-experiemnts to identify causal effects.  Such external variation might be found, for example, in firm-specific characteristics that introduce as-if randomness in the variable of interest (the treatment). The use of such external instruments in microeconometrics has proven highly productive and has yielded compelling estimates of causal effects. Therefore, the use of external instruments has opened a new and rapidly growing research avenue in macroeconometrics, in which credible identification is obtained using as-if random variation in the shock of interest that is distinct from - external to - the macroeconomic shocks hitting the economy (\cite{stock2018identification}). 

\subsection{Identification and Estimation of Causal Effects}

\subsubsection{Causal Effects and IV Regression}

Following the paper of \cite{stock2018identification}, a starting point for formulating the theory on identification and estimation of dynamic causal effects in macroeconomics, is that the expected difference in outcomes between the treatment and control groups in a randomized experiment with a binary treatment is the average treatment effect. Roughly speaking, if a binary treatment $X$ is randomly assigned, then all other determinants of $Y$ are independent of $X$, which implies that the (average) treatment effect is 
\begin{align}
\mathbb{E} \big[ Y |  X = 1 \big] -  \mathbb{E} \big[ Y |  X = 0 \big]    
\end{align}
In the linear model $Y = \alpha + \beta X + u$, where $\beta$ is the treatment effect, random assignment implies that $\mathbb{E} \left( u | X  \right) = 0$ so that the population regression coefficient is the treatment effect. When randomization is conditional on covariates $W$, then the treatment effect for an individual with covariates $W = w$ is estimated by the outcome of a random experiment on a group of subjects with the same value of $W$, 
\begin{align}
\mathbb{E} \big[ Y |  X = 1, W = w \big] -  \mathbb{E} \big[ Y |  X = 0, W = w \big].    
\end{align}

\begin{example}
Usually the path of observed macroeconomic variables arising from current and past shocks and measurement error are collected into the $( m \times 1 )$, $\varepsilon_t$ error vector $\varepsilon_t$. Therefore, the $( n \times 1 )$ vector of macroeconomic variables $Y_t$ can be written in terms of current and past innovation terms  
\begin{align}
Y_t = \Theta (L) \varepsilon_t,     
\end{align}
where $\Theta (L) = \Theta_0 + \Theta_1 L + \Theta_2 L^2 + ...$, where $\Theta_h$ is an $( n \times m )$ matrix of coefficients.

\newpage

Then the variance matrix of the error terms $\boldsymbol{\Sigma} = \mathbb{E} [ \boldsymbol{\epsilon}_t \boldsymbol{\epsilon}_t ]$ is assumed to be positive-definite to ensure the existence of a non ill-conditioned covariance matrix. Moreover these disturbance terms (shocks) are assumed to be mutually uncorrelated. The expression $Y_t = \Theta (L) \varepsilon_t$, corresponds to the structural moving average representation of $Y_t$. Specifically, the coefficients of $\Theta (L)$ are the structural impulse response functions, which are the dynamic causal effects of the shocks (see, also \cite{bacchiocchi2018gimme}).   

Under the null of invertability, the following SVAR representation applies: 
\begin{align}
\boldsymbol{A} (L) \boldsymbol{Y}_t = \boldsymbol{\Theta} (L) \boldsymbol{\varepsilon}_t,     
\end{align}
Therefore, under invertability, the structural moving average is $\boldsymbol{Y}_t = \boldsymbol{C}(L) \boldsymbol{\Theta}_0 \boldsymbol{\varepsilon}_t$, where $\boldsymbol{C}(L) = \boldsymbol{A} (L)^{-1}$, such that $\boldsymbol{\Theta} (L) = \boldsymbol{C}(L) \boldsymbol{\Theta}_0$. The null and alternative hypotheses are then
\begin{align}
H_0:  \boldsymbol{C}_h \boldsymbol{\Theta}_{0,1} = \boldsymbol{\Theta}_{h,1} \ \ \ \text{against} \ \ \ H_1: \boldsymbol{C}_h \boldsymbol{\Theta}_{0,1} \neq \boldsymbol{\Theta}_{h,1}, \ \ \text{for some} \ h.     
\end{align}
Recall that the SVAR can be also written in state-space form as below: 
\begin{align}
\boldsymbol{Y}_t &= \boldsymbol{\mathcal{B}} \boldsymbol{X}_t  
\\
\boldsymbol{X}_t &= \boldsymbol{A} \boldsymbol{X}_{t-1} + \boldsymbol{G} \boldsymbol{\varepsilon}_t,  
\end{align}
where $\boldsymbol{X}_t = \big( \boldsymbol{Y}_t^{\top}, \boldsymbol{Y}_{t-1}^{\top},....,  \boldsymbol{Y}_{t-p+1}^{\top} \big)^{\top}$, such that $\boldsymbol{A}$ is the companion matrix and $\boldsymbol{\mathcal{B}} = \big( \boldsymbol{I}_n \ \boldsymbol{0} \cdots \boldsymbol{0} \big)$ is a selection matrix. Then, the local projection regression equation is written as below: 
\begin{align}
\boldsymbol{Y}_{t+h} = \boldsymbol{\Theta}_{h,1} \boldsymbol{Y}_{1,t} + \boldsymbol{\Gamma}_h \boldsymbol{W}_t + \boldsymbol{u}_{t+h},     
\end{align}

\begin{remark}
The study of \cite{stock2018identification} verifies important insights regarding the identification of dynamic causal effects. In particular,  under the assumption of Gaussian errors, every invertible model  has multiple observationally equivalent non-invertible representations, which imply that to identify a unique representation some external information regarding the system is required. If we assume that the structural shocks are independent and non-Gaussian then using information from higher-order restrictions the causal structure of the system can be identified. In practice, external instruments can be employed to estimate dynamic causal effects directly without using an indirect VAR identification step. 
\end{remark}

\subsubsection{Causal Effects Identification with SVARs}

\end{example}

\begin{example}[Causal effects of Lockdown Policies on Health and Macro Outcomes]

\

We follow the framework proposed by \cite{arias2023causal} who consider the causal impact of pandemic-induced lockdowns and other nonpharmaceutical policy interventions (NPIs), on health and macroeconomic outcomes. To do this, identification assumptions are needed to access causality. During the first step the authors  use the Bayesian approach to estimate an epidiomiological model with time variation in the parameters controlling an infectious disease's dynamics.

\newpage 

In particular, time variation in the parameters of the model allows to: $\textit{(i)}$ capture changes in the behaviour of individuals as they respond to public health conditions and  $\textit{(ii)}$ to include shifts in the transmission and clinical outcomes of the pandemic. 

We write our SVAR model such that 
\begin{align}
\boldsymbol{y}_t^{\prime} \otimes \boldsymbol{A}_0 = \boldsymbol{x}_t^{\prime}  \otimes \boldsymbol{A}_{+} + \varepsilon_t^{\prime}, \ \ 1 \leq t \leq T    
\end{align}
\begin{itemize}

\item $\boldsymbol{y}_t$ is an $( n \times 1)$ vector of endogenous variables, $\boldsymbol{x}_t^{\prime} = \big[ \boldsymbol{y}_t^{\prime}, ...,  \boldsymbol{y}_{t-p}^{\prime}, \boldsymbol{z}_t, \boldsymbol{1} \big]$, where $\boldsymbol{z}_t$ is a $( z \times 1 )$ vector of exogenous variables, and $\boldsymbol{\varepsilon}_t$ is an $( n \times 1)$ vector of structural shocks. 

\item Model parameters: $\boldsymbol{A}_0$ is an $(n \times n)$ invertible matrix of parameters, and $\boldsymbol{A}_{+}$ is an $(np + z + 1) \times n$ matrix of parameters such that $p$ is the lag length and $T$ is the sample size.  
\end{itemize}
In addition, we assume that the vector $\boldsymbol{\varepsilon}_t$, conditional on past information and the initial conditions $\boldsymbol{y}_0,..., \boldsymbol{y}_{1-p}$, is Gaussian with mean zero and covariance matrix $\boldsymbol{I}_n$. Without loss of generality, we assume that the first equation of the SVAR characterizes the policy rule. This implies that
\begin{align}
\boldsymbol{y}_t^{\prime} \boldsymbol{\alpha}_{0,1} =  \boldsymbol{x}_t^{\prime} \boldsymbol{\alpha}_{+,1} + \boldsymbol{\varepsilon}_{1t}, \ \ 1 \leq t \leq T   
\end{align}
is the policy equation such that 
\begin{itemize}

\item $\boldsymbol{\varepsilon}_{1t}$ denotes the first entry of the vector $\boldsymbol{\varepsilon}_{t}$. 

\item $\boldsymbol{\alpha}_{+,1}$ denotes the first column of $\boldsymbol{A}_{+}$ for $\ell \in \left\{ 0,..., p \right\}$ and $\alpha_{s,ij}$ denotes the $(i,j)$ entry of $\boldsymbol{A}_{s}$, where $s \in \left\{ 0, + \right\}$, and describes the systematic component of the policy rule.    
    
\end{itemize}
Restricting the systematic component of the policy rule is equivalent to restricting $\alpha_{s,ij}$ and identifying a policy shock that we call the \textit{stringency shock}. Recall that the dynamic causal effect of a unit intervention in $\varepsilon_{j,t} \in \varepsilon_{t}$ on $z_{t+h}$ is given by $\big( \mathbb{E} [ z_{t+h} |  \varepsilon_{j,t} = 1; \varepsilon_{t-1}, ....  ]  - \mathbb{E} [ z_{t+h} |  \varepsilon_{j,t} = 0; \varepsilon_{t-1}, ....  ] \big)$. Then, the structural impulse response functions can be derived by $\partial z_{t+h} / \partial \varepsilon_{j,t}$ for $h = 0,1,2,...$. 

\end{example}

\medskip

\begin{example}[ECB monetary policy and bank default risk, see \cite{soenen2022ecb}]

\

In this example, the primary focus is the impact of ECB monetary policy on bank risk. There is an ongoing debate on the effect of accommodative monetary policy on bank risk taking and financial stability in general. Specifically, one concern relates to the potential increase in risk taking and the possible under-pricing of risk. The relevant question is whether or not monetary policy causes excessive risk taking by banks, since this could hamper financial stability which can be reflected in higher bank CDS spreads. Next, regarding the choice of the variable of interest: we cannot use the policy rate because of the zero lower bound constraint and, similarly, we cannot use the ECB balance sheet because some important monetary policy measures did not affect the balance sheet.

\newpage 

Therefore, when assessing the causal impact of monetary policy, we decide to employ a structural VAR because incorporating a broad set of financial market indicators allows us not only to identify actual ECB monetary policy decisions, but also to capture anticipation effects and instances in which financial markets judge that monetary policy actions were insufficient, given the prevailing market conditions. In other words, we can estimate a time series of exogeneous monetary policy shocks by modelling a set of relevant financial market variables in a structural VAR model which is given by the following econometric specification:
\begin{align}
\boldsymbol{y}_t = \boldsymbol{A}_1 \boldsymbol{y}_{t-1} + ... +  \boldsymbol{A}_p \boldsymbol{y}_{t-p} + \boldsymbol{R} \boldsymbol{v}_t,
\end{align}    
where $\boldsymbol{y}_t$ is an $n-$dimensional vector of endogenous variables, $\boldsymbol{v}_t$ is an $n-$dimensional vector of orthogonal structural innovations with mean zero and $\left\{ \boldsymbol{A}_1,..., \boldsymbol{A}_p   \right\}$ and $\boldsymbol{R}$ are $(n \times n)$ time-invariant parameter matrices. The reduced-form residuals corresponding to this structural model are given by $\boldsymbol{\varepsilon}_t = \boldsymbol{R} \boldsymbol{v}_t$. 
\end{example}

\begin{remark}
A macroeconomic puzzle relevant to the optimal economic policy decision-making with respect to macro-prudential restructuring and debt collection is the aspect of the tax policy conducted over the Business Cycle. In particular, it is well known that government spending has typically been procyclical in developing economics but exhibit acyclical or countercyclical behaviour in industrial economies. Moreover, an interesting avenue of further research is the counterfactual distribution of the cyclical behaviour of tax rates, as opposed to tax revenues which are endogenous to the business cycle and hence cannot shed light on the cyclicality of tax policy (see, \cite{vegh2015tax}). 
\end{remark}
Therefore the identification and estimation of dynamic treatment effects allows to investigate the VAT behaviour of firms in relation to imposed by the government tax rates. The intuition goes as follows. Governmental taxation is the pillar of fiscal policy and decisions on the optimal tax policy are usually taken in relation to their effectiveness on various macroeconomic outcomes, such as output fluctuations as well as structural changes in agent's behaviour over the business cycle (aggregate fluctuations). From the econometrics perspective to facilitate inference requires to derive the asymptotic behaviour of the dynamic treatment effect estimator across a panel of economics which have undergone a period of fiscal adaption to new tax policies. In other words, one would be interested to obtain the asymptotic behaviour of these estimators based on the trajectory of the control against the experimental groups in relation to a common taxation policy as well as to obtain counterfactual outcomes when a control group has adapted an updated fiscal policy across the business cycle. In this direction, we are interested to incorporate tax policy variables such as taxation revenues which are often considered as policy outcomes. Furthermore, due to the fact that these variables might be positive correlated during the business cycle, suitable instrumental variables that can ensure consistent model estimation and inference procedures can be employed. In particular, \cite{vegh2015tax} use the  VAT rates across economic during the business cycle as a valid instrumental variable. Moreover, the authors use a novel tax database which combines both corporate and personal income tax rates as well as data on value-added taxes. A relevant study which discusses the effects of global tax reform is presented by \cite{gomez2023measuring}.

\newpage 

\subsection{Counterfactual Analysis}

\subsubsection{Causal Effect Inference with Time Series Data}

In particular, the parameter of interest is the ATE $:= \mathbb{E} \left[ Y_1 - Y_0  \right]$. Moreover, the Conditional Average Treatment Effect (CATE) defined by 
\begin{align}
\tau(x) := \mathbb{E} \left[ Y_1 - Y_0 | X = x \right],     
\end{align}
captures how the average treatment effect changes across sub-populations with covariate value $X = x$. Furthermore, under the unconfoundness assumption $\tau(x)$ can be nonparametrically identified by 
\begin{align}
\tau(x) := \mathbb{E} \left[ Y | D = 1, X = x \right] - \mathbb{E} \left[ Y | D = 0, X = x \right]      
\end{align}

\begin{assumption}
Suppose that the following conditions hold 
\begin{itemize}
\item[(i)]  $\big( Y(1), Y(0) \big) \ \textit{independent} \  D | X$.
    
\item[(ii)] $X$ has a compact support $\mathsf{Supp}(X)$ and there exists $c > 0$ such that $c < \Pi (x) < 1 - c$ for all $x \in \mathsf{Supp}(X)$. 
    
\end{itemize}
\end{assumption}

\begin{remark}
Notice that the above assumptions are fairly standard in the causal inference literature and we assume that they hold throughout the remaining of the paper unless otherwise states. In particular, Assumption 1 (i), corresponds to the unconfoundedness condition while Assumption 1 (ii) corresponds to the common support condition for the set of covariates in the model. 
\end{remark}
Consider the $k-$step ahead forecast of $Y_t$ given the information available up to time $t^{*}$ which is given by the following expression 
\begin{align}
Y_{ t^* + k | t^* } (0) := \widehat{\mathbb{E}} \big[ Y_{ t^* + k } (0) | \mathcal{F}_{ t^* }  \big] \equiv X_{ t^* + k }^{\prime} \widehat{\beta}   + \widehat{\mathbb{E}} \big[ Z_{ t^* + k } | \mathcal{F}_{ t^* }  \big].     
\end{align}
Notice that by definition, $Y_{ t^{*} + k } (0)$ is the potential outcome expected at time $t^{*} + k$ in case the intervention does not occur. Therefore, an estimator of the point causal effect can be defined as 
\begin{align*}
\widehat{\tau}_{ t^{*} + k }  
:=  Y_{ t^* + k } (1)  - Y_{ t^* + k | t^* } (0)
= \tau_{ t^{*} + k } + X_{ t^* + k } \big( \beta - \widehat{\beta} \big) + Z_{t^* + k } - \widehat{\mathbb{E}} \big[ Z_{ t^* + k } | \mathcal{F}_{ t^* }  \big]
\end{align*}

\subsubsection{Potential Outcome Framework}

Let $D \in \left\{ 0, 1 \right\}$ be a binary treatment indicator and $Y_d$ be the potential outcomes for $d = 0,1$ in the status quo environment (see, also \cite{hsu2022counterfactual}). In particular, $Y_1$ is the outcome if an individual is exogenously assigned to the treatement $(D = 1)$ and $Y_0$ is the outcome in the absence of treatement $( D = 0)$. Then, the actual observed outcome is 
\begin{align}
Y = D Y_1 + (1-D) Y_0.    
\end{align}

\newpage

We observe a $d-$dimensional vector of pretreatment covariates $X = \left( X_1,..., X_k   \right)$ in the status quo environment and $X^{\star} = \left( X^{\star}_1,..., X^{\star}_k   \right)$ in the counterfactual environment which is of the same dimension as the set of covariates $X$. 
\begin{itemize}

\item[(i)] $X$ and $X^{\star}$ are statistically independent 

\item[(ii)] $X^{\star}$ is a deterministic transformation of $X$, such that $X^{\star} = \pi (X)$ for some known function $\pi$. 

\end{itemize}

The corresponding treatment indicator, outcome, and potential outcomes in the counterfactual environment are denoted as $D^{*}, Y^{*}$ and $Y_d^{*}$ for $d = 0,1$, respectively with 
\begin{align}
Y^{\star} = D^{\star} Y^{\star}_1 + (1-D^{\star}) Y^{\star}_0. 
\end{align}
However, notice that since the treatment has not been implemented in the counterfactual environment yet, neither $D^{*}$ nor $Y^{*}$ is observed in our model (see, \cite{hsu2022counterfactual}).  We adopt the former approach such that a closed-form solution exists
\begin{align}
\widehat{\delta}^{*} = \frac{1}{n} \sum_{j=1}^n \left[  \widehat{ \mathbb{E}} \left( Y_1 | X = X_j^{*} \right) - \widehat{ \mathbb{E}} \left( Y_0 | X = X_j^{*} \right)  \right]
\end{align}
where $\widehat{ \mathbb{E}} \left( Y_d | X = X_j^{*} \right)$ is the Nadaraya-Watson estimator such that 
\begin{align}
\widehat{ \mathbb{E}} \left( Y_d | X = X_j^{*} \right) = \frac{ \displaystyle  \sum_{i=1}^n Y_i \mathbf{1} \left\{ D_i = d \right\} K_{x,h} \left( X_i - x \right) }{ \displaystyle \sum_{i=1}^n \mathbf{1} \left\{ D_i = d \right\} K_{x,h} \left( X_i - x \right) }    
\end{align}
The asymptotic properties for $\widehat{\delta}^{*}$ can be derived under weaker regularity conditions stated below. Furthermore based on regularity conditions it can be proved that 
\begin{align}
\sqrt{n} \left( \widehat{\delta}^{*} - \delta^{*} \right) \overset{d}{\to} \mathcal{N} \left( 0, \sigma^2_{\delta^{*}} \right).    
\end{align}
Therefore, compared to the semiparametric efficiency bound of the ATE estimator we have that 
\begin{align}
\mathbb{E} \left\{  \frac{\mathsf{Var} ( Y_1 | X )}{ p(X) } +  \frac{\mathsf{Var} ( Y_0 | X )}{ 1 - p(X) } + \mathbb{E} \big[ Y_1 - Y_0 | X \big] - \mathbb{E} \big[ \left( Y_1 - Y_0 \right) \big]^2 \right\}.
\end{align}

\newpage

\subsection{Time Series Experiments and Causal Effects}

\subsubsection{Estimating Contemporaneous Effect of Treatment}

Time series causal effects are defined as a comparison between the potential outcomes at a fixed point in time. Thus, the primary object of interest is the temporal average of these causal effects. Therefore, the only source of randomness in our formulation is the randomization of the treatment. Once we begin the experiment, the randomization reveals a particular outcome path; however, this does not change the set of potential outcomes, and therefore, does not alter what we would have observed had we administered a different treatment. Thus, in this sense, we are treating the potential outcomes as predetermined, and the only source of randomness comes from which path we observe. 
\begin{definition}[General Causal Effects] For paths $w_{1:t}$ and $w_{1:t}^{\prime}$, the $t-$th causal effect is given by 
\begin{align}
\tau ( w_{1:t}, w_{1:t}^{\prime} ) = \mathbb{E} \big[ Y_{it} ( w_{1:t} ) -  Y_{it} ( w_{1:t}^{\prime} ) \big].        
\end{align}
\end{definition}
Then, the temporal average treatment effect of the paths $w_{1:T}$ and $w_{1:T}^{\prime}$ is
\begin{align}
\tau ( w_{1:t}, w_{1:t}^{\prime} ) = \frac{1}{T} \sum_{t=1}^T \tau_t \left( w_{1:t}, w_{1:T}^{\prime}   \right).    
\end{align}
Moreover, we define the potential outcomes just intervening on treatment the last $j$ periods as 
\begin{align}
Y_{it} ( x_{t-j:t} ) = Y_{it} \big( X_{i, 1:t-j-1}, x_{t-j:t} \big)    
\end{align}
In other words, this "marginal" potential outcome represents the potential or counterfactual level of regime A in country $i$, if we let welfare spending run its natural course up to $t-j-1$ and just set the last $j$ lags of spending to $x_{t-j:t}$. Thus, we can define an important quantity of interest, that is, the \textit{contemporaneous effect of treatment} (CET) of $X_{it}$ on $Y_{it}$ such that:
\begin{align*}
\tau_c (t) 
&:= 
\mathbb{E} \big[ Y_{it} \big( X_{i,1:t-1}, 1 \big) - Y_{it} \big( X_{i,1:t-1}, 0 \big) \big]
\equiv \mathbb{E} \big[ Y_{it}(1) -  Y_{it}(0)  \big].
\end{align*}
Furthermore, following \cite{bojinov2019time} specifically in longitudinal studies and survey studies with a follow-up wave of cross-sectional datasets, we are particularly interested to investigate the treatment effect of survey participants as well as their membership in certain endogenously generated groups. This implies that we are modelling the presence of time-varying exposure to dynamic causal estimands. Therefore, to construct a feasible structural econometric framework we consider the treatment and potential outcome path with time-varying effects. At each time step, $t = 1,...,T$, we assume that each individual (sampling unit), is exposed to either treatment, $W_t = 1$, or control, $W_t = 0$, and then we measure the corresponding outcome variable of interest.

\newpage 

Although we focus on binary treatments, our results can be generalized to multiple treatments as well. Thus, the random "treatment path" is given by
$W_{1:t} = \big( W_1,..., W_t \big)$. As a result, the potential path for the treatment path $w_{1:t}$ can corresponds to the following sequence
\begin{align}
Y_{1:t} ( w_{1:t} ) = \big\{ Y_1 (w_1), Y_2 (w_{1:2}),..., Y_2 (w_{1:t}) \big\}.  
\end{align}

\subsubsection{Null Hypothesis of Temporal Causal Effects}

Following the framework proposed by \cite{bojinov2019time}, to access whether the treatment has a statistically significant effect, we consider the sharp null of no temporal causal effects
\begin{align}
H_0: Y_t ( w_{1:t} ) = Y_t ( w_{1:t}^{\prime} ), \ \ \ \text{for all} \ w_{1:t} = w_{1:t}^{\prime}, \ \ \ t = 1,2,...,T.    
\end{align}
In particular, this will null hypothesis will be tested against a portmanteu alternative. Thus, invoking the sharp null hypothesis implies that $Y_t ( w_{1:t}^{\text{obs}} ) = Y_t ( w_{1:t}^{\prime}  )$ for all $w_{1:t}^{\prime}$. Moreover, the particular formulation of the null hypothesis means "the null of no temporal causality at lag $p \geq 0$ of the treatment on the outcome" holds, such that $H_{0,p}: \tau_{t,p} = 0$, for all $t = 1,2,...,T$. 

In the case where $p = 0$, then $\widehat{\tau}_{t,0}$ estimation error can be written as below
\begin{align}
u_{t,0} = \left(  \frac{ \boldsymbol{1} \left\{ W_t = 1 \right\} }{ p_t(1) } -  \frac{ \boldsymbol{1} \left\{ W_t = 0 \right\} }{ p_t(0) }  \right) Y_t ( w_{1:t}^{\text{obs}} )   
\end{align}
and under the sharp null hypothesis we obtain that
\begin{align}
\mathbb{E}^R \big[ u_{t,0} | \mathcal{F}_{T,t-1} \big] = 0 \ \ \ \text{and} \ \ \ Var \big[ u_{t,0} | \mathcal{F}_{T,t-1} \big] = \frac{ Y_t ( w_{1:t}^{\text{obs}} )  }{ p_t(1) p_t(0) }.    
\end{align}
In particular, whenever $p_t(1) = p_t(0)$ the estimator of the upper bound to the variance for the contemporaneous causal effect, is equal to the variance under the sharp null of no treatment effect. Then, the $p$ lagged error variance $Var^R \big( u_{t-p,p} | \mathcal{F}_{T, t-p-1} \big)$ equals to 
\begin{align}
\nu_p^2 = Y_t ( w_{1:t}^{\text{obs}} )^2 \frac{1}{ 2^{2p} } \sum_{ w \in \left\{0,1 \right\}^p } \left( \frac{1}{ p_t( 1,w ) } + \frac{1}{ p_t(0,w) } \right).    
\end{align}
We begin by rewriting the within-unit matching estimator as below
\begin{align}
\hat{\tau}_{ \text{match} } = \frac{1}{ \frac{1}{N} \sum_{i=1}^N C_i } . \frac{1}{N} \sum_{i=1}^N C_i \left\{  \frac{ \displaystyle  \sum_{t=1}^T X_{it} Y_{it}  }{ \displaystyle \sum_{t=1}^T X_{it} }  - \frac{ \displaystyle  \sum_{t=1}^T X_{it} Y_{it}  }{ \displaystyle \sum_{t=1}^T (1 - X_{it} ) }  \right\}.    
\end{align}

\newpage

Consider the conditional independence property which implies that $\big\{ Y_{it}(1), Y_{it}(0) \big\}  \perp \boldsymbol{X}_i | \boldsymbol{U}_i$. By the law of iterated expectations this implies that
\begin{align}
\mathbb{E} \big[ Y_{it} (x) | C_i = 1 \big] = \mathbb{E} \bigg[ \mathbb{E} \big[ Y_{it} (x) | \boldsymbol{U}_i, C_i = 1 \big] \big| C_i = 1 \bigg]    
\end{align}
for $x = 0,1$.

\subsection{Bias-Reduced Doubly Robust Estimation}

According to \cite{vermeulen2015bias} the proposed approach may more generally lend itself better to small-sample inference. For instance, suppose that interest lies in the marginal causal effect $\tau = \mathbb{E} [ Y(1) - Y(0) ]$. Because the proposed estimation strategy does not require acknowledging the uncertainty of the estimated nuisance parameters (up to first order), we foresee that it may potentially lend itself better to randomization inference. How such randomization inference could be accomplished and how it performs in small to large samples is indeed relevant questions for further investigation.

\subsubsection{Marginal Treatment Effects}

Consider $\textit{i.i.d}$ data $\left\{ \boldsymbol{Z}_i = \left( Y_i, A_i, \boldsymbol{X}_i \right), i = 1,..., n \right\}$, where $Y_i$ is the outcome of interest, $A_i$ is a dichotomous treatment taking values zero and one and $\boldsymbol{X}_i$ is a sufficient set of covariates to control for confounding of the treatment effect, in the sence that $Y( a ) \perp A| \boldsymbol{X}$ for $a \in \left\{ 0, 1\right\}$. Specifically, $Y(a)$ denotes the counterfactual outcome for treatment level $a \in \left\{ 0, 1\right\}$, which is linked to the observed data through the consistency assumption (i.e., $Y(a) = Y$ iff $A=a$).

\begin{assumption}[Covariates-Treatment Independence] Let $X_{j,t}$ be a vector of covariates that are predictive of the outcome of unit $j$, for all $t \in \left\{ t^{*}+1,..., T \right\}$ such covariates are not affected by the intervention, such that, $X_{j,t}(1) = X_{j,t}(0)$.     
\end{assumption}

\begin{definition}
Let $t^{*}$ be the time of the intervention and define with $k \in \left\{ 1,..., K \right\}$ such that $t^{*} + K = T$. For any $k$, the point, cumulative and temporal average causal effects on unit $j$ at time $t^{*} + k$ are defined, respectively, such that
\begin{align}
\tau_{j, t^{*} + k} (1;0) &= Y_{j, t^{*} + k}(1) - Y_{j, t^{*} + k}(0)
\\
\Delta_{j, t^{*} + k} (1;0) &= \sum_{h=1}^k \tau_{j, t^{*} + k} (1;0)
\end{align}
\end{definition}

\begin{assumption}[Selection on observables]
The following conditional independence assumption ensures that model selection is conditioned on observables (see, also \cite{angrist2011causal})    
\begin{align}
Y_{t,1}(d), Y_{t,2}(d),..., \perp D_t | z_t, \ \ \ \text{for all} \ \ d \ \text{and} \ \ \psi \in \Psi_t.   
\end{align}
\end{assumption}

\newpage

Notice that the selection on observables assumption says that policies are independent of potential outcomes after appropriate conditioning. Moreover, the conditioning set includes the unknown error vector of the model, $\varepsilon_t$, such that $Y_{t,1}(d), Y_{t,2}(d),..., \perp \varepsilon_t | z_t$. This is because, conditional on the set of instruments $z_t$, randomness in $D_t$ is due exclusively due to randomness induced from the unknown $\varepsilon_t$ vector. Therefore, in order to convert the above assumption into a testable hypothesis which has an identifiable parameter space, we consider the sharp null hypothesis such that $Y_{t,j} ( d^{\prime} ) = Y_{t,j} (d) = Y_{t+j}$.

Then, if we substitute the observed with the potential outcomes we obtain the following testable conditional independence assumption: 
\begin{align}
Y_{t+1},..., Y_{t+j},..., \perp D_t | z_t,    
\end{align}
Notice that $D_t$ is not necessarily a binary treatment status but can be a set of different taxation policies which implies that we have a continuum of treatment status.

\begin{definition}[Granger noncausality]
Under suitable regularity conditions it holds that 
\begin{align}
Y_{t+1} \perp D_t, \bar{D}_{t+1} | \bar{Y}_t.   
\end{align}
\end{definition}

The following theorem of Sims and Granger causality holds. 

\medskip

\begin{theorem}
Let $\chi_t$ be a stochastic process defined on a probability space $\big( \Omega, \mathcal{F}, \mathbb{P} \big)$, assuming that conditional probability measures $\mathbb{P} \big( Y_{t+1} , D_t | z_t \big)$ are well defined for all $t$ expect possibly on a set of measure zero.     
\end{theorem}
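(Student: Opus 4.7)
The plan is to interpret the statement as an existence/well-posedness claim for regular versions of the conditional probability measures $\mathbb{P}(Y_{t+1}, D_t \mid z_t)$, and to establish it via a direct appeal to the disintegration theorem for probability measures on standard Borel spaces. First I would fix the measurable structure: since $Y_{t+1}$ is an economic outcome taking values in $\mathbb{R}^{n}$, and both $D_t$ (the policy/treatment variable) and $z_t$ (the instrument) take values in Polish spaces (e.g., $\mathbb{R}^{d}$ or a countable discrete action set), the joint variable $(Y_{t+1}, D_t)$ inherits a standard Borel structure. Throughout, let $\mathcal{G}_t := \sigma(z_t) \subseteq \mathcal{F}$ denote the sub-$\sigma$-algebra generated by the conditioning instrument.

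Second, for each fixed $t$ I would apply the standard construction of regular conditional probabilities (e.g., Theorem 10.2.2 of Dudley, \emph{Real Analysis and Probability}, or Theorem 6.3 of Kallenberg, \emph{Foundations of Modern Probability}) to produce a Markov kernel $\kappa_t: \Omega \times \mathcal{B}(\mathcal{Y} \times \mathcal{D}) \to [0,1]$ satisfying: (i) $\kappa_t(\omega, \cdot)$ is a probability measure on the Borel $\sigma$-algebra for $\mathbb{P}$-almost every $\omega \in \Omega$; (ii) $\kappa_t(\cdot, A)$ is $\mathcal{G}_t$-measurable for each Borel set $A$; and (iii) $\int_{G} \kappa_t(\omega, A)\, d\mathbb{P}(\omega) = \mathbb{P}\big(\{(Y_{t+1}, D_t) \in A\} \cap G\big)$ for every $G \in \mathcal{G}_t$. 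Property (iii) identifies $\kappa_t(\cdot, A)$ with a version of $\mathbb{E}\big[\mathbf{1}\{(Y_{t+1}, D_t) \in A\} \mid \mathcal{G}_t\big]$, which is precisely the required conditional probability. Uniqueness, again up to a $\mathbb{P}$-null set, follows from the Dynkin $\pi$--$\lambda$ theorem applied to a countable generating $\pi$-system of the Borel $\sigma$-algebra, where separability of the Polish target space is essential.

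Finally, I would assemble the per-$t$ statements into the uniform ``for all $t$ except possibly on a set of measure zero'' conclusion. Each $t$ yields an exceptional null set $N_t \in \mathcal{F}$ on which one or more of (i)--(iii) fails; since $t$ ranges over a countable index set, the union $N := \bigcup_t N_t$ remains a $\mathbb{P}$-null event, and on its complement the family $\{\kappa_t\}_{t}$ jointly realizes the required regular conditional probability measures. The main technical obstacle will be verifying the Polish/standard-Borel hypothesis uniformly in $t$ when $z_t$ is permitted to depend on past endogenous variables in the SVAR system, because the conditioning $\sigma$-algebra $\mathcal{G}_t$ then evolves with the natural filtration of $\chi_t$; handling this rigorously requires the joint measurability of the map $(t,\omega) \mapsto z_t(\omega)$ and completeness of the underlying filtered probability space, both of which can be secured by passing to the usual augmentation of $\mathcal{F}_t$.
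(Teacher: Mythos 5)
There is a mismatch here, but it originates as much in the source as in your proposal. The paper states this ``theorem'' with no proof at all, and the statement itself is incomplete: it consists only of hypotheses (``Let $\chi_t$ be a stochastic process\dots assuming that $\mathbb{P}(Y_{t+1}, D_t \mid z_t)$ are well defined\dots'') with no conclusion. The sentence immediately preceding it (``The following theorem of Sims and Granger causality holds'') makes clear that the intended content is the classical equivalence between Granger noncausality ($Y_{t+1} \perp D_t, \bar{D}_{t+1} \mid \bar{Y}_t$, as in the paper's Definition of Granger noncausality just above) and Sims noncausality ($Y_{t+1}, Y_{t+2}, \dots \perp D_t \mid z_t$), in the spirit of Sims (1972), Chamberlain (1982), and Florens--Mouchart. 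The well-definedness of the conditional probability measures is the \emph{regularity hypothesis} under which that equivalence is proved, not the conclusion to be established.

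Your proposal takes the well-definedness itself as the thing to be proved and establishes it via the disintegration theorem on standard Borel spaces. As a piece of mathematics that argument is sound: regular conditional probabilities exist for random elements of Polish spaces given any sub-$\sigma$-algebra, the identification with $\mathbb{E}[\mathbf{1}\{(Y_{t+1},D_t)\in A\}\mid \sigma(z_t)]$ is correct, and the passage from per-$t$ null sets to a single null set via countable union is fine. But it proves the assumption rather than the theorem. What is missing is the entire substantive content: a statement of the two noncausality conditions and a demonstration that, under the stated regularity and an appropriate conditioning structure (typically requiring $z_t$ to contain the relevant past of both $Y$ and $D$), one implies the other. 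That argument proceeds by iterated conditioning and a martingale/tower-property induction over horizons, not by disintegration. Since the paper supplies no proof and no conclusion, your attempt cannot be faulted for diverging from it, but it should be recorded that the proposal addresses a different (and strictly preliminary) claim.
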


\begin{remark}
A relevant example is presented by \cite{menchetti2023combining} who consider estimating  the effect of policy interventions in observational time series settings in the absence of untreated units, by  combining counterfactual outcomes and ARIMA models for prediction purposes. Specifically, when
focusing on observational time series data, commonly used methodologies for policy evaluation within the Rubin Causal Model (RCM) framework include Differences-in-Differences (DiD) and synthetic control methods. However, a main drawback of these two approaches is that they require the presence of controls that did not experience the treatment, and therefore finding untreated units is a challenging task \citep{menchetti2023combining}. Moreover, the particular method is not guaranteed to reveal relevant causal effects especially due to the presence of unobserved factors.  For example, if our main goal is to estimate the causal effect of a new intervention policy such as due to the adoption of new tax rates, as well as its indirect impact on a common fiscal policy across similar economies (e.g., VAT policy), then a causal time series modelling approach can be employed to  obtain the causal estimands of interest. However, one may find that these causal effects have no negative or positive impact on the spending trajectory of similar economies that did not adopt new fiscal policies, suggesting that unobserved factors may drive spending decision making more than taxation policy. Other relevant counterfactual frameworks can be constructed (see, \cite{kilian2011does}), around issues from the international finance literature such as the modelling sovereign default risk (e.g., see \cite{hatchondo2016debt}).  
\end{remark}

\newpage

\subsection{The Augmented Synthetic Control Method}

The synthetic control method (SCM) is a popular approach for estimating the impact of a treatment on a single unit in panel data settings (see, \cite{ben2021augmented}). In other words is a weighted average of control units that balances the treated units pre-treatment outcomes and other covariates as closely as possible. We define the treated potential outcome as $Y_{it} = Y_{it} (0) + \tau_{it}$, where the treatment effects $\tau_{it}$ are fixed parameters. Since the first unit is treated, the key estimand of interest is 
\begin{align}
\tau = \tau_{1T} = Y_{1T} (1) - Y_{1T} (0).    
\end{align}
Then, the observed outcomes are as below
\begin{align}
Y_{it} = 
\begin{cases}
Y_{it}(0) & \ \text{if} \ W_i = 0 \ \text{or} \ t \leq T_0,
\\
Y_{it}(1) & \ \text{if} \ W_i = 1 \ \text{or} \ t > T_0.
\end{cases}
\end{align}

\begin{remark}
Notice that pre-treatment outcomes serve as covariates in SCM, we use $X_{it}$, for $t \leq T_0$, to represent pre-treatment outcomes.     
\end{remark}

\begin{itemize}

\item We are interested on a potential outcome system as a foundational framework for analyzing dynamic causal effects on outcomes in observational time series settings. We consider settings in which there is a single unit observed over time. At each time period $t \geq 1$, the unit receives a vector of assignments $W_t$, and an associated vector of outcomes $Y_t$ are generated. The outcomes are causally related to the assignments through a potential outcome process, which is a stochastic process that describes what would be observed along counterfactual assignment paths. 

\item A dynamic causal effect is generally defined as the comparison of the potential outcome process along different assignment paths at a fixed point in time. 
    
\end{itemize}

\begin{assumption}[Assignment and Potential Outcome]
The assignment process $\left\{ W_t \right\}_{ t \geq 1}$ satisfies 
\begin{align}
W_t \in \mathcal{W} := \times_{k=1}^{  d_w } \mathcal{W}_k       
\end{align}
The potential outcome process, is for any deterministic sequence $\left\{ w_s \right\}_{ s \geq 1 }$ with $w_s \in \mathcal{W}$ for all $s \geq 1$, $\big\{  Y_t \big( \left\{ w_s \right\}_{ s \geq 1}  \big) \big\}_{ t \geq 1}$, where the time$-t$ potential outcomes satisfies $Y_t \big( \left\{ w_s \right\}_{ s \geq 1}  \big) \in \mathcal{Y} \subset \mathbb{R}^{ d_y }$.
\end{assumption}

\medskip

\begin{remark}
The simplest case is when the assignment is scalar and binary $\mathcal{W} = \left\{ 0, 1 \right\}$, in which case $W_t = 1$ corresponds to "treatment" and $W_t = 0$ is "control". Moreover, the potential outcome $Y_t \big( \left\{ w_s \right\}_{ s \geq 1}  \big)$ may depend on future assignments $\left\{ w_s \right\}_{ s \geq t + 1}$. The next assumption considers the dependence structure we allow in the proposed econometric framework, restricting the potential outcome to only depend on past and contemporaneous assignments.        
\end{remark}

\medskip

\begin{assumption}[Non-anticipating Potential Outcomes]
For each $t \geq 1$, and all deterministic $\left\{ w_t \right\}_{t \geq 1}$ and $\left\{ w_t^{\prime} \right\}_{t \geq 1}$ with $w_t, w_t^{\prime} \in \mathcal{W}$, 
\begin{align}
Y_t \big( w_{1:t}, \left\{ w_s \right\}_{ s \geq t + 1} \big) = Y_t \big( w_{1:t}, \left\{ w_s^{\prime} \right\}_{ s \geq t + 1} \big)  \ \text{almost surely}.  
\end{align}
\end{assumption}

\begin{remark}
The above assumption is a stochastic process analogue of non-interference. Furthermore, it still allows for rich dependence on past and contemporaneous assignments. Under Assumption 2, we drop references to the future assignmnets in the potential outcome process and write
\begin{align}
\big\{ Y_t \big(  \left\{ w_s \right\}_{ s \geq 1} \big) \big\}_{t \geq 1} = \big\{ Y_t \big( w_{1:t} \big) \big\}_{t \geq 1}    
\end{align}
Then, the set $\big\{ Y_t \big( w_{1:t} \big): w_{1:t} \in \mathcal{W}^t \big\}$ collects all the potential outcomes at time $t$.
\end{remark}

\begin{assumption}[Output]
The output is $\left\{ W_t, Y_t \right\}_{ t \geq 1} = \left\{ W_t, Y_t \big( W_{1:t} \big)_{ t \geq 1} \right\}$. The $\left\{ Y_t \right\}_{ t \geq 1}$ is called the outcome process.     
\end{assumption}

Furthermore, we assume that the assignmemnt process is sequentially probabilistic, meaning that any assignment vector may be realized with positive probability at time $t$ given the history of the observable stochastic process up to time $t-1$. Let $\left\{ \mathcal{F}_t  \right\}_{ t \geq 1}$ denote the natural filtration generated by the realized $\left\{ w_t, y_t \right\}_{ t \geq 1}$.

\begin{assumption}[Sequentially probabilistic assignment process]
The assignment process satisfies $0 < \mathbb{P} \big( W_t = w \big| \mathcal{F}_{t-1} \big) < 1$ with probability one for all $w \in \mathcal{W}$. Moreover, the probabilities are determined by a filtered probability space of $\big\{ W_t, \big\{ Y_t ( w_{1:t} ), w_{1:t} \in \mathcal{W}^t \big\} \big\}_{t \geq 1}$.    
\end{assumption}

\begin{definition}
For $t \geq 1$, $h \geq 0$ and any fixed $w, w^{\prime} \in \mathcal{W}$, the time$-t$, $h-$period ahead:
\begin{itemize}

\item $Y_{t+h}( w ) - Y_{t+h}( w^{\prime} )$ (impulse causal effect). 

\item $\mathbb{E} \big[ Y_{t+h}( w ) - Y_{t+h}( w^{\prime} ) \big| \mathcal{F}_{t-1}  \big]$ (impulse causal effect). 

$\mathbb{E} \big[ Y_{t+h}( w ) - Y_{t+h}( w^{\prime} ) \big]$ (impulse causal effect). 
    
\end{itemize}
    
\end{definition}

\medskip

\begin{remark}
Various further applications of the aforementioned frameworks can be considered starting with the pioneered work of \cite{aigner1988optimal} who consider the optimal experimental design for error correction models, the econometric framework of \cite{de2020two} which corresponds to the two-way fixed effects estimators with heterogeneous treatment effects as well as the novel framework of \cite{lewis2020double} which corresponds to a time series setting based on machine learning and statistical learning techniques. However, we consider all aforementioned settings as more advanced topics and beyond the scope of our current discussion.
\end{remark}

\newpage

\subsection{Panel Data Regression for Dynamic Causal Effects}

\subsubsection{Panel Experiments and Dynamic Causal Effects}

Following the framework proposed by \cite{bojinov2021panel} we consider a design-based framework which provides a generalization of the finite population literature in cross-sectional causal inference and time series experiments to panel experiments. 

\begin{definition}
For $p < T$ and $\boldsymbol{w}, \tilde{\boldsymbol{w}} \in \mathcal{W}^{p+1}$,
\begin{itemize}
    \item[(i)] the time$-t$ lag$-p$ average dynamic causal effect is  
    \begin{align}
        \bar{\tau}_{ . t } ( \boldsymbol{w} , \tilde{\boldsymbol{w}} ; p ) := \frac{1}{N} \sum_{i=1}^N \tau_{i,t} ( \boldsymbol{w} , \tilde{\boldsymbol{w}} ; p ). 
    \end{align}

    \item[(ii)] the unit$-i$ lag$-p$ average dynamic causal effect is
    \begin{align}
        \bar{\tau}_{ i . } ( \boldsymbol{w} , \tilde{\boldsymbol{w}} ; p ) := \frac{1}{T-p} \sum_{t=p+1}^T \tau_{i,t} ( \boldsymbol{w} , \tilde{\boldsymbol{w}} ; p ). 
    \end{align}

    \item[(iii)] the total lag$-p$ average dynamic causal effect is
    \begin{align}
        \tau ( \boldsymbol{w} , \tilde{\boldsymbol{w}} ; p ) := \frac{1}{N(T-p)} \sum_{t=p+1}^T \sum_{i=1}^N \tau_{i,t} ( \boldsymbol{w} , \tilde{\boldsymbol{w}} ; p ).
    \end{align}
\end{itemize}
\end{definition}

\begin{example}
Suppose that we aim to generate the potential outcomes for the panel experiment using an autoregressive model formulated as below
\begin{align}
Y_{i,t} = \phi_{i,t,1} Y_{i,t-1} ( w_{i,1:t-1} ) + ... + \phi_{i,t,t-1} Y_{i,1} ( w_{i,1} )  + \beta_{i,t,0} w_{i,t} + ... + \beta_{i,t,t-1} w_{i,1} + \epsilon_{i,t}, \ \ \ \forall \ t > 1. 
\end{align}
In a more compact form the above specification can be written as $Y_{i,1} ( w_{i,1} ) = \beta_{i,1,0} w_{i,1} + \epsilon_{i,1}$, where
\begin{align}
\phi_{i,t,s} =
\begin{cases}
 \phi,  & \text{for} \ s = 1
\\
0,  & \text{for} \ s > 1. 
\end{cases}
\ \ \ \ \text{and} \ \ \ \
\beta_{i,t,s}
\begin{cases}
 \beta,  & \text{for} \ s = 0
\\
0,  & \text{for} \ s > 0. 
\end{cases}
\end{align}
Within a simulation setting we can vary the choice of $\phi$, which governs the persistence of the process, and $\beta$, which governs the size of the contemporaneous causal effects. Varying degrees of persistence in conjecture with the size of contemporaneous causal effects might cause size distortions for values close to the boundary of the parameter space. A robust method that accommodates for these the presence of these features in the aforementioned econometric environment is an interesting avenue for further research. In addition, we can vary the probability of treatment $p_{i,t-p} (w) = p(w)$ as well as the distribution of the errors $\epsilon_{i,t}$, which can be either sampled from a standard normal or a Cauchy distribution. 
\end{example}

\newpage 

\subsubsection{Identifying Dynamic Causal Effects with Panel Data: Applications}

\begin{example}[see, \cite{goes2016institutions}]
Consider the following panel SVAR(1) model defined as below
\begin{align}
\boldsymbol{B} \boldsymbol{y}_{i,t} = \boldsymbol{f}_i + \boldsymbol{A} (L) \boldsymbol{y}_{i,t-1} + \boldsymbol{e}_{i,t}, \ \ \ i \in \left\{ 1,..., N \right\}, \ \ t \in \left\{ 1,..., T \right\}   
\end{align}
where $y_{i,t} \equiv [ c_{i,t}, k_{i,t} ]^{\prime}$ is a bi-dimensional vector of stacked endogenous variables, such that $c_{i,t}$ is the log of GDP per capita and $k_{i,t}$ is the proxy for institutional quality, $\boldsymbol{f}_i$ is a diagonal matrix of time-invariant individual-specific intercepts. Moreover, $\boldsymbol{A} (L) = \sum_{j=0}^p \boldsymbol{A}_j L^j$ is a polynomial of lagged coefficients, $\boldsymbol{A}_j$ is a matrix of coefficients, and $\boldsymbol{e}_{i,t}$ is a vector of stacked residuals, and $\boldsymbol{B}$ is a matrix of contemporaneous coefficients.  However, since $\boldsymbol{f}_i$ is correlated to the error terms, estimation through OLS leads to biased coefficients. As proposed by \cite{baltagi2008econometric}, a strategy to obtain consistent parameters and eliminate individual fixed-effects when $N$ is large and $T$ is fixed, is to apply first-differencing and use lagged instruments. We consider the GMM/IV technique using a system of $m=2$ equations. Each equation in the system has the first difference of an endogenous variable on the left hand side, $p$ lagged first differences of all $m$ endogenous variables on the right hand side, and no constant. 
\begin{align}
\Delta y_{1,i,t} &= \sum_{j=1}^p \textcolor{blue}{\gamma_{11}^j} \Delta y_{1,i,t-j} + ... + \sum_{j=1}^p \textcolor{blue}{\gamma_{1m}^j } \Delta y_{m,i,t-j} + e_{1,i,t}
\\  
\nonumber
\vdots \ \ &= \ \ \vdots 
\\
\Delta y_{m,i,t} &= \sum_{j=1}^p \textcolor{blue}{ \gamma_{m1}^j } \Delta y_{1,i,t-j} + ... + \sum_{j=1}^p \textcolor{blue}{ \gamma_{mm}^j } \Delta y_{m,i,t-j} + e_{m,i,t}
\end{align}
Moreover, the model has an equivalent vector moving average $(VMA)$ representation which implies that the Panel SVAR model can be formulated as follows
\begin{align}
\boldsymbol{B} \boldsymbol{y}_{i,t} = \boldsymbol{\Phi}(L) \boldsymbol{e}_{i,t}, \ \ \  \boldsymbol{\Phi} (L) := \sum_{j=0}^{ \infty } \boldsymbol{\Phi}_j L^j \equiv \sum_{j=0}^{ \infty } \boldsymbol{A}_1^j L^j
\end{align}
is a polynomial of reduced-form responses to stochastic innovations and $\boldsymbol{\Phi}_0 = \boldsymbol{A}_1^0 \equiv \boldsymbol{I}_m$. 

Thus, to recover the $\boldsymbol{B}$ matrix and ensure robust identification, we first retrieve the variance-covariance matrix $\boldsymbol{\Sigma}_e = \mathbb{E} \left[ \boldsymbol{e}_{i,t} \boldsymbol{e}_{i,t}^{\prime} \right]$. Since, it holds that $\boldsymbol{B}^{-1} \boldsymbol{e}_{i,t} = \boldsymbol{u}_{i,t}$, then $\boldsymbol{\Sigma}_e = \mathbb{E} \left[ \boldsymbol{B} \boldsymbol{u}_{i,t} \boldsymbol{u}_{i,t}^{\prime} \boldsymbol{B}^{\prime}  \right]$. 
The structural shocks of the model are assumed to be uncorrelated, such that, $\boldsymbol{u}_{i,t} \boldsymbol{u}_{i,t}^{\prime} = \boldsymbol{I}_m$, we identify the matrix $\boldsymbol{B}$ by decomposing the variance-covariance matrix into two triangular matrices.  
Therefore, to identify the model we impose one restriction in order to orthogonalize the contemporaneous responses. 

In particular, using the Cholesky ordering, and based on the variables of interest, institutional quality is set to have no contemporaneous effect on GDP per capita while the latter is allowed to contemporaneously impact the former. Overall, the study of \cite{goes2016institutions} investigates the relation between institutions and economic growth using a structural panel VAR(1) model. Next, we focus on the approach to recover the IRs from the coefficient matrices.

\newpage 

Specifically, we take the following VMA representation of the Panel SVAR such that 
\begin{align}
\boldsymbol{B} \boldsymbol{M} (L) = \boldsymbol{e}_{i,t}.   
\end{align}
\end{example}

\begin{example}
We briefly discuss the econometric specification proposed by \cite{huber2023bayesian} which is suitable for dealing with cross-country heterogeneity in panel VARs using finite mixture models. Specifically, a global econometric identification and estimation model (see, also \cite{feldkircher2020factor}) is 
\begin{align}
\boldsymbol{y}_{it} = \boldsymbol{\beta}_i + \boldsymbol{A}_{i1} \boldsymbol{y}_{it-1} + ... +  \boldsymbol{A}_{ip} \boldsymbol{y}_{it-p} + \boldsymbol{B}_{i1} \boldsymbol{y}_{-it-1}  + ... + \boldsymbol{B}_{ip} \boldsymbol{y}_{-it-p} + \boldsymbol{\varepsilon}_{it}.
\end{align}
Under the Gaussian "state-of-the-world" one can impose Gaussian assumptions regarding the error covariances across countries. Thus, we can stack the country-specific errors $\boldsymbol{\varepsilon}_{it}$ in a $K-$dimensional vector $\boldsymbol{\varepsilon}_{t} = \left( \boldsymbol{\varepsilon}_{1t}^{\prime},..., \boldsymbol{\varepsilon}_{Nt}^{\prime} \right)^{\prime}$ and assume that  $\boldsymbol{\varepsilon}_t \sim \mathcal{N} \big( \boldsymbol{0}, \boldsymbol{\Sigma}_t \big)$, where $\boldsymbol{\Sigma}_t$ is a $( K \times K )$ dimensional covariance matrix. Furthermore, according to \cite{huber2023bayesian}, there are three important dimensions of model uncertainty which have been identified in the literature as summarized below:
\begin{itemize}

\item The first source of model uncertainty is concerned with modelling contemporaneous relations across the shocks in the system \textcolor{blue}{(static interdependancies)}.

\item The second source of model uncertainty focuses on the question whether coefficients associated with lagged domestic variables are homogeneous across countries \textcolor{blue}{(homogeneity restrictions)}. In particular, if such domestic coefficients are similar, then the so-called homogeneity restrictions might be imposed, effectively introducing the same set of coefficients for various countries and thus reducing the number of free parameters. 

\item The third source of model uncertainty deals with the question whether to allow for lagged dependencies between countries \textcolor{blue}{(dynamic interdependencies)}. Therefore, if we are interested to capture the international effects of climate shocks in globalized markets then capturing cross-country interdependencies on the macroeconomic level is crucial. 

\item  From the empirical contribution perspective one can focus on how climate shocks impact country-specific macroeconomic fundamentals. Then the empirical analysis of the novel framework proposed by   demonstrate that climate shocks are have a substantial impact on short-term interest rates and inflation, which is the primary conventional tool and target variable of central banks, and to a lesser degree on output and exchange rates (see, \cite{huber2023bayesian}). 

\end{itemize}
\end{example}

\begin{example}[Macro-finance Application] A causal effect question of interest from the macro-finance literature is the linkage between financial crises and economic contraction (see, \cite{huber2018disentangling}, \cite{romer2017new} and \cite{baron2021banking}). In particular, \cite{mei2023implicit} employ a local projection panel data estimator with fixed effects to study the link between financial distress and economic conditions deterioration. Thus, an econometric issue of interest is to overcome the so-called Nickell bias when considering classical FE estimators. Consider an $h-$period ahead panel LP model
\begin{align}
y_{i,t+h} = \mu_i^{(h)} + \beta^{(h)} x_{i,t} + \varepsilon_{i,t+h}^{(h)}, \ \ \text{for} \ t = 1,...., n-h, \ \ h = 0,1,...,H. 
\end{align}

\end{example}

\newpage

\section{Advanced Topics}

\subsection{Testing for Unstable Root in Structural ECMs}

We follow the framework proposed by \cite{boswijk1994testing}. Consider the single-equation error correction model of a time series $\left\{ y_t \right\}$ conditional upon the $( k \times 1 )$ vector time series $\left\{ z_t \right\}$, for $t \in \left\{ 1,... T \right\}$ as
\begin{align}
\Delta y_t 
= 
\beta_0^{\prime} \Delta z_t + \lambda \big( y_{t-1} - \theta^{\prime} \big) + \sum_{j=1}^{p-1} \big( \gamma_j \Delta y_{t-j} - \beta_j^{\prime} \Delta z_{t-j}  \big) + v_t,  
\end{align}
where $\left\{ v_t \right\}$ is an innovation process relative to $\left\{ z_t, y_{t-j}, z_{t-j}, j = 1,2,... \right\}$ with positive variance $\omega^2$. Notice that $\theta$ and $\beta_j$ are $( k \times 1)$ parameter vectors such that $j \in \left\{ 1,..., p-1 \right\}$, such that $\theta$ defines the long-run equilibrium relation $y = \theta^{\prime} z$, the deviations from which lead to a correction of $y_t$ by a proportion of $\lambda$, the adjustment or error correction coefficient. 

The conditional model is said to be stable if all roots of the characteristic equation
\begin{align}
\varphi ( \zeta ) = ( 1 - \zeta ) \left( 1 - \sum_{j=1}^{p-1} \gamma_j \zeta^j      \right) - \lambda \zeta = 0,   
\end{align}
are outside the unit circle. In other words, stability of the model implies that the disequilibrium error $\left( y_t - \theta^{\prime} z_t \right)$ is a stationary process, even though $z_t$ and $y_t$ are integrated of order one and hence nonstationary. Specifically, if the model is stable, then $x_t = (  y_t, z_t^{\prime} )^{\prime}$ is cointegrated with cointegrating vector $( 1, - \theta )^{\prime}$. Thus, the purpose of this exercise is to develop a class of tests for the null hypothesis that the characteristic equation has a unit root, so that the model is unstable, against the alternative hypothesis of stability. Furthermore, the single-equation conditional model can be seen as a special case if a structural error correction model. This is a system of Error Correction Equations for a $( g \times 1 )$ vector of time series $\left\{ y_t \right\}$ conditional upon $\left\{ z_t \right\}$ such that (see, \cite{boswijk1994testing}):
\begin{align}
\boldsymbol{\Gamma}_0 \Delta \boldsymbol{y}_t = \boldsymbol{B}_0 \Delta \boldsymbol{z}_t + \boldsymbol{\Lambda} \big( \boldsymbol{\Gamma} \boldsymbol{y}_{t-1} + \boldsymbol{B} \boldsymbol{z}_{t-1} \big) + \sum_{ j = 1}^{p-1} \big( \boldsymbol{\Gamma}_j \Delta \boldsymbol{y}_{t-j}  + \boldsymbol{B}_j \Delta \boldsymbol{z}_{t-j} \big)  + \boldsymbol{v}_t, 
\end{align}
where $\left\{ \boldsymbol{v}_t \right\}$ is an innovation process with a positive-definite covariance matrix $\boldsymbol{\Omega}$ such that the above expression corresponds to a parametrization of a conditional model of $y_t$ given $z_t$ (if $\boldsymbol{\Gamma}_0 \neq \boldsymbol{I}_g)$. Next we consider the identification of these matrix parameters. In particular, the above model implies $g$ cointegrating relationships $\boldsymbol{\Gamma} \boldsymbol{y} + \boldsymbol{B} \boldsymbol{z} = \boldsymbol{0}$, provided that it is stable, such that the characteristic equation is expressed as below
\begin{align}
\varphi ( \zeta ) = \left|  (1 - \zeta) \left( \boldsymbol{\Gamma}_0 - \sum_{j=1}^{p-1} \boldsymbol{\Gamma}_j \zeta^j \right)  - \boldsymbol{\Lambda} \boldsymbol{\Gamma} \zeta \right| = 0. 
\end{align}
has all roots outside the unit circle.

\newpage 

In other words, unless the parameters are restricted in some way, the structural model is not identified. Specifically, we identify the long-run relations by imposing restrictions of the usual form such that 
\begin{align}
\boldsymbol{\Gamma}_{ii} = 1 \ \ \ \boldsymbol{R}_i [ \boldsymbol{\Gamma}_i \  \  \boldsymbol{B}_i  ] =  \boldsymbol{0}, \ \ \ i \in \left\{ 1,..., g \right\},  
\end{align}
where $\boldsymbol{\Gamma}_i$ and $\boldsymbol{B}_i$ denote the $i-$th row of $\boldsymbol{\Gamma}$ and $\boldsymbol{B}$, respectively, and where $\boldsymbol{R}_i$ is a known matrix of approximate order. Thus, the rank condition for identification of the $i-$th long-run relation 
\begin{align}
\mathsf{rank} \big( \boldsymbol{R}_i \big[ \boldsymbol{\Gamma} \  \ \boldsymbol{B} \big]^{\prime} \big) = (g - 1)    
\end{align}
Then, the remaining parameters are identified by the normalization $\boldsymbol{\Gamma}_{0,ii} = 1$, and the restriction that $\boldsymbol{\Lambda} = \mathsf{diag} \big( \lambda_1,..., \lambda_g   \big)$, that is, where $\boldsymbol{\Lambda}$ is a diagonal matrix. In practice, this means that only the disequilibrium error of the $i-$th long-run relation appears in the $i-$th structural error correction equation which means that the $i-$th equation is (over)-identified if the $i-$th long-run equation is.  

\begin{remark}
The reason for restricting the error correction matrix $\boldsymbol{\Lambda}$ to be diagonal is twofold. Firstly, it allows for an interpretation of these separate equations as representing economic behaviour of a group of agents, whose target consists in a particular long-run relationship, such as a money demand relation or a consumption function. Furthermore, notice that the possibility that all endogenous variables are affected by each disequilibrium error is not excluded. However, this is considered to be a property of the reduced form of the system rather than the structural form. Moreover, imposing a symmetric matrix $\boldsymbol{\Lambda} = \mathsf{diag} \big( \lambda_1,..., \lambda_g   \big)$, facilitates the implementation and interpretation of a test for instability. Notice that we assume that the number of stable relationships is equal to the number of cointegrating relationships.   
\end{remark}
Consequently, under this null hypothesis, there is no error correction in the $i-$th equation, which suggests that the $i-$th  row of the system $\boldsymbol{\Gamma} y + \boldsymbol{B} \boldsymbol{z} = 0$ is not a cointegrating relationship. Specifically, let $z_t$ be generated by the following expression: 
\begin{align}
\Delta \boldsymbol{z}_t 
= 
\alpha \big( \boldsymbol{\Gamma} \boldsymbol{y}_{t-1}  + \boldsymbol{B} \boldsymbol{z}_{t-1} \big) + \sum_{j=1}^{p-1} \big( \boldsymbol{A}_{1j} \Delta \boldsymbol{y}_{t-j}  \boldsymbol{A}_{2j} \Delta \boldsymbol{z}_{t-j} \big) + \boldsymbol{\varepsilon}_t,     
\end{align}
Assumption 1, can be interpreted as a form of exogeneity condition, because it states that the cointegration properties of the conditional model carry over to the full VAR system. Understanding well these concepts are essential especially if one is interested to apply advanced identification techniques in SVARs such as the Markov switching regime (see, \cite{lanne2002threshold} and \cite{lanne2010structural}) as well as identification in Gaussian Mixture Vector Autoregressive Models (see,  \cite{kalliovirta2016gaussian} and \cite{meitz2023mixture}). In particular, \cite{kalliovirta2016gaussian} develop a framework for Gaussian mixture vector autoregression which is a nonlinear vector autoregression and is designed for analyzing time series that exhibit regime-switching dynamics\footnote{The property of long memory and regime switching is examined in the study of \cite{diebold2001long} while a regime switching panel data regression model is proposed by \cite{cheng2019regime}. For a forecasting application see \cite{nyberg2018forecasting}.} (see, \cite{kalliovirta2015gaussian} and \cite{virolainen2020structural}).  Next, we consider another example motivated from the structural break literature specifically in the context of cointegrated models (see, also \cite{hansen2002testing}).

\newpage

\subsubsection{Testing Cointegration Rank when some Cointegrating Directions are Changing}

Consider the following VECM of order $p$ with a structural break such that 
\begin{align*}
\Delta \boldsymbol{X}_t = \left( \boldsymbol{\alpha}_0 \boldsymbol{\beta}_0^{\prime} \boldsymbol{X}_{t-1} + \boldsymbol{\Phi}_0 \boldsymbol{D}_t \right) \boldsymbol{1} \left\{ t \leq k_0 \right\} + \left( \boldsymbol{\alpha}_1 \boldsymbol{\beta}_1^{\prime} \left( \boldsymbol{X}_{t-1} - \boldsymbol{X}_{k_0}  \right) + \boldsymbol{\Phi}_1 \boldsymbol{D}_t \right) \boldsymbol{1} \left\{ t > k_0 \right\} 
+ \sum_{j=1}^p \boldsymbol{\Gamma}_j \Delta \boldsymbol{X}_{t-j} + \boldsymbol{\varepsilon}_t,
\end{align*}
Then, the date of the break is characterized by the fraction $\lambda_0$ of the sample size $n$ such that $k_0 = \floor{\lambda_0}$, where $\lambda_0 \in [ \underline{\lambda}, \bar{\lambda} ]$ and $0 < \underline{\lambda} < \bar{\lambda}  < 1$. Moreover, the VECM is assumed to satisfy the stability condition over the two regimes as given by the Assumption below. 
\begin{assumption}
The roots of the characteristic polynomials of the $\mathsf{VECM}(1)$ 
\begin{align}
\mathsf{det} \left[  ( 1 - z ) \boldsymbol{I}_n - \alpha_i \beta_i^{\prime} z - \sum_{j=1}^p \Gamma_j ( 1 - z ) z^j \right], \ \ i = 0,1,   
\end{align}
satisfy $z = 1$ or $| z | > 1$.
\end{assumption}

\begin{assumption}
Suppose that $\left\{ \varepsilon_t \right\}$ is a vector martingale difference sequence with respect to $\mathcal{F}_{t-1}$ such that the variance-covariance matrix $\boldsymbol{\Sigma}_{\varepsilon} = \mathbb{E} [ \varepsilon_t \varepsilon_t^{\prime} | \mathcal{F}_{t-1} ]$, for $t = 1,2,...$.    
\end{assumption}

\begin{remark}
Following \cite{andrade2005testing}, we emphasize that the assumption above does not allow for a shift in the innovation covariance matrix $\Sigma_{\varepsilon}$ across the two regimes. This would split the model into two separate ones if the short-term coefficient matrices were also shifting with the break and would thus require a separate analysis over the two sub-samples. Similarly we do not allow for a different number of cointegration relationships across the two two regimes. In particular, we consider the following cases: 
\begin{itemize}

\item[Case 1.] The break (structural change) does not affect the loading factors. In this case their estimation requires an identification constraint. In other words, $\alpha_0$ and $\alpha_1$ span the same vector space and thus a natural identification scheme for the two regimes is given by $\alpha^{\prime} \Sigma_{\varepsilon} \alpha = I_r$.  

\item[Case 2.] The break affects the cointegration vectors and the loading factors. This implies that the spaces spanned by $\alpha_0$ and $\alpha_1$ differ across the two regimes.  
    
\end{itemize}
\end{remark}

\begin{theorem}[\cite{andrade2005testing}]
Consider the model defined by the first expression with $\alpha_0 = \alpha_1 = \alpha$ and Assumption 2. Under $H_0$, when $T \to \infty$ it holds that 
\begin{align*}
\xi_T ( \lambda_0 ) &\overset{d}{\to} \frac{1}{\lambda_0} \mathsf{trace} \left\{ \left( \int_0^{ \lambda_0 } d \boldsymbol{W}_0 \boldsymbol{G}_0^{\prime}  \right) \left( \int_0^{ \lambda_0 } \boldsymbol{G}_0 \boldsymbol{G}_0^{\prime} du \right)^{-1}  \left( \int_0^{ \lambda_0 } \boldsymbol{G}_0 d\boldsymbol{W}_0^{\prime}  \right) \right\}    
\\
&+
\frac{1}{ \left( 1 - \lambda_0 \right)} \mathsf{trace} \left\{ \left( \int_{ \lambda_0 }^1 d\boldsymbol{W}_0 \boldsymbol{G}_1^{\prime}  \right) \left( \int_{ \lambda_0 }^1 \boldsymbol{G}_1 \boldsymbol{G}_1^{\prime} du \right)^{-1}  \left( \int_{ \lambda_0 }^1 \boldsymbol{G}_1 d\boldsymbol{W}_0^{\prime}  \right) \right\}    
\end{align*}
\end{theorem}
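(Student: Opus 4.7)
The plan is to reduce the statistic $\xi_T(\lambda_0)$ to a functional of partial sums of the innovations computed separately on the two regimes $\{1,\dots,k_0\}$ and $\{k_0+1,\dots,T\}$, and then apply a functional central limit theorem together with the continuous mapping theorem. First I would concentrate out the short-run dynamics $\boldsymbol{\Gamma}_j \Delta \boldsymbol{X}_{t-j}$ and the deterministic component $\boldsymbol{\Phi}_i \boldsymbol{D}_t$ by the usual Frisch--Waugh projection on each subsample, so that under $H_0$ and the assumption $\boldsymbol{\alpha}_0=\boldsymbol{\alpha}_1=\boldsymbol{\alpha}$ the statistic $\xi_T(\lambda_0)$ can be written (up to $o_p(1)$ terms) as the sum of two subsample trace statistics, each of Johansen-LR form
\begin{equation*}
\xi_T(\lambda_0) \;=\; \frac{1}{\lambda_0}\,\mathrm{tr}\!\left(S_{01}^{(0)}\bigl(S_{11}^{(0)}\bigr)^{-1}S_{10}^{(0)}\right) \;+\; \frac{1}{1-\lambda_0}\,\mathrm{tr}\!\left(S_{01}^{(1)}\bigl(S_{11}^{(1)}\bigr)^{-1}S_{10}^{(1)}\right) \;+\; o_p(1),
\end{equation*}
where $S_{ij}^{(m)}$, $m=0,1$, are the sample second-moment matrices of the residualized differenced data and the residualized lagged levels on the two regimes.

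Next I would invoke Assumptions 1--2 to obtain the common-trends representation of $\boldsymbol{X}_t$, which expresses the non-cointegrating directions as a random walk in $\boldsymbol{\varepsilon}_t$ plus an asymptotically negligible stationary term. A standard FCLT for martingale difference sequences then yields, on the normalized grid $u=t/T$, the joint weak convergence
\begin{equation*}
\bigl(T^{-1/2}\boldsymbol{\varepsilon}_{\lfloor uT\rfloor},\; T^{-1/2}\boldsymbol{X}_{\lfloor uT\rfloor}\bigr) \;\Rightarrow\; \bigl(\boldsymbol{W}_0(u),\,\boldsymbol{G}(u)\bigr),
\end{equation*}
where $\boldsymbol{W}_0$ is a standardized Brownian motion (rescaled by $\boldsymbol{\alpha}'\boldsymbol{\Sigma}_\varepsilon\boldsymbol{\alpha}=I_r$ from the identification scheme) and $\boldsymbol{G}$ is the corresponding limit of the suitably rotated non-stationary directions, which on the two subsample windows produces $\boldsymbol{G}_0$ on $[0,\lambda_0]$ and $\boldsymbol{G}_1$ on $[\lambda_0,1]$ (the latter being re-anchored at the break, reflecting the $\boldsymbol{X}_{t-1}-\boldsymbol{X}_{k_0}$ centering in the model). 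Standard convergence-of-moments arguments then give
\begin{equation*}
\tfrac{1}{T}\,S_{11}^{(0)} \Rightarrow \int_0^{\lambda_0}\!\boldsymbol{G}_0\boldsymbol{G}_0'du, \qquad \tfrac{1}{\sqrt{T}}\,S_{01}^{(0)} \Rightarrow \int_0^{\lambda_0}\!d\boldsymbol{W}_0\,\boldsymbol{G}_0',
\end{equation*}
and the analogous relations with $\boldsymbol{G}_1$ on $[\lambda_0,1]$ for the second regime.

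Having these two blockwise limits, I would apply the continuous mapping theorem to the $\mathrm{tr}(\cdot)$ functional inside each summand, using that the limiting Gram matrices $\int \boldsymbol{G}_i\boldsymbol{G}_i'$ are a.s. invertible under the rank condition. The factors $1/\lambda_0$ and $1/(1-\lambda_0)$ come from the rescaling of the subsample sizes $k_0=\lfloor\lambda_0 T\rfloor$ and $T-k_0$ in the normalization of $S_{00}^{(i)}$ which enters the Johansen quadratic form; once both summands are passed jointly to the limit, we obtain the expression claimed.

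The main obstacle, I expect, will be twofold. First, justifying that the identification normalization $\boldsymbol{\alpha}'\boldsymbol{\Sigma}_\varepsilon\boldsymbol{\alpha}=I_r$ couples the two regimes so that the \emph{same} Brownian motion $\boldsymbol{W}_0$ drives both trace limits (rather than two independent Brownian motions on the sub-intervals). This requires care because the partial-sum process is a single $\boldsymbol{W}_0$ on $[0,1]$ restricted to the two windows, which must be tracked through the profiling step without picking up a rotation or rescaling from the subsample-specific least-squares residualization. Second, controlling the effect of the short-run terms $\boldsymbol{\Gamma}_j\Delta\boldsymbol{X}_{t-j}$ and the deterministic $\boldsymbol{D}_t$ on either side of the break $k_0$ -- the usual argument that projecting them out contributes only $o_p(1)$ needs to be redone uniformly over $\lambda_0\in[\underline\lambda,\bar\lambda]$, which calls for a maximal inequality over the break-date grid. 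Once these two technical pieces are in place, the rest is a routine application of the CMT.
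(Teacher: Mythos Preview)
The paper does not supply its own proof of this theorem; it merely quotes the result from \cite{andrade2005testing} and moves on. There is therefore nothing in the paper to compare your proposal against.

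That said, your outline is the standard route for results of this type (Johansen-style trace statistics computed on subsamples, Frisch--Waugh concentration of the short-run and deterministic terms, Granger representation to isolate the $I(1)$ directions, FCLT plus CMT on each window), and it is essentially what one finds in \cite{andrade2005testing}. One small correction: your second ``main obstacle'' is not actually an obstacle for this theorem. The statement is for a \emph{fixed, known} break fraction $\lambda_0$, so you do not need any uniform-in-$\lambda$ control or maximal inequality over the break-date grid; the projection-error terms only have to be shown $o_p(1)$ pointwise at $\lambda_0$. Uniformity would matter for a sup-type statistic over unknown break dates, which is a separate result. Your first obstacle, by contrast, is genuine: the identification $\boldsymbol{\alpha}'\boldsymbol{\Sigma}_\varepsilon\boldsymbol{\alpha}=I_r$ (Case~1 in the remark preceding the theorem) is exactly what ties the two subsample limits to restrictions of a \emph{single} Brownian motion $\boldsymbol{W}_0$ on $[0,1]$, and tracking that through the subsample reduced-rank regressions is where the care is needed.
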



\newpage 

\subsection{Empirical Likelihood Estimation Approach}

\begin{example}
Consider the following constant coefficient autoregressive model with time-varying variances as below: 
\begin{align}
Y_t &= \beta_0 + \beta_1 Y_{t-1} + \beta_2 Y_{t-2}  + ... + \beta_p Y_{t-p} + u_t,
\\
Y_t &= \boldsymbol{X}^{\top}_{t-1} \boldsymbol{\beta}_o + u_t, \ u_t = \sigma_t \varepsilon_t,  \ t = 1,...,T,
\end{align}
where the vector of covariates is denoted by $\boldsymbol{X}_t = \big( 1, Y_{t-1},..., Y_{t-p}  \big)^{\top} \in \mathbb{R}^{ p + 1}$ and the true model parameter of interest is denoted by $\boldsymbol{\beta}_o = \big( \beta_0, \beta_1,..., \beta_p \big)^{\top} \in \mathbb{R}^{ p + 1}$, where the lag order is finite and known.  
\end{example}

Based on the aforementioned assumptions the estimation of the unknown parameter vector $\boldsymbol{\beta}_o$ based on the OLS estimator $\hat{\boldsymbol{\beta}}$ is given by 
\begin{align}
\sqrt{T} \left( \boldsymbol{\beta} - \boldsymbol{\beta}_o \right) 
= 
\left( \frac{1}{T} \sum_{t=1}^T \boldsymbol{X}_{t-1}^{\top}  \boldsymbol{X}_{t-1} \right)^{-1} \left( \frac{1}{T} \sum_{t=1}^T \boldsymbol{X}_{t-1}^{\top}  \varepsilon_t \right)  \overset{d}{\to} \mathcal{N} ( \boldsymbol{0}, \boldsymbol{\Lambda} ),
\end{align}
where $\boldsymbol{\Lambda} = \boldsymbol{\Omega}_1^{-1} \boldsymbol{\Omega}_2 \boldsymbol{\Omega}_1^{-1}$, are defined as $(p+1) \times (p+1)$ matrices.

To construct an empirical likelihood function, the estimation equations are defined as below: 
\begin{align}
\boldsymbol{W}_t ( \boldsymbol{b} )  = \boldsymbol{X}_{t-1} \cdot \left(  Y_t - \boldsymbol{X}_{t-1}^{\top} \boldsymbol{b} \right),  
\end{align}
for a generic parameter $\boldsymbol{b} \in \mathbb{R}^{p+1}$. By using the Lagrange multipliers method, we have that $\hat{\boldsymbol{\lambda}} = \hat{\boldsymbol{\lambda}} ( \boldsymbol{b} ) \in \mathbb{R}^{p+1}$ is the solution of the following set of equations: 
\begin{align}
\frac{1}{T} \sum_{t=1}^T \frac{ \boldsymbol{W}_t ( \boldsymbol{b} )   }{ 1 + \hat{\boldsymbol{\lambda}}^{\prime} \cdot \boldsymbol{W}_t ( \boldsymbol{b} )   } = \boldsymbol{0}.    
\end{align}
Then, the corresponding empirical log-likelihood ratio is given by
\begin{align}
\ell ( \boldsymbol{b} ) = 2 \sum_{t = 1}^T \mathsf{log} \big[ 1 + \hat{\boldsymbol{\lambda}}^{\prime} \cdot \boldsymbol{W}_t ( \boldsymbol{b} )   \big]    
\end{align}
and it holds that $\ell ( \boldsymbol{\beta}_0 )  \overset{d}{\to} \chi_p^2$, as $T \to \infty$.

\newpage

\subsubsection{High Dimensional Generalized Empirical Likelihood Estimation}

In this section, we discuss relevant aspects to the empirical likelihood estimation for high dimensional dependent data, which is applicable to time series regression models (see, \cite{chang2015high}). Further frameworks related to identification and estimation of high-dimensional time series with SVAR models are  proposed by \cite{krampe2023structural}, \cite{zhang2023statistical} and \cite{adamek2022local}. In particular, denote with $\boldsymbol{\theta} = ( \theta_1,..., \theta_p )^{\prime}$ be a $p-$dimensional parameter taking values in a parameter space $\Theta$. Consider a sequence of $r-$dimensional estimating equation such that
\begin{align}
\mathsf{g} ( X_t, \boldsymbol{\theta} ) = \big( \mathsf{g}_1 ( X_t, \boldsymbol{\theta} ),....,  \mathsf{g}_r ( X_t, \boldsymbol{\theta} )  \big)    
\end{align}
for some $r \geq p$, Then, the model information regarding the data and the data parameter is summarized by moment restrictions below:
\begin{align}
\mathbb{E} \big[  \mathsf{g} ( X_t, \boldsymbol{\theta}_0 )  \big] = \boldsymbol{0}.
\end{align}
where $\theta_0 \in \Theta$ is the true parameter. Furthermore, in order to preserve the dependence structure among the underlying data, we employ the blocking technique. Let $M$ and $L$ be two integers denoting the block length and separation between adjacent blocks, respectively. Then, the total number of blocks is 
\begin{align}
Q = \floor{ \frac{ (n - M )}{L} }  + 1   
\end{align}
Then, the EL estimator $\boldsymbol{\theta}_o$ is $\widehat{\boldsymbol{\theta}}_{EL} = \mathsf{arg max}_{ \theta in \Theta } \ \mathsf{log} \ \mathcal{L} ( \boldsymbol{\theta} )$. Consequently, the maximization problem can be carried out more efficiently by solving the corresponding dual problem, which implies that $\widehat{\boldsymbol{\theta}}_{EL}$ can be obtained as below: 
\begin{align}
\widehat{\boldsymbol{\theta}}_{EL} = \underset{ \boldsymbol{\theta} \in \Theta   }{ \mathsf{arg \ min} } \ \underset{ \lambda \in \widehat{\Lambda}_n ( \boldsymbol{\theta} ) }{ \mathsf{max} } \ \sum_{q = 1}^Q \mathsf{log} \big[ 1 + \boldsymbol{\lambda}^{\top} \phi_M (B_q, \boldsymbol{\theta} ) \big],  
\end{align}
\begin{align}
\widehat{\Lambda}_n ( \boldsymbol{\theta} ) := \left\{ \lambda \in \mathbb{R}^r : \lambda^{\top} \cdot \phi_M (B_q, \boldsymbol{\theta} ) \in \mathcal{N}, q = 1,..., Q \right\}    
\end{align}
for any $\boldsymbol{\theta} \in Q$ and $\mathcal{N}$ an open interval containing zero. 

\begin{example}[Time Series Regression]
Consider a structural model $s-$dimensional time series $Y_t$ which involve unknown parameter $\boldsymbol{\theta} \in \mathbb{R}^p$ of interest as well as time innovations with unknown distributional form. Thus, suppose we have that
\begin{align}
h \big( Y_t,..., Y_{t-m}; \boldsymbol{\theta}_0 \big) = \boldsymbol{\varepsilon}_t \in \mathbb{R}^r    
\end{align}
where $m \geq 1$ is some constant. In particular, for conventional vector autoregressive models such that 
\begin{align}
Y_t = \boldsymbol{A}_1 Y_{t-1} + ... + \boldsymbol{A}_m Y_{t-m} + \eta_t,   
\end{align}

\newpage

Notice that these set of model parameters $\left\{ \boldsymbol{A}_1 ,..., \boldsymbol{A}_m \right\}$ correspond to coefficient matrices and are usually estimated based on some maximum likelihood approach (usually the MLS under Gaussianity) and according to the distributional assumptions of the model. Moreover, suppose that $\eta_t$ is the white noise series such that 
\begin{align}
h \big( Y_t,..., Y_{t-m}; \boldsymbol{\theta}_0 \big) 
= 
\big( Y_t - \boldsymbol{A}_1 Y_{t-1} - ... - \boldsymbol{A}_m Y_{t-m}  \big) \otimes \big( Y_t^{\top},...., Y_{t-m}^{\top} \big)^{\top}.
\end{align}
High dimensional time series analysis, requires to assume that the dimensionality of $Y_t$ is large in relation to sample size, that is, $s \to \infty$ as $n \to \infty$. Within a high-dimensional environment, the number of estimating equation and unknown parameters are both $s^2 m$. On the other hand, if we replace $\big( Y_t^{\top},...., Y_{t-m}^{\top} \big)^{\top}$ by $\big( Y_t^{\top},...., Y_{t-m - \ell}^{\top} \big)^{\top}$ for some fixed $\ell \geq 1$, then the model will be over-identified. This phenomenon of over-parametrization in such models is well-known to the literature. Thus, to implement a consistent estimation approach, the sparsity assumption allows to employ a penalized estimation methodology. 
\end{example}

\subsection{High Dimensional VARs with Common factors}

We follow the framework proposed by \cite{miao2023high} who study high-dimensional vector autoregressions (VARs) augmented with common factors that allow for strong cross-sectional dependence. This approach allows to incorporate in a unified framework a convinient mechanism for accommodating the interconnectedness and temporal co-variability that are often present in large dimensional systems.  

Consider the $N-$dimensional vector-valued time series $\left\{ Y_t \right\} = \left\{ \left( y_{1t},..., y_{NT} \right)^{\prime} \right\}$, the high-dimensional VAR model of order $p$ with CFs given by 
\begin{align}
\boldsymbol{Y}_t = \sum_{j=1}^p \boldsymbol{A}_j^0 \boldsymbol{Y}_{t-j} +  \boldsymbol{\Lambda}^0 \boldsymbol{f}_t^0 + \boldsymbol{u}_t, \ \ \ t = 1,...,T,   
\end{align}
where $\boldsymbol{A}_1^0,..., \boldsymbol{A}_p^0$ are the $( N \times N )$ transition matrices and $\boldsymbol{u}_t$ is an $N-$dimensional vector of unobserved idiosyncratic errors. Moreover, the analytical framework allows for both the number of cross-sectional units $N$ and the number of time periods $T$ to pass to infinity. The lag length is also allowed to (slowly) grow to infinity with $(N,T)$. Estimation then is a natural high-dimensional problem.  

Then, the $N-$dimensional VAR$(p)$ process $\left\{ Y_t \right\}$ can be rewritten in a companion form as an $Np-$dimensional VAR$(1)$ process with common factors such that: 
\begin{align}
\underbrace{
\begin{bmatrix}
Y_t
\\
Y_{t-1}
\\
\vdots
\\
Y_{t-p-1}
\end{bmatrix} 
}_{ \boldsymbol{X}_{t+1}  }
= 
\underbrace{
\begin{bmatrix}
A_1^0 & A_2^0 & \hdots &  A_{p-1}^0  &  A_p^0
\\
I_N & \boldsymbol{0} & \hdots & \boldsymbol{0} & \boldsymbol{0}
\\
\vdots & \vdots & \ddots & \vdots & \vdots
\\
\boldsymbol{0} & \boldsymbol{0}  & \hdots & \boldsymbol{0} & \boldsymbol{0} 
\end{bmatrix}
}_{ \boldsymbol{\Phi} }
\underbrace{
\begin{bmatrix}
Y_{t-1}
\\
Y_{t-2}
\\
\vdots
\\
Y_{t-p}
\end{bmatrix}}_{ \boldsymbol{X}_t } 
+
\underbrace{
\begin{bmatrix}
\boldsymbol{\Lambda}^0 \boldsymbol{f}_t^0
\\
\boldsymbol{0}
\\
\vdots
\\
\boldsymbol{0}
\end{bmatrix} 
}_{ \boldsymbol{F}_t }
+ 
\underbrace{
\begin{bmatrix}
\boldsymbol{u}_t
\\
\boldsymbol{0}
\\
\vdots
\\
\boldsymbol{0}
\end{bmatrix}.
}_{ \boldsymbol{U}_t }
\end{align}

\newpage

As a result, the reverse characteristic polynomial of $Y_t$ can be written as below: 
\begin{align}
\mathcal{A}(z) \equiv I_N - \sum_{j=1}^p A_j^0 z^p.     
\end{align}
In the low-dimensional framework, the process is stationary if $\mathcal{A}(z)$ has no roots in and on the complex unit circle, or equivalently the largest modules of the eigenvalues of $\boldsymbol{\Phi}$ is less than 1. Therefore, to achieve identification, we shall study the Gram or signal matrix $\boldsymbol{S}_X = \boldsymbol{X}^{\prime} \boldsymbol{X}/ T$ and its population counterpart $\Sigma_X = \mathbb{E} \left( \boldsymbol{X}^{\prime}_t \boldsymbol{X}_t \right)$. In other words, one can study the deviation bounds for the Gram matrix, under the Gaussianity assumption and boundedness of the spectral density function. 

In order to ensure that the matrix $\Sigma_X$ is well-behaved, we write $\boldsymbol{X}_{t+1}$ as a moving average process of infinite order MA$( \infty )$ such that 
\begin{align}
X_{t+1} \equiv \sum_{j=0}^{\infty} \boldsymbol{\Phi}^j \big(  \boldsymbol{F}_{t-j} +  \boldsymbol{U}_{t-j}  \big)
= 
\sum_{j=0}^{\infty} \boldsymbol{\Phi}^j \boldsymbol{F}_{t-j}  +  \sum_{j=0}^{\infty} \boldsymbol{\Phi}^j \boldsymbol{U}_{t-j}  
\end{align}

\textbf{Eigenvalue Analysis:}

\begin{itemize}

\item First, consider  $X_{t+1}^{(f)} = \sum_{j=0}^{\infty}  \boldsymbol{\Phi}^j \boldsymbol{F}_{t-j}$, the component due to the common factors. The covariance matrix of $\boldsymbol{F}_{t}$ is a high-dimensional matrix with rank $R^0$ and explosive non-zero eigenvalues. In other words, even if the largest modules of the eigenvalues of $\boldsymbol{\Phi}$ is smaller than 1, the variances of the entries of $X_{t+1}^{(f)}$ are not assumed to be uniformly bounded. 

\item Consider $y_{it}^{(f)}$, which is the $i-$th entry of $X_{t+1}^{(f)}$. Let $e_{j,M}$ be the $j-$th column of $I_M$. Noting that $y_{it}^{(f)} = \big( e_{1,p} \otimes e_{i,N} \big)^{\prime} X_{t+1}^{(f)}$, we can write $y_{it}^{(f)}$ as the MA$(\infty)$ process given below: 
\begin{align}
y_{it}^{(f)} = \sum_{j=0}^{\infty} \big( e_{1,p} \otimes e_{i,N} \big)^{\prime} \Phi^j  \big( e_{1,p} \otimes e_{i,N} \big) f_{t-j}^0 \equiv \sum_{j=0}^{\infty} \alpha_{iN}^{(f)} (j) f_{t-j}^0,  
\end{align}
in which $f_t^0$ are allowed to be serially correlated.     
\end{itemize}

\begin{assumption}[\cite{miao2023high}]
Consider that the following conditions hold: 
\begin{itemize}

\item[\textit{(i)}.] Let $u_t = C^{(u)} \varepsilon_t^{(u)}$, where $\varepsilon_t^{(u)} = \big( \varepsilon_{1,t}^{(u)},..., \varepsilon_{m,t}^{(u)} \big)^{\prime}$ such that $\varepsilon_{i,t}^{(u)}$ are $\textit{i.i.d}$ random variables across $(i,t)$ with mean zero and variance 1. 

\item[\textit{(ii)}.] $\left\{ f_t^0 \right\}$ follows a strictly stationary linear process given as below:
\begin{align}
f_t^0 - \mu_f = \sum_{j=0}^{\infty} C_j^{(f)} \epsilon^{(f)}_{t-j},    
\end{align}
    
\end{itemize}

\end{assumption}

\newpage

\subsection{Fast Algorithm for Detection of Breaks in Large VAR models}

Currently in the literature there are two approaches in terms of the identification of break-points. Firstly, there is a so-called "top down" procedure, in the sense that one tests all the data to determine if there is at least one change-point and iterates the procedure in the intervals immediately to the "left" and "right" of the most recently detected change-point. However, the particular methodology has certain weaknesses compared to a number of other thresholding procedures. A second approach is the so-called "bottom up" procedure motivated by the observation in the presence of multiple change-points or to mitigate the effects of inadequately controlled drift in the "baseline" mean value, it may be useful to compare a candidate change-point at $j$ to an appropriate "local" background $(j,k)$, where $i < j < k$. Similar approaches are the Wild Binary Segmentation (WBS) methodology which uses a random set of possible backgrounds and an apparently determined threshold   (see, \cite{fang2020segmentation}). 

\subsubsection{Illustrative Examples}

\begin{example}
Suppose we are interested in testing whether there is a change in the covariance matrix of the stationary multivariate time series $\left\{ \boldsymbol{Y}_{n1},..., \boldsymbol{Y}_{n1} \right\}$. Then, this implies that 
\begin{align}
\boldsymbol{Y}_{ni} := \boldsymbol{Y}_{n1}^{(0)} \boldsymbol{1} \left\{ i \leq k \right\} +  \boldsymbol{Y}_{n1}^{(1)} \boldsymbol{1} \left\{ i > k \right\}, \ \ \ 1 \leq i \leq n,   
\end{align}
for some change-point in the interval $1 \leq k \leq n$, where $\boldsymbol{Y}_{n1}^{(0)}$ denotes the stationary time series before and change-point and $\boldsymbol{Y}_{n1}^{(1)}$ denotes the stationary time series after the change-point. Therefore, we assume that it induces a change to the structure of the covariance matrix which implies the following expressions 
\begin{align}
\boldsymbol{\Sigma}_n^{(0)} := \mathsf{Var} \left( Y_{n,1}^{(0)} \right) \ \ \ \text{and} \ \ \  \boldsymbol{\Sigma}_n^{(1)} := \mathsf{Var} \left( Y_{n, k+1}^{(1)} \right) 
\end{align}
Denote with $\boldsymbol{\Sigma}_n( [j] ) := \mathsf{Var} \left( \boldsymbol{Y}_{nj} \right)$,  $1 \leq j \leq n$. Testing the change-point problem at a single location 
\begin{align}
H_0:  \boldsymbol{\Sigma}_n( [j] ) \equiv \boldsymbol{\Sigma}_n^{(0)} \ \ \ \forall \ j = 1,...,n \ \ \ \text{and} \ \ \ H_1: \exists \ k \in \left\{ 2,..., n \right\}: \boldsymbol{\Sigma}_n( [k] ) \neq \boldsymbol{\Sigma}_n^{(0)},    
\end{align}
which tests the null hypothesis of stability of the covariance structure against the alternative hypothesis of a change.  An equivalent formulation of the statistical testing problem under consideration is given as 
\begin{align}
H_0: k = n \ \ \ \text{against} \ \ \ H_1: k < n.     
\end{align}
The proposed change-point detection test statistics are based are based on the partial sums of the stationary multivariate time series such that $\boldsymbol{S}_{nk} = \sum_{i \leq k} \boldsymbol{Y}_{ni} \boldsymbol{Y}_{ni}^{\prime}, \ \ \ k \geq 1$.  Under the assumption that the true change-point $k$ is proportional to the sample size, $k = \floor{ \lambda n }$, where $0 < r < \lambda < 1$, the unknown break-point can be consistently estimated using the following estimator
\begin{align}
\hat{k}_n := \underset{ n_0 \leq k \leq n }{ \mathsf{arg max} } \ \frac{1}{n} \left| \boldsymbol{v}_n^{\prime} \left( \boldsymbol{S}_{nk} - \frac{k}{n} \boldsymbol{S}_{nn} \right) \boldsymbol{w}_n \right|    
\end{align}
\end{example}

\newpage 

The formulation of the test statistic presented in the above example, implies that is the (smallest) time index $k \geq n_0 := \floor{nr}$, for some small $r > 0$, at which the maximum in the definition of the test statistic $C_n$ is attained. On the other hand, in order to robustify these CUSUM-type test statistics for detecting changes not only in the middle of the sample, the following statistic is proposed
\begin{align}
C_{n,md} = \frac{ T_{n, md } }{ \hat{\alpha}_n }, \ \ \ \text{with} \ \   T_{n, md } = \underset{ 1 \leq j \leq n }{ \mathsf{max} } \ \frac{1}{ \sqrt{n} } \left|  \sum_{1 \leq \ell \leq j} \boldsymbol{v}_n^{\prime} \left( \boldsymbol{S}_{n \ell} - \frac{( j - i)}{n} \boldsymbol{S}_{nn} \right) \boldsymbol{w}_n  \right|  
\end{align}
which considers the maximal deviation from the average of the cumulated sums, the maximum being taken over all possible subsamples $\boldsymbol{Y}_{ n \ell}, i \leq \ell \leq j$, for $1 \leq i < j \leq n$. To be more precise, the particular algorithm implies that each subsample induces a corresponding CUSUM statistic and then the maximal test statistic is taken over all these subsamples. Therefore, it can be proved 
\begin{align}
T_{n,md} \overset{d}{\to} \ \underset{ 0 < s < t < 1 }{ \mathsf{sup} } \ \left| B^o(s) - B^o(t) \right|, \ \ \text{as} \ n \to \infty.  
\end{align}
Further studies related to testing for structural breaks in covariance structures include \cite{kao2018testing}. In particular, $\widehat{\boldsymbol{\Sigma}}$ be the sample covariance matrix such that $\widehat{\boldsymbol{\Sigma}} = \frac{1}{T} \sum_{t=1}^T y_t y_t^{\prime}$. Then, for a given $\uptau \in [0,1]$, we define a point in time $\floor{ T \uptau }$, and we use the subscripts  $\uptau$ and $1 - \uptau$ to denote quantities calculated using the subsamples $t = 1,..., \floor{ T \uptau }$ and $t = \floor{ T \uptau } + 1,...,T$, respectively. In particular, we consider the sequence of partial sample estimators 
\begin{align}
\widehat{\boldsymbol{\Sigma}}_{ \uptau } = \displaystyle \frac{1}{T \uptau} \sum_{t=1}^{ \floor{ T \uptau } } y_t y_t^{\prime} \ \ \ \ \text{and} \ \ \ \ \widehat{\boldsymbol{\Sigma}}_{ 1 - \uptau } = \displaystyle \frac{1}{T (1 - \uptau )} \sum_{t = \floor{ T \uptau } + 1}^{ T } y_t y_t^{\prime}
\end{align}
We denote with $w_t = \mathsf{vec} \big( y_t y_t^{\prime} \big)$ and $\bar{w}_t = \mathsf{vec} \big( y_t y_t^{\prime} - \boldsymbol{\Sigma} \big)$.

\subsubsection{Main Asymptotic theory results}

\begin{assumption}
\label{Assumption1}
Let $\mathsf{sup}_t \mathbb{E} \norm{ y_t }^{2r} < \infty$ for some $r > 2$. Then, we define with 
\begin{align}
V_{ \Sigma, T} = \frac{1}{T} \mathbb{E} \left[ \left( \sum_{t=1}^T \bar{w}_t \right) \left( \sum_{t=1}^T \bar{w}_t \right)^{\prime} \right]
\end{align}
\end{assumption}

\begin{theorem}
Under Assumption \ref{Assumption1}, as $T \to \infty$
\begin{align}
\frac{1}{\sqrt{T}} \sum_{t=1}^{ \floor{T \uptau} } \bar{w}_t \to \big[ V_{ \Sigma } \big]^{1/2} W_{n^2} (\uptau), 
\end{align} 
uniformly in $\uptau$. Redefining $\bar{w}_t$ in a richer probability space, under Assumptions 1, there exists a $\delta > 0$ such that $
\sum_{t=1}^{ \floor{T \uptau} } \bar{w}_t = \sum_{t=1}^{ \floor{T \uptau} } X_t + O_{a.s} \left( \floor{T \uptau}^{\frac{1}{2} - \delta} \right)$, uniformly in $\uptau$, where $X_t$ is a zero mean, $i.i.d$ Gaussian sequence with $\mathbb{E} \left( X_t X_t^{\prime}  \right) = V_{\Sigma}$. Next, we focus on the estimation of $\boldsymbol{V}_{\Sigma}$.
\end{theorem}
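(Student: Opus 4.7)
The plan is to establish the result in two stages: first the functional weak invariance principle, and then the strong approximation by an i.i.d.\ Gaussian sequence with the sharper $T^{1/2-\delta}$ rate. The construction $\bar{w}_t = \mathsf{vec}(y_t y_t^{\prime} - \boldsymbol{\Sigma})$ together with the moment hypothesis $\mathsf{sup}_t \mathbb{E} \norm{y_t}^{2r} < \infty$, for some $r > 2$, yields $\mathsf{sup}_t \mathbb{E} \norm{\bar{w}_t}^{r} < \infty$ via the triangle inequality and Jensen, so in particular moments of order strictly greater than 2 are available. Centering is immediate from $\mathbb{E}[\bar{w}_t] = 0$, and the pre-limit long-run covariance $V_{\Sigma,T}$ defined in Assumption \ref{Assumption1} admits a limit $V_{\Sigma}$ under the summable-dependence implicit in the underlying stationary setup.

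For the first conclusion, I would verify standard Donsker-type conditions for the $n^2$-dimensional partial-sum process $T^{-1/2} \sum_{t \leq \floor{T \uptau}} \bar{w}_t$. Convergence of finite-dimensional distributions to $[V_{\Sigma}]^{1/2} W_{n^2}(\uptau)$ would follow from the Cram\'er–Wold device combined with a standard multivariate CLT applied to linear combinations of $\bar{w}_t$, using the fact that the moment index is strictly greater than $2$. Tightness on the Skorokhod space $D([0,1])^{n^2}$ is obtained coordinatewise from a fourth-moment maximal inequality of M\'oricz or Doukhan–Massart–Rio type, using the moment bound $\mathbb{E} \norm{ \bar{w}_t }^{r} < \infty$ and the stationarity-implied bound on cross-covariances. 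Combining convergence of finite-dimensional distributions with tightness delivers the functional weak convergence uniformly in $\uptau$.

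For the strong approximation, I would invoke a multivariate KMT-type embedding on a sufficiently rich probability space. Under the moment bound with $r > 2$, one can construct an i.i.d.\ Gaussian sequence $\{X_t\}$ with $\mathbb{E}(X_t X_t^{\prime}) = V_{\Sigma}$ such that
\begin{align*}
\sum_{t=1}^m \bar{w}_t - \sum_{t=1}^m X_t = o_{a.s.}\big(m^{1/(2r)} \log m\big),
\end{align*}
via an Einmahl–Zaitsev type strong approximation, possibly combined with a blocking argument if any weak dependence in $\{y_t\}$ is present. Since $r > 2$ gives $1/(2r) < 1/4$, choosing any $\delta \in (0, 1/2 - 1/(2r))$ produces the advertised rate $O_{a.s.}\big(\floor{T \uptau}^{1/2 - \delta}\big)$. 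Uniformity in $\uptau$ is inherited since the bound on $m = \floor{T \uptau}$ is monotone in $m$.

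The main obstacle I expect is calibrating the strong-approximation rate under only polynomial moments $2r$ with $r > 2$, rather than under exponential-moment conditions that would deliver the classical KMT logarithmic rate. With polynomial moments the attainable $\delta$ is tied to $r$, and care is needed to obtain the uniform-in-$\uptau$ statement without loss in the exponent. A secondary subtlety is verifying that $V_{\Sigma}$ is well-defined and positive semidefinite under the dependence structure of $\{y_t\}$, which requires summability of the autocovariances of $\bar{w}_t$; if the underlying $\{y_t\}$ has only mixing-type dependence rather than independence, a Berkes–Liu–Wu-style dependent Gaussian approximation would be the natural tool rather than the classical KMT construction.
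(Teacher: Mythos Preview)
The paper does not supply its own proof of this theorem: it is stated as a result drawn from the structural-break-in-covariance literature (the surrounding discussion cites \cite{kao2018testing}), and the text moves immediately from the theorem statement to the estimation of $\boldsymbol{V}_{\Sigma}$ without any argument. So there is no paper proof to compare against; your proposal is being measured only against the standard toolkit for such results.

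On that score your outline is the canonical route and is essentially sound: Cram\'er--Wold plus tightness for the functional CLT, and a multivariate strong approximation of KMT/Einmahl type for the almost-sure embedding. One quantitative slip to flag: from $\mathsf{sup}_t \mathbb{E}\norm{y_t}^{2r} < \infty$ with $r>2$ you correctly deduce $\mathsf{sup}_t \mathbb{E}\norm{\bar w_t}^{r} < \infty$, since $\bar w_t$ is quadratic in $y_t$; but the polynomial-moment strong approximation then delivers a remainder of order $o_{a.s.}(m^{1/r})$, not $o_{a.s.}(m^{1/(2r)}\log m)$ as you wrote. This still yields the stated conclusion with $\delta \in (0, 1/2 - 1/r)$, so the final claim survives, but the admissible range of $\delta$ is narrower than your argument suggests. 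Your closing remark that under genuine dependence in $\{y_t\}$ one should appeal to a Berkes--Liu--Wu type dependent Gaussian approximation rather than classical KMT is exactly the right caveat, and is in fact the relevant case here since the assumption block does not impose independence.
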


\newpage

 In particular, if no serial dependence is present, a possible choice is the full sample estimator 
\begin{align}
\widehat{\boldsymbol{V}}_{\Sigma} = \frac{1}{T} \sum_{t=1}^T w_t w_t^{\prime} - \left[ \mathsf{vec} \big( \widehat{\boldsymbol{\Sigma}} \big)  \right] \left[ \mathsf{vec} \big( \widehat{\boldsymbol{\Sigma}} \big)  \right]^{\prime}
\end{align}
Alternatively, one could use the sequence of partial sample estimators such that 
\begin{align*}
\widehat{\boldsymbol{V}}_{\Sigma} = \frac{1}{T} \sum_{t=1}^T w_t w_t^{\prime}
- 
\bigg\{ \uptau \left[ \mathsf{vec} \big( \widehat{\boldsymbol{\Sigma}}_{\uptau} \big)  \right] \left[ \mathsf{vec} \big( \widehat{\boldsymbol{\Sigma}}_{\uptau} \big)  \right]^{\prime}  
+ (1 - \uptau ) \left[ \mathsf{vec} \big( \widehat{\boldsymbol{\Sigma}}_{1-\uptau} \big)  \right] \left[ \mathsf{vec} \big( \widehat{\boldsymbol{\Sigma}}_{1-\uptau} \big)  \right]^{\prime}
\bigg\}.
\end{align*}
Furthermore, to accommodate for the case in which $\Psi_{\ell} \equiv \mathbb{E} \big( \bar{w}_t \bar{w}_{t- \ell} \big) \neq 0$ for some $\ell$, we propose a weighted sum-of-covariance estimator with bandwidth $m$, such that
\begin{align}
\tilde{\boldsymbol{V}}_{\Sigma} 
&= 
\widehat{\Psi}_0 + \sum_{\ell = 1}^m \left( 1 - \frac{\ell}{m} \right) \left[ \widehat{\Psi}_{\ell} +  \widehat{\Psi}_{\ell}^{\prime} \right],
\\
\widehat{\Psi}_{\ell} 
&= 
\frac{1}{T} \sum_{t = \ell + 1}^T \big[ w_t - \mathsf{vec} \left( \widehat{\boldsymbol{\Sigma}}  \right) \big] \big[ w_{t-\ell} - \mathsf{vec} \left( \widehat{\boldsymbol{\Sigma}}  \right) \big]^{\prime}
\end{align}
Alternatively, we can employ the following matrix
\begin{align}
\tilde{\boldsymbol{V}}_{\Sigma, \uptau} 
&= 
\left( \widehat{\Psi}_{0, \uptau} +  \widehat{\Psi}_{0, 1 - \uptau} \right) \sum_{\ell = 1}^m \left( 1 - \frac{\ell}{m} \right) \left[ \left( \widehat{\Psi}_{\ell, \uptau} +  \widehat{\Psi}_{\ell, \uptau}^{\prime} \right) + \left( \widehat{\Psi}_{\ell, 1 - \uptau} +  \widehat{\Psi}_{\ell, 1 - \uptau}^{\prime} \right) \right]
\\
\widehat{\Psi}_{\ell, \uptau} 
&= 
\frac{1}{T} \sum_{t = \ell + 1}^{ \floor{ T \uptau } }  \big[ w_t - \mathsf{vec} \left( \widehat{\boldsymbol{\Sigma}}_{\uptau}   \right) \big] \big[ w_{t-\ell} - \mathsf{vec} \left( \widehat{\boldsymbol{\Sigma}}_{\uptau}  \right) \big]^{\prime}
\end{align}
A similar expression holds for $\widehat{\Psi}_{\ell, 1 - \uptau}$. Furthermore, denote with 
\begin{align}
\Omega_{T} = \frac{1}{T} \mathbb{E} \left\{ \sum_{t=1}^T \mathsf{vec} \big[ \bar{w}_t \bar{w}_t^{\prime} - \mathbb{E} \left( \bar{w}_t \bar{w}_t^{\prime} \right) \big] \times \mathsf{vec} \big[ \bar{w}_t \bar{w}_t^{\prime} - \mathbb{E} \left( \bar{w}_t \bar{w}_t^{\prime} \right) \big]^{\prime} \right\}
\end{align}
where $\Omega_T$ is positive definite uniformly in $T$, and $\Omega_T \to \Omega$ with $\norm{\Omega} < \infty$. 
\begin{theorem}
Under the null hypothesis of no structural break in $\Sigma$ then, if Assumption \ref{Assumption1} holds, as $T \to \infty$, there exists a $\delta_o > 0$ such that 
\begin{align}
\underset{ 1 \leq \floor{T \uptau } \leq T }{ \mathsf{sup} } \norm{ \hat{\boldsymbol{V}}_{\Sigma, \uptau} - \boldsymbol{V}_{\Sigma} } = o_p \left(  \frac{1}{ T^{ \delta_o } }   \right).
\end{align}  
\end{theorem}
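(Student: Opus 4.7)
The plan is to decompose the uniform error $\sup_{\uptau}\|\hat{\boldsymbol{V}}_{\Sigma,\uptau} - \boldsymbol{V}_\Sigma\|$ into a deterministic truncation bias, a mean-replacement error, and the stochastic fluctuation of an infeasible oracle HAC statistic, then balance these pieces by letting the bandwidth $m$ grow at a suitably chosen polynomial rate in $T$. Under the null, the process is (second-order) stationary, so $\boldsymbol{V}_\Sigma = \Psi_0 + \sum_{\ell=1}^{\infty}(\Psi_\ell + \Psi_\ell')$, which makes the comparison clean.

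Concretely, write
\begin{align*}
\hat{\boldsymbol{V}}_{\Sigma,\uptau} - \boldsymbol{V}_\Sigma = B_m + R_\uptau + S_\uptau,
\end{align*}
where $B_m = \Psi_0 + \sum_{\ell=1}^{m}(1-\ell/m)(\Psi_\ell + \Psi_\ell') - \boldsymbol{V}_\Sigma$ is the Bartlett-kernel truncation bias, $R_\uptau$ gathers the errors produced by substituting $\hat{\boldsymbol{\Sigma}}_\uptau$ and $\hat{\boldsymbol{\Sigma}}_{1-\uptau}$ for $\boldsymbol{\Sigma}$ inside the centered cross-products, and $S_\uptau$ is the sampling error of the oracle partial-sample estimator built with the true mean $\boldsymbol{\Sigma}$. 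Using the summability of $\{\Psi_\ell\}$ that Assumption~\ref{Assumption1} guarantees (via the $2r$-th moment bound with $r>2$), one gets $\|B_m\| = O(m^{-1})$ or better, depending on the decay rate. For $R_\uptau$, the strong approximation in the preceding Theorem delivers $\sup_{\uptau\in[\epsilon,1-\epsilon]}\|\hat{\boldsymbol{\Sigma}}_\uptau-\boldsymbol{\Sigma}\| = o_{a.s.}(T^{-1/2+\delta_1})$ for some $\delta_1>0$; each replacement contributes $O(T^{-1+2\delta_1})$ and summing over the $m$ lags yields a uniform bound of order $mT^{-1+2\delta_1}$.

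The real work is the third term $S_\uptau$. The plan is to apply a functional LLN, uniform in $\uptau$, to the partial sums $T^{-1}\sum_{t=\ell+1}^{\lfloor T\uptau\rfloor}\bar{w}_t\bar{w}_{t-\ell}'$ and to the mirror sums on $[\lfloor T\uptau\rfloor+1,T]$, using the fact that the two pieces sum to the full-sample lag-$\ell$ autocovariance, which converges to $\Psi_\ell$ at rate $T^{-1/2+\delta}$ by an argument parallel to the strong approximation already invoked for $\sum_t \bar{w}_t$. Uniformity in $\uptau$ will come from a chaining/maximal inequality over a dyadic grid, and uniformity over $\ell\le m$ from a union bound; the $(2r)$-th moment condition with $r>2$ is exactly what allows an $L^{r/2}$ maximal inequality on the quadratic process $\{\bar{w}_t\bar{w}_{t-\ell}'\}$ to deliver a polynomial rate.

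The main obstacle I expect is closing the polynomial rate uniformly in both $\uptau$ and $\ell\le m$ while letting $m\to\infty$. The union bound across $m$ lags inflates the stochastic bound, so $m$ must be taken of order $m\asymp T^\gamma$ for a small $\gamma>0$ chosen to balance $\|B_m\|=O(T^{-\gamma})$ against $\sup_\uptau\|S_\uptau\|=O_p(T^{\gamma}\cdot T^{-1/2+\delta})$ and the replacement bound $O(T^\gamma\cdot T^{-1+2\delta_1})$. Optimizing over $\gamma$ yields some $\delta_o>0$ depending on $r$ and on the decay of $\|\Psi_\ell\|$, which is precisely the exponent asserted by the Theorem. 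If the weak-dependence moment bounds are too tight, one can strengthen Assumption~\ref{Assumption1} slightly (e.g., $r>4$) to make the quadratic maximal inequality immediate; the overall architecture of the proof is robust to that choice.
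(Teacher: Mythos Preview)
The paper does not actually supply a proof of this theorem; it is stated in Section~6.4.2 as a result drawn from \cite{kao2018testing} and followed immediately by a remark, with no argument given. So there is no ``paper's own proof'' to compare your proposal against.

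Your decomposition into truncation bias $B_m$, mean-replacement error $R_\uptau$, and oracle sampling error $S_\uptau$, with the bandwidth balanced at $m\asymp T^{\gamma}$, is the standard architecture for uniform HAC consistency and is sound for the Bartlett-weighted estimator $\tilde{\boldsymbol V}_{\Sigma,\uptau}$ defined just above the theorem. One point to watch: the theorem as written uses $\hat{\boldsymbol V}_{\Sigma,\uptau}$ (hat, not tilde), which in the paper's notation denotes the \emph{non-HAC} partial-sample estimator (the second displayed formula for $\widehat{\boldsymbol V}_\Sigma$, involving $\uptau$ but no bandwidth). For that object there is no kernel truncation, so your $B_m$ term and the bandwidth-balancing step disappear; the argument collapses to (i) uniform convergence of $T^{-1}\sum_t w_t w_t'$ to $\mathbb E[w_t w_t']$ and (ii) uniform-in-$\uptau$ control of $\hat{\boldsymbol\Sigma}_\uptau-\boldsymbol\Sigma$ and $\hat{\boldsymbol\Sigma}_{1-\uptau}-\boldsymbol\Sigma$, both of which follow at a polynomial rate directly from the strong approximation recorded in the preceding theorem. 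Your $R_\uptau$ and $S_\uptau$ analysis already contains these ingredients, so the proof you outline specializes correctly once the $B_m$ term is dropped; just be sure you are proving the statement for the estimator the theorem actually names.
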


\begin{remark}
\cite{aue2009break} consider the detection of structural breaks in the covariance structure of multivariate time series under the assumption of stationarity and ergodicity. Their approach identifies breaks in the unconditional mean of a $p-$dimensional multivariate time series. Moreover, \cite{lee2023change}  propose a change point test for structural vector autoregressive models via independent component analysis. The test statistic is constructed based on the fitted residuals and the CUSUM functional.    
\end{remark}

\newpage 

Furthermore, a growing literature develops methodologies for testing  structural breaks in factor models as in \cite{breitung2011testing}, \cite{baltagi2021estimating} and  \cite{bai2024likelihood} among others (see, also \cite{bai1998estimating}). Regarding the literature relevant to structural break testing around the Vector Autoregressive framework of interest there are two main approaches. On the one hand, the approach of \cite{safikhani2022fast} implies a fast and scalable algorithm for detection of structural breaks in large VAR models. The particular modelling environment corresponds to a piecewise linear Vector Autoregression model. Thus, testing for structural breaks takes the form of a time-varying VAR model and therefore in practise the statistical problem of interest is to test for the presence of time-varying vector autoregressive matrix coefficients across the blocks. Moreover, since the model is parametrized in a form of a large VAR model with a corresponding companion matrix, then the focus is on the fast and accurate break detection algorithm with respect to the dimensionality of the model. On the other hand, \cite{gourieroux2020identification} proposed an econometric framework for identification and estimation of the structural vector autoregressive model where the main focus is how the distributional assumptions on the shocks affect the identification procedure. In practice, based on the non-Gaussianity assumption of the structural shocks the model is identifiable and the estimation of model parameters is achieved via the use of the likelihood function. Therefore, testing for structural breaks in such a setting could be implemented using an appropriate specification which captures time-variation. 

\subsubsection{Large VAR Models: Model Formulation}

 For the remaining of this section, we follow the framework proposed by \cite{safikhani2022fast}. Their procedure is found to be scalable to big high-dimensional time series datasets with a computational complexity that can achieve $O ( \sqrt{n} )$, where $n$ is the length of the time series (sample size), compared to an exhaustive procedure that requires $O(n)$ steps.  

Suppose that there exist $m_0$ break points $0 < t_1 < ... t_{m_0} < T$, with $t_0 = 0$ and $t_{ m_0 + 1 } = T$ where $j \in \left\{ 1,..., m_0 + 1\right\}$. Notice that if there are no break points, then $m_0 = 0$ and so $( t_0 = 1, t_1 = T )$ which corresponds to the full sample period. 

Thus, we can assume that $m_0 \geq 1$, which implies that $( t_0 = 1, t_2 = T )$ and so $j = \left\{ 1, 2 \right\}$. In other words, for $t_{j-1} \leq t < t_j$ (e.g., $t_0 \leq t < t_1$, i.e., $t = 1,..., T$) we have the following model:
\begin{align}
y_t = \Phi_{ (1,j) } y_{t-1} + ... + \Phi_{ (q,j) } y_{t-q} + \Sigma_j^{1/2} \varepsilon_t 
\end{align}
Since in the case that $m_0 = 1$ then $j \in \left\{ 1,2\right\}$ then we have a two-regime model estimation:
\begin{align}
M_0: \boldsymbol{y}_t &= \boldsymbol{\Phi}_{ (1,1) } \boldsymbol{y}_{t-1} + ... + \boldsymbol{\Phi}_{ (q,1) } \boldsymbol{y}_{t-q} + \boldsymbol{\Sigma}_1^{1/2} \boldsymbol{\varepsilon}_t, \ \ \ t_0 \leq t < t_1, \ \ t \in \left\{ 0,..., t_1 \right\},
\\
M_1: \boldsymbol{y}_t &= \boldsymbol{\Phi}_{ (1,2) } \boldsymbol{y}_{t-1} + ... + \boldsymbol{\Phi}_{ (q,2) } \boldsymbol{y}_{t-q} + \boldsymbol{\Sigma}_2^{1/2} \boldsymbol{\varepsilon}_t, \ \ \ t_1 \leq t < t_2, \ \ t \in \left\{ t_1,..., T \right\},
\end{align}
There is a single break-point and we fit a different VAR model before the break, i.e. $M_0$ pre-break based on the observations up to the unknown break-point, and a we fit $M_1$ post-break based on the observations from the unknown break-point and onward until the end of the full sample (see, \cite{safikhani2022fast}).

\newpage 

\begin{itemize}

\item In each segment we estimate the VAR process. Let $\boldsymbol{y}_t$ be a $p-$dimensional vector of observations at time $t$ and $\boldsymbol{\Phi}_{(\ell, j)}$ be an $( p \times p )$ sparse coefficient matrix corresponding to the $\ell-$th lag of a VAR process of order $q$ during the $j-$th stationary segment, and $\varepsilon_t$ is a white-noise process with zero mean and variance matrix $\boldsymbol{\Sigma}_j$.

\item In each segment $[ t_{j-1}, t_j )$, all model parameters are assumed to be fixed. However, the auto-regressive (AR) parameters $\boldsymbol{\Phi}_{( \ell, j)}$ will change values across segments, while the error covariance across all segments is assumed to be $\boldsymbol{\Sigma}_j = \sigma^2 \boldsymbol{I}$. 

\end{itemize}

Based on the aforementioned setting proposed by \cite{safikhani2022fast}, the unknown parameters in each segment are: the number of break-points $m_0$, their locations $t_j, j = 1,..., m_0$, the VAR parameters $\boldsymbol{\Phi}_{(q, j)}$ together with the covariance matrix. Therefore, according to , the statistical problem is then to detect the break points $t_j$, in a computational efficient manner that is also scalable for very large values of $T$. Additionally, we are also interested to estimate accurately the VAR parameters $\boldsymbol{\Phi}_{( \ell, j)}$, under a high-dimensional environment $( p >> T )$. 

\subsubsection{A Block Segmentation Scheme Based Algorithm}

The main idea of the BSS is to partition the time points into blocks of size $b_n$ and fix the VAR parameters within each block. To this end, define a sequence of time points $q = r_0 < r_1 < ... < r_{ k_n } = T + 1$ which play the role of the end points for the blocks. In other words, the total number of blocks is calculated as: 
\begin{align}
r_{ i + 1 } - r_i = b_n, \ \ \ \text{for} \ i \in \left\{ 0, ...., k_n - 2 \right\}, \ k_n = \floor{  \frac{ n }{ b_n } }.     
\end{align}
where $k_n$ is the total number of blocks based on the (effective) sample size and $b_n$ is the length size of each block which is kept fixed across all blocks. Notice that throughout, the dimensions of the vector $\boldsymbol{y}_t$ remains fixed having $p$ elements as well as the (effective) sample size being $n$.

The theoretical results presented in \cite{safikhani2022fast} deal with the high-dimensional case such that the sparsity levels $d_{ k_j }$ increase with the sample size, $T$. Specifically, we define with $p \equiv p(n)$ and $m_0 \equiv m_0 (n)$ and $d_{ k_j } \equiv d_{ k_j } (n)$, where $n = T - q + 1$. In other words, $n$ is the effective sample size or the length of the time series for estimation of model parameters purposes, since the number of lags used for constructing additional regressors affects the sample size (see, \cite{safikhani2022fast}). 

The minimum distance between two consecutive break points is denoted by 
\begin{align}
\Delta_n = \underset{ 1 \leq j \leq m_0 + 1 }{ \mathsf{min}  } | t_j - t_{j-1} |.    
\end{align}
Let $\boldsymbol{\mathcal{Y}} \in \mathbb{R}^{ n \times p}$ and $\boldsymbol{\mathcal{X}} \in \mathbb{R}^{ n \times k_n p q }$ and $\boldsymbol{\theta} \in \mathbb{R}^{k_n p q \times p }$ and $\boldsymbol{E} \in \mathbb{R}^{ n \times p }$. In other words, based on this parametrization, $\theta_i \neq 0$ for $i \geq 2$ implies a change in the VAR coefficients. Therefore, for $j \in \left\{ 1,..., m_0 \right\}$, the structural break points $t_j$ can be estimated as block-end time point $r_{i-1}$, where $i \geq 2$ and $\theta_i \neq 0$.

\newpage 

Then, the model is formulated as below
\begin{align}
\boldsymbol{Y} = \boldsymbol{Z} \boldsymbol{\Theta} + \boldsymbol{E}, \ \ \boldsymbol{Y} = \mathsf{vec} ( \boldsymbol{\mathcal{Y}} ) \in \mathbb{R}^{ np \times 1 }, \ \ \boldsymbol{Z} = \boldsymbol{I}_p \otimes \boldsymbol{\mathcal{X}} \in \mathbb{R}^{ n p \times \pi_b }
\end{align}
where $\pi_b = k_n p^2 q$ and $\boldsymbol{\Theta} = \mathsf{vec} ( \boldsymbol{\theta} ) \in \mathbb{R}^{ \pi_b \times 1 }$. The initial estimate of the parameter $\boldsymbol{\Theta}$ is obtained: 
\begin{align}
\widehat{\boldsymbol{\Theta} } = \underset{ \boldsymbol{\Theta} \in \in \mathbb{R}^{ \pi_b \times 1 } }{ \mathsf{arg \ min} }  \ \frac{1}{n} \norm{ \boldsymbol{Y} - \boldsymbol{Z} \boldsymbol{\Theta} }_2^2 + \lambda_{1,n} \norm{ \boldsymbol{\Theta} }_1 + \lambda_{2,n} \sum_{i=1}^{ k_n } \norm{ \sum_{j=1}^{i} \theta_j }_1. 
\end{align}
Notice that the above high dimensional problem (convexity is preserved) uses fused lasso penalty with two $\ell_1$ penalties controlling the number of break points and the sparsity of the VAR model. 

Denote the set of indices of blocks with nonzero jumps and corresponding estimated change points obtained from solving the high-dimensional problem as below:
\begin{align}
\widehat{I}_n 
&= 
\left\{  \hat{i}_1, \hat{i}_2,..., \hat{i}_{ \hat{m} } \right\} =  \left\{ i: \norm{ \widehat{\boldsymbol{\Theta}}_i }_F \neq 0, i = 1,..., k_n \right\},  
\\
\mathcal{A} 
&=
\left\{  \hat{t}_1, \hat{t}_2,..., \hat{t}_{ \hat{m} } \right\} = \left\{ r_{i-1}: i \in \widehat{I}_n \right\}.
\end{align}
The total number of estimated change points in this step corresponds to the cardinality of the set $\mathcal{A}_n$, where $\hat{m} = \left| \mathcal{A}_n \right|$. Moreover, the block size $b_n$ acts as a tuning parameter that regulates the number of model parameters to be estimated. In addition to this, the algorithm applies a local screening step. Since the set $\mathcal{A}_n$ of candidate change points tends to overestimates their numbers, a screening step to eliminate redundant break points is required. According to \cite{safikhani2022fast}, the main idea goes is to estimate the VAR parameters \textit{locally} on the left and right side of each selected break point in the first step and compare them to one VAR parameter estimated from combining the left and right of the selected break point as one large stationary segment. Now, if the selected break point is close to a true break point, the sum of squared errors calculated assuming stationarity around the true break point will be much larger compared to the sum of squared errors calculated from two separate VAR parameter estimates on the right and left of the selected break point. Therefore, we can get consistent estimates of the number of break points by minimizing a localized information criterion (LIC) comprising of the sum of squared errors and a penalty term on the number of break points.   

Next, we describe the localized screening step in more details. Recall that $\mathcal{A} 
= \left\{  \hat{t}_1, \hat{t}_2,..., \hat{t}_{ \hat{m} } \right\}$ is the set of candidate break points selected in the first step. Then, for each subset $A \subset \mathcal{A}_n$, we define the following local VAR parameter estimates for some $\widehat{t}_i \in A$ such that 
\begin{align}
\widehat{\psi}_{  \widehat{t}_i, 1 } = \underset{ \widehat{t}_i, 1 }{ \mathsf{  arg \ min } }  \left\{  \frac{1}{ a_n } \sum_{ t = \widehat{t}_i - a_n }^{  \widehat{t}_i - 1 } \norm{ y_t - \psi_{  \widehat{t}_i, 1 } Y_{t-1} }_2^2 + \eta_{  \widehat{t}_i, 1 } \norm{  \psi_{  \widehat{t}_i, 1 } }_1 \right\} 
\end{align}
where $a_n$ corresponds to the neighbourhood size in which the VAR parameters are estimated. The procedure of  \cite{safikhani2022fast}, can be directly implemented for identifying and dating the presence of multiple break-points in multivariate time series. In contrast to the literature of SVAR models, the assumption is that the model parameters are identifiable but a possible parameter instability could exist.

\newpage

\subsubsection{Consistency of the BSS Estimator}

To establish the consistency properties of the BSS-based estimator relevant regularity conditions are required which we briefly present below. 
\begin{assumption}[see,  \cite{safikhani2022fast}]
For each $j \in \left\{ 1,2,..., m_0 + 1 \right\}$, the process
\begin{align}
\boldsymbol{y}_{(j),t} = \boldsymbol{\Phi}_{ (1,j) } \boldsymbol{y}_{(j),t-1} + ... + \boldsymbol{\Phi}_{ (q,j) } \boldsymbol{y}_{(j),t-q} + \boldsymbol{\Sigma}_j^{1/2} \boldsymbol{\varepsilon}_t    
\end{align}
is a stationary Gaussian time series. Denote the covariance matrices with 
\begin{align}
\boldsymbol{\Gamma}_{(j)}(h) = \mathsf{Cov} \left( \boldsymbol{y}_{(j),t}, \boldsymbol{y}_{(j),t+h} \right) , \ \ \text{for} \ \ t, h \in \mathbb{Z}.  
\end{align}
\end{assumption}

\begin{assumption}[see,  \cite{safikhani2022fast}]
The matrices $\boldsymbol{\Phi}_{ (1,j) }$ are sparse such that for all $k \in \left\{ 1,... p \right\}$ and $j \in \left\{ 1,..., m_0 \right\}$, $d_{ k_j } << p$, that is, $d_{ k_j } / p = o(1)$. There exists a positive constant $M_{ \Phi } > 0$ such that
\begin{align}
\underset{ 1 \leq j \leq m_0 + 1 }{ \mathsf{max} } \norm{ \boldsymbol{\Phi}_{ (\cdot,j) }  }_{ \infty } \leq M_{ \Phi }    
\end{align}
\end{assumption}

\begin{example}
Consider the CUSUM series of $y_{\ell t}$ over a generic segment $[ s, e ]$ for some $1 \leq s < e < T$,
\begin{align}
\mathcal{Y}_{s,b,e}^{\ell} = \frac{1}{ \sigma_{\ell} } \sqrt{ \frac{( b - s + 1 ) ( e - b ) }{ e - s + 1} } \left(  \frac{1}{b-s+1} \sum_{t=s}^b y_{\ell t} - \frac{1}{e - b}  \sum_{t=b+1}^{e} y_{\ell t} \right),
\end{align}
for $b = s,..., e-1$, where $\sigma_{\ell}$ denotes a scaling constant for treating all rows of the sequence $y_{\ell t}$ such that $1 \leq \ell \leq N$ on equal footing. Notice that if $\epsilon_{\ell t}$ were $\textit{i.i.d}$  Gaussian random variables, the maximum likelihood estimator of the change-point location for $\big\{ y_{\ell t}, s \leq t \leq e \big\}$ would coincide with $\underset{ b \in [ s, e) }{ \mathsf{arg \max} } \left| \mathcal{Y}_{s,b,e}^{\ell}  \right|$. Moreover, $\mathcal{D}_{s, n e} (m)$ takes the contrast between the $m$ largest CUSUM values $\left| \mathcal{Y}_{s,b,e}^{\ell} \right|, 1 \leq \ell \leq m$ and the rest at each $b$, and thus partitions the coordinates into the $m$ that are the most likely to contain a change-point and those which are not in a point-wise manner. Then, the test statistic is derived by maximizing the two-dimensional array of DC statistics over both time and cross-sectional indices as   
\begin{align}
\mathcal{T}_{s,e} = \underset{ b \in [s, e) }{ \mathsf{max} } \underset{ 1 \leq m \leq N }{ \mathsf{max} } \mathcal{D}_{s, n e} (m).
\end{align}
Therefore, the presence of multiple break-points implies that there are multiple regimes within the full sample, where the corresponding time series sequences appear to have structural breaks.  
\end{example}

\begin{theorem}[\cite{fang2020segmentation}]
Let $( X_1,..., X_m )$ be an independent sequence of normally distributed random variables with mean $\mu$ and variance 1. Then for $\mathcal{Z}_{i,j,k}$ we have for $b \to  \infty$ and $m \approx b^2$, 
\begin{align}
\mathbb{P} \left(  \underset{ 0 \leq i < j < k \leq m  }{  \mathsf{max} } \left| \mathcal{Z}_{i,j,k} \right|  \geq b \right).    
\end{align}
\end{theorem}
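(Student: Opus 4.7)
The statement as written is missing an explicit right-hand side, so my plan is premised on the natural completion suggested by Fang (2020): under the scaling $b \to \infty$ with $m \approx b^{2}$, the tail probability converges to a limit of the form $1 - \exp(-\nu)$ for an explicit $\nu > 0$ depending only on the asymptotic constant $c$ implicit in $m/b^{2} \to c$. My proof would follow the standard large-deviations-for-scan-statistics template, organized around the Poisson clumping / Chen--Stein method.

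The first step is to pin down the covariance structure of the Gaussian field $\{\mathcal{Z}_{i,j,k}\}_{0 \le i<j<k\le m}$. Under the null with constant mean $\mu$, each $\mathcal{Z}_{i,j,k}$ is marginally $N(0,1)$ (the mean $\mu$ cancels because the CUSUM is a contrast), so $\mathbb{P}(|\mathcal{Z}_{i,j,k}|\ge b) = 2\bar{\Phi}(b) \sim \sqrt{2/\pi}\, b^{-1} e^{-b^{2}/2}$. I would then compute the covariance $\mathbb{E}[\mathcal{Z}_{i,j,k}\mathcal{Z}_{i',j',k'}]$ explicitly using overlapping partial sums; crucially, this covariance is near $1$ only when $(i',j',k')$ is close to $(i,j,k)$ in a very specific local sense (when the relative overlap of the three intervals is large), and decays otherwise. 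This local structure is what drives the Poisson clumping.

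Next I would replace the discrete maximum by a continuous-parameter Gaussian field on a simplex in $[0,1]^{3}$ by reparametrizing $(i/m,\,j/m,\,k/m) \to (s,t,u)$, and invoke a Pickands-style slicing: for each fixed centre $(i,j,k)$, the set of neighbouring exceedances forms a cluster whose expected size $E_{b}$ can be estimated via the Pickands constant arising from the local Hölder/fractional-Brownian structure of the field. A careful Taylor expansion of the covariance near the diagonal would give the local variance increment behaviour, from which the appropriate Pickands constant is read off. The third step is a Chen--Stein argument: define indicators $I_{\alpha} = \mathbf{1}\{|\mathcal{Z}_{\alpha}| \ge b\}$ for triples $\alpha = (i,j,k)$, partition into local neighbourhoods, and bound the Chen--Stein errors $b_{1}(b,m)$ and $b_{2}(b,m)$ (the self-coupling and long-range dependence terms). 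Under the balance $m \asymp b^{2}$, the total expectation $\lambda_{b,m} := \sum_{\alpha}\mathbb{P}(|\mathcal{Z}_{\alpha}|\ge b)/E_{b}$ converges to a finite constant $\nu$, while $b_{1},b_{2} \to 0$; this would yield
\begin{equation*}
\mathbb{P}\!\left(\max_{\alpha}|\mathcal{Z}_{\alpha}| \ge b\right) \;\longrightarrow\; 1 - e^{-\nu}.
\end{equation*}

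The main obstacle, I expect, is the bookkeeping for the three-index clump structure. Unlike one-dimensional scan statistics, where clumps are intervals and the Pickands constant has a classical closed form, here the neighbourhood of a near-maximal triple $(i,j,k)$ is a three-dimensional region, and the field has distinct local scaling in the three coordinates (with $j$ behaving differently from $i$ and $k$). Getting the right local normalization — hence the right exponent in the scaling $m \approx b^{2}$ rather than some other power — requires a careful computation of the covariance increments in each of the three directions and a verification that the long-range Chen--Stein term $b_{2}$ is negligible despite the cubic number of triples. Once the local analysis is in hand, the rest of the argument is a routine application of Poisson approximation and moderate-deviations estimates for the Gaussian tail.
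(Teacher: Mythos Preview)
The paper does not supply a proof of this theorem at all: it is stated only as a cited result from \cite{fang2020segmentation}, with no argument, no sketch, and indeed no right-hand side (the displayed probability is left as a bare expression, exactly as you noticed). There is therefore nothing in the paper to compare your proposal against.

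For what it is worth, your outline---Poisson clumping via Chen--Stein, local covariance expansion near the diagonal, and a Pickands-type analysis of the clump size in the three-index simplex---is the standard machinery for tail asymptotics of Gaussian scan statistics of this kind, and is consistent with the style of argument in the Fang--Li--Siegmund line of work. Your identification of the main difficulty (the three-dimensional clump geometry with anisotropic scaling in the $i$, $j$, $k$ directions) is accurate. But since the present paper contains no proof, no comparison is possible here.
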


\newpage

\subsection{Sieve Bootstrap for Functional Vector Autoregressions}

According to \cite{paparoditis2018sieve}, bootstrap procedures for Hilbert space-valued time series proposed so far in the literature, are mainly attempts to adapt, to the infinite dimensional functional framework, of bootstrap methods that have been developed for the finite dimensional time series case. Specifically, \cite{paparoditis2018sieve} shows, that under quite general assumptions, the stochastic process of Fourier coefficients obeys a so-called vector autoregressive representation and this representation plays a key role in developing a bootstrap procedure for the functional time series at hand. To capture the essential driving functional parts of the underlying infinite dimensional process, the first $m$ functional principal components are used and the corresponding $m-$dimensional time series of Fourier coefficients is bootstrapped using a $p$th order vector autoregression fitted to the vector time series of sample Fourier coefficients. In this way, a $m-$dimensional pseudo-time series of Fourier coefficients is generated which imitates the temporal dependence structure of the vector time series of sample Fourier coefficients. Thus, using the truncated Karhunen-Loeve expansion, these pseudo-coefficients are then transformed to functional bootstrap replicates of the main driving, principal components, of the observed functional time series.

\subsubsection{The bootstrap procedure}

The main building block of the procedure proposed by \cite{paparoditis2018sieve} is to generate pseudo-replicates $X_1^{*},...,X_n^{*}$ of the functional time series at hand by first bootstrapping the $m-$dimensional time series of Fourier coefficients $\xi_t = ( \xi_{1,t}, \xi_{2,t},..., \xi_{m,t} )$, $t =1,2,...n$, corresponding to the first $m$ principal components. This $m-$dimensional time series of Fourier coefficients is bootstrapped using the autoregressive representation $\xi_t$. Then, the generated $m-$dimensional pseudo-time series of Fourier coefficients is then transformed to functional principal pseudo-components by means of truncated Karhunen-Loeve expansion $\sum_{j=1}^n \xi_{j,t} v_j$. Adding to this, an appropriately resampled functional noise leads to the functional pseudo-time series $X_1^{*}, X_2^{*} ,..., X_n^{*}$. However, since the $\xi_t'$s are not observed, we work with the time series of estimates scores. The functional sieve bootstrap algorithm of \cite{paparoditis2018sieve} is:

\begin{itemize}
\item[Step 1:] Select a number $m = m(n)$ of functional principal components and an autoregressive order $p = p(n)$, both finite and depending on $n$

\item[Step 2:] Let
\begin{align}
\widehat{ \boldsymbol{\xi}} = \left( \widehat{\xi}_{j,t} = \langle X_t, \widehat{v}_j \rangle, j = 1,...,m \right)^{\top}, \ \ \ t =1,...,n,
\end{align}
be the $m-$dimensional series of estimated Fourier coefficients, where we denote with $\widehat{v}_j$, $j = 1,...,m$ the estimated eigenfunctions corresponding to the estimated eigenvalues $\widehat{\lambda}_1 > \widehat{\lambda}_2 > ... > \widehat{\lambda}_m$ of the sample covariance operator 
\begin{align}
\widehat{C}_0 = n^{-1} \sum_{t=1}^n \left( X_t - \bar{X}_n \right) \otimes \left( X_t - \bar{X}_n \right), \ \ \ \bar{X}_n = n^{-1} \sum_{t=1}^n X_t.
\end{align}

\newpage

\item[Step 3:] Let $\widehat{X}_{t,m} = \sum_{j=1}^m \widehat{\xi}_{j,t} \widehat{v}_j$ and define the functional residuals $\widehat{U}_{t,m} = X_t - \widehat{X}_{t,m}$ and define the functional residuals $\widehat{U}_{t,m} = X_t - \widehat{X}_{t,m}$, $t=1,...,n$.

\item[Step 4:] Fit a $p$th order vector autoregressive process to the $m-$dimensional time series $\widehat{\xi}_t$, $t = 1,...,n$, denote by $\widehat{A}_{j,p} (m)$, $j = 1,...,p$, the estimates of the autoregressive matrices and by $\widehat{e}_{t,p}$ the residuals,
\begin{align}
\widehat{e}_{t,p} = \widehat{\xi}_t - \sum_{j=1}^p \widehat{A}_{j,p} (m) \widehat{\xi}_{t-j}, \ \ \ t = p+1,p+2,...,n.
\end{align}
We focus on the following Yule-Walker estimators for $\widehat{A}_{j,p} (m)$, $j = 1,...,p$. 

\item[Step 5:] Generate a $m-$dimensional pseudo time series of scores $\xi^{*}_t = \left( \xi^{*}_{1,t},...,\xi^{*}_{m,t}  \right)$, $t =1,2,...,n$,
using 
\begin{align}
\xi^{*}_t = \sum_{j=1}^p \widehat{A}_{j,p}(m) \xi^{*}_{t-j} + e^{*}_t,
\end{align}
where $e^{*}_t$, $t=1,...,n$ are i.i.d random vectors having as distribution the empirical distribution of the centered residual vectors $\widetilde{e}_{t,p} = \widehat{e}_{t,p} - \bar{\widehat{e}}_{n,t}$, for $t = p + 1, p+2,...,n$ and $\bar{\widehat{e}} = (n - p)^{-1} \sum_{t=p+1}^n \widehat{e}_{t,p}$. 

\item[Step 6:] Generate a pseudo-functional time series $X_1^{*},...,X_n^{*}$ ,where 
\begin{align}
\xi^{*}_t = \sum_{j=1}^m \xi_{j,t}^{*} \widehat{v}_j + U_t^{*}, \ \ \ t=1,...,n.
\end{align}
and $U_1^{*},...,U_n^{*}$ are i.i.d random functions obtained by choosing with replacement from the set of centered functional residuals $\widehat{U}_{t,m} - \bar{\widehat{U}}_{t}$ and $\bar{\widehat{U}}_{t} = n^{-1} \sum_{t=1}^n \widehat{U}_{t,m}$, for $t=1,...,n$.
\end{itemize}

\begin{remark}
Notice that $X_1^{*},...,X_n^{*}$ are functional pseudo-random variables and the the autoregressive representation of the vector time series of Fourier coefficients is solely used as a toll to bootstrap the $m$ main functional principal components of the functional time series. In fact, it is this autoregressive representation which allows the generation of the pseudo-time series of Fourier coefficients $\xi^{*}_1,...,\xi^{*}_n$ in Step 4 and Step 5 in a way that imitates the dependence structure of the sample Fourier coefficients $\xi_1,...,\xi_n$. These pseudo-Fourier coefficients are transformed to bootstrapped main principal components by means of the truncated and estimated Karhunen-Loeve expansion which together with the additive functional noise $U_t^{*}$, lead to the new functional pseudo-observations $X_1^{*},...,X_n^{*}$ (see, \cite{paparoditis2018sieve}).  
\end{remark}

Furthermore, an important issue for bootstrap-based inference methods is to  investigate the validity of the functional sieve bootstrap applied in order to approximate the distribution of some statistic $T_n = T \left( X_1,...,X_n \right)$ of interest, when the bootstrap analogue $T_n^{*} = T \left( X_1^{*},...,X_n^{*} \right)$ is used. Notice that establishing validity of a bootstrap procedure for time series heavily depends on two issues. On the dependence structure of the underlying process which affects the distribution of the statistic of interest and on the capability of the bootstrap procedure used to mimic appropriately this dependence structure. 

\newpage    

Recall the definition given by \cite{paparoditis2018sieve} such that
\begin{align}
X_t^{*} = \sum_{j=1}^m \textbf{1}_j^{\top} \xi_j^{*} \widehat{v}_j + U_t^{*}
\end{align}
and observe that $\xi^{*}_t = \sum_{l=0}^{\infty} \widehat{\Psi}_{l,p}(m) e^{*}_{t-l}$, where $\widehat{\Psi}_{0,p}(m) = I_m$ and the power series 
\begin{align}
\widehat{\Psi}_{m,p}(z) = I_m + \sum_{l=1}^{\infty} \widehat{\Psi}_{l,p}(m) z^l = \left( I_m - \sum_{j=1}^p \widehat{A}_{j,p} (m) z^j \right)^{-1}
\end{align}    
converges for $|z| \leq 1$. Thus, following \cite{paparoditis2018sieve}  we can decompose each sequence as below
\begin{align}
X_t^{*} &= \sum_{l=0}^{\infty} \sum_{j=1}^m \textbf{1}_j^{\top} \widehat{\Psi}_{l,p} (m) e^{*}_{t-l} \widehat{v}_j + U_t^{*}
\\
X_{t,M}^{*} &= \sum_{l=0}^{M-1} \sum_{j=1}^m \textbf{1}_j^{\top} \widehat{\Psi }_{l,p} (m) e_{t-l}^{*} \widehat{v}_j +  \sum_{l=M}^{\infty} \sum_{j=1}^m \textbf{1}_j^{\top} \widehat{\Psi}_{l,p} (m) e_{t-l}^{*} \widehat{v}_j +  U_t^{*}
\end{align}
where for each $t \in \mathbb{Z}, \left\{ e_{s,t}^{*} , s \in \mathbb{Z} \right\}$ is an independent copy of $\left\{ e_{s}^{*} , s \in \mathbb{Z} \right\}$. Notice that, 
\begin{align}
X_M^{*} - X_{M,M}^{*} = \sum_{l = M}^{\infty} \sum_{j=1}^m \textbf{1}_j^{\top} \widehat{\Psi }_{l,p} (m) \big( e_{M-l}^{*} - e_{M-l,M}^{*} \big) \widehat{v}_j 
\end{align}
By Minskowski's inequality, we have that 
\begin{align}
\label{expression1}
\sqrt{ \mathbb{E} \norm{X_{M}^{*} - X_{M,M}^{*}}^2 } 
\leq
\sqrt{ \mathbb{E} \norm{\sum_{l = M}^{\infty} \sum_{j=1}^m \textbf{1}_j^{\top} \widehat{\Psi }_{l,p} (m) e_{M-l}^{*} \widehat{v}_j}^2 }  
\nonumber
+ \sqrt{ \mathbb{E} \norm{\sum_{l = M}^{\infty} \sum_{j=1}^m \textbf{1}_j^{\top} \widehat{\Psi }_{l,p} (m) e_{M-l,M}^{*}  \widehat{v}_j}^2 }.
\end{align}
Thus, evaluating the first expectation term, we obtain that using $\norm{A}_F^2 = tr \left( A A^{\top} \right)$ and the submultiplicative property of the Frobenius matrix norm, that 
\begin{align*}
\mathbb{E} \norm{\sum_{l = M}^{\infty} \sum_{j=1}^m \textbf{1}_j^{\top} \widehat{\Psi }_{l,p} (m) e_{M-l,M}^{*}\widehat{v}_j}^2
= 
\sum_{l=M}^{\infty} tr \left( \widehat{\Psi}_{l,p}(m) \Sigma^{*}(m) \widehat{\Psi}_{l,p}(m) \right) 
\leq 
\norm{ \widehat{\Sigma}_{e,p}^{1 / 2} (m) }_F^2 \sum_{l=M}^{\infty} \norm{ \widehat{\Psi}_{l,p}(m) }_F^2,
\end{align*}
where $\widehat{\Sigma}_{e,p}(m) = \widehat{\Sigma}_{e,p}^{1 / 2}(m) \widehat{\Sigma}_{e,p}^{1 / 2}(m)$. An identical expression appears for the second expectation term on the right-hand side of \ref{expression1}. Applying Minkowski's inequality again, we get by the exponential decay of $\norm{ \widehat{\Psi}_{l,p} (m) }_F$, such that the following probability bound holds (see, \cite{paparoditis2018sieve})
\begin{align*}
\sum_{M=1}^{\infty} \sqrt{ \mathbb{E} \norm{ X_M^{*} - X_{M,M}^{*}}^2} 
\leq 
2 \norm{ \widehat{\Sigma}_{e,p}^{1 / 2} (m) }_F^2 \sum_{l = M}^{\infty} \sum_{j=1}^m \norm{ \widehat{\Psi}_{l,p}(m) }_F
= 
2 \norm{ \widehat{\Sigma}_{e,p}^{1 / 2} (m) }_F^2 \sum_{l = 1}^{\infty} l \norm{ \widehat{\Psi}_{l,p}(m) }_F = \mathcal{O}_p(1).
\end{align*}

\newpage

\section{Conclusion}

There are various statistical identification procedures for SVAR models such as imposing sign-restrictions, short-run restrictions, long-run restrictions, identification by higher-order moments as well as identification by conditional heteroscedasticity and changes in volatility (e.g., see,  \cite{lanne2008identifying, lanne2008statistical}. More precisely, by exploiting the heteroscedasticity of the errors $\varepsilon_t$, then the identification of the structural parameters of the system can be achieved. In particular, \cite{lanne2010structural} assume Markov switching and a smooth transition in the covariance matrix of the error term $\varepsilon_t$ of the model. Specifically, the structural analysis of VARs in this case is based on the Markov regime switching property to identify shocks, which holds due to the fact that the reduced form error covariance matrix varies across states. Furthermore, a different stream of literature considers identification in SVAR models by assuming that the error terms are non-Gaussian and mutually independent (e.g., see \cite{lanne2017identification} and \cite{gourieroux2020identification}). For example, \cite{lanne2010structural} assume that the errors of the model are independent over time with a distribution that is a mixture of two Gaussian distributions with zero means and diagonal covariance matrices, one of which is an identify matrix and the other one has positive diagonal elements, which for identifiability have to be distinct. Roughly speaking there are specific systems where the use of the non-Gaussianity property is a more suitable identification strategy of structural shocks of the SVAR system which we leave for discussion in a follow-up paper.

In this note we have reviewed key statistical properties of VAR processes, cointegrated VAR processes and SVAR processes as well as suitable identification strategies for the structural parameters of these structural econometric specifications (see, also \cite{mumtaz2018vars}). Moreover, we have discusses several empirical-driven applications from the applied macroeconometrics literature and we studied the proposed identification and estimation methods employed to those cases. Then, motivated from the literature of dynamic causal effects (see, \cite{mertens2013dynamic}, \cite{stock2018identification} and \cite{budnik2023identifying}), we have motivated the main statistical and econometric tools for  identification of structural econometric models, commonly used for counterfactual analysis purposes, and then presented some examples of time series regression models, both important for understanding suitable techniques for policy-making evaluation in macroeconometrics. We leave any further presentation of other relevant topics for further research discussions. Lastly, some advanced topics were also presented such as break testing for unstable roots in structural error correction models, estimation techniques for the optimal lag order, the estimation of a high dimensional VAR model as well as a fast algorithmic procedure presented in the literature for detecting structural breaks in large VAR models. 

In the Appendix of this set of lecture notes we present an example of the maximum likelihood estimation for $\alpha-$stable autoregressive processes (see, \cite{andrews2009maximum}) as well as the main asymptotic theory results for inference for the VEC(1) model from \cite{guo2023inference}.

\newpage

\appendix
\numberwithin{equation}{section}
\makeatletter 

\section{Maximum Likelihood Theory}

\subsection{MLE for $\alpha-$Stable Autoregressive Processes}

Following the framework of \cite{andrews2009maximum}, we consider the MLE estimation for both causal and non-causal autoregressive time series processes with non-Gaussian $\alpha-$stable noise. In particular, the estimators for the autoregressive parameters are $n^{- 1 / \alpha}-$consistent (parametric rate) and converge in distribution to the maximizer of a random function. Although the form of this limiting distribution is intractable, the shape of the distribution can be examined using the bootstrap procedure. Moreover, the estimators for the parameters of the stable noise distribution have the traditional $n^{1/2}$ rate of convergence and are asymptotically normal. In addition, the behavior of the estimators for finite samples is studied via simulation, and we use the maximum likelihood estimation to fit a noncausal autoregressive model (see,  \cite{andrews2009maximum} and \cite{andrews2013model}). 

Let $\left\{ X_t \right\}$ be the AR process which satisfies the difference equations
\begin{align}
\phi_0 (B) X_t = Z_t,    
\end{align}
where the AR polynomial is given by 
\begin{align}
\phi_0 (z) &:= 1 - \phi_{01} z - ... -  \phi_{0p} z^p \neq 0, \ \ \text{for} \ |z| = 1,    
\end{align}
Furthermore, because $\phi_0 (z) \neq 0$ for $|z| = 1$, the Laurent series expansion of $1 / \phi_0 (z)$ is given by 
\begin{align}
\frac{1}{ \phi_0 (z) }  = \sum_{ j = - \infty }^{ + \infty} \psi_j z^j,  
\end{align}
and the unique, strictly stationary solution is given by 
\begin{align}
X_t =  \sum_{ j = - \infty }^{ + \infty} \psi_j Z_{t-j},   
\end{align}

\begin{itemize}

\item If $\phi_0(z) \neq 0$ for $|z| \leq 1$, then $\psi_j = 0$ for $j < 0$ and so $\left\{ X_t \right\}$ is a \textit{causal process}, since $X_t =  \sum_{ j = - \infty }^{ + \infty} \psi_j Z_{t-j}$, is a function of the past and present $\left\{ Z_t \right\}$. 

\item If $\phi_0(z) \neq 0$ for $|z| > 1$, then $X_t =  \sum_{ j = 0 }^{ + \infty} \psi_{-j} Z_{t+j}$, and  $\left\{ X_t \right\}$ is called a \textit{purely noncausal process}.
    
\end{itemize}

Moreover, in the purely non-causal process case, the coefficients $\left\{ \psi_j \right\}$ satisfiy
\begin{align}
\big( 1 - \phi_{01} z - ... -  \phi_{0p} z^p  \big) \big( \psi_0 + \psi_{-1} z^{-1} + ... \big) = 1,    
\end{align}
which, if $\phi_{0p} \neq 0$, implies that $\psi_0 = \psi_{-1} = ... = \psi_{1-p} = 0$ and $\psi_{1-p} = - \phi_{0p}^{-1}$.

\newpage

Therefore, to express $\phi_0(z)$ as the product of causal and purely noncausal polynomials, suppose
\begin{align}
\phi_0(z) = \textcolor{red}{ \big( 1 - \theta_{01} z - ... -  \theta_{0 r_0} z^{  r_0 } \big) }  \textcolor{blue}{ \big( 1 - \theta_{0, r_0 + 1} z - ... -  \theta_{0, r_0 + s_0 } z^{  s_0 } \big)  }, 
\end{align}
where $( r_0 + s_0 ) = p$ and 
\begin{align}
\theta_0 = 
\begin{cases}
\textcolor{red}{ \big( 1 - \theta_{01} z - ... -  \theta_{0 r_0} z^{  r_0 } \big) } =: \theta_0^{+} \neq 0 & ,\text{if} \ |z| \geq 1
\\
\textcolor{blue}{ \big( 1 - \theta_{0, r_0 + 1} z - ... -  \theta_{0, r_0 + s_0 } z^{  s_0 } \big)  } =: \theta_0^{*}  \neq 0  & ,\text{if} \ |z| \leq 1  
\end{cases}    
\end{align}
which implies that $\theta_0^{+}$ is a causal polynomial and $\theta_0^{*}$ is a purely noncausal polynomial. Therefore, $\phi_0(z)$ has a unique representation as the product of causal and noncausal polynomials if the true order of the AR polynomial $\phi_0(z)$ is less than $p$.  

Consider the following sequence $\left\{ Z_t \right\}$ such that
\begin{align}
Z_t = \textcolor{red}{ \big( 1 - \theta_{01} B - ... -  \theta_{0 r_0} B^{  r_0 } \big) } \textcolor{blue}{ \big( 1 - \theta_{0, r_0 + 1} B - ... -  \theta_{0, r_0 + s_0 } B^{  s_0 } \big)  }  X_t 
\end{align}
Consider the arbitraty autoregressive polynomials as below:
\begin{align}
\theta^{+}(z) &= \textcolor{red}{ \big( 1 - \theta_{1} z - ... -  \theta_{r} z^{r} \big) }
\\
\theta^{*}(z) &= \textcolor{blue}{ \big( 1 - \theta_{r + 1} z - ... -  \theta_{r + s} z^{  s } \big)  }
\end{align}
Then, we define with 
\begin{align}
Z_t ( \boldsymbol{\theta}, s ) :=  \textcolor{red}{ \big( 1 - \theta_{1} B - ... -  \theta_{r} B^{r} \big) } \textcolor{blue}{ \big( 1 - \theta_{r + 1} B - ... -  \theta_{r + s} B^{  s } \big)  } X_t  
\end{align}
where $\boldsymbol{\theta} = ( \theta_1,..., \theta_p )^{\prime}$ and $\boldsymbol{\theta}  = ( \theta_{01},..., \theta_{0p} )^{\prime}$. Moreover, let $\boldsymbol{\eta} = ( \eta_1,..., \eta_{p+4}  ) = ( \theta_1,..., \theta_p, \alpha, \beta, \sigma, \mu )^{\prime} = ( \boldsymbol{\theta}^{\prime}, \boldsymbol{\tau}^{\prime} )^{\prime}$. In other words, given a realization $\left\{ X_t \right\}_{t=1}^n$, then the log-likelihood of $\boldsymbol{\eta}$ can be approximated by the conditional log-likelihood such that 
\begin{align}
\mathcal{L} ( \boldsymbol{\eta}, s ) = \sum_{t = p+1}^s \big[ \mathsf{ln} f \big( Z_t ( \boldsymbol{\theta}, s ); \boldsymbol{\tau} \big) + \mathsf{ln} | \theta_p | \boldsymbol{1} \left\{ s > 0 \right\}  \big],    
\end{align}
where the component $\left\{ Z_t ( \boldsymbol{\theta}, s ) \right\}_{t=p+1}^n$ is computed separately. 
\begin{remark}
The use of causal and non-causal autoregressive processes can be helpful in understanding concepts such as identification, granger non-causality and weak exogeneity while ensuring that these properties are not violated when learning for dynamic causal effects based on macroeconomic models. Representations for such univariate processes when modelling explosive bubbles/regimes are studied by \cite{fries2019mixed}, \cite{davis2020noncausal}, \cite{cavaliere2020bootstrapping} and \cite{hecq2021forecasting} while related representations for multivariate processes are studied by \cite{lanne2013noncausal}, \cite{velasco2018frequency} and \cite{rygh2022causal}.  A framework for optimal forecasting with noncausal autoregressive time series is presented by \cite{lanne2012optimal}. 
\end{remark}

\newpage 

\subsection{Inference for the VEC(1) Model}

Consider an $m-$dimensional vector $\boldsymbol{Y}_t$ satisfying the following equation (see, \cite{guo2023inference}) 
\begin{align}
\Delta \boldsymbol{Y}_t = \boldsymbol{\Pi}_0 \boldsymbol{Y}_{t-1} + \boldsymbol{u}_t \equiv \boldsymbol{\alpha}_0 \boldsymbol{\beta}_0^{\prime} \boldsymbol{Y}_{t-1} + \boldsymbol{u}_t,     
\end{align}
where $\boldsymbol{\Pi}_0 = \boldsymbol{\alpha}_0 \boldsymbol{\beta}_0^{\prime}$ has matrix rank $0 \leq r_0 \leq m$, with $\boldsymbol{\alpha}_0$ and $\boldsymbol{\beta}_0$ being $( m \times r_0)$ full-rank matrices. 

Moreover, \cite{johansen1995identifying} proposed the following partial sum Granger representation
\begin{align}
\boldsymbol{Y}_t &= \boldsymbol{C} \sum_{s=1}^t \boldsymbol{u}_s +  \boldsymbol{\alpha}_0 \left( \boldsymbol{\beta}_0^{\prime}  \boldsymbol{\alpha}_0 \right)^{-1} \boldsymbol{\Xi} (L) \boldsymbol{\beta}_0^{\prime} \boldsymbol{u}_t + \boldsymbol{C} \boldsymbol{Y}_0,
\\
\boldsymbol{C} &=  \boldsymbol{\alpha}_{0 \perp} \left( \boldsymbol{\alpha}_{0 \perp} \boldsymbol{\beta}^{\prime}_{0 \perp}   \right)^{-1} \boldsymbol{\alpha}_{0 \perp}^{\prime}  \ \ \ \boldsymbol{\Xi}(L) = \sum_{s=0}^{\infty} \boldsymbol{\Xi}_s L^s.  
\end{align}
Denote with 
\begin{align}
\boldsymbol{Q} = \big[ \boldsymbol{\beta}_0,  \boldsymbol{\alpha}_{0 \perp} \big]^{\prime} \ \ \ \boldsymbol{Q}^{-1} = \big[ \boldsymbol{\alpha}_0 \left( \boldsymbol{\beta}^{\prime}_0 \boldsymbol{\alpha}_0 \right)^{-1}, \boldsymbol{\beta}^{\prime}_{0 \perp} \left( \boldsymbol{\alpha}_{0 \perp} \boldsymbol{\beta}^{\prime}_{0 \perp}  \right)^{-1} \big].    
\end{align}
Thus, we have that
\begin{align}
\boldsymbol{Q}  \boldsymbol{\Pi}_0  =
\begin{bmatrix}
\boldsymbol{\beta}^{\prime}_{0} \boldsymbol{\alpha}_{0}  \boldsymbol{\beta}^{\prime}_{0} 
\\
\boldsymbol{0}
\end{bmatrix}
\ \ \ 
\boldsymbol{Q}  \boldsymbol{\Pi}_0 \boldsymbol{Q}^{-1} 
=
\begin{bmatrix}
\boldsymbol{\beta}^{\prime}_{0} \boldsymbol{\alpha}_{0}  & \boldsymbol{0}
\\
\boldsymbol{0} & \boldsymbol{0}
\end{bmatrix}
\end{align}
Denote with 
\begin{align}
\boldsymbol{Z}_t = 
\begin{pmatrix}
\boldsymbol{\beta}_0^{\prime} \boldsymbol{Y}_t
\\
\boldsymbol{\alpha}^{\prime}_{0 \perp} \boldsymbol{Y}_t
\end{pmatrix}
\equiv 
\begin{pmatrix}
\boldsymbol{Z}_{1,t}
\\
\boldsymbol{Z}_{2,t}
\end{pmatrix},
\ \ \ \ \ 
\boldsymbol{W}_t = 
\begin{pmatrix}
\boldsymbol{\beta}_0^{\prime} \boldsymbol{u}_t
\\
\boldsymbol{\alpha}^{\prime}_{0 \perp} \boldsymbol{u}_t
\end{pmatrix}
\equiv 
\begin{pmatrix}
\boldsymbol{W}_{1,t}
\\
\boldsymbol{W}_{2,t}
\end{pmatrix}
\end{align}
which implies that the econometric specification can be formulated as below
\begin{align}
\Delta \boldsymbol{Z}_t = \boldsymbol{Q} \boldsymbol{\Pi}_0    \boldsymbol{Q}^{-1} \boldsymbol{Z}_{t-1} + \boldsymbol{W}_t.
\end{align}
Using $\boldsymbol{Z}_{1,t} = \boldsymbol{\beta}_0^{\prime} \boldsymbol{Y}_t$ and $\boldsymbol{Z}_{2,t} = \boldsymbol{\alpha}^{\prime}_{0 \perp} \boldsymbol{Y}_t$ we obtain the following expressions
\begin{align}
\textcolor{blue}{ \boldsymbol{Z}_{1,t} } &\equiv  \textcolor{blue}{ \boldsymbol{\beta}_0^{\prime} } \boldsymbol{Y}_t 
= 
\textcolor{blue}{ \boldsymbol{\beta}_0^{\prime} } \boldsymbol{C} \sum_{s=1}^t \boldsymbol{u}_s +  \textcolor{blue}{ \boldsymbol{\beta}_0^{\prime} } \boldsymbol{\alpha}_0 \left( \boldsymbol{\beta}_0^{\prime}  \boldsymbol{\alpha}_0 \right)^{-1} \boldsymbol{\Xi} (L) \boldsymbol{\beta}_0^{\prime} \boldsymbol{u}_t + \textcolor{blue}{ \boldsymbol{\beta}_0^{\prime} } \boldsymbol{C} \boldsymbol{Y}_0,  
\\
\textcolor{red}{ \boldsymbol{Z}_{2,t} } &\equiv \textcolor{red}{ \boldsymbol{\alpha}^{\prime}_{0 \perp} } \boldsymbol{Y}_t = \textcolor{red}{ \boldsymbol{\alpha}^{\prime}_{0 \perp} }  \boldsymbol{C} \sum_{s=1}^t \boldsymbol{u}_s + \textcolor{red}{ \boldsymbol{\alpha}^{\prime}_{0 \perp} }  \boldsymbol{\alpha}_0 \left( \boldsymbol{\beta}_0^{\prime}  \boldsymbol{\alpha}_0 \right)^{-1} \boldsymbol{\Xi} (L) \boldsymbol{\beta}_0^{\prime} \boldsymbol{u}_t + \textcolor{red}{ \boldsymbol{\alpha}^{\prime}_{0 \perp} }  \boldsymbol{C} \boldsymbol{Y}_0, 
\end{align}
Therefore, we obtain that 
\begin{align}
\textcolor{blue}{ \boldsymbol{Z}_{1,t} }  &=  \boldsymbol{\Xi} (L) \boldsymbol{\beta}_0^{\prime} \boldsymbol{u}_t  = \sum_{j=0} \left( \sum_{i=0}^j \boldsymbol{\Xi}_i \boldsymbol{\beta}_0^{\prime} \boldsymbol{D}_{j-i} \right) \boldsymbol{\epsilon}_{t-j} \equiv \sum_{j=0}^{ \infty } \boldsymbol{C}_j \boldsymbol{\varepsilon}_{t-j},
\end{align}

\newpage

\begin{align}
\textcolor{red}{ \boldsymbol{Z}_{2,t} }  &=  \textcolor{red}{ \boldsymbol{\alpha}^{\prime}_{0 \perp} } \boldsymbol{C} \left( \sum_{s=1}^t \boldsymbol{u}_s + \boldsymbol{Y}_0 \right) \equiv \big[ \boldsymbol{0}, \boldsymbol{I}_d \big] \boldsymbol{Q} \left(  \sum_{s=1}^t \boldsymbol{u}_s + \boldsymbol{Y}_0 \right) \equiv \big[ \boldsymbol{0}, \boldsymbol{I}_d \big] \sum_{s=1}^t \boldsymbol{\gamma}_s,
\end{align}
where $d = ( m - r_0 )$, $\boldsymbol{C}_j = \sum_{i=0}^j \boldsymbol{\Xi}_i \boldsymbol{\beta}^{\prime} \boldsymbol{D}_{j-i} = \mathcal{O}_p ( \rho^j )$ and
\begin{align}
\boldsymbol{\gamma}_s = \sum_{i=0}^{\infty} \boldsymbol{Q} \boldsymbol{D}_i \boldsymbol{\varepsilon}_{s-i} \equiv \sum_{i=0}^{\infty} \boldsymbol{\phi}_i \boldsymbol{\varepsilon}_{s-i}, \ \ \text{with} \   \boldsymbol{\phi}_i = \boldsymbol{Q} \boldsymbol{D}_i = \mathcal{O}_p(\rho^i) \ \text{for some} \ \rho \in (0,1).     
\end{align}
Therefore, estimating the matrix $\boldsymbol{\Pi}_0$ by OLS gives 
\begin{align}
\widehat{ \boldsymbol{\Pi} }_{ols} 
:= 
\underset{ \Pi \in \mathbb{R}^{ m \times m }  }{ \mathsf{arg min} } \ \sum_{t=1}^n \norm{ \Delta \boldsymbol{Y}_t - \boldsymbol{\Pi} \boldsymbol{Y}_{t-1}  }^2 
= 
\left( \sum_{t=1}^n  \Delta \boldsymbol{Y}_t \boldsymbol{Y}_{t-1}^{\prime}  \right) \left(  \sum_{t=1}^n  \boldsymbol{Y}_{t-1} \boldsymbol{Y}_{t-1}^{\prime}  \right)^{-1}.
\end{align}

\begin{theorem}[\cite{guo2023inference}]
Let $\bar{\boldsymbol{\beta}}_{\perp} = \boldsymbol{\beta}_{0 \perp} \left( \boldsymbol{\alpha}^{\prime}_{0 \perp} \boldsymbol{\beta}_{0 \perp} \right)^{-1}$, $\bar{\boldsymbol{\beta}} = \boldsymbol{\alpha}_0 \left( \boldsymbol{\beta}^{\prime}_0 \boldsymbol{\alpha}_0 \right)^{-1}$ and $\boldsymbol{\phi} = \boldsymbol{Q} \sum_{i=0}^{\infty} \boldsymbol{D}_i$. Suppose that Assumptions hold and that $\varepsilon_t$ has a symmetric distribution, then it follows
\begin{itemize}
\item[(a).] $\left( \widehat{\boldsymbol{\Pi}}_{ols} - \boldsymbol{\Pi}_0 \right) \bar{\boldsymbol{\beta}} \to_d \boldsymbol{R}_1 \boldsymbol{\Gamma}_{11}^{-1}$, 

\item[(b).]  $n \left( \widehat{\boldsymbol{\Pi}}_{ols} - \boldsymbol{\Pi}_0 \right) \bar{\boldsymbol{\beta}}_{\perp} \to_d \boldsymbol{R}_2^{*}  \boldsymbol{\Gamma}_{22}^{-1} - \boldsymbol{R}_1^{*} \boldsymbol{\Gamma}_{11}^{-1} \boldsymbol{\Gamma}_{21} \boldsymbol{\Gamma}_{22}^{-1}$,
\end{itemize}
\end{theorem}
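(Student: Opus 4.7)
The plan is to start from the algebraic identity $\widehat{\boldsymbol{\Pi}}_{ols} - \boldsymbol{\Pi}_0 = \bigl(\sum_{t=1}^n \boldsymbol{u}_t \boldsymbol{Y}_{t-1}^{\prime}\bigr)\bigl(\sum_{t=1}^n \boldsymbol{Y}_{t-1} \boldsymbol{Y}_{t-1}^{\prime}\bigr)^{-1}$ and exploit the rotation $\boldsymbol{Q} = [\boldsymbol{\beta}_0, \boldsymbol{\alpha}_{0\perp}]^{\prime}$ already set up in the preamble, which decomposes the regressor $\boldsymbol{Y}_{t-1}$ into the stationary coordinate $\boldsymbol{Z}_{1,t-1} = \boldsymbol{\beta}_0^{\prime}\boldsymbol{Y}_{t-1}$ and the $I(1)$ coordinate $\boldsymbol{Z}_{2,t-1} = \boldsymbol{\alpha}_{0\perp}^{\prime}\boldsymbol{Y}_{t-1}$. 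Since $[\bar{\boldsymbol{\beta}},\bar{\boldsymbol{\beta}}_{\perp}] = \boldsymbol{Q}^{-1}$, right-multiplying by $\bar{\boldsymbol{\beta}}$ or $\bar{\boldsymbol{\beta}}_{\perp}$ projects the rotated version of the estimator onto the stationary or the nonstationary direction, respectively, and the two different rates of convergence emerge naturally from this split.

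First I would write $\sum_t \boldsymbol{Y}_{t-1}\boldsymbol{Y}_{t-1}^{\prime} = \boldsymbol{Q}^{-1}\bigl(\sum_t \boldsymbol{Z}_{t-1}\boldsymbol{Z}_{t-1}^{\prime}\bigr)(\boldsymbol{Q}^{-1})^{\prime}$ and introduce the block scaling matrix $\boldsymbol{D}_n$ equal to the block-diagonal with entries $\sqrt{n}\boldsymbol{I}_{r_0}$ and $n\boldsymbol{I}_d$. The five normalized partial sums that govern the limits are then: $\boldsymbol{A}_n := n^{-1}\sum_t \boldsymbol{Z}_{1,t-1}\boldsymbol{Z}_{1,t-1}^{\prime} \overset{p}{\to} \boldsymbol{\Gamma}_{11}$ by the ergodic theorem for the stationary linear process representation of $\boldsymbol{Z}_{1,t}$; $\boldsymbol{C}_n := n^{-2}\sum_t \boldsymbol{Z}_{2,t-1}\boldsymbol{Z}_{2,t-1}^{\prime}\overset{d}{\to}\boldsymbol{\Gamma}_{22}$ and $\boldsymbol{B}_n := n^{-3/2}\sum_t \boldsymbol{Z}_{1,t-1}\boldsymbol{Z}_{2,t-1}^{\prime}\overset{d}{\to}\boldsymbol{\Gamma}_{21}$ by the FCLT applied to the partial sums $\sum_{s\leq nr}\boldsymbol{\gamma}_s$ driving $\boldsymbol{Z}_{2,t}$ together with the exponential decay of the coefficients $\boldsymbol{C}_j$ and $\boldsymbol{\phi}_j$; and on the score side, $\boldsymbol{S}_n := n^{-1/2}\sum_t \boldsymbol{u}_t \boldsymbol{Z}_{1,t-1}^{\prime}\overset{d}{\to}\boldsymbol{R}_1$ by a martingale CLT, while $\boldsymbol{T}_n := n^{-1}\sum_t \boldsymbol{u}_t \boldsymbol{Z}_{2,t-1}^{\prime}\overset{d}{\to}\boldsymbol{R}_2^{*}$ as a stochastic integral of the limiting Brownian motion of $\boldsymbol{Z}_{2,t}$ against the Brownian motion driving $\boldsymbol{u}_t$. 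The symmetry assumption on $\boldsymbol{\varepsilon}_t$ kills the one-sided long-run covariance bias that would otherwise contaminate $\boldsymbol{R}_2^{*}$, which is why this hypothesis is imposed.

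Combining these ingredients via the Schur complement of the block Gram matrix then delivers both limits. In the rotated basis, $(\widehat{\boldsymbol{\Pi}}_{ols}-\boldsymbol{\Pi}_0)\boldsymbol{Q}^{-1} = \bigl[n^{-1/2}\boldsymbol{S}_n,\ n^{-1}\boldsymbol{T}_n\bigr]\bigl(\boldsymbol{D}_n^{-1}(\sum_t \boldsymbol{Z}_{t-1}\boldsymbol{Z}_{t-1}^{\prime})\boldsymbol{D}_n^{-1}\bigr)^{-1}\boldsymbol{D}_n^{-1}$. For (a), multiplying by $\bar{\boldsymbol{\beta}}$ extracts the first block column; because the off-diagonal block $\boldsymbol{B}_n$ is $O_p(1)$ while the diagonal blocks converge to $\boldsymbol{\Gamma}_{11}$ and $\boldsymbol{\Gamma}_{22}$, the Schur-complement formula reduces the leading term in this block to $\boldsymbol{\Gamma}_{11}^{-1}$, so that $(\widehat{\boldsymbol{\Pi}}_{ols}-\boldsymbol{\Pi}_0)\bar{\boldsymbol{\beta}} = \boldsymbol{S}_n \boldsymbol{A}_n^{-1} + o_p(1) \overset{d}{\to} \boldsymbol{R}_1 \boldsymbol{\Gamma}_{11}^{-1}$. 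For (b), multiplying by $n\bar{\boldsymbol{\beta}}_{\perp}$ isolates the second block column of the inverse, and the Schur complement gives $n(\widehat{\boldsymbol{\Pi}}_{ols}-\boldsymbol{\Pi}_0)\bar{\boldsymbol{\beta}}_{\perp} = (\boldsymbol{T}_n - \boldsymbol{S}_n\boldsymbol{A}_n^{-1}\boldsymbol{B}_n)(\boldsymbol{C}_n - \boldsymbol{B}_n^{\prime}\boldsymbol{A}_n^{-1}\boldsymbol{B}_n)^{-1}$. Applying the continuous mapping theorem to this expression and writing the limit as $\boldsymbol{R}_2^{*}\boldsymbol{\Gamma}_{22}^{-1} - \boldsymbol{R}_1^{*}\boldsymbol{\Gamma}_{11}^{-1}\boldsymbol{\Gamma}_{21}\boldsymbol{\Gamma}_{22}^{-1}$ yields the stated result, where $\boldsymbol{R}_1^{*}$ is the partner of $\boldsymbol{R}_1$ emerging from the joint limit of $(\boldsymbol{S}_n,\boldsymbol{T}_n,\boldsymbol{A}_n,\boldsymbol{B}_n,\boldsymbol{C}_n)$.

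The hard part will be establishing the joint weak convergence of these five normalized objects on a single probability space so that the continuous mapping theorem applied to both matrix inversion and products is legitimate, and in particular that the Schur-complement manipulations do not introduce hidden endogeneity bias. This requires a joint invariance principle for $\bigl(n^{-1/2}\sum_{s\leq nr}\boldsymbol{u}_s,\ n^{-1/2}\sum_{s\leq nr}\boldsymbol{\gamma}_s\bigr)$ together with uniform control over the higher-order remainders from the linear-filter representations of $\boldsymbol{Z}_{1,t}$ and $\boldsymbol{\gamma}_t$. The symmetry of $\boldsymbol{\varepsilon}_t$ will have to be used carefully to justify that the stochastic-integral limit $\boldsymbol{R}_2^{*}$ is free of the one-sided long-run covariance that usually appears in unit-root asymptotics, and that the cross-term coefficient in (b) is precisely $\boldsymbol{R}_1^{*}$ rather than $\boldsymbol{R}_1$; this delicate identification of the limit is the most demanding step.
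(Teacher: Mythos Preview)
Your overall block-rotation strategy via $\boldsymbol{Q}=[\boldsymbol{\beta}_0,\boldsymbol{\alpha}_{0\perp}]^{\prime}$ and the Schur-complement bookkeeping is exactly the right template, but you have misidentified the probabilistic setting, and this makes the proposal break down at the very first step. The theorem is stated for a VEC(1) model driven by \emph{heavy-tailed} innovations in the domain of attraction of a stable law: the limit objects $\boldsymbol{S}_1,\boldsymbol{S}_\ell$ entering $\boldsymbol{\Gamma}_{11}=\sum_{j}\boldsymbol{C}_j\boldsymbol{S}_1\boldsymbol{C}_j^{\prime}$ and $\boldsymbol{R}_1,\boldsymbol{R}_2$ are themselves \emph{random} stable matrices, not deterministic covariance matrices, and the relevant normalizers are the stable-domain sequences $a_n,\tilde a_n$ (with $\boldsymbol{H}=\mathsf{diag}\{(\tilde a_n/a_n)\boldsymbol{I}_{r_0},\,n^{-1/2}a_n\boldsymbol{I}_{m-r_0}\}$), not $\sqrt{n}$ and $n$. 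Your appeals to the ergodic theorem for $\boldsymbol{A}_n$, the martingale CLT for $\boldsymbol{S}_n$, and the Brownian FCLT for $\boldsymbol{C}_n$ all presuppose finite second moments of $\boldsymbol{u}_t$ and therefore do not apply here.

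There is a simple internal check that flags the error: under your finite-variance calculus, $(\widehat{\boldsymbol{\Pi}}_{ols}-\boldsymbol{\Pi}_0)\bar{\boldsymbol{\beta}}=\bigl(\sum_t\boldsymbol{u}_t\boldsymbol{Z}_{1,t-1}^{\prime}\bigr)\bigl(\sum_t\boldsymbol{Z}_{1,t-1}\boldsymbol{Z}_{1,t-1}^{\prime}\bigr)^{-1}+o_p(1)=n^{-1/2}\boldsymbol{S}_n\boldsymbol{A}_n^{-1}+o_p(1)=o_p(1)$, so part~(a) would collapse to zero rather than to the nondegenerate limit $\boldsymbol{R}_1\boldsymbol{\Gamma}_{11}^{-1}$. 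The point of part~(a) is precisely that in the stationary direction the OLS estimator is \emph{inconsistent} under heavy tails, converging to a ratio of dependent stable random matrices; this self-normalizing ratio structure is what makes the $a_n$'s cancel. The correct argument therefore starts from the joint stable weak convergence $\bigl(a_n^{-2}\sum_t\boldsymbol{Z}_{1,t-1}\boldsymbol{Z}_{1,t-1}^{\prime},\ \tilde a_n^{-2}\sum_t\boldsymbol{u}_t\boldsymbol{Z}_{1,t-1}^{\prime}\bigr)\Rightarrow(\boldsymbol{\Gamma}_{11},\boldsymbol{R}_1)$ together with the functional stable limit for the partial sums underlying $\boldsymbol{Z}_{2,t}$ (the process $\boldsymbol{P}(r)$ in the definitions of $\boldsymbol{R}_2$ and $\boldsymbol{\Gamma}_{22}$), and then proceeds with the block-inverse manipulation using the scaling matrix $\boldsymbol{H}$ above. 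The symmetry of $\boldsymbol{\varepsilon}_t$ is used not to remove a long-run covariance bias in the Brownian sense you describe, but to ensure that certain odd-order cross moments vanish so that the joint stable limits of the numerator and denominator blocks hold in the required product form.
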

Denote with $\boldsymbol{R}_2$ such that 
\begin{align}
\boldsymbol{R}_2 
&= 
\left[ \int_0^1 \boldsymbol{P}(r) d \boldsymbol{P}^{\prime} (r) \right]^{\prime} \boldsymbol{\phi}^{\prime} \big[ \boldsymbol{0}, \boldsymbol{I}_d \big]^{\prime},   
\\
\boldsymbol{\Gamma}_{22} 
&= 
\big[ \boldsymbol{0}, \boldsymbol{I}_d \big] \boldsymbol{\phi} \left[ \int_0^1 \boldsymbol{P}(r) \boldsymbol{P}^{\prime} (r) dr \right] \boldsymbol{\phi}^{\prime} \big[ \boldsymbol{0}, \boldsymbol{I}_d \big]^{\prime},   
\\
\boldsymbol{\Gamma}_{11} 
&= 
\sum_{j=0}^{\infty} \boldsymbol{C}_j \boldsymbol{S}_1 \boldsymbol{C}_j^{\prime}, 
\\
\boldsymbol{\Gamma}_{21} 
&= 
\left\{ \boldsymbol{R}_2^{\prime} \sum_{i=0}^{\infty} \boldsymbol{C}_i^{\prime} + \big[ \boldsymbol{0}, \boldsymbol{I}_d \big]  \boldsymbol{Q} \sum_{i=0}^{\infty} \sum_{j=0}^i \boldsymbol{D}_j \boldsymbol{S}_1 \boldsymbol{C}_i^{\prime} \right\}. 
\end{align}

\begin{remark}
Notice that in order to establish the asymptotic distribution of the OLS estimator for the VEC(1) model with heavy tailed errors, we shall consider the weak convergence of the matrix functionals based on the above limit results. In particular, the OLS related to the short-term parameters is not consistent and converges to a functional of stable matrix-processes. Moreover, the inconsistency is because $\left\{ \boldsymbol{u}_t \right\}$ is a series of dependent random vectors which is a critical difference from the $\textit{i.i.d}$ white noises cases as in \cite{she2022whittle}. 
\end{remark}

\newpage

By defining suitable functionals as above we can establish the joint weak convergence given below
\begin{align}
\left( \frac{1}{ a_n^2 } \sum_{t=1}^n \boldsymbol{Y}_{t-i} \boldsymbol{Y}_{t-j}^{\prime}, \frac{1}{ \tilde{a}_n^2 } \sum_{t=1}^n \boldsymbol{\varepsilon}_t \boldsymbol{Y}_{t-k}^{\prime} \right) \overset{d}{\to} \left( \sum_{\ell = 0}^{\infty} \boldsymbol{\Psi}_{\ell} \boldsymbol{S}_1 \boldsymbol{\Psi}^{\prime}_{\ell + i - j },  \sum_{\ell = 0}^{\infty}  \boldsymbol{S}_{ \ell + k + 1 } \boldsymbol{\Psi}_{\ell}^{\prime} \right)  
\end{align}
where $1 \leq j \leq i \leq p$, $1 \leq k \leq p$ and $\boldsymbol{Y}_t = \displaystyle \sum_{ \ell = 0 }^{ \infty } \boldsymbol{\Psi}_{\ell}  \boldsymbol{\varepsilon}_{t - \ell}$. 

Moreover, denoting with $\boldsymbol{B}_n = \left( \boldsymbol{D}_n \boldsymbol{Q} \right)^{-1}$ and $\boldsymbol{H}$ such that
\begin{align*}
\boldsymbol{H} = \mathsf{diag} \left\{ \frac{ \tilde{a}_n }{ a_n } \boldsymbol{I}_{ r_0 }, n^{-1/2} a_n \boldsymbol{I}_{m - r_0} \right\}.
\end{align*}
Then, based on the above asymptotic results and the joint weak convergence we obtain that
\begin{align*}
\boldsymbol{H}^{-1} \boldsymbol{D}_n \left( \sum_{t=1}^n \boldsymbol{Z}_{t-1} \boldsymbol{Z}_{t-1}^{\prime}  \right) \boldsymbol{D}_n \boldsymbol{H}^{-1} 
= 
\begin{bmatrix}
\displaystyle \frac{1}{ a_n^2 } \boldsymbol{S}_{11} & \displaystyle \frac{1}{ \sqrt{n} } a_n^{-2} \boldsymbol{S}_{12n}  
\\
\displaystyle \frac{1}{ \sqrt{n} } \frac{1}{ a_n^2 } \boldsymbol{S}_{11} & \displaystyle \frac{1}{ \sqrt{n} } a_n^{-2} \boldsymbol{S}_{12n}  
\end{bmatrix}
&\overset{d}{\to} 
\begin{bmatrix}
\displaystyle \sum_{ \ell = 0 }^{\infty} \boldsymbol{C}_{\ell} \boldsymbol{S}_1  \boldsymbol{C}_{\ell}^{\prime}  & \boldsymbol{0}
\\
\boldsymbol{0} & \displaystyle \boldsymbol{\phi} \left( \int_0^1 \boldsymbol{P}(r) \boldsymbol{P}^{\prime}(r) dr \right) \boldsymbol{\phi}
\end{bmatrix}.
\end{align*}
Therefore, it holds that 
\begin{align}
\mathsf{vec} \left( \widetilde{\boldsymbol{\Pi} }_{n,f} -  \widehat{\boldsymbol{\Pi}}_{n} \right)^{\prime} \left( \sum_{t=1}^n \boldsymbol{Y}_{t-1} \boldsymbol{Y}_{t-1}^{\prime} \otimes \boldsymbol{I}_m \right) \mathsf{vec} \left( \widetilde{\boldsymbol{\Pi} }_{n,f} - \widehat{\boldsymbol{\Pi}}_{n}  \right)^{\prime} \geq \mu_{ 2n. \mathsf{min} } \norm{ ( \widehat{\boldsymbol{\Pi}}_n - \widetilde{\boldsymbol{\Pi}}_{n,f} ) \boldsymbol{B}_n \boldsymbol{H} }^2. 
\end{align}

\subsubsection{Permanent-Transitory Shocks Decomposition}

Understanding the existing methods for decomposing permanent and transitory effects in cointegrating regressions is useful also in correctly implementing relevant identification strategies for SVAR models. In particular, \cite{gonzalo2001systematic} propose a framework for analyzing the dynamic effects of permanent and transitory shocks. Moreover \cite{hecq2000permanent} apply the permanent-transitory decomposition (see, \cite{beveridge1981new} and \cite{ quah1992relative}) in VAR models with cointegration and common cycles (see, also \cite{garratt2006permanent} and \cite{myers2018long}).  

Specifically decomposing which shocks are temporary and which shocks have a permanent effects is important for cointegration analysis purposes. Consider a bivariate SVAR(1) system, then if there as at least one permanent shock a cointegrated VAR representation can be identified or a Vector Error Correction Model while if all shocks have a permanent shock then there is no underline cointegration dynamics and thus the model represents a VAR system in first differences. In other words, under the absence of transitory shocks then the system involves no cointegration dynamics. On the other hand, there should be at least one shock that is not transitory but has a permanent impact in order to ensure that a cointegrated VAR representation holds.

\newpage 

Consider the following three-variable VEC
\begin{align*}
\Delta y_t = \mu + \alpha \beta^{\prime} y_{t-1} + \sum_{j=1}^q \Gamma_j \Delta y_{t-j} + \varepsilon_t,
\end{align*} 
such that there are $r \leq 2$ cointegrating relationships, $\beta$ contains the cointegrating vectors and $\mu, \alpha, \Gamma_j$ are unknown parameters to be estimated. Moreover, the error terms $\varepsilon_t$ are assumed to be serially uncorrelated but may be contemporaneously correlated. Based on the aforementioned formulation, \cite{gonzalo1995estimation} showed that given the three variable VEC representation with two cointegrating vectors, $y_t$ can be decomposed into permanent and transitoru components as below
\begin{align*}
y_t := A a^{\prime}_{\perp} y_t + B \beta^{\prime} y_{t} 
\equiv  
\underbrace{ \beta_{\perp} \left( \alpha_{\perp}^{\prime} \beta_{\perp} \right)^{-1} a^{\prime}_{\perp} }_{ \text{permanent component} } y_t 
+ 
\underbrace{ \alpha \left( \beta^{\prime} \alpha \right)^{-1} \beta^{\prime} }_{ \text{transitory component}  }y_{t}
\end{align*} 
where $A$ and $B$ are loading matrices which scale the two components accordingly.
Notice that the decomposition works because the term $a^{\prime}_{\perp} y_t$ represents the only linear combination of $y_t$ that ensures the transitory component  $B \beta^{\prime} y_{t}$ has no long-run impact on $y_t$. Furthermore, the particular decomposition when modelling the joint behaviour of multiple time series, is also useful in applications related to permanent and temporary income shock dynamics as well as to household consumption patterns. 

\begin{remark}
Long-run restrictions were pioneered with the seminal study of \cite{blanchard1988dynamic}. Moreover, the advantage of considering cointegration dynamics is that it can be shown that the long-run behaviour of an $n$ dimensional system with $r$ cointegrating relationships can be described by only $(n - r)$ independent stochastic trends. In other words, the presence of common trends can be identified as structural innovations to the system which implies that the remaining $r$ structural innovations cannot have a permanent impact to the system. Therefore, in order to conduct proper inference in these structural models the asymptotic distribution of the parameters of interest has to be derived. 
\end{remark}

\begin{example}[Multivariate Beveridge-Nelson Decomposition with $I(1)$ and $I(2)$ series]

The study of \cite{murasawa2015multivariate} considers that the consumption Euler equation implies that the output growth rate and the real interest rate are of the same order of integration. To estimate the natural rates and gaps of macroeconomic variables jointly, we need to develop a multivariate Beveridge-Nelson Decomposition when some series are $I(1)$ and other are $I(2)$. These common sets of macro variables as below: 
\begin{itemize}
\item output: let $Y_t$ be the output.  

\item inflation rate: let $P_t$ be the price level and $\pi_t = ln( P_t / P_{t-1} )$ be the inflation rate. Moreover, let $\hat{r}_t := i_t - \pi_{t+1}$ be the ex-post interest rate. 

\item interest rate: let $I_t$ be the 3-month interest rate (annual rate in percent). 

\item unemployment rate: let $L_t$ be the labor force, $E_t$ be employment and $U_t := - ln (E_t / L_t)$ be the unemployment rate. 
\end{itemize}

Assume that $\left\{ \Delta ln Y_t \right\}$ is $I(1)$ without a drift, $\left\{ \pi_t \right\}$ is $I(1)$ and $r_t$ is also $I(1)$.  

\end{example}

\newpage

\subsubsection{Automated Estimation of VECMs}

A related stream of literature includes the notion of \textit{irreducible cointegrating relation}, which implies that no variable can be omitted without loss of the cointegration property (see, \cite{davidson1998structural}) as well as the notion of causal ordering of cointegrating relations (see, \cite{hoover2020discovery}). In addition, the "minimal" algorithm proposed by \cite{davidson1998structural} is extended to the literature of automated cointegration models as in \cite{krolzig2003general} 
and \cite{liao2015automated}. Note that the "minimal" algorithm of \cite{davidson1998structural} is different from other methods since it tries to find all the smallest subsets of variables which are cointegrating. Furthermore, his method is based on the Wald test statistic developed by \cite{davidson1998wald} and thus the algorithm is faster since the Wald test does not require restricted optimization. Moreover, the approach proposed by \cite{liao2015automated} corresponds to an automated estimation of vector error correction model to tackle model selection and associared issues of post-model selection inference which present well known challenges in empirical econometric research.

Following the automated estimation of vector error correction model approach proposed by \cite{liao2015automated},  we consider the parametric VEC representation of a cointegrated system
\begin{align}
\Delta \boldsymbol{Y}_t = \boldsymbol{\Pi}_0 \boldsymbol{Y}_{t-1} + \sum_{j=1}^p \boldsymbol{B}_{0,j} \Delta \boldsymbol{Y}_{t-j} + \boldsymbol{u}_t,     
\end{align}
where $\Delta \boldsymbol{Y}_t = \boldsymbol{Y}_t - \boldsymbol{Y}_{t-1}$, where $\boldsymbol{Y}_t$ is an $m-$dimensional vector-valued time series such that $\boldsymbol{\Pi}_0 = \boldsymbol{\alpha}_0 \boldsymbol{\beta}_0^{\prime}$ has rank $0 \leq r_0 \leq m$ and $\boldsymbol{B}_{0,j}$ for $j = 1,...,p$ are $( m \times m)$ (transient) coefficient matrices, where $\boldsymbol{u}_t$ is an $m-$vector error term with mean zero and non-singular covariance matrix. 

Specifically, the OLS shrinkage estimator of $( \boldsymbol{\Pi}_0, \boldsymbol{B}_0 )$, where $\boldsymbol{B}_0 = ( \boldsymbol{B}_{0,1},..., \boldsymbol{B}_{0,p} )$ is defined as 
\begin{align*}
( \widehat{\boldsymbol{\Pi}}_0, \widehat{\boldsymbol{B}}_0 ) = \underset{ \Pi, B_1,..., B_p \in \mathbb{R}^{m \times m} }{ \mathsf{arg \ min} } &\sum_{t=1}^n \norm{ \Delta \boldsymbol{Y}_t - \Pi \boldsymbol{Y}_{t-1} - \sum_{ j \leq p} \boldsymbol{B}_j \Delta \boldsymbol{Y}_{t-j} }^2
\\
&+ n \sum_{j=1}^p \lambda_{b,j,n} \norm{ \boldsymbol{B}_j } + n \sum_{k=1}^m \lambda_{r,k,n} \norm{ \Phi_{n,k} (\boldsymbol{\Pi}) },  
\end{align*}
where $\lambda_{b,j,n}$ and $\lambda_{r,k,n}$ are tuning parameters that directly control the penalization.

\paragraph{First Order VECM Estimation}

Consider the following formulation of the VECM as below
\begin{align}
\Delta \boldsymbol{Y}_t = \boldsymbol{\Pi}_0 \boldsymbol{Y}_{t-1} + u_t \equiv \boldsymbol{\alpha}_0 \boldsymbol{\beta}_0^{\prime} \boldsymbol{Y}_{t-1} + \boldsymbol{u}_t.      
\end{align}
In other words, the model based on the above formulation contains no deterministic trend and no lagged differences. Therefore, our focus in this simplified system is to outline the approach to cointegrating rank selection and develop key elements in the limit theory, showing consistency in rank selection and reduced rank coefficient matrix estimation.

\newpage 

\begin{assumption}[WN] Suppose that $\left\{ \boldsymbol{u}_t \right\}_{ t \geq 1 }$ is an $m-$dimensional $\textit{i.i.d}$ process with zero mean and non-singular covariance matrix $\boldsymbol{\Omega}_u$.    
\end{assumption}

Under the above assumption then partial sums of $\boldsymbol{u}_t$ satisfy the functional law below
\begin{align}
\frac{1}{ \sqrt{n} } \sum_{ t = 1}^{ [ n \cdot ] } \boldsymbol{u}_t \Rightarrow \boldsymbol{B}_u( \cdot ),   
\end{align}
where $\boldsymbol{B}_u( \cdot )$ is a vector of Brownian motion with variance matrix $\boldsymbol{\Omega}_u$. 

Consider the following partial sum Granger representation as below
\begin{align*}
\boldsymbol{Y}_t =  \boldsymbol{C} \sum_{j=1}^t \boldsymbol{u}_j + \boldsymbol{\alpha}_0 \left( \boldsymbol{\beta}_0^{\prime} \boldsymbol{\alpha}_0  \right)^{-1} \boldsymbol{R}(L) \boldsymbol{\beta}_0^{\prime} \boldsymbol{u}_t + \boldsymbol{C} \boldsymbol{Y}_0,    
\end{align*}
Define with $\boldsymbol{Q} = [ \boldsymbol{\beta_0}, \boldsymbol{\alpha}_{0 \perp} ]^{\prime}$, and note that $\boldsymbol{C} = \boldsymbol{\beta}_{0 \perp}  \left(  \boldsymbol{\alpha}_{0 \perp }^{\prime}  \boldsymbol{\beta}_{0 \perp }^{\prime} \right)^{-1} \boldsymbol{\alpha}_{0 \perp}^{\prime}$. Then, using the matrix $\boldsymbol{Q}$ we obtain the following equivalent expression 
\begin{align}
\Delta \boldsymbol{Z}_t &= \underbrace{ \boldsymbol{Q} \boldsymbol{\Pi}_0 \boldsymbol{Q}^{-1} }_{  \boldsymbol{\Xi}_0 }  \boldsymbol{Z}_{t-1} + \boldsymbol{\eta}_t 
\\
\boldsymbol{Z}_t &= 
\begin{pmatrix}
\boldsymbol{\beta}_0^{\prime} \boldsymbol{Y}_t
\\
\boldsymbol{\alpha}_{0 \perp}^{\prime} \boldsymbol{Y}_t
\end{pmatrix}
\equiv
\begin{pmatrix}
\boldsymbol{Z}_{1t}
\\
\boldsymbol{Z}_{2t}
\end{pmatrix}, \ \ 
\boldsymbol{\eta}_t
=
\begin{pmatrix}
\boldsymbol{\beta}_0^{\prime} \boldsymbol{u}_t
\\
\boldsymbol{\alpha}_{0 \perp}^{\prime} \boldsymbol{u}_t   
\end{pmatrix}
\equiv 
\begin{pmatrix}
\boldsymbol{\eta}_{1,t}
\\
\boldsymbol{\eta}_{2,t}
\end{pmatrix}
\end{align}
Under the above assumptions, we have the following functional law as below
\begin{align}
\frac{1}{ \sqrt{n} } \sum_{t=1}^{ [ n \cdot ] } \boldsymbol{\eta}_t \Rightarrow \boldsymbol{B}_{\eta} (\cdot) 
= 
\boldsymbol{Q} \boldsymbol{B}_{\eta} (\cdot) 
=
\begin{bmatrix}
\boldsymbol{\beta}_0^{\prime} \boldsymbol{B}_u (\cdot) 
\\
\boldsymbol{\alpha}_{0 \perp}^{\prime} \boldsymbol{B}_u (\cdot) 
\end{bmatrix}
\equiv
\begin{bmatrix}
\boldsymbol{B}_{ \eta_1 } (\cdot) 
\\
\boldsymbol{B}_{ \eta_2 } (\cdot) 
\end{bmatrix}.
\end{align}
Then, the shrinkage LS estimator $\hat{\boldsymbol{\Pi}}_n$ of $\boldsymbol{\Pi}_0$ is
\begin{align*}
\widehat{\boldsymbol{\Pi}}_n = \underset{ \boldsymbol{\Pi} \in \mathbb{R}^{m \times m} }{ \mathsf{arg \ min} } \ \sum_{t=1}^n \norm{ \Delta \boldsymbol{Y}_t - \boldsymbol{\Pi} \boldsymbol{Y}_{t-1} }^2
+ n \sum_{k=1}^m \lambda_{r,k,n} \norm{ \Phi_{n,k} (\boldsymbol{\Pi}) }.  
\end{align*}
Notice that the unrestricted LS estimator $\widehat{\boldsymbol{\Pi}}$ of $\boldsymbol{\Pi}_0$ is
\begin{align*}
\widehat{\boldsymbol{\Pi}}^{unrest}_n 
= 
\underset{ \boldsymbol{\Pi} \in \mathbb{R}^{m \times m} }{ \mathsf{arg \ min} } \ \sum_{t=1}^n \norm{ \Delta \boldsymbol{Y}_t - \boldsymbol{\Pi} \boldsymbol{Y}_{t-1} }^2
=
\left( \sum_{t=1}^n \Delta \boldsymbol{Y}_t \boldsymbol{Y}_{t-1}^{\prime} \right) \left( \sum_{t=1}^n \Delta \boldsymbol{Y}_{t-1} \boldsymbol{Y}_{t-1}^{\prime} \right)^{-1}.
\end{align*}
Decomposing nearly nonstationary from nearly stationary roots in the set of regressors is discussed in \cite{paruolo1997asymptotic} who considers that the cointegrated system $\boldsymbol{X}_t = P_{\beta} \boldsymbol{X}_t + P_{ \beta_{ \perp}  } \boldsymbol{X}_t$ in two systems of $r$ stationary $\boldsymbol{\beta}^{\prime} \boldsymbol{X}_t$ and $(d-r)$ nonstationary $\boldsymbol{ \beta^{\prime}_{ \perp}  } \boldsymbol{X}_t$ components (see, also \cite{stock1988testing}).

\newpage

\section{Algebraic Theory of Identification in Structural Models}

\subsection{Relation between Equivalence and Orbit}

Following, \cite{kociecki2011algebraic} we shall avoid using the definition that only the structural parameters that are in one-to-one correspondence with the reduced form parameters are identified. The reason behind this is that this is not always the case. In fact, there are other structural parameters, which are identified, but can not be uniquely recovered from the reduced form parameters. In other words, an one-to-one correspondence is a necessary condition for identification but is not a unique one. 

There is a close connection between equivalence class and orbit. For a number of econometric models, equivalence classes are simply orbits. Therefore, when equivalence class is an orbit the approach to identification based on checking local properties of the likelihood (information matrix), is rather misplaced. We present a relevant definition which is a fundamental tool in statistical invariance theory. 

\begin{definition}[\cite{kociecki2011algebraic}]
A function $f: \Theta \to Y$ is said to be invariant under some action of a group $G$ on $\Theta$, if $f ( \theta ) = f( g \circ  \theta )$ for any $g \in G, \theta \in \Theta$. Moreover, a function $f: \Theta \to Y$ is called \textit{maximal invariant} $G-$invariant if f is $G-$invariant and for any $\theta_1, \theta_2 \in \Theta$, $f(\theta_1) = f(\theta_2)$ implies $\theta_1 = g \circ \theta_2$ for some $g \in G$, that is, $\theta_1$ and $\theta_2$ lie on the same orbit.   
\end{definition}

\begin{example}[Finite Mixture Models]
Suppose that the pdf of a finite mixture of two normal distributions is given by 
\begin{align}
f_{y_t} (y_t) = p_1 \cdot (2 \pi)^{- \frac{1}{2} } \mathsf{exp} \left\{ - \frac{1}{2} (y_t - \mu)^2 \right\} + p_2 \cdot (2 \pi)^{- \frac{1}{2} } \mathsf{exp} \left\{ - \frac{1}{2} (y_t - \mu)^2 \right\}.    
\end{align}
where $y_t$ is an one-dimensional endogenous variable, and $0 \leq p_i \leq 1$, $p_1 + p_2 = 1$ and $\mu_1, \mu_2 \in \mathbb{R}$. 

Let $\theta = \left( p_1, \mu_1, p_2, \mu_2 \right) \in \Theta$, then $C_{\theta} = \mathsf{Orb}_{\theta} := \big\{ g \circ \left( p_1, \mu_1, p_2, \mu_2 \right) | g \in S_2 \big\}$, where $S_2$ denotes the symmetric group of degree 2. Notice that in general, $S_n$ is the group of permutations which has $n!$ elements, that is, $| S_n| = n!$. Therefore, $g \circ \left( p_1, \mu_1, p_2, \mu_2 \right) := \big( p_{g(1)}, \mu_{g(1)}, p_{g(2)}, \mu_{g(2)} \big)$. 
\end{example}

\subsection{Iterative Schemes for Large Linear Systems}

\begin{definition}
A square matrix $B$ is irreducible if there exists no permutation matrix $Q$ for which 
\begin{align}
Q B Q^{\top} = 
\begin{bmatrix}
B_{11} & B_{12}
\\
0 & B_{22}
\end{bmatrix}
\end{align}
where $B_{11}$ and $B_{22}$ are square matrices. 
\end{definition}

\begin{definition}
A matrix $B = (\beta_{ij}) \in \mathbb{R}^{ m \times m }$ is \textit{irreducibly diagonally dominant} if it is irreducible, 
\begin{align}
| \beta_{ii} | \geq \sum_{ j = 1, j \neq i } | \beta_{ij} |, \ \ \textit{with strict inequality holding at least for one} \ i.    
\end{align}
\end{definition}

\newpage



\newpage

\bibliographystyle{apalike}
\addcontentsline{toc}{section}{References}
\bibliography{myreferences1}

\newpage

\end{document}